\documentclass[12pt,reqno]{amsart}
\usepackage{latexsym}
\usepackage{graphics}
\usepackage{hyperref, enumerate}
\usepackage{soul}
\usepackage{graphicx} 
\usepackage{amsmath}
\usepackage{amssymb,amsthm}
\usepackage{verbatim}
\usepackage[usenames,dvipsnames,svgnames,table]{xcolor}
\usepackage{cancel}
\usepackage{nicefrac}
\usepackage{doi}

\def\*{*}
\def\tr{\mathrm{Tr}}

\def\Tr{\mathop {\rm Tr}}

\newcommand{\cok}{\overline{\mathrm{co}}(\cK_{+})}
\newcommand{\coh}{\mathrm{co}(\cK_{+})}
 
\def\ep{\epsilon}
\def\q{q}
\def\be{\begin{equation}}
\def\beq{\begin{eqnarray}}
\def\beqs{\begin{eqnarray*}}
\def\ee{\end{equation}}
\def\eeq{\end{eqnarray}}
\def\eeqs{\end{eqnarray*}}
\def\eqref#1{(\ref{#1})}
\def\lra#1{\langle #1 \rangle}
\def\lrp#1{\left( #1 \right)}

\def\abs#1{\left\vert #1\right\vert}

\def\A{\mathfrak{A}}

\def\nn{\nonumber\\}
\def\N{\mathbb{N}}
\def\Z{\mathbb{Z}}
\def\R{\mathbb{R}}
\def\C{\mathbb C}
\def\bE{\mathbb E}

\def\qsp#1{\quad\text{#1}\quad}

\def\cH{\mathcal{H}}
\def\cK{\mathcal{K}}

\def\cA{\mathcal{A}}
\def\cI{\mathcal{I}}

\def\:{{:}}

\def\one{\mathbf{1}}
\def\cN{\mathcal{N}}

\renewcommand{\leq}{\leqslant}
\renewcommand{\geq}{\geqslant}

\newtheorem{thm}{Theorem}[section]
\newtheorem{theorem}[thm]{Theorem}

\newtheorem{corollary}[thm]{Corollary}

\newtheorem{lemma}[thm]{Lemma}

\newtheorem{proposition}[thm]{Proposition}
\newtheorem{definition}[thm]{Definition}

\theoremstyle{remark}
\newtheorem{remark}[thm]{\bf Remark}
\newtheorem{question}{\bf Question}

\usepackage{graphics}
\title{Reflection Positive Doubles}
\author[Arthur Jaffe]{Arthur Jaffe}
\address{Harvard University\\
Cambridge, MA 02138, USA}
\email{arthur\_jaffe@harvard.edu}

\author[Bas Janssens]{Bas Janssens}
\address{Universiteit Utrecht\\
3584 CD Utrecht, The Netherlands}
\email{B.Janssens@uu.nl}
\begin{document}
\begin{abstract}
Here we introduce \emph{reflection positive doubles},
a general framework for reflection positivity, covering 
a wide variety of systems 
in statistical physics and 
quantum field theory.
These systems may be bosonic, fermionic, or parafermionic in nature.
Within the framework of reflection positive doubles, we give necessary and  sufficient conditions for reflection positivity.
We use a reflection-invariant cone to implement our construction.
Our characterization allows for a direct interpretation in terms of 
coupling constants, making it easy to check in concrete situations. We illustrate our methods with numerous examples.
\end{abstract}
\maketitle
\thispagestyle{empty}
\tableofcontents
\setcounter{equation}{0}
\section{Introduction}
There is amazing synergy among a number of developments in operator algebra theory, quantum field theory, and statistical physics that first emerged in the 1960's and 1970's.  At the time several of these advances appeared independently, but  we now understand  them as part of a larger picture.  Their interrelation may well lead to further deep insights. 

The advances we think of include, on the side of mathematics, the Tomita-Takesaki theory 
for von Neumann algebras \cite{Tomita1967,Takesaki1970},  the  $j$-positive states of Woronowicz \cite{Woronowicz1972}, and
the self-dual cones of
Araki, Connes, and Haagerup \cite{Araki1974,Connes1974,Haagerup1975}.
On the side of physics, they include
the reflection positivity property discovered by Osterwalder and Schrader for classical fields \cite{OS1,OSEuclideanFields,OS2}. 
In perspective, we now understand how these apparently different ideas overlap as central themes in mathematics and physics.

Following the ground-breaking mathematical work of Tomita and Takesaki, 
and motivated by the 
work of Powers and St{\o}rmer on states of the CAR algebra \cite{PowersStormer1970},
Woronowicz introduced  $j$-\emph{positivity}.
In the context of a subalgebra $\A_+ \subseteq \A$ that is interchanged 
with its commutant $\A_- \subseteq \A$ by an 
antilinear homomorphism  $j \colon \A \rightarrow \A$,
this means that a state $\omega \colon \A \rightarrow \C$ 
satisfies 
\be
\omega(j(A)A) \geq 0 \quad \text{for all} \quad A\in \A_+,\quad
\quad \text{(bosonic case)}.
\ee
For a $\sigma$-finite von Neumann algebra $\A_+ \subseteq B(\cH)$ 
with modular involution $j(A) = JAJ$, 
Araki and Connes independently realized that the 
normal,
$j$-positive states constitute a 
\emph{self-dual cone} in~$\cH$.
Connes proved that
$\sigma$-finite von Neumann algebras 
are 
classified up to isomorphism by their self-dual cone; Haagerup's
subsequent generalization of these results to the 
non $\sigma$-finite case
led to the abstract formulation of
Tomita-Takesaki Theory, which proved to have a 
lasting impact on operator algebra theory.

On the side of mathematical physics, Osterwalder and Schrader formulated the idea of reflection positivity in the context of the Green's functions for the statistical mechanics of classical fields.  
In hindsight, one understands that this idea is closely related to $j$-positivity, with 
$j$ replaced by the time reflection $\Theta$. In this context, reflection positivity is expressed as
\be\label{General-RP-1}
S(\Theta(F)F)\geqslant 0\,,\qquad \text{(bosonic case)}
\ee 
where $S$ denotes the Schwinger functional defined on an algebra of test functions $F$, and the positivity \eqref{General-RP-1} holds on a subalgebra of functions supported at positive time. 

For fermionic systems, the reflection positivity condition  can be  given in terms 
of an \emph{antihomomorphism} $\Theta_{a}$ on the 
$\Z_2$-graded algebra $\A$ of test functions. Here, the positive time 
subalgebra $\A_+$ \emph{supercommutes} with the negative time subalgebra 
$\A_-$.  Reflection positivity for a gauge-invariant functional $S$ means that
$S(\Theta_{a}(F)F) \geq 0$ for all $F\in \A_+$.
To connect with $j$-positivity, 
note that as the algebra is super-commutative, 
$\Theta(F) = i^{-\abs{F}^2}\Theta_{a}(F)$ is an antilinear \emph{homomorphism}.
Here $\abs{F}$ denotes the $\Z_2$-degree of $F$, which is 0 for even 
and $1$ for odd elements.  In terms of $\Theta$, the reflection positivity condition 
becomes
\be\label{eq:fermion}
i^{\abs{F}^2}S(\Theta(F)F)\geqslant 0\,.\qquad \text{(fermionic case)}
\ee

In our previous work \cite{JaffeJanssens2016}, we discovered that 
\eqref{eq:fermion} gives the correct formulation of reflection positivity 
for Majorana fermions, where the 
$\Z_2$-graded algebra is no longer super-commutative.  

In this paper we generalize this condition to the case of a neutral functional $S$ on a $\Z_{p}$-graded (rather than $\Z_{2}$ graded) algebra. We find that the reflection positivity condition becomes
\be\label{General-RP-3}
\zeta^{\abs{F}^2}S(\Theta(F)F)\geqslant0\;, \qquad\text{(general case)}
%
\ee
with $F$ a homogeneous element of $\A_+$ of degree $\abs{F}$, and with $\zeta$ an appropriate $2p^{\rm th}$ root of unity.

\subsection{Applications to Mathematical Physics}
The importance of the Osterwalder-Schrader construction stems from the fact that the Hilbert space of \textit{every} known scalar quantum theory arises as a quantization defined by this framework; quantum theories with fermions or gauge fields arise as generalizations of this approach.  

Another early application of reflection positivity  was its use in papers of Glimm, Jaffe, and Spencer to give the first proof in that interacting, non-linear quantum fields  satisfy the Wightman axioms \cite{GJS74}.  Shortly  afterward, these authors used reflection positivity to give the first mathematical proof that discrete symmetry breaking and phase transitions exist in certain quantum field theories \cite{GlimmJaffeSpencer-PhaseTransitions1975}.  

 The analysis of Tomita-Takesaki theory led Bisognano and Wichmann to identify the TCP reflection $\Theta$ in quantum field theory  with a specific case of the Tomita reflection $j$ defined for wedge shaped regions \cite{BisognanoWichmann1975}.  Sewell recognized that the Bisognano-Wichmann theory yields an interpretation of Hawking radiation from black holes \cite{Sewell1980}. Hislop and Longo analyzed the modular structure of double cone algebras
 in great detail \cite{HilsopLongo}. 
Also much work of Borchers, Buchholz, Fredenhagen, Rehren, Summers, and others has been devoted to aspects of relations between local quantum field theory and Tomita-Takesaki theory.   

Turning from quantum mechanics to lattice statistical physics, reflection positivity was established by Osterwalder and Seiler for 
lattice gauge theory \cite{O76}, as well as for the super-commutative, $\Z_2$-graded algebra  appearing in lattice QCD \cite{OS78}.
Reflection positivity was also central in the first proof of continuous symmetry breaking in lattice systems, through the proof and use of ``infrared bounds'' by Fr\"ohlich, Simon, and Spencer  \cite{FroehlichSimonSpencer}.  Reflection positivity also led to many results by Fr\"ohlich, Israel, Lieb, and Simon   \cite{FILS78}, for bosonic quantum systems in classical and quantum statistical physics.

\subsection{Positive Cones, Twisted Products}
The analysis of reflection positivity in the papers  \cite{OSEuclideanFields,OS78,FILS78}  used a \emph{cone} of reflection positive elements.  This cone mirrors the positivity conditions \eqref{General-RP-1}--\eqref{eq:fermion} for bosonic and fermionic systems. The cone consists of elements 
\be\label{Cone-1} 
	\Theta(A)A, \quad A \in \A_+\,,\qquad\text{(bosonic case)}
\ee
where $\A_+$ is the algebra of observables on one side of the reflection plane.
In terms of the antilinear homomorphism $\Theta$, the cones of Oster\-walder, Schrader,  and Seiler
consist of elements of the form
\be\label{Cone-2} 
	i^{\abs{A}^2}\Theta(A)A, \quad A\in \A_+\,. \qquad \text{(fermionic case)}
\ee 
In hindsight, it is clear that these cones are closely related to the self-dual cone 
of Araki, Connes and Haagerup.

In this paper, we isolate a minimal framework for reflection positivity,
covering a variety of different and useful examples, including the above.
In brief, we work with a $\Z_p$-graded, locally convex algebra $\A$, equipped with 
an antilinear homomorphism $\Theta \colon \A \rightarrow \A$
called the \emph{reflection}.

Let $\A_+ \subseteq \A$ be a graded subalgebra, 
and $\A_- = \Theta(\A_+)$. We assume that $\Theta$ squares to the identity and inverts the grading. Then $\A$ is called the \emph{q-double} of $\A_+$
if the linear span of $\A_-\A_+$ is dense in $\A$, and if $\A_+$ \emph{paracommutes} with $\A_-$, meaning that
	\be\label{eq:exchangeintro}
		A_{-}A_{+} 
		= q^{\abs{A_{-}}\abs{A_{+}}}A_{+}A_{-} \;,
	\ee
for homogeneous $A_{\pm} \in \A_{\pm}$. Here
$q = e^{2\pi i/p}$ is a $p^{\mathrm{th}}$ root of unity, and $\abs{A}$ 
denotes the degree of $A$ in $\Z_p$.    The case $p=1$ describes bosons, the case $p=2$ describes 
fermions, and the case $p>2$ corresponds to parafermions. We give more details in \S \ref{sec:section2}.
The generalization of \eqref{Cone-1}--\eqref{Cone-2}   is the 
\emph{reflection positive cone}  $\cK_+$ with elements 
\be\label{Cone-3}
\zeta^{\abs{A}^2}\Theta(A)A\,. \qquad\text{(general case)}
\ee
Here  $A$ a homogeneous element of $\A_+$, and $\zeta$ is a square root of $q$, with  $\zeta^{p^{2}}=1$.
The cone $\cK_+$ is closed under multiplication, 
and point-wise fixed by the reflection $\Theta$. We find it useful to consider the expression \eqref{Cone-3} as a specialization of a reflection-invariant, twisted product,
	\be\label{TwistedProduct}
	\Theta(A)\circ A
	=
	\zeta^{\abs{A}^2}\Theta(A)A\;.
	\ee

Parafermionic commutation relations were proposed 
in field theory by Green \cite{Green1953}.
They are closely tied to representations of the braid group, which
lead to a variety of different statistics, see for 
example~\cite{FroehlichGabbiani1990}. 
Recently, 
Fendley gave a parafermionic representation of Baxter's clock hamiltonian
\cite{Fendley12,Fendley14}. In \cite{Jaffe-Pedrocchi2015-2},
Jaffe and Pedrocchi
gave sufficient conditions for reflection positivity 
on the $0$-graded part of the parafermion algebra,
and used this to study topological order \cite{JaffePedrocchi2014}.
Jaffe and Liu found a geometric interpretation of reflection positivity in the framework of 
planar para algebras, relating reflection positivity in that case to $C^{*}$ positivity \cite{JL2016}. Their proof uses an elegant pictorial interpretation for the twisted product $\Theta(A)\circ A$ in \eqref{TwistedProduct}, as an interpolation between $\Theta(A)A$ and $A\Theta(A)$.

Recently the ideas from Tomita-Takesaki theory have been used by workers in string theory, for instance in the analysis of the black hole complementarity radiation, see  \cite{PS2013}.  It is tempting to conjecture that string theory representations of black-hole partition functions  of the form $Z_{\rm BH}= |Z_{\rm top}|^{2}$,  proposed in \cite{OSV03,GSX,Pestun13}, have an origin in (and an explanation based on)  reflection positivity.

\subsection{Overview of the Present Paper}
Let $\tau \colon \A \rightarrow \C$ be a continuous, reflection positive `background functional', 
and let $H\in \A$ be a reflection invariant element of degree zero.
The main problem is to determine necessary and sufficient conditions on $H$
for the functional 
\be\label{eq:overviewBoltzmann}
\tau_{H}(A) = \tau(Ae^{-H})
\ee
to be reflection positive.
In the case of statistical physics, $H$ is a Hamiltonian and $\tau_{H}$ defines the Boltzmann functional for the system. In the case of functional integrals for quantum theory, $H$ is a perturbation of the action. 

In \S \ref{sec:section2} we give the basic definitions.  In \S\ref{sec:applications}, we apply this general setting to  
a variety of different situations: Tomita-Takesaki theory and von Neumann algebras (describing bosonic systems),
Grassmann algebras (describing
fermionic classical systems), and Clifford algebras and CAR algebras (describing fermionic quantum systems).
Finally, we introduce the \emph{parafermion algebra} and the \emph{CPR algebra},
the analogues of Clifford and CAR algebras for parafermions.

Let us point out that in our general setting, we do \emph{not} assume that $\A$ is a $*$-algebra,
nor that $\tau$ is a state.
This allows our framework to cover cases such as Berezin integration on Grassmann 
algebras \cite{Berezin}, and neutral complex fields in the sense of~\cite{JaffeJaekelMartinez,JaffeJaekelMartinez2014b}.
 
In \S\ref{Sect:SforRP} and \S\ref{Sect:N-SforRP}, we return to the problem of determining reflection positivity of $\tau_{H}$
in the general setting. 
Our first main result is
Theorem~\ref{Thm:sufficient}; 
the Boltzmann functional $\tau_{H}$ is reflection positive if 
$H$ allows a decomposition
\[
	H = H_- + H_0 + H_+\,,
\]
where $H_{+}$ is in $\A_+$, $H_-$ is the reflection of $H_+$ in $\A_-$, and $-H_0$ is in the closure 
of the convex hull of $\cK_+$. Although the decomposition is familiar, 
the result is new for Hamiltonians of the generality that we study here.

Our second major result, Theorem \ref{thm:necessary}, states that these conditions are not only sufficient, 
but also necessary,
under additional factorization and nondegeneracy 
hypotheses on the `background' functional~$\tau$.
These extra assumptions are reasonable, and generally assumed, 
in the framework of statistical physics.

In our third main result, Theorem~\ref{Thm:neccandsuff}, we formulate necessary and sufficient conditions 
for reflection positivity in terms of the \emph{matrix of coupling constants across the 
reflection plane}. These are the coefficients $J^{0}_{IJ}$ of $H$ with respect to a 
distinguished, reflection-invariant basis $B_{IJ}$
of the zero-graded algebra $\A^0$.
This is particularly relevant in the context of 
statistical physics, where the Hamiltonian is usually given in terms of
couplings. Theorem~\ref{Thm:neccandsuff} then allows one to easily check reflection positivity 
in concrete situations. 
 
We end the paper with an extensive list of examples in the context of 
lattice statistical physics. In \S\ref{sec:latphys}, 
we discuss the lattices we use.

In \S\ref{sec:bosonicsystems}, 
we specialize our results to bosonic classical and quantum systems.
A special feature of \emph{classical} systems is that the lattice can contain fixed points under the reflection.
We exploit this to prove the following: suppose that the reflection is in one of the coordinate directions, that the lattice is rectangular, and that it has nontrivial intersection with 
the reflection plane.
Then \emph{every} reflection-invariant nearest neighbor Hamiltonian
yields a reflection positive Boltzmann measure.

In \S\ref{sec:fermionicsystems}, we specialize our results to fermionic classical and quantum systems.
In the classical case, the `background functional' is the Berezin integral
on the Grassmann algebra, and in the quantum case, it is the tracial state on the Clifford algebra 
or CAR algebra.
The results in this section generalize the examples in our previous work \cite{JaffeJanssens2016}. 

In \S\ref{sec:latticeQCD}, we apply our results in the 
context of lattice gauge theories. In particular, we give a new, gauge equivariant 
proof of reflection positivity for the expectation defined by the Wilson action. 
In contrast to the proofs  in  \cite{O76,OS78,S82,MP87}, we prove reflection positivity on the \emph{full} algebra of observables, not just on the gauge invariant subalgebra.

Finally, in \S\ref{sec:parafermionsNo2}, we give a complete characterization of reflection positivity
for parafermions. This extends the results of \cite{Jaffe-Pedrocchi2015-2} from the degree zero 
subalgebra $\A^{0}_+$ to the full algebra $\A_+$, and ties in with 
the results of Jaffe and Liu \cite{JL2016} on the geometric interpretation of 
reflection positivity for planar para algebras.
\setcounter{equation}{0}
\section{Reflection Positivity for $\Z_p$-graded Algebras}\label{sec:section2}

In this section, we introduce the basic notions needed to treat
reflection positivity in the $\Z_p$-graded setting.

\subsection{Graded Topological Algebras}

Let $p\in \N$, and let $\A$ be a  $\Z_{p}$-graded unital algebra.
The case $p=1$ corresponds to bosons, the case 
$p=2$ corresponds to fermions, and the case $p>2$ corresponds to parafermions.  
The case $p=0$ is also allowed; as $\Z_{p} = \Z/p\Z = \Z$ for $p=0$, this
corresponds to $\Z$-graded algebras.

Denote the degree (or grading) of $A\in \A$ by $\abs{A} \in \Z_{p}$, and 
denote the homogeneous part of degree $k$ by
$\A^{k} = \{A\in \A\,;\, \abs{A} = k\}$. 
The algebra $\A$ then decomposes as
\be\label{DirectSumA}
\A = \bigoplus_{k\in \Z_{p}} \A^{k}\,.
\ee 

We require that $\A$ be a locally convex topological algebra.  This means that $\A$ is a locally convex (Hausdorff) topological vector space, for which the multiplication is separately continuous. 
In the case $p=0$, we will allow algebras for which the right hand side of \eqref{DirectSumA} is dense in $\A$.

Note that $\A$ need not be a $*$-algebra.
The above setting therefore includes not only (graded)
$C^{*}$-algebras and von Neumann algebras, but also
Grassmann algebras and continuous inverse algebras.

\subsection{Reflections and $q$-Doubles.}

Reflection positivity will be defined with respect to a 
\emph{reflection} $\Theta \colon \A \rightarrow \A$.

\begin{definition}[\bf Reflections]  \label{Def:R}
A reflection $\Theta \colon \A \rightarrow \A$ is 
a continuous, anti-linear homomorphism which squares to the identity and inverts the grading.  
\end{definition}
In other words, we require that $\Theta^2 = \mathrm{Id}$, and that
$\abs{\Theta(A)}=-\abs{A}$ for homogeneous elements ${A\in\A}$.

\begin{remark}\label{rk:switch}
In the literature, RP is not always defined using an
anti-linear homomorphism $\Theta$, as we do.  
In fact, there are 4 possibilities. 
The transformation $\Theta$ may be 
linear or antilinear; it may be a homomorphism or an anti-homomorphism.
In the context of $*$-algebras, one can go back and forth between 
an antilinear homomorphism $\Theta$ and a 
linear anti-homomorphism $\widetilde{\Theta}$ 
by defining $\widetilde{\Theta}(A) := \Theta(A^*)$.
In the context of super-commutative algebras (Grassmann fermions), 
one can go back and forth between 
an antilinear homomorphism $\Theta$ and a 
linear anti-homomorphism $\Theta_{a}$ by defining
$\Theta(A) = i^{-\abs{A}^2}\Theta_{a}(A)$.
In the general setting that we describe in this paper, 
it seems that the 
anti-linear homomorphism in Definition~\ref{Def:R}
is the only option.
\end{remark}

Let $\A_{+}$  be a distinguished $\Z_{p}$-graded subalgebra of $\A$, 
and write $\A_{-} := \Theta(\A_{+})$ for its reflection.
Define $\A_{-}\A_{+} := \{A_{-}A_{+} \,{;}\, A_{\pm} \in \A_{\pm}\}$.
Let  $q$ be a complex number satisfying  $q^{p} = 1$ and $\abs{q} = 1$.
We require that $\A$ is the $q$-double of $\A_+$ in the following sense.

\begin{definition}[\bf $\boldsymbol{q}$-Double]\label{qDouble}
The algebra $\A$ is the $q$-double of $\A_{+}$ if 
\begin{enumerate}
\item{}
The linear span of $\A_{-} \A_{+}$ is dense in $\A$.
\item{}
The elements of $\A_{\pm}$ 
satisfy the para-commutation relations 
	\be\label{eq:exchange}
		A_{-}A_{+} 
		= q^{\abs{A_{-}}\abs{A_{+}}}A_{+}A_{-} \;,
		\quad \text{for}\quad A_{\pm}\in \A_{\pm}\,.
	\ee
\end{enumerate}
The $q$-double is called bosonic if $q=1$, fermionic if $q=-1$, 
and parafermionic if $q = e^{2\pi i/p}$ for $p\geq 3$.
\end{definition}

\begin{remark}
Consider the intersection $\A_+\cap \A_{-}$.
From the para-commutation relation 
\eqref{eq:exchange}, we infer that for all $A_0 \in \A_+\cap \A_{-}$ 
and $A_\pm \in \A_\pm$, we have
\[
A_0A_+ = q^{\abs{A_0}\abs{A_+}} A_+A_0\,, \quad
A_0A_- = q^{-\abs{A_0}\abs{A_-}} A_-A_0 \,.
\] 
Combined with the fact that $\A_-\A_+$ is dense in $\A$, this 
implies that $\A_+\cap \A_{-}$ is central 
in the bosonic case.
In the fermionic case, this implies that $\A_{+} \cap \A_{-}$ is 
supercentral, meaning that $\{A_0,  A\}_+ = 0$ for all 
$A_0 \in \A_+\cap \A_{-}$ and
 $A\in \A$.
Here $\{A, B\}_{+} = AB - (-1)^{\abs{A}\abs{B}} BA$ is the graded commutator 
on $\A$.
\end{remark}

\subsection{Functionals and Reflection Positivity}

We define reflection positivity for \emph{neutral} functionals. 

\begin{definition}[\bf Neutral Functionals]
A functional $\varrho \colon \A \rightarrow \C$ is neutral 
if $\varrho(A) = 0$ whenever $\abs{A} \neq 0$.
\end{definition}
Neutral functionals on $\A$ are determined by their restriction to $\A^0$.

\begin{definition}[\bf Reflection Invariance]\label{Def:RI}
A functional $\varrho\colon \A \rightarrow \C$ is  
reflection invariant if $\varrho(\Theta(A)) = \overline{\varrho(A)}$
for all $A\in \A$.
\end{definition}

For a given $q$ of modulus 1, let $\zeta$ be
a complex number  with
	\be
	q=\zeta^{2}\;,
	\quad\text{and}\quad
	\zeta^{p^{2}}=1\;.
	\ee
Since $\zeta^{(k+p)^2} = \zeta^{k^2}$, 
the expression $\zeta^{k^2}$ is well defined for $k\in \Z_{p}$.

\begin{definition}[\bf Sesquilinear Form on $\A_{+}$]\label{Def:SesqForm}
Let $\varrho$ be a neutral functional on $\A$. 
Then $\lra{A, B}_{\Theta,\varrho, \zeta}$ is the  
sesquilinear form on $\A_+$, with
\be\label{eq:defscalarproduct}
\lra{A, B}_{\Theta,\varrho, \zeta} 
= \zeta^{\abs{A}^{2}}\varrho(\Theta(A)B)
\ee 
for homogeneous $A,B\in\A_{+}$.
\end{definition}

Since $\varrho$ is neutral, \eqref{eq:defscalarproduct} is nonzero only
when $\abs{A} = \abs{B}$. In this case, 
$\zeta^{\abs{A}^{2}} = \zeta^{\abs{A} \,\abs{B}} = \zeta^{\abs{B}^{2}}$.  
Although the form \eqref{eq:defscalarproduct}  depends on $\zeta$, for most 
of this paper we fix the value of $\zeta$ and drop the subscript in 
$\lra{A, B}_{\Theta,\varrho}$.   

\begin{proposition}[\bf Hermitian Form on $\A_{+}$]\label{Prop:Hermiticity}
Let $\varrho$ be a continuous, neutral functional on $\A$. Then 
the sesquilinear form \eqref{eq:defscalarproduct} is hermitian on $\A_+$ if and only if 
$\varrho$ is reflection invariant.
\end{proposition}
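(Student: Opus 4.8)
The plan is to verify directly that $\overline{\lra{A,B}_{\Theta,\varrho}} = \lra{B,A}_{\Theta,\varrho}$ for all homogeneous $A,B \in \A_+$ precisely when $\varrho$ is reflection invariant, then extend by sesquilinearity. Since $\varrho$ is neutral, both sides vanish unless $\abs{A} = \abs{B} =: k$, so we may assume this from the start, and then $\zeta^{\abs{A}^2} = \zeta^{\abs{B}^2} = \zeta^{k^2}$ is a fixed scalar.

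First I would compute, using anti-linearity of $\Theta$, that $\Theta$ is conjugate-linear and a homomorphism, hence $\Theta(\Theta(A)B) = \Theta(B)\Theta(\Theta(A)) = \Theta(B)A$ because $\Theta^2 = \mathrm{Id}$. So if $\varrho$ is reflection invariant, Definition~\ref{Def:RI} gives
\be
\overline{\lra{A,B}_{\Theta,\varrho}}
= \overline{\zeta^{k^2}\varrho(\Theta(A)B)}
= \overline{\zeta^{k^2}}\,\overline{\varrho(\Theta(A)B)}
= \overline{\zeta^{k^2}}\,\varrho(\Theta(\Theta(A)B))
= \overline{\zeta^{k^2}}\,\varrho(\Theta(B)A)\,.
\ee
It remains to see $\overline{\zeta^{k^2}} = \zeta^{k^2}$; this is where the constraint $\zeta^{p^2} = 1$ is used together with $q = \zeta^2$ having modulus $1$, so $\abs{\zeta} = 1$ and $\overline{\zeta^{k^2}} = \zeta^{-k^2}$. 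I would then observe that $\Theta(\Theta(B)A) = \Theta(A)\Theta(\Theta(B)) = \Theta(A)B$, so applying reflection invariance once more and $\abs{\zeta} = 1$ shows $\zeta^{-k^2}\varrho(\Theta(B)A)$ and $\zeta^{k^2}\varrho(\Theta(B)A)$ differ only by the phase $\zeta^{2k^2} = q^{k^2}$; but $\varrho(\Theta(B)A) = \varrho(\Theta(\Theta(A)B)) = \overline{\varrho(\Theta(A)B)}$ forces a consistency relation that pins down the phase. Cleaner: note $\lra{B,A}_{\Theta,\varrho} = \zeta^{k^2}\varrho(\Theta(B)A)$, so the identity $\overline{\lra{A,B}} = \lra{B,A}$ reduces exactly to $\overline{\zeta^{k^2}} = \zeta^{k^2}$, i.e. $\zeta^{k^2}$ real. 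Since $\abs{\zeta}=1$, this says $\zeta^{k^2} = \pm 1$, equivalently $\zeta^{2k^2} = q^{k^2} = 1$, which need not hold for all $k$ — so I must instead symmetrize differently: write $\overline{\zeta^{k^2}}\varrho(\Theta(B)A)$ and use neutrality plus $\zeta^{\abs{A}\abs{B}} = \zeta^{k^2}$ to match the prefactor in $\lra{B,A}$, observing that in \eqref{eq:defscalarproduct} the prefactor attached to the first slot is $\zeta^{\abs{B}^2}$, which equals $\overline{\zeta^{k^2}}$ is false in general but equals $\zeta^{k^2}$, and the conjugation of $\zeta^{k^2}$ is absorbed correctly because the combination that appears is $\zeta^{\abs{A}^2}$ conjugated against $\zeta^{\abs{B}^2}$, and these are equal, so the phases cancel only via $\varrho$-reflection-invariance being applied to the product $\Theta(A)B$ whose reflection is $\Theta(B)A$.

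The genuine content, and the main thing to get right, is the bookkeeping of the $\zeta$-prefactor: one must check that $\overline{\zeta^{\abs{A}^2}}\,\varrho(\Theta(\Theta(A)B)) = \zeta^{\abs{B}^2}\varrho(\Theta(B)A)$ is equivalent to reflection invariance of $\varrho$, given $\abs{A} = \abs{B}$ and $\abs{\zeta}=1$. So the forward direction is: reflection invariance of $\varrho$ $\Rightarrow$ $\overline{\varrho(\Theta(A)B)} = \varrho(\Theta(B)A)$ $\Rightarrow$ (multiplying by the equal real-conjugate pair of prefactors, using $\overline{\zeta^{\abs{A}^2}} = \zeta^{-\abs{A}^2}$ and matching against $\zeta^{\abs{B}^2} = \zeta^{\abs{A}^2}$) hermiticity up to the identity $\zeta^{-\abs{A}^2} = \zeta^{\abs{A}^2}$ only on the support where things are real — and here I realize the correct statement is that hermiticity holds because $\lra{A,B} = \zeta^{k^2}\varrho(\Theta(A)B)$ and $\overline{\lra{A,B}} = \overline{\zeta^{k^2}}\,\overline{\varrho(\Theta(A)B)} = \overline{\zeta^{k^2}}\varrho(\Theta(B)A)$, while $\lra{B,A} = \zeta^{k^2}\varrho(\Theta(B)A)$, so equality needs $\overline{\zeta^{k^2}}=\zeta^{k^2}$; since this must hold for the form to be hermitian and since $\varrho$ is arbitrary (so $\varrho(\Theta(B)A)$ ranges over enough values), this pins $\zeta^{k^2}\in\R$, which is guaranteed by... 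For the converse, I would take $\lra{A,B} = \overline{\lra{B,A}}$ for all $A,B \in \A_+$, use density of the span of $\A_-\A_+$ in $\A$ together with neutrality to conclude $\varrho(\Theta(C)) = \overline{\varrho(C)}$ on a dense set, then invoke continuity of $\varrho$.

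The hard part will be handling the $\zeta^{k^2}$-phase cleanly — specifically confirming that $\zeta^{\abs{A}^2}$ is real-valued for all relevant degrees (which follows from $\abs{\zeta}=1$ and $\zeta^{p^2}=1$ only in special cases, so the real resolution is that the prefactors in the two slots are \emph{equal}, $\zeta^{\abs{A}^2} = \zeta^{\abs{B}^2}$, and the conjugate sign is carried entirely by $\varrho$) — and the density/continuity argument for the converse, making sure that reflection invariance on the dense span of $\A_-\A_+$, combined with neutrality (which lets us reduce to $\A^0$) and continuity of $\varrho$, genuinely upgrades to reflection invariance on all of $\A$. I expect the forward direction to be a short computation once the prefactor identity $\zeta^{\abs{A}^2} = \zeta^{\abs{A}\abs{B}} = \zeta^{\abs{B}^2}$ (valid on the support of the form) is invoked, and the converse to require the density hypothesis from Definition~\ref{qDouble} plus continuity from the hypothesis on $\varrho$.
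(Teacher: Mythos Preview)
Your proof has a genuine gap. You write that $\Theta(\Theta(A)B) = \Theta(B)\,\Theta(\Theta(A)) = \Theta(B)A$, but by Definition~\ref{Def:R} the reflection $\Theta$ is an antilinear \emph{homomorphism}, not an anti-homomorphism: the correct identity is $\Theta(\Theta(A)B) = \Theta(\Theta(A))\,\Theta(B) = A\,\Theta(B)$. This is precisely why you cannot reconcile the $\zeta$-prefactors and end up needing $\zeta^{k^2}$ to be real, which of course fails in general.

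The missing ingredient is the para-commutation relation \eqref{eq:exchange}, which you never invoke. Since $\Theta(B)\in\A_-$, $A\in\A_+$, and $\abs{\Theta(B)}=-k$, one has $\Theta(B)A = q^{-k^2}A\,\Theta(B)$. Thus
\[
\lra{B,A}_{\Theta,\varrho} = \zeta^{k^2}\varrho(\Theta(B)A) = \zeta^{k^2}q^{-k^2}\varrho(A\,\Theta(B)) = \zeta^{-k^2}\varrho(A\,\Theta(B)),
\]
and now the prefactor matches $\overline{\zeta^{k^2}}$ exactly (using $\abs{\zeta}=1$). Setting $X=\Theta(A)B$ so that $\Theta(X)=A\,\Theta(B)$, hermiticity of the form becomes the condition $\varrho(\Theta(X))=\overline{\varrho(X)}$ for all $X$ in the span of $\A_-\A_+$; your density-plus-continuity argument is then the correct way to finish both directions. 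This is exactly the paper's route.
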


\begin{proof} 
Let $A,B \in \A_{+}$ be homogeneous with $\abs{A} = \abs{B}$. 
Applying the para-commutation relation \eqref{eq:exchange} to
$A\in \A_{+}$ and $\Theta(B)\in \A_{-}$, we obtain
$\Theta(B)A = q^{-\abs{A}^2}A\Theta(B)$. Using this, we find
\beqs
\lra{B, A}_{\Theta,\varrho} &=&  \zeta^{\abs{A}^{2} }\varrho(\Theta(B)A) = \zeta^{-\abs{A}^{2} }\varrho(A\Theta(B))\,,\\
 \overline{\lra{A, B}}_{\Theta,\varrho} &=& 
 \overline{\zeta^{\abs{A}^{2} }\varrho(\Theta(A)B)} = 
\zeta^{-\abs{A}^{2} } \overline{\varrho(\Theta(A)B)}\,.
\eeqs
Setting $X = \Theta(A)B$ and $\Theta(X) = A \Theta(B)$, we see that
$\lra{B, A}_{\Theta,\varrho} = \overline{\lra{A, B}}_{\Theta,\varrho}$
for all $A,B \in \A_+$ if and only if $\varrho(\Theta(X)) = \overline{\varrho(X)}$
for all $X \in \A_-\A_+$.
As $\varrho$ is continuous and the linear span of $\A_-\A_+$ is dense in $\A$, the statement follows.
\end{proof}

\begin{remark}
The above argument relies heavily on the fact that $q$ is of modulus 1.
For $\abs{q} \neq 1$, reflection invariant functionals would not give rise to 
hermitian forms.
\end{remark}

\begin{definition}[\bf Reflection Positivity]\label{def:reflectionpositivity}
Let $\varrho \colon \A \rightarrow \C$ be a neutral linear 
functional. 
Then $\varrho$ is reflection positive
on $\A_{+}$ with respect to $\Theta$ if the form \eqref{eq:defscalarproduct} is positive semidefinite,
\be\label{eq:rpdef}
\lra{A, A}_{\Theta,\varrho} \geq 0\;,
\ee
for all $A \in \A_{+}$. 
\end{definition}

\begin{proposition}\label{prop:forceinvar}
Every continuous, neutral, reflection positive functional 
$\varrho\colon \A \rightarrow \C$ is reflection invariant.
\end{proposition}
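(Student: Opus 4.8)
The plan is to show that a continuous, neutral, reflection positive functional $\varrho$ satisfies the reflection invariance identity $\varrho(\Theta(X)) = \overline{\varrho(X)}$ for all $X \in \A$, by exploiting the positivity of the form \eqref{eq:defscalarproduct} together with the standard polarization trick that makes the form hermitian. First I would note that, by Proposition~\ref{Prop:Hermiticity}, it suffices to show that the sesquilinear form $\lra{\cdot,\cdot}_{\Theta,\varrho}$ on $\A_+$ is hermitian; indeed, that proposition asserts that hermiticity of the form is \emph{equivalent} to reflection invariance of $\varrho$, so the task reduces to deriving hermiticity from positivity.

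The key point is that any sesquilinear form that is positive semidefinite on a complex vector space is automatically hermitian. Concretely, for homogeneous $A,B \in \A_+$ with $\abs{A}=\abs{B}$ (the only case in which the form is nonzero, since $\varrho$ is neutral), one expands $\lra{A+B, A+B}_{\Theta,\varrho}$ and $\lra{A+iB, A+iB}_{\Theta,\varrho}$ using sesquilinearity. Since by hypothesis \eqref{eq:rpdef} each of $\lra{A+B,A+B}_{\Theta,\varrho}$, $\lra{A+iB,A+iB}_{\Theta,\varrho}$, $\lra{A,A}_{\Theta,\varrho}$ and $\lra{B,B}_{\Theta,\varrho}$ is a nonnegative real number, one extracts that $\lra{A,B}_{\Theta,\varrho} + \lra{B,A}_{\Theta,\varrho}$ is real and that $\lra{A,B}_{\Theta,\varrho} - \lra{B,A}_{\Theta,\varrho}$ is purely imaginary, which together give $\lra{B,A}_{\Theta,\varrho} = \overline{\lra{A,B}}_{\Theta,\varrho}$. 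One has to be mildly careful that the polarization argument stays inside $\A_+$ and respects the grading: $A+B$ and $A+iB$ are again homogeneous of the common degree $\abs{A}$, so they lie in $\A_+^{\abs{A}}$ and the form is evaluated legitimately. For $A,B$ of distinct degrees the form vanishes on both sides and hermiticity is trivial; by bilinearity this extends to all of $\A_+$.

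Having established that the form is hermitian, I invoke Proposition~\ref{Prop:Hermiticity} directly: continuity and neutrality of $\varrho$ are part of the hypotheses, hermiticity of \eqref{eq:defscalarproduct} has just been shown, hence $\varrho$ is reflection invariant, which is the claim. I do not expect a genuine obstacle here — the only thing to watch is that the hermiticity-from-positivity step genuinely uses the \emph{complex} scalars (the $A+iB$ polarization), and that one does not inadvertently need positivity at elements outside $\A_+$; both are fine. A clean one-line alternative, if one prefers, is simply to cite the elementary fact that a positive semidefinite sesquilinear form on a complex vector space is hermitian, applied to $V = \A_+$, and then to conclude via Proposition~\ref{Prop:Hermiticity}.
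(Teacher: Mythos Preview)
Your proof is correct and follows exactly the paper's approach: the paper's proof is the one-liner that the form, being sesquilinear and positive semidefinite, is hermitian by polarization, and then invokes Proposition~\ref{Prop:Hermiticity}. Your version just spells out the polarization step in more detail.
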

\begin{proof}
Since \eqref{eq:defscalarproduct} 
is sesquilinear and positive semidefinite, it is hermitian by polarization.
The result now follows from Proposition~\ref{Prop:Hermiticity}.
\end{proof}

Note that Definition~\ref{def:reflectionpositivity} 
depends on the choice of  $\zeta$.

\begin{proposition}
Let $\varrho \colon \A \rightarrow \C$ be a neutral functional. 
Then $\varrho$ is reflection positive on $\A_{+}$ with parameter 
$\zeta$ if and only if
it is reflection positive on $\A_{-}$ with parameter~$\overline{\zeta}$.
\end{proposition}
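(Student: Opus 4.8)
The plan is to establish a precise dictionary between the reflection-positivity data on $\A_+$ with parameter $\zeta$ and the corresponding data on $\A_- = \Theta(\A_+)$ with parameter $\overline{\zeta}$, via the bijection $\Theta \colon \A_+ \to \A_-$. First I would note that since $\Theta^2 = \mathrm{Id}$ and $\Theta$ inverts the grading, every homogeneous $B \in \A_-$ is of the form $B = \Theta(A)$ for a unique homogeneous $A \in \A_+$ with $\abs{A} = -\abs{B}$. The sesquilinear form on $\A_-$ attached to $\varrho$ with parameter $\overline{\zeta}$ is $\lra{B_1,B_2}_{\Theta,\varrho,\overline{\zeta}} = \overline{\zeta}^{\,\abs{B_1}^2}\varrho(\Theta(B_1)B_2)$. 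Substituting $B_i = \Theta(A_i)$, using $\Theta^2 = \mathrm{Id}$, and using $\abs{B_i}^2 = \abs{A_i}^2$ in $\Z_p$ (since squaring is insensitive to sign), I would rewrite this as $\overline{\zeta}^{\,\abs{A_1}^2}\varrho(A_1\Theta(A_2))$.

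The key step is then to relate $\varrho(A_1\Theta(A_2))$ back to $\varrho(\Theta(A_1)A_2)$. Here I would apply the para-commutation relation \eqref{eq:exchange} to $A_2 \in \A_+$ and $\Theta(A_1) \in \A_-$ exactly as in the proof of Proposition~\ref{Prop:Hermiticity}, which gives (on the relevant degree-$0$ part, so $\abs{A_1} = \abs{A_2}$) $\Theta(A_1)A_2 = q^{-\abs{A_1}^2}A_2\Theta(A_1)$, and with $q = \zeta^2$ this reads $A_1\Theta(A_2) = \zeta^{2\abs{A_1}^2}\Theta(A_1)A_2$ after relabelling. Combining, $\overline{\zeta}^{\,\abs{A_1}^2}\varrho(A_1\Theta(A_2)) = \overline{\zeta}^{\,\abs{A_1}^2}\zeta^{2\abs{A_1}^2}\varrho(\Theta(A_1)A_2) = \zeta^{\abs{A_1}^2}\varrho(\Theta(A_1)A_2) = \lra{A_1,A_2}_{\Theta,\varrho,\zeta}$, where I used $\abs{\zeta} = 1$ so $\overline{\zeta} = \zeta^{-1}$. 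Thus the two sesquilinear forms agree under the identification $B = \Theta(A)$, and in particular $\lra{\Theta(A),\Theta(A)}_{\Theta,\varrho,\overline{\zeta}} = \lra{A,A}_{\Theta,\varrho,\zeta}$ for every homogeneous $A \in \A_+$; extending by sesquilinearity over each homogeneous component and summing (the cross-degree terms vanish by neutrality of $\varrho$) gives the equality of quadratic forms on all of $\A_+$, hence positive semidefiniteness of one is equivalent to that of the other.

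I expect the only real subtlety to be bookkeeping with the roots of unity: one must check that $\overline{\zeta}$ is a legitimate parameter for $\A_-$, i.e. that $\overline{\zeta}^2 = \overline{q}$ and $\overline{\zeta}^{p^2} = 1$, and — more to the point — that $\overline{q}$ is the correct para-commutation parameter when the roles of $\A_+$ and $\A_-$ are interchanged. Indeed, rewriting \eqref{eq:exchange} as $A_+A_- = q^{-\abs{A_-}\abs{A_+}}A_-A_+ = \overline{q}^{\,\abs{A_-}\abs{A_+}}A_-A_+$ shows that $\A$ is the $\overline{q}$-double of $\A_-$, so the sign flip $\zeta \mapsto \overline{\zeta}$ is exactly what is forced; this is the one point where $\abs{q}=1$ is used, consistent with the remark after Proposition~\ref{Prop:Hermiticity}. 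Once this is in place the argument is a direct substitution, with the exponent arithmetic $\abs{\Theta(A)}^2 = \abs{A}^2$ in $\Z_p$ being the only thing that needs a one-line justification.
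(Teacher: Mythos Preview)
Your approach is essentially the same as the paper's, and the conclusion is correct, but there is a slip in the off-diagonal computation. The ``relabelling'' step is wrong: applying \eqref{eq:exchange} to $\Theta(A_2)\in\A_-$ and $A_1\in\A_+$ gives
\[
A_1\,\Theta(A_2)=q^{\abs{A_1}\abs{A_2}}\,\Theta(A_2)\,A_1,
\]
not $q^{\abs{A_1}^2}\,\Theta(A_1)A_2$. Consequently the identity you actually obtain is
\[
\lra{\Theta(A_1),\Theta(A_2)}_{\Theta,\varrho,\overline\zeta}=\lra{A_2,A_1}_{\Theta,\varrho,\zeta},
\]
with the arguments swapped, not $\lra{A_1,A_2}_{\Theta,\varrho,\zeta}$ as you claim. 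This is exactly what the paper finds: it shows $\lra{A,B}_{\Theta,\varrho,\overline\zeta}=\lra{\Theta(B),\Theta(A)}_{\Theta,\varrho}$ for $A,B\in\A_-$.

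Fortunately this does not affect the proposition: on the diagonal $A_1=A_2$, both identities reduce to $\lra{\Theta(A),\Theta(A)}_{\Theta,\varrho,\overline\zeta}=\lra{A,A}_{\Theta,\varrho,\zeta}$, which is all you need for the equivalence of positive semidefiniteness. So your proof is salvageable with a one-symbol fix, but the stronger claim that ``the two sesquilinear forms agree under the identification $B=\Theta(A)$'' is false in general and should be weakened to the diagonal statement (or corrected to the transposed form).
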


\begin{proof}
Define the sesquilinear form $\lra{\,A\,,B\,}_{\Theta,\varrho,\overline\zeta}$
on $\A_-$ by
	\be\label{AMinusForm}
	\lra{\,A\,,B\,}_{\Theta,\varrho,\overline\zeta}
		= \zeta^{-\abs{A}^{2}}\varrho\lrp{\Theta(A)B}
	\ee
for homogeneous $A,B\in\A_{-}$.
The relation \eqref{eq:exchange} yields 
	\[
	\lra{\,A\,,B\,}_{\Theta,\varrho,\overline\zeta}
	= \zeta^{-\abs{A}^{2}}\varrho\lrp{\Theta(A)B}
	= \zeta^{\abs{A}^{2}}\varrho\lrp{B\Theta(A)}
	=\lra{\Theta(B),\Theta(A)}_{\Theta,\varrho}\;,
	\]
where both $\Theta(A), \Theta(B)\in\A_{+}$.  We infer that positivity of the form \eqref{eq:defscalarproduct} on $\A_{+}$ is equivalent to positivity of the form 
\eqref{AMinusForm} on $\A_{-}$.  
 \end{proof}

\begin{remark}
Although we singled
out the subalgebra $\A_{+}$,
all the statements in the paper
remain true if one exchanges $\A_{+}$ with $\A_{-}$, provided that one also exchanges the pair
$(q,\zeta)$ with $(\overline{q}, \overline{\zeta})$.  
\end{remark}

\begin{definition}[\bf Quantum Hilbert Space]
Let $\varrho$ be a neutral, reflection positive functional on $\A$.
Let  $\mathcal{N}\subseteq \A_{+}$ be the kernel of the positive semidefinite form 
$\lra{A,B}_{\Theta,\varrho}$.
Then the  quantum Hilbert space $\cH_{\Theta,\varrho}$
is the closure of $\A_{+}/\mathcal{N}$, with inner product induced by
the positive definite form $\lra{A,B}_{\Theta,\varrho}$.
\end{definition}

Denote the closure of $\A_+^k/\cN$ by $\cH_{\Theta,\varrho}^{k}$.
Since $\varrho$ is neutral, one has $\lra{A,B}_{\Theta,\varrho} = 0$ for 
homogeneous $A,B\in\A_{+}$
with $\abs{A}\neq \abs{B}$. 
It follows that $\cH_{\Theta,\varrho}^{k} \perp \cH_{\Theta,\varrho}^{k'}$
for $k\neq k'$, so that 
\be
\cH_{\Theta,\varrho} = \bigoplus_{k\in \Z_{p}} \cH_{\Theta,\varrho}^{k}
\ee 
is a $\Z_{p}$-graded Hilbert space.
In particular, 
in the fermionic case $p=2$, the space 
$\cH_{\Theta,\varrho}$ with the form $(A,B) = \varrho(\Theta(A)B)$ 
is a \emph{super Hilbert space}
in the sense of \cite{SuperSolutions}.

\subsection{The Twisted Product}\label{Sect:TwistedProduct}
For $A\in \A_-$ and $B\in \A_+$, we introduce the \emph{twisted product} ${A \circ B}$.
It allows for a convenient reformulation of reflection positivity, and
plays an important role in \cite{Jaffe-Pedrocchi2015-2,JaffeJanssens2016,JL2016}.

\begin{definition}[\bf Twisted Product]  \label{Def:TwistedProduct}
The twisted product is the bilinear map $\A_- \times \A_+ \rightarrow \A$
defined by  
	\be\label{eq:TwistedProduct}
		A\circ B 
		= \zeta^{\abs{A}^{2}} \,AB
		= \zeta^{\abs{B}^{2}}\,AB
	\ee
if $A\in \A_-$ and $B\in \A_+$ are homogeneous with $\abs{A} = - \abs{B}$, and by 
$B\circ A = 0$ if they are homogeneous with $\abs{A} \neq - \abs{B}$.
\end{definition}
\begin{remark}
If $\abs{B} = - \abs{A}$,
then $A\circ B = \zeta^{-\abs{A}\abs{B}}\,AB$
interpolates 
between the products $AB$ and $BA = q^{-\abs{B}\abs{A}}AB$.
If $\abs{B} \neq - \abs{A}$, then $\varrho(AB) = \varrho(BA) = 0$ for any neutral 
functional $\varrho$ on $\A$.
As we will mainly be interested in expressions of the form $\varrho(A\circ B)$, 
we have put $A\circ B = 0$.
\end{remark}

If $\varrho \colon \A \rightarrow \C$ is a neutral functional, then the sesquilinear form 
can be expressed in terms of the twisted product as
	\be
	\lra{A,B}_{\Theta,\varrho} =
	\varrho(\Theta(A)\circ B)\;, \text{ for }A,B\in\A_{+}\;.
	\ee
In particular, the reflection-positivity condition \eqref{eq:rpdef} is
 	\be\label{eq:RPincircle}
	 \varrho\lrp{\Theta(A)\circ A}\geq 0\;,
		\quad\text{for }\quad A\in\A_{+}\;.
	\ee

\subsection{The Reflection-Positive Cone}\label{sec:firstcone}

A central idea in this paper is that the reflection-positive functionals on $\A$ can be characterized 
in terms of the \emph{reflection-positive cone} $\cK_{+} \subseteq \A$. 
In the bosonic case, this idea appeared first in connection to Tomita-Takesaki modular theory \cite{Woronowicz1972,Araki1974,Connes1974,Haagerup1975}, 
and was later used in statistical physics \cite{FILS78}.  
In the fermionic case, reflection-positive cones were used in \cite{OS78}.
We extend the definition of reflection-positive cones to the 
$\Z_{p}$-graded setting as follows.

\begin{definition}\label{defcone}
The reflection-positive cone $\cK_{+}\subseteq \A^{0}$ is the set 
\[
\cK_{+}=\{\Theta(A)\circ A \,|\, A  \text{  homogeneous in }\A_{+}  \}\,.
\] 
\end{definition}
Denote the convex hull of $\cK_{+}$ by $\mathrm{co}(\cK_{+})$, 
and  denote its closure by
$
\cok
$.  
Since every element of $\cK_{+}$ is of degree zero, $\cok\subseteq \A^{0}$.
The following proposition shows that 
the set of 
continuous, neutral, reflection positive functionals on $\A$
is precisely 
the \emph{continuous dual cone} of $\cok$ in $\A^0$.          
          
\begin{proposition}\label{prop:rpclosure}
Let $\varrho \colon \A \rightarrow \C$ be a continuous, neutral, linear functional. Then
the following are equivalent:
\begin{itemize}
\item[a)] The functional $\varrho$ is reflection positive.
\item[b)] The functional $\varrho$ is nonnegative on $\cK_{+}$.
\item[c)] The functional $\varrho$ is nonnegative on $\cok$.
\end{itemize}
\end{proposition}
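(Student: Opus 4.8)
The plan is to prove the cycle of implications $c) \Rightarrow b) \Rightarrow a) \Rightarrow b)$ (or $b)\Rightarrow c)$), splitting the work into the genuinely substantive equivalence $a)\Leftrightarrow b)$ and a soft-analysis step $b)\Leftrightarrow c)$ that uses linearity, continuity, and convexity.

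First I would establish $a)\Leftrightarrow b)$. This is essentially a restatement of the definitions. By Definition~\ref{def:reflectionpositivity} and the reformulation \eqref{eq:RPincircle}, $\varrho$ is reflection positive iff $\varrho(\Theta(A)\circ A)\geq 0$ for every homogeneous $A\in\A_+$ (it suffices to check homogeneous $A$, since for general $A=\sum_k A^k$ the form is block-diagonal across the grading as noted after the Quantum Hilbert Space definition, so $\lra{A,A}_{\Theta,\varrho}=\sum_k\lra{A^k,A^k}_{\Theta,\varrho}$). But $\{\Theta(A)\circ A : A \text{ homogeneous in }\A_+\}$ is exactly the cone $\cK_+$ of Definition~\ref{defcone}. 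Hence nonnegativity of $\varrho$ on all of $\cK_+$ is literally the same statement as reflection positivity. So $a)\Leftrightarrow b)$ requires only unwinding notation; the one point deserving a sentence is why testing on homogeneous elements suffices, which follows from neutrality of $\varrho$.

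Next, $b)\Leftrightarrow c)$. The implication $c)\Rightarrow b)$ is trivial since $\cK_+\subseteq \coh\subseteq\cok$. For $b)\Rightarrow c)$: if $\varrho\geq 0$ on $\cK_+$, then by linearity $\varrho\geq 0$ on the convex hull $\coh$ (a convex combination $\sum_i t_i x_i$ with $x_i\in\cK_+$, $t_i\geq 0$, $\sum t_i=1$ maps to $\sum_i t_i\varrho(x_i)\geq 0$); then by continuity of $\varrho$, $\varrho\geq 0$ on the closure $\cok$, because $\{z\in\A : \varrho(z)\geq 0\}$ is a closed set (being the preimage of the closed half-line $[0,\infty)\subseteq\C$ under the continuous map $\varrho$; note $\varrho$ is real on $\A^0$ since it is hermitian there by Proposition~\ref{prop:forceinvar} combined with Proposition~\ref{Prop:Hermiticity}, or more directly because elements of $\cok$ are limits of real-valued points so $\varrho$ restricted there is real-valued, hence $\varrho^{-1}([0,\infty))$ makes sense and is closed). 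Chaining these gives $\varrho\geq 0$ on $\cok$, i.e.\ $c)$.

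I expect no serious obstacle here: the proposition is a packaging result and everything reduces to the definitions plus the standard fact that a continuous linear functional nonnegative on a set is nonnegative on the closed convex hull. The only mildly delicate point is making sure the target of $\varrho$ is handled correctly — $\varrho$ takes values in $\C$, so ``nonnegative'' implicitly asserts the values are real and $\geq 0$; I would remark that on $\cK_+$, and hence on $\cok$, the form $\lra{A,A}_{\Theta,\varrho}=\varrho(\Theta(A)\circ A)$ is automatically real once hermiticity is known, so the half-plane argument is clean. If one prefers to avoid invoking hermiticity, one can instead note that for $\varrho$ continuous, $\{z:\operatorname{Re}\varrho(z)\geq 0 \text{ and } \operatorname{Im}\varrho(z)=0\}$ is closed and contains $\coh$, hence contains $\cok$. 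Either way the argument is routine; the content of the proposition is conceptual — identifying reflection-positive functionals with the dual cone of $\cok$ — rather than technical.
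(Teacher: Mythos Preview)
Your proposal is correct and follows essentially the same approach as the paper: the equivalence $a)\Leftrightarrow b)$ is unwound directly from Definition~\ref{defcone} and \eqref{eq:RPincircle}, and $b)\Leftrightarrow c)$ is handled by the soft argument that a continuous linear functional nonnegative on a set is nonnegative on its closed convex hull. Your additional remarks on why homogeneous elements suffice and on real-valuedness are correct elaborations that the paper leaves implicit.
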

\begin{proof}
The equivalence of a) and b) follows from
Definition \ref{defcone} and equation
\eqref{eq:RPincircle}.
As $\cK_{+} \subseteq \cok$, we infer that c) implies b).
To show that b) implies c), note that
since $\varrho(\cK_{+})\subseteq \R^{\geq 0}$ and $\varrho$ is linear, 
the image of the convex hull of $\cK_{+}$ is contained in $\R^{\geq 0}$.
As $\varrho$ is continuous, the same holds for its closure $\cok$.
\end{proof}

\begin{proposition}
The linear span of $\cK_+$ is dense in $\A^0$.
\end{proposition}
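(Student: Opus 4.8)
The plan is to show that an arbitrary homogeneous element $A_-A_+ \in \A_-\A_+$ of degree zero lies in the linear span of $\cK_+$; since such products span a dense subspace of $\A$ (by the $q$-double hypothesis), and since $\cK_+ \subseteq \A^0$, their degree-zero parts will span a dense subspace of $\A^0$, giving the claim. So fix homogeneous $A_+ \in \A_+$ and $A_- \in \A_-$ with $\abs{A_-} = -\abs{A_+} =: -k$. Because $\A_- = \Theta(\A_+)$, write $A_- = \Theta(C)$ for some homogeneous $C \in \A_+^k$; up to the scalar $\zeta^{k^2}$ the product $A_-A_+$ equals the twisted product $\Theta(C)\circ A_+$. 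Thus it suffices to express each $\Theta(C)\circ B$, for homogeneous $B, C \in \A_+^k$, as a linear combination of elements $\Theta(A)\circ A$ with $A$ homogeneous in $\A_+$.

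The natural tool here is a polarization identity for the twisted product, viewed as a sesquilinear-type pairing on $\A_+^k$. First I would record that $(B,C) \mapsto \Theta(B)\circ C$ is "sesquilinear" in the appropriate sense: it is additive in each slot, linear in $C$, and conjugate-linear in $B$ (the latter because $\Theta$ is anti-linear, and $\zeta^{\abs{B}^2} = \zeta^{\abs{C}^2}$ when $\abs{B}=\abs{C}$ makes the scalar prefactor unambiguous). Moreover, by Proposition~\ref{Prop:Hermiticity} — or rather by the computation in its proof, specialized to the element level rather than after applying $\varrho$ — one has the Hermitian-type symmetry $\Theta(\Theta(B)\circ C) = \Theta(C)\circ B$, so the diagonal values $\Theta(A)\circ A$ are $\Theta$-fixed and the off-diagonal values come in $\Theta$-conjugate pairs. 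Granting additivity and the scalar behavior under $A \mapsto \lambda A$ (where $\Theta((\lambda A))\circ(\lambda A) = \abs{\lambda}^2\,\Theta(A)\circ A$), the usual four-term polarization identity
\[
4\,\Theta(B)\circ C = \sum_{j=0}^{3} i^{j}\,\Theta\!\big(B + i^{j}C\big)\circ\big(B + i^{j}C\big)
\]
expresses $\Theta(B)\circ C$ as a $\C$-linear combination of four diagonal elements of $\cK_+$, each of the form $\Theta(A)\circ A$ with $A = B + i^j C \in \A_+^k$ homogeneous. This exhibits $\Theta(B)\circ C \in \mathrm{span}_\C(\cK_+)$, hence $A_-A_+ \in \mathrm{span}_\C(\cK_+)$, for every degree-zero product $A_-A_+$.

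The main thing to be careful about — and the only real obstacle — is verifying that the twisted product genuinely satisfies the algebraic identities that make polarization go through, in particular that it is additive in each variable when restricted to a single homogeneous degree $k$ (which it is, since the scalar prefactor $\zeta^{k^2}$ is fixed on $\A_+^k$) and that scaling $A$ by $\lambda \in \C$ scales $\Theta(A)\circ A$ by $\abs{\lambda}^2$. Once this is in hand, the completion step is immediate: $\A_-\A_+$ has dense linear span in $\A$, the projection $\A \to \A^0$ onto the degree-zero summand is continuous (the decomposition \eqref{DirectSumA} is topological, or dense with continuous projections in the $p=0$ case), so the images of the degree-zero parts of $\A_-\A_+$ — which we have just shown lie in $\mathrm{span}_\C(\cK_+)$ — are dense in $\A^0$. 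Therefore $\mathrm{span}_\C(\cK_+)$ is dense in $\A^0$. $\qed$
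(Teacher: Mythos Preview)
Your proof is correct and follows essentially the same polarization approach as the paper: the paper expands $\Theta(A+B)\circ(A+B)$ and $\Theta(A+iB)\circ(A+iB)$ separately to isolate the symmetric and antisymmetric parts of $\Theta(A)\circ B$, while you package both into the single four-term identity. One small slip: with $\Theta$ antilinear, your sum $\sum_{j=0}^{3} i^{j}\,\Theta(B+i^{j}C)\circ(B+i^{j}C)$ actually equals $4\,\Theta(C)\circ B$ rather than $4\,\Theta(B)\circ C$, but since $B$ and $C$ are arbitrary this is immaterial to the conclusion.
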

\begin{proof}
Expanding  $\Theta(A+ B)\circ (A+B)$ and $\Theta(A+ iB)\circ (A+iB)$ for 
$A,B\in \A_+$, one finds that 
\beqs
\Theta(A)\circ B + \Theta(B)\circ A &\in& 
\phantom{i(}\cK_+ - \cK_+\\
\Theta(A)\circ B - \Theta(B)\circ A &\in& 
i(\cK_+ - \cK_+)\,.
\eeqs
Thus $\Theta(A) \circ B \in \cK_+ - \cK_+ + i\cK_+ - i \cK_+$. 
Since the linear span of $\A_-\A_+$ is dense in $\A$, the span of 
$\Theta(\A_+)\circ \A_+$ is dense in $\A^0$.
\end{proof}

\subsection{Boltzmann Functionals}

In physical applications, the relevant functionals $\tau_{H}$ are often perturbations of a fixed
`background' functional $\tau$ by an operator $H$. 

Let $\tau \colon \A \rightarrow \C$ be a continuous, neutral,
reflection positive functional on $\A$. 
Let $H\in \A$ be a reflec\-tion-invariant operator of degree zero,
\be
\Theta(H) = H,  \quad H \in \A^{0}\,.
\ee
If the exponential series for $e^{-H}$ converges, then
the
\emph{Boltzmann functional} 
$
\tau_{H} \colon \A \rightarrow \C
$ 
is defined by
\be
\tau_{H}(A) = \tau(A\,e^{- H})\,.
\ee

\begin{proposition}
The Boltzmann functional $\tau_{H}$ is continuous, neutral, and reflection invariant.
\end{proposition}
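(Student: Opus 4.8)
The plan is to verify the three properties of $\tau_H$ directly from the definition $\tau_H(A) = \tau(Ae^{-H})$, using the corresponding properties of $\tau$ together with the hypotheses $\Theta(H) = H$ and $H \in \A^0$.

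First I would check \emph{continuity}. Right multiplication by the fixed element $e^{-H} \in \A$ is a continuous map $\A \to \A$, since $\A$ is a locally convex topological algebra with separately continuous multiplication. Hence $A \mapsto Ae^{-H}$ is continuous, and $\tau_H$ is the composition of this map with the continuous functional $\tau$, so $\tau_H$ is continuous.

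Next I would check \emph{neutrality}. Since $H \in \A^0$, each term $H^n/n!$ in the exponential series lies in $\A^0$, and because $\A^0$ is closed (it is a direct summand in the locally convex decomposition, hence closed) we get $e^{-H} \in \A^0$. Then for homogeneous $A$ with $\abs{A} \neq 0$ we have $\abs{Ae^{-H}} = \abs{A} + 0 \neq 0$, so $\tau(Ae^{-H}) = 0$ by neutrality of $\tau$; extending by linearity, $\tau_H$ is neutral.

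Finally I would check \emph{reflection invariance}, i.e. $\tau_H(\Theta(A)) = \overline{\tau_H(A)}$. Since $\Theta$ is an antilinear homomorphism, $\Theta(e^{-H}) = e^{-\Theta(H)} = e^{-H}$ by hypothesis (the antilinearity only conjugates the real coefficients $1/n!$, which are already real, and the homomorphism property sends $H^n$ to $\Theta(H)^n = H^n$). Therefore $\Theta(\Theta(A)e^{-H}) = \Theta(e^{-H})\Theta(\Theta(A)) = e^{-H} A$; note $\Theta$ reverses the order of a product only if it were an antihomomorphism, but here $\Theta$ is a homomorphism, so in fact $\Theta(\Theta(A) e^{-H}) = \Theta(\Theta(A))\Theta(e^{-H}) = A e^{-H}$. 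Applying reflection invariance of $\tau$ to the element $\Theta(A) e^{-H}$ gives $\tau(\Theta(\Theta(A)e^{-H})) = \overline{\tau(\Theta(A)e^{-H})}$, that is $\tau(Ae^{-H}) = \overline{\tau(\Theta(A)e^{-H})}$, which rearranges to $\tau_H(\Theta(A)) = \overline{\tau_H(A)}$.

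The only mild subtlety — the ``main obstacle'', such as it is — is justifying that $\Theta$ and the grading decomposition behave well with respect to the \emph{infinite} series defining $e^{-H}$: one needs $\A^0$ closed and $\Theta$ continuous (both granted) so that the termwise identities $e^{-H} \in \A^0$ and $\Theta(e^{-H}) = e^{-H}$ pass to the limit of partial sums. Everything else is a routine unwinding of the definitions. I would present the argument in the order continuity, neutrality, reflection invariance as above.
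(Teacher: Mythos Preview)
Your argument is correct and follows essentially the same route as the paper: continuity from separate continuity of multiplication and continuity of $\tau$; neutrality from $e^{-H}\in\A^0$; reflection invariance from $\Theta(e^{-H})=e^{-H}$ together with reflection invariance of $\tau$. The only point to tighten is that in the paper's standing hypotheses $\tau$ is assumed continuous, neutral and \emph{reflection positive}, not reflection invariant; the paper invokes Proposition~\ref{prop:forceinvar} to pass from the former to the latter, so you should cite that step rather than take reflection invariance of $\tau$ as given.
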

\begin{proof}
It is continuous since both $A \mapsto Ae^{-H}$ and
$A \mapsto \tau(A)$ are continuous. It is neutral
since $H$ is of degree zero and $\tau$ is neutral, and it
is reflection invariant since both $H$ and $\tau$ are
reflection invariant. (This holds for $\tau$ by Proposition~\ref{prop:forceinvar}).
\end{proof}



\begin{remark} \label{Remark:Constant}
If $\tau_{H}$ is reflection positive
for $H\in \A$, then it is reflection positive 
for every shift $H'= H + \Delta I$ by a real number $\Delta \in \R$,
since $\tau_{H'} = e^{-\Delta} \,\tau_{H}$.
\end{remark}

In applications to statistical physics, $H$ represents the Hamiltonian of the system, 
which is usually a hermitian operator in a $*$-algebra.
Note however that we do not require $A$ to be a $*$-algebra, 
nor that $H$ be hermitian.
This is important for applications in quantum field theory, where $H$ represents the action.

\subsection{Factorization}\label{sec:factorization}

The central question is to determine whether or not $\tau_{H}$
is reflection positive in terms of tractable conditions on $H$.
The first step is to show reflection positivity of the `background' 
functional~$\tau$. This can often be 
done with the help of the following factorization criterion, expressing that 
$\A_{-}$ and $\A_{+}$ are independent under~$\tau$.

\begin{definition}[\bf Factorizing Functionals]
Let $\tau \colon \A \rightarrow \C$ be a continuous, neutral, reflection invariant 
functional. Then $\tau$ is factorizing if there exists 
a neutral, continuous functional $\tau_{+}$ on $\A_{+}$ such that 
\[
\tau(\Theta(A)\circ B) = 
\overline{\tau_{+}(A)}\,
\tau_{+}(B)\,,
\]
for all $A,B \in \A_{+}$ with $\abs{A} = \abs{B}$.
\end{definition}
Since $\tau$ is reflection invariant, this is equivalent to
\be\label{eq:altfactor}
\tau(A\circ B) = 
\tau_{-}(A)\,\tau_{+}(B) \quad \text{for all} \quad A \in \A_{-},\, B \in \A_{+} \,,
\ee
where 
$\tau_{-}(B) := \overline{\tau_{+}(\Theta(B))}$ for $B\in \A_{-}$.
Since the span of $\A_-\A_+$ is dense in $\A$, a factorizing functional 
$\tau$ is uniquely determined by 
$\tau_+\colon \A_+ \rightarrow \C$.

\begin{proposition}\label{invimppos}
Every factorizing functional $\tau \colon \A \rightarrow \C$
is reflection positive.
\end{proposition}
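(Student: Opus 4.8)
The plan is to show directly that the sesquilinear form $\lra{A,B}_{\Theta,\tau}$ is positive semidefinite on $\A_+$, using the factorization hypothesis. The key observation is that for a factorizing functional, the form has the extremely simple shape
\[
\lra{A,A}_{\Theta,\tau} = \tau(\Theta(A)\circ A) = \overline{\tau_+(A)}\,\tau_+(A) = \abs{\tau_+(A)}^2 \geq 0
\]
for homogeneous $A\in\A_+$. So on homogeneous elements there is nothing to prove beyond unwinding the definitions. The only real content is to handle a general (non-homogeneous) element $A = \sum_k A_k$ with $A_k\in\A_+^k$, and to check that the cross terms organize themselves correctly.

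First I would recall from the discussion after Definition~\ref{Def:SesqForm} that $\lra{A_k, A_{k'}}_{\Theta,\tau}$ vanishes whenever $k\neq k'$, because $\tau$ is neutral and $\Theta$ inverts the grading, so $\Theta(A_k)A_{k'}$ has degree $k'-k\neq 0$. Hence the form is block-diagonal with respect to the $\Z_p$-grading:
\[
\lra{A,A}_{\Theta,\tau} = \sum_{k\in\Z_p} \lra{A_k,A_k}_{\Theta,\tau}.
\]
Then I would apply the homogeneous computation above to each summand, obtaining $\lra{A,A}_{\Theta,\tau} = \sum_k \abs{\tau_+(A_k)}^2 \geq 0$. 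This establishes \eqref{eq:rpdef} for all $A\in\A_+$, which by Definition~\ref{def:reflectionpositivity} is exactly reflection positivity of $\tau$. (One should also note $\tau$ is already assumed continuous and neutral, so the hypotheses of that definition are met.)

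I anticipate no serious obstacle here; the proposition is essentially a bookkeeping statement once the factorization identity is in hand. The one point requiring a small amount of care is the interplay between the twisted product and the grading: one must make sure that $\tau_+(A_k)$ is well-defined and that the identity $\tau(\Theta(A_k)\circ A_k)=\overline{\tau_+(A_k)}\tau_+(A_k)$ is applied only to the degree-$k$ component, since the factorization identity as stated requires $\abs{A}=\abs{B}$. An alternative, even more economical phrasing would be to invoke Proposition~\ref{prop:rpclosure}: it suffices to check $\tau\geq 0$ on the cone $\cK_+$, and every element of $\cK_+$ is of the form $\Theta(A)\circ A$ with $A$ homogeneous, so the homogeneous computation alone finishes the proof without any need to decompose a general element. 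I would likely present this shorter route.
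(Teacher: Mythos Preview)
Your proposal is correct and matches the paper's approach. The paper's proof is the one-liner: for homogeneous $A\in\A_+$, $\tau(\Theta(A)\circ A)=\overline{\tau_+(A)}\tau_+(A)\geq 0$; this is exactly your ``shorter route'' via Proposition~\ref{prop:rpclosure} (checking only homogeneous elements), and your longer block-diagonal argument simply makes explicit why that suffices.
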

\begin{proof}
If $A\in \A_{+}$ is homogeneous, 
$
	\tau(\Theta(A)\circ A) = \overline{\tau_{+}(A)}\tau_{+}(A) \geq 0
$.
\end{proof}

\subsection{Strictly Positive Functionals}\label{sec:sharpsec}

The notion of \emph{strictly positive functionals} 
generalizes the notion of faithful states to algebras 
without involution, such as Grassmann algebras.
Let 
$\sharp \colon \A_{+} \rightarrow \A_{+}$ be a continuous,
antilinear map which inverts the grading, $\abs{A^{\sharp}} = -\abs{A}$.

\begin{definition}\label{def:strictpos}
The functional $\tau_{+} \colon \A_{+} \rightarrow \C$ is 
strictly positive with respect to $\sharp$
if 
\[\tau_{+}(A^{\sharp} A) >0\]
for all nonzero $A\in \A_{+}$. 
\end{definition}

If $\A$ is a $*$-algebra and $\sharp$ is the $*$-involution, 
then $\tau_+$ is strictly positive if and only if it is a (not necessarily normalized) 
faithful state.

\begin{remark}
The existence of a 
strictly positive, factorizing functional $\tau \colon \A \rightarrow \C$ implies that
$\A_{-} \cap \A_{+} = \C \one$, since the restriction of $\tau$ to $\A_{-} \cap \A_{+}$ 
must have trivial kernel.
\end{remark}

\setcounter{equation}{0}
\section{Applications}\label{sec:applications}

The above setting 
is motivated by a large number of applications and examples
in both mathematics and physics. 
Before we continue our investigation into reflection positivity 
of the Boltzmann functionals $\tau_H$, we pause to outline a number of
situations that fit into the general scheme outlined in \S\ref{sec:section2}. For more detailed applications in the context of lattice statistical physics, 
we refer to~\S\ref{sec:latphys}--\ref{sec:parafermionsNo2}.

\subsection{Tensor Products}\label{sec:tensprod}

Let $\Theta_+ \colon \A_+ \rightarrow \A_-$ be an antilinear isomorphism
of von Neumann algebras, and let 
$\A = \A_{-} \otimes \A_{+}$ be the (spatial) tensor product of $\A_-$ and $\A_+$.
Then $\Theta(A \otimes B) = \Theta_+(B) \otimes \Theta^{-1}_+(A)$
defines a reflection on $\A$, and $\A$ is a bosonic
$q$-double of $\A_+$.
Reflection positivity in this setting was studied by Woronowicz under the name 
$j$-positivity \cite{Woronowicz1972}.

%

If $\tau_{+}$ is a state on $\A_{+}$,
then 
the induced state $\tau(\Theta_+(A) \otimes B) = \overline{\tau_{+}(A)}\tau_{+}(B)$ 
on $\A$ is factorizing, and hence reflection positive. 
If $\sharp$ is the involution on $\A_+$, then $\tau_+$ is strictly positive 
in the sense of Definition~\ref{def:strictpos} if and only if it is faithful.
In practice, $\tau$ and $\tau_{+}$ will often be tracial states.

\subsection{Tomita-Takesaki Modular Theory}\label{sec:TomitaTakesaki}


Let $\A = B(\cH)$ be the algebra of bounded operators on a Hilbert space $\cH$. 
Let $\A_+ \subseteq \A$ be a factor, and let $\A_- = \A'_+$
be its commutant.
If $\Omega\in \cH$ is cyclic and separating for $\A_+$,  
then the modular involution $J \colon \cH \rightarrow \cH$
yields an antilinear homomorphism $\Theta(A) = JAJ$, with $\Theta(\A_+) = \A_-$.

This makes $\A$ a bosonic $q$-double of $\A_+$. Indeed, the algebras $\A_-$ and $\A_+$ 
commute by definition. To see that
the linear span of $\A_- \A_+$ is dense in $\A$, note that since $\A_+$ is a factor, 
$(\A_{-} \A_{+})' \subseteq \A_{+} \cap \A'_{+} = \C \one$.
Thus $(\A_{-} \A_{+})'' = \A$, and the linear span of $\A_-\A_+$
is dense in $\A$ by the double commutant theorem.

Although the state $\tau(A) = \lra{\Omega, A \Omega}$ is in general not factorizing, it is  
reflection positive on 
$\A_+$, since
\[
\tau(\Theta(A)A) = \lra{\Omega, JAJA\Omega} =  
\lra{JA^*\Omega, A\Omega} = 
\lra{\Delta^{1/2} A\Omega, A\Omega} \geq 0
\]
for all $A\in \A_+$.
Since the restriction $\tau_+$ of $\tau$ to $\A_+$ is faithful, it is strictly positive 
in the sense of Definition~\ref{def:strictpos}, with $\sharp$ given by the $*$-involution.

It was shown by Connes and Haagerup \cite{Connes1974,Haagerup1975}
that $\A_+$ is characterized up to unitary isomorphism by 
the \emph{natural positive cone} 
\[
 \mathcal{P}_{+}^{\natural} = \overline{ \{\Delta^{1/4} A^*A \,\Omega \,;\, A\in \A_+\}}\,,
\]
related to the reflection-positive cone $\cK_+$ of
Definition~\ref{defcone} by 
\[
\mathcal{P}_+^{\natural} = \overline{\cK_{+} \Omega}\,.
\]
From the fact that $\mathcal{P}_{+}^{\natural}$ is self-dual (as 
discovered independently by Araki \cite[Thm.~3]{Araki1974} and 
Connes \cite[Thm.~2.7]{Connes1974}), it follows that 
$\tau_H$ is reflection positive if and only if $e^{-H}\Omega$ is an element of 
$\mathcal{P}_{+}^{\natural}$.
In \S\ref{Sect:SforRP}, we provide 
tractable conditions on the Hamiltonian $H$ which ensure that this is the case.

\subsection{Grassmann Algebras}\label{ExIINumber1}
Classical fermions are described by
the Grassmann algebra
$\A = \bigwedge V$, where $V$ is 
an oriented 
Hilbert space of finite, even dimension $N$.
This is the unital algebra with generators $v, w \in V$ and relations 
\[
	v w + w v = 0\,.
\]
An arbitrary basis $\psi_i$ of $V$ yields generators satisfying
\beqs
	\psi_{i}\psi_j &=& - \psi_j \psi_i \quad\text{for}\quad i\neq j\\
	\psi^2_i &=& 0 \,\,\quad \qquad\text{for all} \quad i\,. 
\eeqs
The Grassmann algebra is $\Z$-graded, and hence in particular 
$\Z_2$-graded.
The degree of a homogeneous element 
$A \in \A$ is denoted by $\abs{A} \in \Z_{2}$, and we denote 
the even and odd part of $\A$ by $\A^0$ and $\A^1$.

Suppose that $V = V_{-}\oplus V_{+}$, where $V_{\pm}$ are 
Hilbert spaces of even dimension $n$.  It is important to keep track of the orientation of $V$, as this determines the sign of the Berezin integral.  
If $\psi_{1}, \ldots, \psi_{n}$ is a positively oriented orthonormal basis of $V_{+}$, 
then
\[\mu_{+} = \psi_{1} \wedge \ldots \wedge \psi_{n}\]
is a positively oriented volume on $V_{+}$. 
Similarly, if  $\mu_{-}$ is a positively oriented volume on $V_{-}$, then
the orientation of $V$ is defined by declaring 
$\mu=\mu_{-}\wedge\mu_{+} $ to be positively oriented. 
Since we are only working with vector spaces $V_{\pm}$ of even dimension, 
\be\label{eq:order}
\mu = \mu_{-} \wedge \mu_{+} 
=  \mu_{+} \wedge\mu_{-}\,.
\ee

%

The algebra $\A$ is the linear span of  $\A_{-}\A_{+}$, where
$\A_{\pm} = \bigwedge V_{\pm}$ are the Grassmann algebras of $V_{\pm}$.
Suppose that $\theta \colon V_{+} \rightarrow V_{-}$ is an antilinear, 
volume preserving 
vector space isomorphism. 
We extend it to an antilinear isomorphism $\theta \colon V \rightarrow V$ 
with $\theta^{2} = \mathrm{Id}$, and
$\theta(V_{+}) = V_{-}$.
Let $\Theta \colon \A \rightarrow \A$ be the unique antilinear 
homomorphism that agrees with $\theta$ on $V \subseteq \A$.
By \eqref{eq:order} and the fact that $\Theta(\mu_{\pm}) = \mu_{\mp} $, 
we find
\beq\label{eq:refvol}
	\Theta(\mu)
	= \Theta(\mu_{-}\wedge\mu_{+})
	= \mu_{+}\wedge\mu_{-}
	= \mu\,.
\eeq	
Note that for all $A_{\pm} \in \A$, we have
\[
	A_- A_+ = (-1)^{\abs{A_-}\abs{A_+}}A_+A_-\,.
\]
In particular, this holds for $A_\pm \in \A_\pm$, so $\A$ is the
fermionic  $q$-double 
of~$\A_{+}$. With $\zeta = i$, a functional $\varrho \colon \A \rightarrow \C$
is reflection positive if
\be\label{eq:rpmod2}
i^{\abs{A}^2}\varrho(\Theta(A)A) \geq 0 \quad\text{for}\quad A\in \A_+\,.
\ee


\begin{definition}
The {Berezin integral} is the functional $\tau \colon \A \rightarrow \C$ 
defined by
zero on $\bigwedge^{k}V$ for $k<\mathrm{dim}(V)$, and by $\tau(\mu) = 1$.
\end{definition}


\begin{proposition}\label{prop:BerezinFact}
If $\tau$ and $\tau_{+}$ are the Berezin integrals on $\A$ and $\A_{+}$, then $\tau$ is neutral, reflection invariant, and factorizing; 
$\tau(\Theta(A)\circ B) = \overline{\tau_{+}(A)} \tau_{+}(B)$
for all $A,B \in \A_{+}$ with $\abs{A} = \abs{B}$.
In particular, $\tau$ is reflection positive.
\end{proposition}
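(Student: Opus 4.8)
The plan is to verify the three claimed properties of the Berezin integral $\tau$ on $\A = \bigwedge V$ in turn — neutrality, reflection invariance, and factorization — and then invoke Proposition~\ref{invimppos} for reflection positivity. For \emph{neutrality}, I would observe that $\tau$ vanishes on $\bigwedge^k V$ for $k < \dim V$ by definition, and the only nonzero contribution comes from the top form $\mu$, which has degree $\dim V = 2n \equiv 0$ in $\Z_2$ (indeed in $\Z$); hence $\tau(A) = 0$ whenever $\abs{A} \neq 0$. For \emph{reflection invariance}, I would use the computation \eqref{eq:refvol}, namely $\Theta(\mu) = \mu$: since $\Theta$ is an antilinear homomorphism, for a homogeneous $A$ of top degree written as $A = c\mu$ we get $\Theta(A) = \overline{c}\,\Theta(\mu) = \overline{c}\,\mu$, so $\tau(\Theta(A)) = \overline{c} = \overline{\tau(A)}$; on lower-degree elements both sides vanish (note $\Theta$ preserves the $\Z$-grading up to sign, so it maps $\bigwedge^k V$ to itself, and $\tau$ kills these for $k < \dim V$). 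Thus $\tau(\Theta(A)) = \overline{\tau(A)}$ for all $A$.

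The heart of the argument is \emph{factorization}. The plan is to take $A, B \in \A_+$ homogeneous with $\abs{A} = \abs{B} = k$ (in $\Z_2$), and compute $\tau(\Theta(A)\circ B) = \zeta^{k^2}\tau(\Theta(A)B) = i^{k^2}\tau(\Theta(A)B)$, where $\Theta(A) \in \A_-$ and $B \in \A_+$. Since $\tau$ is neutral and nonzero only on the top form, the product $\Theta(A)B$ contributes only through its component of degree $2n$, which forces $\Theta(A)$ to have degree $n$ in $\A_- = \bigwedge V_-$ (i.e.\ to be a multiple of $\mu_-$) and $B$ to have degree $n$ in $\A_+$ (a multiple of $\mu_+$); this in turn forces $A$ to be a top-degree element of $\A_+$, i.e.\ $A = a\mu_+$, $B = b\mu_+$ with $a,b\in\C$, and $k = n \bmod 2$. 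Then $\Theta(A) = \overline{a}\,\mu_-$ and, using \eqref{eq:order},
\[
\tau(\Theta(A)B) = \overline{a}\,b\,\tau(\mu_-\wedge\mu_+) = \overline{a}\,b\,\tau(\mu) = \overline{a}\,b\,.
\]
On the other side, the Berezin integral $\tau_+$ on $\A_+$ gives $\tau_+(A) = a$, $\tau_+(B) = b$ when $A,B$ are top-degree, and $0$ otherwise. So $\overline{\tau_+(A)}\tau_+(B) = \overline{a}\,b$, and we must check the phase factor $i^{k^2}$ equals $1$ in the surviving case. Here $k = n \bmod 2$ and $n$ is even by hypothesis, so $k = 0$ and $i^{k^2} = 1$; both sides therefore agree. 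When $A$ or $B$ is not of top degree in $\A_+$, both sides vanish, so the identity $\tau(\Theta(A)\circ B) = \overline{\tau_+(A)}\tau_+(B)$ holds for all homogeneous $A, B \in \A_+$ with $\abs{A} = \abs{B}$, and extends by sesquilinearity. Finally, reflection positivity of $\tau$ follows immediately from Proposition~\ref{invimppos}, since $\tau$ is now known to be factorizing.

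The main obstacle is bookkeeping rather than conceptual depth: one must carefully track (i) that the Berezin integral's support on the single top form collapses all the sums, pinning $A$ and $B$ down to scalar multiples of $\mu_+$; (ii) the sign in $\mu = \mu_- \wedge \mu_+ = \mu_+ \wedge \mu_-$ from \eqref{eq:order}, which is where the evenness of $\dim V_\pm = n$ is used; and (iii) the phase $\zeta^{\abs{A}^2} = i^{k^2}$, which is harmless precisely because $k \equiv n \equiv 0 \bmod 2$ in the only case that contributes. A minor subtlety worth a sentence is that the twisted product $\Theta(A)\circ B$ is defined to be zero unless $\abs{\Theta(A)} = -\abs{B}$, which is automatic here since $\abs{A} = \abs{B}$ and $\Theta$ inverts the grading; and that neutrality of $\tau$ makes $\tau(\Theta(A)B) = 0$ when $\abs{A}\neq\abs{B}$ anyway, so restricting to the equal-degree case loses nothing.
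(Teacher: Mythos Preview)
Your proof is correct and follows essentially the same approach as the paper's: both use neutrality from the top-degree support of the Berezin integral, reflection invariance from \eqref{eq:refvol}, and factorization by reducing to the case $A = a\mu_+$, $B = b\mu_+$ where the phase $i^{k^2}$ is trivially $1$ since $n$ is even, with reflection positivity then following from Proposition~\ref{invimppos}. The only organizational difference is that the paper first asserts $\tau(\Theta(A)B) = \overline{\tau_+(A)}\tau_+(B)$ and then separately checks $\Theta(A)\circ B = \Theta(A)B$ in the surviving case, whereas you fold these into a single computation.
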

\begin{proof}
The Berezin integral is concentrated on 
$\bigwedge^{\mathrm{top}}V$, which is of degree $0 \in \Z_{2}$,
as $V$ is of even dimension.
Therefore, it is neutral. It is reflection invariant by 
\eqref{eq:refvol}, and reflection positivity follows from the 
factorization property by Proposition~\ref{invimppos}.

To prove factorization,
note that $\tau(\Theta(A)B) = \overline{\tau_{+}(A)} \tau_{+}(B)$ for all
${A,B \in \A_{+}}$.
 It therefore suffices to show that
$\tau(\Theta(A) \circ B) = \tau(\Theta(A)B)$. 
Here  we use that 
$\tau(\Theta(A)\circ B)$ is nonzero only if 
$A$ and $B$ are multiples of $\mu_+$. In that case, 
$\abs{A}=\abs{B} = 0$ in $\Z_{2}$, since $V_{+}$ is even dimensional, so that
$\Theta(A) \circ B = \Theta(A)B$. 
%
%
%
\end{proof}

For Grassmann algebras, the antilinear map
$\sharp \colon \A_{+} \rightarrow \A_{+}$ is  the \emph{Hodge star operator}.
It is defined by the requirement that 
\[A^{{\sharp}} \wedge B =  \lra{A,B} \,\mu_{+} \quad \text{for all}\quad A,B \in \A_{+}\,,\]
where $\mu_{+} = \psi_{1} \wedge \cdots \wedge \psi_{n}$ 
is the volume of an oriented basis of $V_{+}$.
The map $\sharp$ inverts the $\Z_{2}$-grading; since  
it maps  $\bigwedge^{k} V_{+}$ to $\bigwedge^{n-k} V_{+}$, 
and since $n = \mathrm{dim}(V_{+})$ is even, we find 
$\abs{A^{\sharp}} = \abs{A} = -\abs{A}$.
\begin{proposition}\label{Prop:berezinstrict}
The Berezin integral
$\tau_{+} \colon \A_{+} \rightarrow \C$ is strictly positive with respect to 
the hodge star operator
$\sharp$. 
\end{proposition}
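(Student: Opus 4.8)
The plan is to compute $\tau_+(A^\sharp A)$ directly in terms of the Hodge star and the inner product on $\A_+ = \bigwedge V_+$, and show it is manifestly positive for nonzero $A$. First I would recall that $\tau_+$ is the Berezin integral on $\A_+$, which picks out the coefficient of the top form $\mu_+ = \psi_1 \wedge \cdots \wedge \psi_n$; that is, for any $C \in \A_+$, $\tau_+(C)$ is the number such that the degree-$n$ component of $C$ equals $\tau_+(C)\,\mu_+$. The defining property of the Hodge star is exactly $A^\sharp \wedge B = \lra{A,B}\,\mu_+$ for all $A,B \in \A_+$. Setting $B = A$ immediately gives
\be
\tau_+(A^\sharp A) = \tau_+(A^\sharp \wedge A) = \lra{A,A} = \norm{A}^2\,,
\ee
which is strictly positive whenever $A \neq 0$, since $\lra{\,\cdot\,,\cdot\,}$ is the (genuine, positive-definite) Hilbert space inner product on $\bigwedge V_+$ induced from that of $V_+$.

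The one point that needs care is that $\sharp$ is \emph{antilinear}, so $A^\sharp A$ must be interpreted correctly: the product $A^\sharp A$ in Definition~\ref{def:strictpos} is the algebra product $A^\sharp \wedge A$, and the defining relation for the Hodge star as stated in the excerpt already builds in the antilinearity in the first slot, so $\lra{A,B}$ is the sesquilinear (conjugate-linear in the first argument) inner product. Thus there is no hidden sign or conjugation mismatch: $\tau_+(A^\sharp A) = \lra{A,A} \geq 0$ with equality iff $A = 0$. One should also note that for homogeneous $A \in \bigwedge^k V_+$ one has $A^\sharp \in \bigwedge^{n-k} V_+$, so $A^\sharp \wedge A \in \bigwedge^n V_+$ is exactly of top degree and $\tau_+$ does not annihilate it; for inhomogeneous $A = \sum_k A_k$, only the diagonal terms $A_k^\sharp \wedge A_k$ contribute to the top degree (cross terms $A_j^\sharp \wedge A_k$ with $j \neq k$ land in degree $n - j + k \neq n$), and $\tau_+(A^\sharp A) = \sum_k \lra{A_k, A_k} = \sum_k \norm{A_k}^2 > 0$ unless all $A_k = 0$.

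The only genuine obstacle — and it is minor — is bookkeeping with the orientation and the evenness of $n = \dim V_+$: one must confirm that $\sharp$ indeed maps into $\A_+$ (not $\A_-$) and inverts the $\Z_2$-grading, which is precisely the paragraph preceding the proposition ($\abs{A^\sharp} = n - \abs{A} \equiv \abs{A} \pmod 2$ since $n$ is even), so the hypotheses of Definition~\ref{def:strictpos} are met with this $\sharp$. Once that is in place the positivity is immediate from the definition of the Hodge star together with positive-definiteness of the Hilbert space inner product on the exterior algebra. So the proof is essentially: unwind the definitions of $\tau_+$ and $\sharp$, set $B = A$ in the defining identity, and read off $\tau_+(A^\sharp A) = \lra{A,A} > 0$.
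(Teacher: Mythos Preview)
Your proof is correct and follows exactly the same approach as the paper: apply the defining identity $A^{\sharp}\wedge B = \lra{A,B}\,\mu_{+}$ with $B=A$ and use $\tau_{+}(\mu_{+})=1$ to obtain $\tau_{+}(A^{\sharp}A) = \lra{A,A} > 0$. The paper's version is simply terser, omitting the bookkeeping remarks you include about antilinearity, homogeneous components, and the grading check.
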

\begin{proof}
We have
\[
\tau_{+}(A^{\sharp}A) = \lra{A,A}\,\tau_{+}(\mu_{+}) = \lra{A,A} > 0
\]
for all nonzero $A\in \A_+$.
\end{proof}
\begin{remark}
Note that unlike in \S\ref{sec:tensprod}, the functional  $\tau_{+}$ is 
\emph{not} the restriction of $\tau$ to $\A_{+} \subseteq \A$, since $\tau$ vanishes
identically on $\A_{+}$.
Also, in contrast to \S\ref{sec:TomitaTakesaki},
the antilinear map $\sharp$ is neither a homomorphism nor an anti-homomorphism. 
The Grassmann algebra is not a $*$-algebra, and the Berezin integral is not a state.
\end{remark}

\subsection{Clifford Algebras and CAR Algebras}\label{sec:Clifford}
A fermi\-onic quantum system is described by 
the Clifford algebra $\cA = \mathrm{Cl}(V)$.
Here $V$ is the complexification of a real Hilbert space $V_{\R}$
with inner product ${h \colon V_{\R} \times V_{\R} \rightarrow \R}$.
On the complex Hilbert space $V$, this gives rise to an inner product 
$\lra{v,w}$ and a bilinear form 
$h_{\C}(v,w)$.
 
The Clifford algebra $\cA = \mathrm{Cl}(V)$ is 
the unital algebra over $\C$ with generators $v \in V$ and relations 
\[
	v w + w v = 2h_{\C}(v,w)\one\,.
\]
Note that $\cA$ is $\Z_2$-graded. We denote
the degree of a homogeneous element $A \in \cA$ by $\abs{A} \in \Z_{2}$, and we denote 
the even and odd part of $\cA$ by $\cA^0$ and $\cA^1$, respectively.
The complex conjugation $v \mapsto \overline{v}$ on $V$
%
%
extends uniquely to an 
anti-linear anti-involution $A \mapsto A^*$ 
on $\mathrm{Cl}(V)$ that sends $\cA^0$ to $\cA^0$ and $\cA^1$ to $\cA^1$.
If $V_{\R}$ has an orthonormal
basis $\{c_{i}\}_{i\in S}$, 
then the operators $c_i \in \mathrm{Cl}(V)$ satisfy the 
\emph{Canonical Anticommutation Relations}
\begin{align}\tag{\mbox{CAR-1}}\label{eq:CR-1}
	c_{i}c_j &= -c_jc_i  \quad\text{for  } i\neq j\,, \\
	\tag{\mbox{CAR-2}}\label{eq:CR-2}
	c^2_{i} &= \one      \quad \qquad\text{for all  } i\,,\ \\
	\tag{\mbox{CAR-3}}\label{eq:CR-3}
	c^*_{i} &=  c^{-1}_{i}\,.
\end{align}

\begin{definition}\label{ref:deftracialcliff}
The \emph{tracial state} $\tau \colon \mathrm{Cl}(V) \rightarrow \C$
is the unique linear functional with $\tau(\one) = 1$ and 
$\tau(v_1 \cdots v_{k}) = 0$ if $v_1, \ldots, v_{k}$ are pairwise orthogonal
with respect to $h_{\C}$.
\end{definition}

Suppose that $V = V_+ \oplus V_-$ is an orthogonal decomposition
with respect to the bilinear form $h_{\C}$, 
and that $\theta \colon V \rightarrow V$ is an antilinear 
isomorphism of $(V,h_{\C})$ with $\theta(V_+) = V_-$ and $\theta^2 = \mathrm{Id}$.
Since $h_{\C}(V_+,V_-) = \{0\}$,
the subalgebras $\cA_{\pm} = \mathrm{Cl}(V_{\pm})$ 
\emph{supercommute};
\[
	A_-A_+ = (-1)^{\abs{A_+}\abs{A_-}} A_+A_-
\]
for all $A_{\pm} \in \cA_{\pm}$.
As $V = V_+ \oplus V_-$, the linear span of 
$\cA_-\cA_+$ is $\cA$, and 
$\cA_- \cap \cA_+ = \C \one$.

The map $\theta \colon V \rightarrow V$ 
extends uniquely to an antilinear homomorphism 
${\Theta \colon \cA \rightarrow \cA}$ with $\Theta(\cA_\pm) = \cA_{\mp}$.
It squares to the identity and inverts the grading, 
$\abs{\Theta(A)} = \abs{A} = -\abs{A}$.
It follows that $\cA$ is the fermionic $q$-double of $\cA_+$.
The reflection $\Theta$ is a $*$-homomorphism if and only if
$\theta(\overline{v}) = \overline{\theta(v)}$,
but we will not require this to be the case.

\begin{proposition}\label{prop:cliffordtau}
The tracial state $\tau$ on $\cA$ is neutral, 
faithful, factorizing, and reflection positive on $\cA_+$.
\end{proposition}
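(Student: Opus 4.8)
The plan is to establish the four properties of the tracial state $\tau$ on the Clifford algebra $\cA = \mathrm{Cl}(V)$ in turn, reducing each to a concrete statement about monomials $c_I := c_{i_1}\cdots c_{i_k}$ in an orthonormal basis $\{c_i\}_{i\in S}$ of $V_{\R}$, indexed by finite subsets $I \subseteq S$. These monomials form a linear basis of $\cA$, and the defining property of $\tau$ reads $\tau(c_I) = \delta_{I,\emptyset}$. The degree of $c_I$ is $\abs{I} \bmod 2$, and $\abs{\Theta(c_I)} = \abs{c_I}$ since $\Theta$ preserves parity.

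First, \emph{neutrality}: since $\tau(c_I) = 0$ for $I \neq \emptyset$ and $\tau(c_\emptyset) = \tau(\one) = 1$ with $c_\emptyset$ of degree $0$, the functional $\tau$ annihilates every homogeneous element of nonzero degree, hence is neutral. Second, \emph{factorization}: choose the basis $\{c_i\}_{i\in S}$ adapted to the decomposition $V_{\R} = V_{+,\R} \oplus V_{-,\R}$, say $S = S_+ \sqcup S_-$, and arrange $\theta(c_i) = c_{\sigma(i)}$ for a volume-type bijection $\sigma\colon S_+ \to S_-$. Then $\cA_\pm = \mathrm{Cl}(V_\pm)$ has monomial basis $\{c_I : I \subseteq S_\pm\}$. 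For $A, B \in \cA_+$ it suffices by bilinearity to treat $A = c_I$, $B = c_J$ with $I, J \subseteq S_+$; then $\Theta(c_I) = c_{\sigma(I)}$ lies in $\cA_-$ and $\Theta(c_I) \circ c_J = \zeta^{\abs{I}^2}\, c_{\sigma(I)} c_J$ (which vanishes unless $\abs{I} = \abs{J}$, i.e.\ the degrees match). Since $S_+$ and $\sigma(S_+) = S_-$ are disjoint, $c_{\sigma(I)} c_J$ is (up to sign) the monomial $c_{\sigma(I) \cup J}$, whose $\tau$-value is $1$ iff $\sigma(I) = J$, which for $I, J \subseteq S_+$ forces $I = J = \emptyset$. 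Defining $\tau_+(c_J) := \delta_{J,\emptyset}$ on $\cA_+$ and extending linearly, one checks $\tau(\Theta(A)\circ B) = \overline{\tau_+(A)}\,\tau_+(B)$ on monomials, hence everywhere. Reflection invariance and reflection positivity then follow from the general fact (Proposition~\ref{invimppos} together with Proposition~\ref{Prop:Hermiticity}/\ref{prop:forceinvar}) that a factorizing functional is reflection positive, and hence reflection invariant.

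Third, \emph{faithfulness}: I want to show $\tau(A^* A) > 0$ for every nonzero $A \in \cA$. Writing $A = \sum_I \alpha_I\, c_I$, the anti-involution gives $c_I^* = c_{i_k}^* \cdots c_{i_1}^* = \pm\, c_I^{-1}$, and using the CAR relations $c_I^* c_I = \one$ for each $I$, while $\tau(c_I^* c_J)$ for $I \neq J$ is $\tau$ of a non-identity monomial (possibly with a sign), hence $0$. Thus $\tau(A^*A) = \sum_I \abs{\alpha_I}^2 > 0$ whenever $A \neq 0$; this also re-proves that $\tau$ is a state once one notes $\tau(\one) = 1$ and, by the same orthogonality of monomials under $\tau$, that $\tau$ is positive. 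The main point to get right here is bookkeeping of the sign in $c_I^* = \pm c_I^{-1}$ and the verification that $c_I^* c_J$ is always a scalar multiple of a single monomial $c_K$ with $K = \emptyset$ iff $I = J$ — this is the one place where one genuinely uses the Clifford (as opposed to Grassmann) relations, since $c_i^2 = \one$ rather than $0$.

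The main obstacle, such as it is, lies in the factorization step: one must be careful that the twisted product $\circ$, not the ordinary product, is what makes the cross terms collapse, and that the degree-matching condition $\abs{I} = \abs{J}$ built into $\circ$ (via neutrality of $\tau$) is exactly what is needed so that the phase $\zeta^{\abs{I}^2}$ is consistent on both factors. One should also record that $\Theta$ genuinely maps $\cA_+$ into $\cA_-$ and squares to the identity on monomials, which follows since $\theta^2 = \mathrm{Id}$ and $\Theta$ is multiplicative; and that the choice of adapted orthonormal basis is legitimate because $h_{\C}(V_+, V_-) = \{0\}$, so an orthonormal basis of $V_{+,\R}$ together with one of $V_{-,\R}$ is an orthonormal basis of $V_{\R}$. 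Everything else is the routine linear algebra of monomials in a Clifford algebra.
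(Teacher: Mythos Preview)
Your proof is correct and follows essentially the same route as the paper's: factorization via the vanishing of $\tau$ on nontrivial products of pairwise $h_{\C}$-orthogonal vectors, faithfulness via $\tau(c_I^{*}c_J)=\delta_{IJ}$ for a real orthonormal basis, and reflection positivity from Proposition~\ref{invimppos}. One minor caveat: the decomposition $V=V_+\oplus V_-$ is only assumed $h_{\C}$-orthogonal, not necessarily the complexification of a real splitting, so the adapted basis you use in the factorization step should be taken $h_{\C}$-orthonormal (which is all the definition of $\tau$ requires) rather than an orthonormal basis of $V_{\R}$; a real orthonormal basis---needed so that $c_i^{*}=c_i$---is the right choice for faithfulness but need not be adapted to $V_\pm$.
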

\begin{proof}
Neutrality is clear from the definition.
To see that $\tau$ is factorizing, i.e.\ $\tau(A_-A_+) = \tau(A_-)\tau(A_+)$ for $A_\pm \in \cA_\pm$, 
note that if 
both $A_- = v_1 \cdots v_k$ and 
$A_+ = w_1 \cdots w_l$ are products of pairwise orthogonal vectors,
then as $V_+ \perp V_-$, also $A_- A_+$ 
is a product of pairwise orthogonal vectors.
It follows that $\tau(A_-A_+)$ is zero for operators of this type.
Since every $A\in \cA$ can be decomposed as $A = \tau(A)\one + (A - \tau(A)\one)$
with the second term a sum of products of orthogonal vectors, 
we have $\tau(A_- A_+) = \tau(A_-)\tau(A_+)$ as required.
Reflection positivity therefore follows from Proposition~\ref{invimppos}
To show that $\tau$ is faithful, 
consider the linear map
$\iota \colon \bigwedge V \rightarrow \mathrm{Cl}(V)$ that sends 
$A = v_1 \wedge \ldots \wedge v_n$ to $\iota(A) = v_1 \cdots v_n$ if $v_1, \ldots, v_n$ are 
pairwise orthogonal. Since $\iota$ is a vector space isomorphism, and since
$\lra{A,A} = \tau(\iota(A)^*\iota(A))$, 
we have
$\tau(A^*A) > 0$ if ${A\neq 0}$.
\end{proof}

The CAR algebra $\A$ is the $C^*$-algebra defined as the norm closure 
of $\cA = \mathrm{Cl}(V)$ in $B(\cH_{\mathrm{GNS}})$.
Here $\cH_{\mathrm{GNS}}$ is the GNS-Hilbert space, the 
closure of $\cA$ with respect to the inner product 
$\lra{A,B}_{\tau} = \tau(A^*B)$.
Similarly, $\A_{\pm}$ is the norm closure of $\cA_{\pm}$.
As the linear span of $\A_- \A_+$ is norm dense in $\A$, 
the CAR algebra $\A$
is the $q$-double of $\A_+$.

\subsection{Parafermion Algebras and CPR Algebras}\label{sec:ParaCPR}

The motivation to consider reflection positivity for 
$\Z_p$-graded algebras comes from \emph{Para\-fermi\-on Algebras}.

Let $\Lambda$ be an ordered set, and let
$\vartheta \colon \Lambda \rightarrow \Lambda$ be an
order reversing, fixed point free involution.
Then $\Lambda$ is the disjoint union of $\Lambda_+$ and 
$\Lambda_- = \vartheta(\Lambda_+)$, 
where $\Lambda_+$ is the maximal subset with 
$\vartheta(\lambda) < \lambda$ for all $\lambda \in \Lambda_+$.
A typical example is 
$\Lambda = \Z + 1/2$, with $\vartheta(\lambda) = -\lambda$
and $\Lambda_{\pm} = \pm(\N + \frac{1}{2})$.


A collection of para\-fermions of order $p$ is a family of operators  
$c_{\lambda}$, indexed by $\Lambda$.  
The parafermions are characterized by a 
primitive $p^{\rm th}$ root $q$ of unity, and satisfy
the \emph{Canonical Parafermion Relations}
	\begin{align}\tag{\mbox{CPR-1}}\label{eq:PF-1}
		c_{\lambda}c_{\lambda'} &= \q\, c_{\lambda'}c_{\lambda}\;,  \qsp{for} 
		\lambda<\lambda'\;, \\
		\tag{\mbox{CPR-2}}\label{eq:PF-2}
		c_{\lambda}^{p} & =  1\;,\\
		c_{\lambda}^{*} &= c_{\lambda}^{-1} \;.
		\tag{\mbox{CPR-3}}\label{eq:PF-3}
	\end{align}
If we set the degree of each $c_{i}$ equal to 1, then
the algebra generated by the parafermions is graded by 
$\Z_{p} = \Z/p\Z$.
 
\begin{definition}
The Parafermion Algebra $\cA(q,\Lambda)$
is the  $\Z_{p}$-graded $*$-algebra generated by the parafermions 
$c_{\lambda}$ of degree $p$.
\end{definition}	
 
Denote the degree of $A \in \cA(q,\Lambda)$ by $\abs{A} \in \Z_{p}$.
Products of para\-fermions provide a natural basis $\{C_{I}\}_{I\in \mathcal{I}}$ 
for $\cA(q,\Lambda)$.  
The elements are labelled by the set  $\mathcal{I} = \Z_{p}^{(\Lambda)}$  of maps
$I \colon \Lambda \rightarrow \Z_p$ with $I_{\lambda} \neq 0$ 
for only finitely many $\lambda \in \Lambda$.
For $I \in \Z_{p}^{(\Lambda)}$, define the basis element
\be\label{BasisPF}
C_{I} =\overrightarrow{ \prod_{\lambda\in \Lambda}}c_\lambda^{\,I_{\lambda}}\,,
\ee
where $\overrightarrow{\prod}_{\lambda\in \Lambda}$ indicates that the order of the factors 
$c_\lambda^{I_{\lambda}}$ in the product respects the order of $\Lambda$.   
Note that $C_{0} = \one$ is the identity in $\cA(q,\Lambda)$.

\begin{definition}
The \emph{tracial state} $\tau \colon \cA(q,\Lambda) \rightarrow \C$ 
is the linear functional with $\tau(\one) = 1$ and $\tau(C_{I}) = 0$ for $I\neq 0$.
\end{definition}

\begin{proposition}\label{prop:prestateCPR}
This is indeed a faithful, tracial state on $\cA(q,\Lambda)$.
\end{proposition}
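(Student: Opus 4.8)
The plan is to verify three properties of the functional $\tau$ on $\cA(q,\Lambda)$: that it is well defined (linear), that it is \emph{tracial}, i.e.\ $\tau(AB) = \tau(BA)$ for all $A,B$, and that it is a \emph{faithful state}, i.e.\ $\tau(A^*A) > 0$ for all nonzero $A$. Linearity is immediate since $\tau$ is prescribed on the basis $\{C_I\}_{I\in \mathcal{I}}$ and extended linearly; I would only remark that the $C_I$ genuinely form a basis, which follows from the defining relations (CPR-1)--(CPR-3), since any word in the $c_\lambda$ and $c_\lambda^{-1}$ can be reordered to respect the order of $\Lambda$ at the cost of a scalar factor in $q$, and the exponents reduced mod $p$.

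For the tracial property, by linearity it suffices to check $\tau(C_I C_J) = \tau(C_J C_I)$ for basis elements. The product $C_I C_J$ is, after reordering the factors to respect the order on $\Lambda$ using (CPR-1), equal to $q^{n(I,J)} C_{I+J}$ for an integer $n(I,J)$ determined by how many transpositions $c_\lambda c_{\lambda'} \mapsto q^{-1} c_{\lambda'} c_\lambda$ (for $\lambda > \lambda'$) are needed; similarly $C_J C_I = q^{n(J,I)} C_{I+J}$. Now $\tau(C_I C_J)$ vanishes unless $I + J = 0$ in $\mathcal{I} = \Z_p^{(\Lambda)}$, i.e.\ unless $J = -I$; and in that case $C_{I+J} = C_0 = \one$, so $\tau(C_I C_{-I}) = q^{n(I,-I)}$. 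The key computation is to show $q^{n(I,-I)} = q^{n(-I,I)}$, equivalently $n(I,-I) \equiv n(-I,I) \pmod p$. In fact I expect $n(I,-I) = n(-I,I)$ on the nose: the number of "out of order" pairs when forming $c_\lambda^{I_\lambda}$-blocks against $c_{\lambda'}^{-I_{\lambda'}}$-blocks is symmetric under swapping the two halves, since it counts pairs $(\lambda,\lambda')$ with $\lambda > \lambda'$ weighted by $I_\lambda \cdot (-I_{\lambda'}) = (-I_\lambda)\cdot I_{\lambda'}$ — wait, the weights are $-I_\lambda I_{\lambda'}$ versus $-I_{\lambda'} I_\lambda$, which are literally equal. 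So the tracial identity follows from this symmetry of the exponent.

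For faithfulness, the standard approach is to realize $\tau$ as a vector state and exhibit an orthogonality structure. I would observe that $\tau(C_I^* C_J) = 0$ unless $I = J$: indeed $C_I^* = (\overrightarrow{\prod} c_\lambda^{I_\lambda})^* = \overleftarrow{\prod} c_\lambda^{-I_\lambda}$, which reorders to $q^m C_{-I}$ for some integer $m$, so $C_I^* C_J$ reorders to a scalar times $C_{J - I}$, and $\tau$ of it vanishes unless $J = I$. When $I = J$ one gets $\tau(C_I^* C_I) = q^m \tau(C_I C_{-I}^{\phantom{*}})\cdot(\text{scalar}) $; more carefully, $C_I^* C_I$ reorders to $q^{m'} C_0 = q^{m'}\one$ for some integer $m'$, but since $C_I^* C_I$ is a positive element in the $*$-algebra sense one checks $q^{m'} = 1$, so $\tau(C_I^* C_I) = 1$. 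Thus $\{C_I\}$ is an orthonormal family for the sesquilinear form $\lra{A,B}_\tau = \tau(A^* B)$, and since they span $\cA(q,\Lambda)$, writing a nonzero $A = \sum_I a_I C_I$ gives $\tau(A^* A) = \sum_I |a_I|^2 > 0$. The main obstacle is the bookkeeping of the integer exponents $n(I,J)$, $m$, $m'$ arising from reordering — in particular confirming $q^{m'} = 1$ in the faithfulness step, which I expect to follow either from the symmetry argument above applied to $I$ and $-I$, or more cleanly by noting that $\tau$ restricted to the abelian subalgebra generated by a single $c_\lambda$ is manifestly the normalized trace on $\C[\Z_p]$, which is faithful, and then bootstrapping via the tensor-like structure of $\cA(q,\Lambda)$ over the sites $\lambda$.
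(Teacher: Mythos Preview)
Your approach is essentially the same as the paper's: both reduce to showing $\tau(C_I^* C_J) = \tau(C_J C_I^*) = \delta_{IJ}$ (your traciality check on $C_I C_J$ is equivalent, since $C_I^*$ is a scalar multiple of $C_{-I}$, and your symmetry argument $n(I,-I)=n(-I,I)$ is correct).

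There is, however, a genuine gap in your faithfulness step. Your claim that $q^{m'}=1$ ``since $C_I^* C_I$ is a positive element in the $*$-algebra sense'' is circular: in a bare $*$-algebra, ``positive'' has no intrinsic meaning until you have a faithful state or a $*$-representation, which is exactly what you are trying to construct. Using self-adjointness instead only gives $q^{m'}\in\R$, which for even $p$ does not rule out $q^{m'}=-1$. Your fallback (bootstrapping from the single-site trace on $\C[\Z_p]$) would work but you do not carry it out.

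The fix the paper uses is much simpler than either of your suggestions, and avoids reordering $C_I^*$ into $C_{-I}$ form altogether: since $C_I^* = \overleftarrow{\prod}_\lambda c_\lambda^{-I_\lambda}$ is the \emph{reversed} product, the expression
\[
C_I^* C_I \;=\; c_{\lambda_n}^{-I_{\lambda_n}}\cdots c_{\lambda_1}^{-I_{\lambda_1}}\, c_{\lambda_1}^{I_{\lambda_1}}\cdots c_{\lambda_n}^{I_{\lambda_n}}
\]
telescopes from the middle outward to $\one$, with no $q$-factors ever appearing. Likewise $C_I C_I^* = \one$. This gives $\tau(C_I^* C_I)=\tau(C_I C_I^*)=1$ immediately and closes the argument.
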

\begin{proof}
It suffices to show that 
\be
	\tau(C^*_{I}C_J) = \tau(C_{J}C^*_{I}) = \delta_{IJ}\,.
\ee
For this, note that the basis $C_{I}$ transforms under the anti-linear anti-involution $*$  as
\be\label{eq:badorder}
	C_{I}^* = \overleftarrow{\prod_{\lambda \in \Lambda}}c_\lambda^{-I_{\lambda}}\,,
\ee
where   $\overleftarrow{\prod}_{i\in S}$ denotes that the order of the factors 
is the reverse to their order in $\Lambda$. 
Since $C^*_{I}C_{J}$ and $C_{J}C^*_{I}$ are multiples of $C_{J-I}$, the expression 
$\tau(C^*_{I}C_J)$
is zero if $I \neq J$. If $I=J$, then $\tau(C^*_{I}C_{I}) = \tau(C_{I}C^*_{I}) =1$.
\end{proof}

Let $\cH_{\mathrm{GNS}}$ be the Hilbert space closure of 
$\cA(q,\Lambda)$ with respect to the nondegenerate inner product 
$\lra{A,B}_{\tau} = \tau(A^*B)$.
It carries the usual GNS-representation of $\cA(q,\Lambda)$, and has 
orthonormal Hilbert basis $\{C_{I}\}_{I \in \Z_{p}^{(\Lambda)}}$. 

\begin{definition}\label{def:CPR}
The CPR algebra $\A(q,\Lambda)$ is the $\Z_p$-graded $C^*$-algebra 
arising as the norm closure of $\cA(q,\Lambda)$ 
in $B(\cH_{\mathrm{GNS}})$.
\end{definition} 

Define the antilinear isomorphism
$\Theta \colon \cA(q,\Lambda) \rightarrow \cA(q,\Lambda)$
of the para\-fermion algebra 
by
\[
	\Theta(c_{\lambda}) = c^{-1}_{\theta(\lambda)}.
\]

\begin{proposition}
The tracial state
$\tau$ on $\cA(q,\Lambda)$ is reflection invariant; 
for all $A \in \cA(q,\Lambda)$, we have
\be\label{eq:RIpara}
	\tau(\Theta(A)) = \overline{\tau(A)}\,.
\ee 
\end{proposition}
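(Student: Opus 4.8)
The plan is to verify \eqref{eq:RIpara} on the natural basis $\{C_{I}\}_{I\in\Z_{p}^{(\Lambda)}}$ of \eqref{BasisPF} and then extend by density and antilinearity. Since $\tau$ is linear while both $A\mapsto\tau(\Theta(A))$ and $A\mapsto\overline{\tau(A)}$ are antilinear and continuous, and the linear span of the $C_{I}$ is dense in $\cA(q,\Lambda)$, it is enough to show $\tau(\Theta(C_{I})) = \overline{\tau(C_{I})}$ for every $I$. As $\tau(C_{I})=\delta_{I,0}$ by definition, this reduces to proving $\tau(\Theta(C_{I}))=\delta_{I,0}$.

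First I would compute $\Theta(C_{I})$ explicitly. Because $\Theta$ is an antilinear \emph{homomorphism}, it preserves the order of the factors in \eqref{BasisPF}, so
\[
\Theta(C_{I}) = \overrightarrow{\prod_{\lambda\in\Lambda}}\,\Theta(c_{\lambda})^{I_{\lambda}}
= \overrightarrow{\prod_{\lambda\in\Lambda}}\,c_{\theta(\lambda)}^{-I_{\lambda}}\,,
\]
with the product still ordered by increasing $\lambda$. Reindexing by $\mu=\theta(\lambda)$ and using that $\theta$ is order reversing, the increasing-in-$\lambda$ order turns into the decreasing-in-$\mu$ order, so (using $\theta^{2}=\mathrm{Id}$)
\[
\Theta(C_{I}) = \overleftarrow{\prod_{\mu\in\Lambda}}\,c_{\mu}^{-I_{\theta(\mu)}}\,.
\]
Comparing with \eqref{eq:badorder}, this says exactly that $\Theta(C_{I}) = C_{J}^{*}$, where $J\in\Z_{p}^{(\Lambda)}$ is given by $J_{\mu}=I_{\theta(\mu)}$.

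It then remains to evaluate $\tau(C_{J}^{*})$. By Proposition~\ref{prop:prestateCPR}, $\tau$ is a tracial state, hence hermitian, so $\tau(C_{J}^{*})=\overline{\tau(C_{J})}=\delta_{J,0}$. Finally, since $\theta$ is a bijection of $\Lambda$, one has $J=0$ if and only if $I=0$; therefore $\tau(\Theta(C_{I}))=\delta_{J,0}=\delta_{I,0}=\overline{\tau(C_{I})}$, and \eqref{eq:RIpara} follows on all of $\cA(q,\Lambda)$.

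The only delicate point is the bookkeeping in the middle step: one must use that $\Theta$ is a homomorphism (not an anti-homomorphism), so that the order of the factors is left unchanged, and then combine this with the order-reversing property of $\theta$ so as to land precisely on the reversed product $\overleftarrow{\prod}$ appearing in the formula \eqref{eq:badorder} for $C_{J}^{*}$. Once that identification $\Theta(C_{I})=C_{J}^{*}$ is in hand, the conclusion is immediate from $\tau$ being a tracial state together with $\theta$ being a bijection.
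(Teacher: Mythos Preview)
Your proof is correct and follows essentially the same approach as the paper: both verify the identity on the basis $C_I$ by observing that $\Theta(C_I)$ lies over the index $-\theta(I)$ (in your notation, is equal to $C_J^*$ with $J_\mu=I_{\theta(\mu)}$), hence has vanishing trace unless $I=0$. Your extra step of recognising $\Theta(C_I)=C_J^*$ exactly and invoking hermiticity of the state is a clean way to finish; the paper instead just notes that $\Theta(C_I)$ is a scalar multiple of $C_{-\theta(I)}$ and appeals directly to $\tau(C_K)=\delta_{K,0}$. One small remark: since the $C_I$ algebraically span $\cA(q,\Lambda)$, no density or continuity argument is needed at this stage---that comes only when extending to the $C^*$-closure $\A(q,\Lambda)$.
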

\begin{proof}
Denote by $\theta(I)$ the index with $\theta(I)_{\lambda} = I_{\vartheta(\lambda)}$. 
Then $\Theta(C_I)$ is a multiple of $C_{-\theta(I)}$, and
$\Theta(C_{0}) = \one = C_0$. Since $\tau(C_{I}) \tau(C_{-\theta(I)})= 0$ for 
$I \neq 0$ and $\tau(C_0) = 1$,
we have $\tau(\Theta(C_I)) = \tau(C_I)$ for all $I \in \Z^{(\Lambda)}_p$.
As $\Theta$ is antilinear, this shows that $\tau$ is reflection invariant.
\end{proof}

It follows that $\Theta$ extends to an antilinear homomorphism of 
the $C^*$-algebra $\A(q,\Lambda)$, and that \eqref{eq:RIpara}
holds for all $A\in \A(q,\Lambda)$.


Let $\A_{\pm}(q,\Lambda)$ be the norm closure of the algebra generated by 
the parafermions $c_{\lambda}$ with $\lambda \in \Lambda_{\pm}$.
Then $\A(q,\Lambda)$ is the $q$-double of $\A_+(q,\Lambda)$.
Indeed,
$\A(q,\Lambda)$ is the norm closure of the linear span of the product 
$\A_{-}(q,\Lambda)\A_{+}(q,\Lambda)$.
Furthermore, it follows from \eqref{eq:PF-1} that 
homogeneous elements $A_{\pm} \in \A_{\pm}(q,\Lambda)$
satisfy
\[
A_- A_+ = q^{\abs{A_-}\abs{A_+}} A_+A_-\,.
\]


\begin{proposition}\label{prop:paratraceisnice}
The tracial state $\tau$ extends to a neutral, 
faithful, factorizing, reflection invariant state on $\A(q,\Lambda)$, which is
reflection positive on $\A_{+}(q,\Lambda)$.
\end{proposition}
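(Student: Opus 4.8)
The plan is to establish the five asserted properties of $\tau$ on $\A(q,\Lambda)$ in turn: neutrality, faithfulness, the factorization identity, reflection invariance, and reflection positivity. Reflection invariance has already been proven above for $\cA(q,\Lambda)$ and extends to the $C^*$-completion $\A(q,\Lambda)$ by continuity, so that property requires only a remark. Neutrality is immediate from the definition: each basis element $C_I$ with $I\neq 0$ has degree $\abs{C_I}=\sum_\lambda I_\lambda$, and since $\tau(C_I)=0$ for $I\neq 0$ while $\tau(\one)=1$, the functional kills every homogeneous element of nonzero degree; this passes to the completion by continuity as well. Faithfulness of $\tau$ as a state on the $C^*$-algebra $\A(q,\Lambda)$ follows from the computation $\tau(C_I^*C_J)=\tau(C_J C_I^*)=\delta_{IJ}$ in Proposition~\ref{prop:prestateCPR}: this exhibits $\cH_{\mathrm{GNS}}$ with orthonormal basis $\{C_I\}$, so the GNS representation is faithful on the dense subalgebra, hence on its norm closure, and the vector state $\tau(A)=\lra{\one,A\,\one}_\tau$ is faithful because $\one$ is cyclic and separating.

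The main content is the factorization identity $\tau(A_-A_+)=\tau_-(A_-)\tau_+(A_+)$ for $A_\pm\in\A_\pm(q,\Lambda)$, where $\tau_\pm$ are the restrictions of $\tau$ to $\A_\pm$. Since $\Lambda=\Lambda_-\sqcup\Lambda_+$ with every element of $\Lambda_-$ strictly less than every element of $\Lambda_+$ (this is where one uses that $\vartheta$ is order reversing and fixed-point free, so $\Lambda_+$ is exactly the ``upper half''), the ordered product $C_I$ splits canonically as $C_I = C_{I_-}C_{I_+}$, where $I_\pm$ is the restriction of $I$ to $\Lambda_\pm$, with $C_{I_-}\in\cA_-(q,\Lambda)$ and $C_{I_+}\in\cA_+(q,\Lambda)$. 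Thus the products $C_{I_-}C_{I_+}$ form a basis adapted to the tensor decomposition, and $\tau(C_{I_-}C_{I_+})$ equals $1$ if $I_-=0$ and $I_+=0$ and $0$ otherwise, which is exactly $\tau_-(C_{I_-})\tau_+(C_{I_+})$. By bilinearity and continuity this gives $\tau(A_-A_+)=\tau_-(A_-)\tau_+(A_+)$ on all of $\A_-(q,\Lambda)\times\A_+(q,\Lambda)$. To match Definition~\ref{eq:altfactor} of a factorizing functional, I then need to pass from the ordinary product to the twisted product: for homogeneous $A_-\in\A_-$, $B_+\in\A_+$ with $\abs{A_-}=-\abs{B_+}$ one has $A_-\circ B_+ = \zeta^{\abs{A_-}^2}A_-B_+$, so $\tau(A_-\circ B_+)=\zeta^{\abs{A_-}^2}\tau_-(A_-)\tau_+(B_+)$, and absorbing the scalar $\zeta^{\abs{A_-}^2}$ into a rescaled $\tau_+$ (as in the passage between the two forms of factorization in \S\ref{sec:factorization}, using that $\tau$ is neutral so only the diagonal $\abs{A_-}=-\abs{B_+}$ contributes) yields the factorization identity in the required form. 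Once $\tau$ is factorizing, reflection positivity on $\A_+(q,\Lambda)$ is immediate from Proposition~\ref{invimppos}.

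The step I expect to be the main obstacle is verifying that the decomposition $C_I=C_{I_-}C_{I_+}$ is genuinely compatible with both the ordering conventions built into \eqref{BasisPF} and the $q$-commutation relations \eqref{eq:PF-1}: one must be careful that no stray power of $q$ appears when splitting the ordered product, and that $\tau$ really does vanish on every $C_{I_-}C_{I_+}$ with $(I_-,I_+)\neq(0,0)$ rather than merely on the $C_I$ in the given total order — but since $C_{I_-}C_{I_+}$ is itself a basis element $C_I$ (the two orderings agree because $\Lambda_-<\Lambda_+$), this reduces to the definition of $\tau$. The remaining bookkeeping — continuity of all maps involved to pass from $\cA$ to the $C^*$-completion $\A$, and the scalar adjustment converting between the two equivalent formulations of the factorization property — is routine. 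A secondary point worth stating explicitly is that $\Theta(\A_+(q,\Lambda))=\A_-(q,\Lambda)$, which follows from $\Theta(c_\lambda)=c_{\theta(\lambda)}^{-1}$ together with $\theta(\Lambda_+)=\Lambda_-$, so that the factorizing functional $\tau$ is indeed being tested against the correct pair of subalgebras.
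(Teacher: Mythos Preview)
Your proposal is correct and follows essentially the same route as the paper: establish neutrality and reflection invariance directly, prove factorization on basis elements $C_{I_-}C_{I_+}$ and extend by continuity, then invoke Proposition~\ref{invimppos} for reflection positivity. One simplification you can make: the discussion of ``absorbing the scalar $\zeta^{\abs{A_-}^2}$ into a rescaled $\tau_+$'' is unnecessary, since $\tau_\pm$ are neutral and therefore vanish on elements of nonzero degree; the only surviving terms in $\tau_-(A_-)\tau_+(B_+)$ have $\abs{A_-}=\abs{B_+}=0$, so $\zeta^{\abs{A_-}^2}=1$ automatically and the twisted and ordinary products agree under~$\tau$.
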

\begin{proof}
The state $\tau$ extends from $\cA(q,\Lambda)$ to $\A(q,\Lambda)$
by $\tau(A) = \lra{\Omega, A\Omega}$, where $\Omega$ is the 
cyclic vector in $\cH_{\mathrm{GNS}}$.
Reflection invariance follows from the previous discussion, and it is clear from 
the definition that $\tau(A) = 0$ if $A$ is homogeneous 
of nonzero degree. 
It is faithful since $\tau(C^*_{I}C_J) = \delta_{IJ}$, cf.\ Proposition~\ref{prop:prestateCPR}.

If $I_{\pm} \in \Z_p^{\Lambda_{\pm}}$, then $C_{I_-}C_{I_+}$ 
is proportional to $C_{I_- + I_+}$. It follows that 
$\tau(C_{I_-} C_{I_+})$ is zero unless $I_{-} = I_+ = 0$, in which case it 
is equal to~$1$.
If we expand $A_{\pm} \in \A_{\pm}(q,\Lambda)$ in a norm convergent sum
\[
	A_{\pm} = \sum_{I \in \Z_p^{\Lambda_{\pm}}} a^{\pm}_{I} C_I,
\]
then $\tau(A_- A_+) = a^-_{0}a^+_{0}$. Since $\tau(A_{\pm}) = a^{\pm}_{0}$, 
it follows that $\tau$ factorizes, $\tau(A_-A_+) = \tau(A_-) \tau(A_+)$.
Reflection positivity then follows from Proposition~\ref{invimppos}.
\end{proof}

\begin{remark}
Note that in the fermionic case $q=-1$, the relations CPR 1--3 
for the CPR algebra coincide with the relations 
CAR 1--3 for the CAR algebra.
Thus
the CPR algebra for $p=2$ and $q=-1$ is isomorphic to the CAR algebra.
 \end{remark}

\setcounter{equation}{0}
\section{Sufficient Conditions for Reflection Positivity}\label{Sect:SforRP}

We return to the general setting of \S\ref{sec:section2},
which in particular encompasses the applications in the previous section. We assume that:
\begin{itemize}
\item[Q1.] \quad $\A$ is a $\Z_{p}$-graded, locally-convex, topological algebra.
\item[Q2.] \quad $\A$ is the $q$-double of $\A_{+}$.
\item[Q3.] \quad $\tau \colon \A \rightarrow \C$ is continuous, neutral and reflection positive.
\end{itemize}
In this setting, and under the natural condition that $\exp \colon \A \rightarrow \A$
is continuous,
we give the following
\emph{sufficient} condition on an element $H\in \A^0$ of degree zero for its
Boltzmann functional $\tau_{H}(A) = \tau(Ae^{-H})$ to be reflection positive.
Namely, 
this is the case if $H$ admits a decomposition 
\be\label{eq:decompositionBoven}
H = H_{-} + H_{0} + H_{+}\,,
\ee
with $H_{+}\in \A_{+}$, with $-H_{0} \in \cok$,
and with $H_{-} =\Theta(H_{+})$.

We will show in
\S\ref{Sect:N-SforRP} that these conditions are \emph{necessary} as well as sufficient,
under the additional assumptions that $\tau$ 
is factorizing and strictly positive.
These extra assumptions are \emph{not} needed 
for the results in the present section. This is relevant for applications in
quantum field theory and Tomita-Takesaki theory, where the
background functional $\tau$ is generally not factorizing.


\subsection{The Reflection-Positive Cone} \label{Sect:RPCone}
The results in this section rely heavily on the reflection positive cone $\cK_+$, 
introduced in \S\ref{sec:firstcone}.
By Proposition~\ref{prop:rpclosure}, the (continuous) dual cone of $\cK_+$ is the set of
(continuous), neutral, reflection positive functionals $\varrho \colon \A \rightarrow \C$.
A key point in the characterization of reflection positive functionals 
is proving that $\cK_+$ is multiplicatively closed.

%
%

\begin{theorem}\label{lem:KeyLemma}
The cone $\cK_{+}$ is multiplicatively closed, and it is pointwise invariant under
reflection. Namely, $\cK_{+} \cK_{+} \subseteq \cK_{+}$, and $\Theta|_{\cK_{+}} = \mathrm{Id}$.
\end{theorem}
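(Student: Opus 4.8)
The plan is to verify both claims directly from the definition $\cK_+ = \{\Theta(A)\circ A \mid A \text{ homogeneous in } \A_+\}$ together with the para-commutation relation \eqref{eq:exchange}, the fact that $\Theta$ is an antilinear homomorphism with $\Theta^2 = \mathrm{Id}$, and the identity $q = \zeta^2$. Pointwise invariance under reflection is the easy half: for homogeneous $A \in \A_+$ of degree $k$, I would compute $\Theta(\Theta(A)\circ A) = \overline{\zeta^{k^2}}\,\Theta(\Theta(A)A) = \overline{\zeta^{k^2}}\,\Theta(A)\Theta(\Theta(A)) = \overline{\zeta^{k^2}}\,\Theta(A)A$, using antilinearity, the homomorphism property (note $\Theta$ reverses nothing since it is a homomorphism, not an anti-homomorphism), and $\Theta^2=\mathrm{Id}$. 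Then I would apply the para-commutation relation to the pair $\Theta(A)\in\A_-$, $A\in\A_+$ to rewrite $\Theta(A)A = q^{-k^2}A\Theta(A)$ and check that the $\zeta$ and $q$ factors combine to give $\zeta^{k^2}$ back, i.e. $\overline{\zeta^{k^2}} q^{-k^2}\cdot(\text{reorder})$; more cleanly, I would observe directly that $\Theta(A)\circ A = \zeta^{k^2}\Theta(A)A$ already equals its own image under the symmetric twisted product, and since $\Theta$ fixes degree-zero reality it suffices to show $\overline{\zeta^{k^2}}\Theta(A)A = \zeta^{k^2}\Theta(A)A$, which follows because $\Theta(A)A$ is a degree-zero element and one checks $\zeta^{k^2}\Theta(A)A$ is already reflection-invariant by the computation just indicated. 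I will organize this carefully so that the bookkeeping of roots of unity is transparent.

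For multiplicative closure, the key is that $\A_+$ is a subalgebra, so products of elements of $\A_+$ stay in $\A_+$. Given homogeneous $A, B \in \A_+$ of degrees $k, \ell$, I want to show $(\Theta(A)\circ A)(\Theta(B)\circ B) \in \cK_+$. Expanding, this is $\zeta^{k^2}\zeta^{\ell^2}\,\Theta(A)A\Theta(B)B$. The middle factor $A\Theta(B)$ is a product of an element of $\A_+$ with an element of $\A_-$; applying \eqref{eq:exchange} (in the form $A_-A_+ = q^{\abs{A_-}\abs{A_+}}A_+A_-$, so $A\Theta(B) = q^{-k\ell}\Theta(B)A$) lets me move $\Theta(B)$ to the left, yielding $\zeta^{k^2}\zeta^{\ell^2}q^{-k\ell}\,\Theta(A)\Theta(B)AB = \zeta^{k^2}\zeta^{\ell^2}q^{-k\ell}\,\Theta(AB)(AB)$, using that $\Theta$ is a homomorphism. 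Now $AB \in \A_+$ is homogeneous of degree $k+\ell$, and the candidate cone element is $\Theta(AB)\circ(AB) = \zeta^{(k+\ell)^2}\Theta(AB)(AB)$. So I must check the scalar identity $\zeta^{k^2}\zeta^{\ell^2}q^{-k\ell} = \zeta^{(k+\ell)^2}$, i.e. $\zeta^{k^2+\ell^2}\zeta^{-2k\ell} = \zeta^{k^2+\ell^2+2k\ell}$ fails — wait, that gives $\zeta^{-2k\ell}$ versus $\zeta^{2k\ell}$. The resolution is that $q^{-k\ell} = \zeta^{-2k\ell}$, and we need $\zeta^{(k+\ell)^2} = \zeta^{k^2 + 2k\ell + \ell^2}$, so the required identity is $\zeta^{k^2+\ell^2}\zeta^{-2k\ell} = \zeta^{k^2+\ell^2+2k\ell}$, i.e. $\zeta^{-4k\ell} = 1$; since $\zeta^{p^2} = 1$ and $q^p = 1$, one has $q^{2k\ell \cdot p} = 1$, and more to the point $\zeta^{4k\ell} = (\zeta^2)^{2k\ell} = q^{2k\ell}$, which need not be $1$ — so I expect instead that the correct move is to use para-commutation to shuffle $\Theta(B)$ past $A$ in the opposite order, i.e. write $A\Theta(B) = q^{?}\Theta(B)A$ with the sign chosen so the exponents work; I will pin down the precise exponent in \eqref{eq:exchange} (noting $\abs{\Theta(B)} = -\ell$) so that $A\Theta(B) = q^{-k\ell}\Theta(B)A$ or $q^{+k\ell}\Theta(B)A$ according to the convention, and then the identity becomes $\zeta^{k^2+\ell^2 \pm 2k\ell} = \zeta^{(k\pm\ell)^2}$, one of which is exactly what we want.

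\textbf{Main obstacle.} The genuine content is bookkeeping the roots of unity correctly: verifying that the scalar prefactors combine via $\zeta^{k^2}\cdot\zeta^{\ell^2}\cdot(\text{para-commutation phase}) = \zeta^{(k+\ell)^2}$, which is precisely why the definition uses the symmetric twisted product $\Theta(A)\circ A = \zeta^{\abs{A}^2}\Theta(A)A$ with $\zeta^2 = q$ rather than $\Theta(A)A$ alone — the twist $\zeta^{\abs{A}^2}$ is engineered so that it is multiplicative up to the para-commutation phase. I expect the cleanest route is to work with the twisted product abstractly: show $(\Theta(A)\circ A)\circ(\Theta(B)\circ B)$, suitably interpreted, collapses to $\Theta(AB)\circ(AB)$, using associativity of the underlying algebra product and the one instance of para-commutation needed to bring the two $\Theta$-factors together. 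I should also remark on the edge case where $AB$ might fail to be homogeneous — it will not, since the grading is additive and $A,B$ are homogeneous — and note that the zero elements arising when degrees mismatch in the twisted product are harmless since $0 = \Theta(0)\circ 0 \in \cK_+$.
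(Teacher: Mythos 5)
Your overall strategy is exactly the paper's: the paper proves this theorem by two direct computations (its Lemmas \ref{Prop:TProduct-RI} and \ref{Prop:TProduct-MClosed}, stated for the slightly more general expressions $\Theta(A)\circ B$ and $(\Theta(A_{1})\circ B_{1})(\Theta(A_{2})\circ B_{2})$), and both reduce to the root-of-unity bookkeeping you describe. The sign question you leave open in the multiplicativity step does resolve in your favor, and for precisely the reason you half-identify: since $\Theta$ inverts the grading, $\abs{\Theta(B)} = -\ell$, so the relation $A_{-}A_{+} = q^{\abs{A_{-}}\abs{A_{+}}}A_{+}A_{-}$ applied to $A_{-}=\Theta(B)$, $A_{+}=A$ gives $\Theta(B)A = q^{-k\ell}A\Theta(B)$, i.e.\ $A\Theta(B) = q^{+k\ell}\Theta(B)A$. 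The exchange phase is therefore $\zeta^{+2k\ell}$, and $\zeta^{k^{2}+\ell^{2}+2k\ell}=\zeta^{(k+\ell)^{2}}$ as required; your initial $q^{-k\ell}$ came from forgetting the grading inversion. This is exactly how the paper's Lemma \ref{Prop:TProduct-MClosed} gets its factor $q^{-\abs{B_{1}}\abs{\Theta(A_{2})}} = q^{+\abs{A_{1}}\abs{A_{2}}}$.

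The reflection-invariance computation as literally written is broken, however. You wrote $\Theta(\Theta(A)A) = \Theta(A)\Theta(\Theta(A))$, which reverses the order of the factors --- that is what an anti-homomorphism does; the homomorphism gives $\Theta(\Theta(A)A) = \Theta(\Theta(A))\Theta(A) = A\Theta(A)$. Your chain thus lands on $\overline{\zeta}^{\,k^{2}}\Theta(A)A = \zeta^{-k^{2}}\Theta(A)A$, which is not $\Theta(A)\circ A$ in general, and the para-commutation step you then propose (inserting $\Theta(A)A = q^{-k^{2}}A\Theta(A)$ afterwards) pushes it further off. The correct assembly is $\Theta(\Theta(A)\circ A) = \overline{\zeta}^{\,k^{2}}A\Theta(A) = \overline{\zeta}^{\,k^{2}}q^{k^{2}}\Theta(A)A = \zeta^{k^{2}}\Theta(A)A$, using $A\Theta(A) = q^{k^{2}}\Theta(A)A$ (again because $\abs{\Theta(A)}=-k$) and $\overline{\zeta}^{\,k^{2}}q^{k^{2}} = \zeta^{-k^{2}}\zeta^{2k^{2}} = \zeta^{k^{2}}$. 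All the ingredients are present in your proposal; they need to be put together with the two signs (homomorphism order, grading inversion) straightened out.
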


The proof uses the following two lemmas. We formulate them separately, 
as we will need them later on.

\begin{lemma}\label{Prop:TProduct-RI}
Let $A,B\in \A_{+}$. 
Then the reflection of $\Theta(A)\circ B$ is ${\Theta(B)\circ A}$.
In particular, $\Theta(A)\circ A$ is reflection invariant. 
\end{lemma}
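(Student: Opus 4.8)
The statement to prove is Lemma~\ref{Prop:TProduct-RI}: for $A,B\in\A_+$, the reflection of $\Theta(A)\circ B$ equals $\Theta(B)\circ A$. The natural approach is a direct computation, reducing to the homogeneous case by (anti)linearity. So first I would take $A,B\in\A_+$ homogeneous; since $\varrho$-relevant expressions vanish unless $\abs{A}=\abs{B}$, and since the twisted product itself is set to zero when $\abs{\Theta(A)}\neq -\abs{B}$, i.e.\ when $\abs{A}\neq\abs{B}$, both sides are $0$ unless $\abs{A}=\abs{B}=:k$, so I may assume this. Then by Definition~\ref{Def:TwistedProduct}, $\Theta(A)\circ B=\zeta^{k^2}\Theta(A)B$.

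\textbf{Key steps.} Apply $\Theta$ to $\zeta^{k^2}\Theta(A)B$. Since $\Theta$ is an \emph{antilinear} homomorphism, $\Theta\big(\zeta^{k^2}\Theta(A)B\big)=\overline{\zeta^{k^2}}\,\Theta(\Theta(A))\Theta(B)=\overline{\zeta^{k^2}}\,A\,\Theta(B)$, using $\Theta^2=\mathrm{Id}$. Now I need to move $A\in\A_+$ past $\Theta(B)\in\A_-$ using the para-commutation relation \eqref{eq:exchange} applied to the pair $\Theta(B)\in\A_-$ and $A\in\A_+$: this gives $\Theta(B)A=q^{\abs{\Theta(B)}\abs{A}}A\Theta(B)=q^{(-k)(k)}A\Theta(B)=q^{-k^2}A\Theta(B)$, hence $A\Theta(B)=q^{k^2}\Theta(B)A$. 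Substituting, $\Theta(\Theta(A)\circ B)=\overline{\zeta^{k^2}}\,q^{k^2}\,\Theta(B)A$. Since $q=\zeta^2$ and $\abs{\zeta}=1$ (as $\zeta^{p^2}=1$), we have $\overline{\zeta^{k^2}}=\zeta^{-k^2}$, so $\overline{\zeta^{k^2}}\,q^{k^2}=\zeta^{-k^2}\zeta^{2k^2}=\zeta^{k^2}$. Therefore $\Theta(\Theta(A)\circ B)=\zeta^{k^2}\Theta(B)A=\Theta(B)\circ A$, using Definition~\ref{Def:TwistedProduct} once more (noting $\abs{\Theta(B)}=-k=-\abs{A}$). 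Finally, specializing $B=A$ gives $\Theta(\Theta(A)\circ A)=\Theta(A)\circ A$, so $\Theta(A)\circ A$ is reflection invariant; for general (non-homogeneous) $A$ this follows by writing $A=\sum_k A_k$ and using that cross terms $\Theta(A_j)\circ A_k$ with $j\neq k$ vanish, while $\Theta(A_j)\circ A_k+\Theta(A_k)\circ A_j$ is $\Theta$-invariant by the homogeneous case — actually the cleanest phrasing is just that $\Theta$ is additive and the identity $\Theta(\Theta(A)\circ B)=\Theta(B)\circ A$ extends bilinearly in $(A,B)$, then set $B=A$.

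\textbf{Main obstacle.} There is no serious obstacle here; this is a bookkeeping lemma. The only point requiring care is getting the exponents of $\zeta$ and $q$ right and confirming that $\overline{\zeta^{k^2}}=\zeta^{-k^2}$, which hinges on $\abs{\zeta}=1$ — a consequence of $\zeta^{p^2}=1$ (and in the degenerate-looking $p=0$ case one should note $\zeta^{k^2}$ is still well defined and of modulus $1$ for the relevant $q$, though strictly $p=0$ would need $\abs q=1$ assumed, which is in force). I would also double-check the direction in which \eqref{eq:exchange} is applied: the relation as stated is $A_-A_+=q^{\abs{A_-}\abs{A_+}}A_+A_-$, and here $A_-=\Theta(B)$, $A_+=A$, so I must be consistent that $\abs{\Theta(B)}=-\abs{B}=-k$, giving the exponent $-k^2$ and hence the inverse relation $A\Theta(B)=q^{k^2}\Theta(B)A$ as used above.
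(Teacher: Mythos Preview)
Your proof is correct and follows essentially the same route as the paper: reduce to homogeneous $A,B$ with $\abs{A}=\abs{B}$, apply $\Theta$ (using antilinearity and $\Theta^2=\mathrm{Id}$), then use the para-commutation relation \eqref{eq:exchange} to swap $A$ and $\Theta(B)$, and finally simplify the powers of $\zeta$ using $q=\zeta^2$ and $\abs{\zeta}=1$. The paper's proof is just a more compressed version of exactly this computation.
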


\begin{proof}
For $A,B\in\A_{+}$ of homogeneous degree $\abs{A}=\abs{B}$, one has
	\beqs
		\Theta(\Theta(A)\circ B)
		&=& \Theta\lrp{\zeta^{\abs{A}\abs{B}}\,\Theta(A)B}
		= \overline{\zeta}^{\abs{A}\abs{B}}\, A\Theta(B)\nn
		&=& \overline{\zeta}^{\abs{A}\abs{B}}\,
		q^{\abs{A}\abs{B}}\, \Theta(B) A
		=\Theta(B)\circ A\;,
	\eeqs
as claimed. If $\abs{A}\neq\abs{B}$, then the twisted product is zero.
\end{proof}

\begin{lemma}\label{Prop:TProduct-MClosed}
If $A_{1}, A_{2}, B_{1}, B_{2}$ in $\A_{+}$ are homogeneous 
with $\abs{A_{1}} = \abs{B_{1}}$ and $\abs{A_{2}} = \abs{B_{2}}$,
then 
	\be
		(\Theta(A_{1})\circ B_{1}) \ (\Theta(A_{2})\circ B_{2})
		= \Theta(A_{1}A_{2}) \circ B_{1}B_{2}\;.
	\ee 
\end{lemma}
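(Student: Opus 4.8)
The plan is to prove the identity
\[
(\Theta(A_{1})\circ B_{1}) \ (\Theta(A_{2})\circ B_{2}) = \Theta(A_{1}A_{2}) \circ B_{1}B_{2}
\]
by expanding both sides in terms of ordinary products, using the definition of the twisted product and the para-commutation relations. First I would unwind the left-hand side: by Definition~\ref{Def:TwistedProduct}, $\Theta(A_i)\circ B_i = \zeta^{\abs{A_i}\abs{B_i}}\Theta(A_i)B_i = \zeta^{\abs{A_i}^2}\Theta(A_i)B_i$ (using $\abs{A_i}=\abs{B_i}$), so the left side equals $\zeta^{\abs{A_1}^2 + \abs{A_2}^2}\,\Theta(A_1)B_1\Theta(A_2)B_2$. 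The crucial move is to commute $B_1 \in \A_+$ past $\Theta(A_2)\in \A_-$ using the para-commutation relation \eqref{eq:exchange}, which applied to $\Theta(A_2)\in\A_-$ and $B_1\in\A_+$ gives $\Theta(A_2)B_1 = q^{-\abs{A_2}\abs{B_1}}B_1\Theta(A_2)$, i.e. $B_1\Theta(A_2) = q^{\abs{A_2}\abs{B_1}}\Theta(A_2)B_1$. Wait — I need to move $\Theta(A_2)$ to the left past $B_1$, so I'll use $\Theta(A_2)B_1 = q^{-\abs{A_2}\abs{B_1}}B_1\Theta(A_2)$, which rearranges the product into $\zeta^{\abs{A_1}^2+\abs{A_2}^2}\,q^{-\abs{A_2}\abs{B_1}}\,\Theta(A_1)\Theta(A_2)B_1B_2$.

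Next I would simplify: since $\Theta$ is a homomorphism, $\Theta(A_1)\Theta(A_2) = \Theta(A_1A_2)$, and $\abs{A_1A_2} = \abs{A_1}+\abs{A_2} = \abs{B_1B_2}$. The right-hand side, by definition, is $\zeta^{\abs{A_1A_2}^2}\,\Theta(A_1A_2)B_1B_2 = \zeta^{(\abs{A_1}+\abs{A_2})^2}\,\Theta(A_1A_2)B_1B_2$. So the identity reduces to the scalar identity
\[
\zeta^{\abs{A_1}^2+\abs{A_2}^2}\,q^{-\abs{A_2}\abs{B_1}} = \zeta^{(\abs{A_1}+\abs{A_2})^2}\,,
\]
and since $\abs{B_1}=\abs{A_1}$ and $q = \zeta^2$, the left side is $\zeta^{\abs{A_1}^2+\abs{A_2}^2}\zeta^{-2\abs{A_1}\abs{A_2}} = \zeta^{(\abs{A_1}-\abs{A_2})^2}$. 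Hmm, that gives $\zeta^{(\abs{A_1}-\abs{A_2})^2}$ versus $\zeta^{(\abs{A_1}+\abs{A_2})^2}$, and these differ by $\zeta^{4\abs{A_1}\abs{A_2}} = q^{2\abs{A_1}\abs{A_2}}$, which need not be $1$. This signals that I picked the wrong direction for the commutation: I should instead move $B_1$ to the \emph{right} past $\Theta(A_2)$, or equivalently keep $\Theta(A_1)\Theta(A_2)$ together only after commuting appropriately — the right bookkeeping is $B_1\Theta(A_2) = q^{\abs{B_1}\abs{A_2}}\Theta(A_2)B_1$, giving the scalar $\zeta^{\abs{A_1}^2+\abs{A_2}^2}q^{\abs{A_1}\abs{A_2}} = \zeta^{(\abs{A_1}+\abs{A_2})^2}$, which matches.

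The main obstacle — really the only subtlety — is getting the sign/phase bookkeeping exactly right: one must be careful which element is being moved past which, since $\Theta(A_i)$ lives in $\A_-$ and $B_i$ in $\A_+$, and the para-commutation phase $q^{\pm\abs{A_-}\abs{A_+}}$ depends on the direction of the swap. I would carry this out cleanly by writing the left-hand side as a single display, applying \eqref{eq:exchange} in the form $B_1\Theta(A_2) = q^{\abs{A_2}\abs{B_1}}\Theta(A_2)B_1$ in one step, collapsing $\Theta(A_1)\Theta(A_2)=\Theta(A_1A_2)$, and then verifying the scalar identity $\zeta^{\abs{A_1}^2}\zeta^{\abs{A_2}^2}q^{\abs{A_1}\abs{A_2}} = \zeta^{(\abs{A_1}+\abs{A_2})^2}$ using $q=\zeta^2$ and $\abs{B_i}=\abs{A_i}$. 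Everything else is routine. Once Lemmas~\ref{Prop:TProduct-RI} and \ref{Prop:TProduct-MClosed} are in hand, Theorem~\ref{lem:KeyLemma} follows immediately: multiplicative closure of $\cK_+$ is the statement $(\Theta(A_1)\circ A_1)(\Theta(A_2)\circ A_2) = \Theta(A_1A_2)\circ A_1A_2 \in \cK_+$ (taking $B_i = A_i$), and pointwise reflection-invariance is Lemma~\ref{Prop:TProduct-RI} with $B=A$.
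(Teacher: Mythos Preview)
Your proposal is correct and follows exactly the paper's approach: expand the twisted products, swap $B_1$ past $\Theta(A_2)$ via the para-commutation relation to pick up $q^{\abs{A_1}\abs{A_2}}$, use that $\Theta$ is a homomorphism, and verify the scalar identity $\zeta^{\abs{A_1}^2}\zeta^{\abs{A_2}^2}q^{\abs{A_1}\abs{A_2}} = \zeta^{(\abs{A_1}+\abs{A_2})^2}$ from $q=\zeta^2$. In a clean write-up you should drop the first (erroneous) pass and go straight to $B_1\Theta(A_2) = q^{\abs{A_2}\abs{B_1}}\Theta(A_2)B_1$, which is exactly what the paper does (writing the same factor as $q^{-\abs{B_1}\abs{\Theta(A_2)}}$ using $\abs{\Theta(A_2)}=-\abs{A_2}$).
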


\begin{proof}
Note that 
\beqs
(\Theta(A_{1})\circ B_{1}) \ (\Theta(A_{2})\circ B_{2}) \!&=&\!
\zeta^{\abs{A_{1}}^2} \zeta^{\abs{A_{2}}^2}\,\Theta(A_{1})\,B_{1}\,
\Theta(A_{2})\,B_{2}\\
&=&\!
\zeta^{\abs{A_{1}}^2} \zeta^{\abs{A_{2}}^2}
q^{-\abs{B_{1}}\abs{\Theta(A_{2})}} \Theta(A_{1})\Theta(A_{2})\,B_{1}B_{2}\\
&=&\!
\zeta^{(\abs{A_{1}}+\abs{A_{2}})^2}\,\Theta(A_{1}A_{2})\,B_{1}B_{2}\,. 
\eeqs
Here we use $\abs{B_{1}} = \abs{A_{1}}$, $\abs{\Theta(A_{2})}=-\abs{A_{2}}$ and $q=\zeta^{2}$. As $\abs{A_{1}A_{2}}= \abs{A_{1}}+\abs{A_{2}}$, the final expression equals $\Theta(A_{1}A_{2})\circ B_{1}B_{2}$. 
\end{proof}

\begin{proof} [Proof of Theorem \ref{lem:KeyLemma}]
The cone $\cK_+$ is
reflection invariant by Lem\-ma~\ref{Prop:TProduct-RI}, 
and multiplicatively closed by Lemma~\ref{Prop:TProduct-MClosed}.
\end{proof}

As we will mainly be interested in \emph{continuous} reflection positive 
functionals, we will need to extend Theorem~\ref{lem:KeyLemma}
to the closure $\cok$ of the convex hull $\coh$ of $\cK_+$.
Note that $\cok\subseteq \A^{0}$, as
every element of $\cK_{+}$ is of degree zero.
By polarization, we obtain a useful characterization of~$\coh$.


\begin{proposition}\label{prop:charpos}
Let $K\in \A^{0}$. Then $K\in{\coh} $
if and only if~both:
\begin{enumerate}
\item
The element $K$ can be written as a finite sum 
\be\label{eq:posdefstandard}
K = \sum_{I,J \in \mathcal{I}}J_{IJ}\,\Theta(C_I)\circ C_J\,,
\ee
with $C_{I}\in \A_{+}$ labelled by a finite set $\mathcal{I}$.
\item
Let $(J_{IJ})_{\mathcal{I}}$ be the matrix with entries $J_{IJ}$, labelled
by $I,J \in\cI$. Then
$(J_{IJ})_{\mathcal{I}}$ is positive semi-definite, and
$J_{IJ} = 0$ if ${\abs{C_I} \neq \abs{C_J}}$.
\end{enumerate}
\end{proposition}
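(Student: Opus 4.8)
The plan is to prove both implications directly from the definitions of $\coh$ and of the twisted product, using the bilinearity statements and the multiplicative-closure lemma already established. The key observation is that a general element of $\coh$ is, by definition, a finite convex combination $K = \sum_k t_k\,\Theta(A_k)\circ A_k$ with $t_k \geq 0$, $\sum_k t_k = 1$, and $A_k$ homogeneous in $\A_+$; absorbing $\sqrt{t_k}$ into $A_k$ (note $\Theta$ is antilinear, so $\Theta(\sqrt{t_k}A_k)\circ(\sqrt{t_k}A_k) = t_k\,\Theta(A_k)\circ A_k$ since $\sqrt{t_k}$ is real), we may write any element of $\coh$ as a finite sum $\sum_k \Theta(A_k)\circ A_k$ of elements of $\cK_+$ — in other words, $\coh = \cK_+ + \cK_+ + \cdots$ is just the set of finite sums from $\cK_+$, because $\cK_+$ is already a cone (it is closed under multiplication by nonnegative reals, again using antilinearity of $\Theta$ together with $\lra{r A, rA} = r^2\lra{A,A}$ for $r\geq 0$).

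For the direction ``$K \in \coh \Rightarrow$ (1) and (2)'': given $K = \sum_{k} \Theta(A_k)\circ A_k$ with $A_k$ homogeneous, I would choose the index set $\cI = \{1,\dots,n\}$, set $C_k := A_k$, and take $J_{IJ} = \delta_{IJ}$. This is manifestly positive semi-definite (it is the identity matrix), and $J_{IJ} = 0$ when $\abs{C_I}\neq\abs{C_J}$ holds trivially. So this direction is essentially immediate once one has rewritten a convex combination as a plain finite sum from the cone.

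For the converse ``(1) and (2) $\Rightarrow K \in \coh$'': suppose $K = \sum_{I,J}J_{IJ}\,\Theta(C_I)\circ C_J$ with $(J_{IJ})$ positive semi-definite and $J_{IJ} = 0$ unless $\abs{C_I} = \abs{C_J}$. Since the matrix is Hermitian positive semi-definite, it factors as $J_{IJ} = \sum_{\alpha} \overline{u^{\alpha}_I}\,u^{\alpha}_J$ for finitely many vectors $u^\alpha$ (spectral decomposition, with eigenvalues absorbed). Because $J_{IJ}$ vanishes across different degrees, one can arrange — by block-diagonalizing within each degree — that each eigenvector $u^\alpha$ is supported on indices of a single fixed degree $d_\alpha$. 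Then I would form $B_\alpha := \sum_I u^\alpha_I\, C_I \in \A_+$, which is homogeneous of degree $d_\alpha$, and compute, using bilinearity of the twisted product and antilinearity of $\Theta$ (so that $\Theta(B_\alpha) = \sum_I \overline{u^\alpha_I}\,\Theta(C_I)$):
\[
\sum_\alpha \Theta(B_\alpha)\circ B_\alpha
= \sum_\alpha \sum_{I,J}\overline{u^\alpha_I}\,u^\alpha_J\,\Theta(C_I)\circ C_J
= \sum_{I,J}J_{IJ}\,\Theta(C_I)\circ C_J = K\,,
\]
where the cross-degree terms $\Theta(C_I)\circ C_J$ with $\abs{C_I}\neq\abs{C_J}$ are zero by definition of $\circ$ and thus cause no trouble. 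Hence $K$ is a finite sum of elements of $\cK_+$, i.e.\ $K \in \coh$.

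The main obstacle — really the only subtle point — is bookkeeping the degree constraint: one must make sure the spectral factorization of $(J_{IJ})$ can be taken degree-by-degree, so that each $B_\alpha$ is genuinely homogeneous (only then is $\Theta(B_\alpha)\circ B_\alpha$ an element of $\cK_+$, which is defined only for homogeneous arguments). This is handled by observing that $(J_{IJ})$ is block-diagonal with respect to the partition of $\cI$ by the degree $\abs{C_I}$, so each block may be diagonalized separately. Everything else is a routine manipulation of the bilinear twisted product and the antilinearity of $\Theta$.
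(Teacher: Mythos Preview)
Your proof is correct and follows essentially the same approach as the paper: both directions rely on the spectral (eigenvalue/eigenvector) decomposition of the positive semi-definite matrix $(J_{IJ})$, with the key observation that block-diagonality in the degree allows the eigenvectors to be chosen homogeneous, so that each $B_\alpha = \sum_I u^\alpha_I C_I$ lies in $\cK_+$. Your additional remark that $\coh$ coincides with the set of finite sums from $\cK_+$ (since $\cK_+$ is already closed under nonnegative scaling) is a nice clarification that the paper leaves implicit.
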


%
%
%
\begin{proof}
Every $K \in {\coh} $ can be written as a finite convex combination $K = \sum_{r\in \mathcal{I}} p_{r} \Theta(X_r)\circ X_r$ of elements of $\cK_{+}$.
In particular, it is of the form \eqref{eq:posdefstandard} with $J_{IJ} = p_{I}\delta_{IJ}$.
Conversely, suppose that $K$ is of the form \eqref{eq:posdefstandard}.
Since the matrix $(J_{IJ})_{\mathcal{I}}$ is positive semidefinite, 
its eigenvalues $p_{r}$ are nonnegative.
As $(J_{IJ})_{\mathcal{I}}$ respects the grading, the corresponding eigenvectors $(x^{r}_{I})_{I\in \mathcal{I}}$
can be chosen so that 
$x^{r}_{I} =0$ unless $C_{I}$ has a fixed degree $\abs{C_{I}} = d_{r}$.
It follows that $X_{r} = \sum_{I}x^{(r)}_{I} C_{I}$ is
homogeneous of degree $\abs{X_r} = d_{r}$, and
$K = \sum_{r} \Theta(X_{r}) \circ X_{r}\in {\coh} $.
\end{proof}
%
We can thus characterize the closure $\cok$ of ${\coh} $ as follows:
\begin{corollary}
An element $K\in \A^{0}$ is in 
$\cok$ if and only if
$K = \lim_{n\rightarrow \infty} K_{n}$,
with $K_{n} \in \cK_{+}$ as in \eqref{eq:posdefstandard}.
\end{corollary}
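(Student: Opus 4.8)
The plan is to deduce the corollary directly from Proposition~\ref{prop:charpos} together with the previous corollary, which already identifies an element $K\in\A^0$ as lying in $\cok$ precisely when it is a limit of elements $K_n$ each of which lies in $\coh$. So the only thing left to do is to replace "$K_n\in\coh$" by "$K_n$ of the form \eqref{eq:posdefstandard}"; but these are the same condition by Proposition~\ref{prop:charpos}. Concretely: if $K\in\cok$, then by definition of the closure there is a sequence $K_n\in\coh$ with $K_n\to K$, and by Proposition~\ref{prop:charpos} each $K_n$ can be written in the form \eqref{eq:posdefstandard} with a positive semidefinite, grading-respecting coefficient matrix. Conversely, if $K=\lim_n K_n$ with each $K_n$ of the form \eqref{eq:posdefstandard} satisfying condition (2) of Proposition~\ref{prop:charpos}, then each $K_n\in\coh$, hence $K$ lies in the closure $\cok$.

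First I would state the forward direction, invoking Proposition~\ref{prop:charpos}(1)--(2) to put each approximating element $K_n$ into the standard form. Then I would handle the converse, which is immediate once one observes that $K_n$ of the form \eqref{eq:posdefstandard} lies in $\coh\subseteq\cok$ and that $\cok$ is closed. A small point worth spelling out is that the index set $\mathcal I$ and the family $C_I$ in \eqref{eq:posdefstandard} are allowed to depend on $n$ — the statement only asks for \emph{some} presentation of each $K_n$ — so there is no uniformity requirement to worry about.

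The "hard part" is essentially nonexistent here: there is no analytic obstacle, since all the real content (the polarization argument showing that the standard form \eqref{eq:posdefstandard} with a positive semidefinite matrix is exactly $\coh$) has already been carried out in Proposition~\ref{prop:charpos}, and the passage to the closure is a triviality. The only thing to be careful about is citing the right statement: the corollary is literally Proposition~\ref{prop:charpos} with both sides closed up, so the proof should read as a one-line combination of that proposition with the definition of $\cok$ as the closure of $\coh$. I would write it as follows.

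\begin{proof}
By definition, $\cok$ is the closure of $\coh$, so $K\in\cok$ if and only if $K=\lim_{n\to\infty}K_n$ for some sequence $K_n\in\coh$. By Proposition~\ref{prop:charpos}, an element of $\A^0$ lies in $\coh$ if and only if it can be written in the form \eqref{eq:posdefstandard} with a positive semidefinite, grading-respecting coefficient matrix $(J_{IJ})_{\mathcal I}$; the index set $\mathcal I$ and the family $C_I\in\A_+$ may depend on $n$. Substituting this characterization of the $K_n$ gives the claim.
\end{proof}
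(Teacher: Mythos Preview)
Your proposal is correct and matches the paper's approach: the paper states this corollary without proof, treating it as immediate from Proposition~\ref{prop:charpos} together with the definition of $\cok$ as the closure of $\coh$, which is exactly what you spell out. One small remark: there is no ``previous corollary'' to invoke here---you only need the definition of $\cok$ as the closure of $\coh$, which you correctly use in your written proof anyway.
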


\begin{corollary}\label{cor:keylemma}
The closed, convex cone $\cok$ is multiplicatively closed, and it is pointwise invariant under
reflection,
\[\cok \cdot \cok\subseteq \cok\,,
\quad\text{and}\quad
\Theta|_{\cok} = \mathrm{Id}
.\]
\end{corollary}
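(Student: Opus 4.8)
The statement to be proved is Corollary~\ref{cor:keylemma}: the closed convex cone $\cok$ is multiplicatively closed and pointwise fixed by $\Theta$. The strategy is to bootstrap from Theorem~\ref{lem:KeyLemma}, which already gives these two properties for $\cK_+$ itself, by successively extending first to the convex hull $\coh$, then to its closure, using continuity of multiplication and of $\Theta$.

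\textbf{Step 1: reflection invariance.} For the fixed-point property, I would argue in two stages. First, since $\Theta$ is linear over $\R$ (it is antilinear, but in particular additive and real-homogeneous) and $\Theta|_{\cK_+} = \mathrm{Id}$ by Theorem~\ref{lem:KeyLemma}, any finite convex combination $K = \sum_r p_r X_r$ with $X_r \in \cK_+$ satisfies $\Theta(K) = \sum_r p_r \Theta(X_r) = \sum_r p_r X_r = K$; hence $\Theta|_{\coh} = \mathrm{Id}$. Then, since $\Theta$ is continuous (Definition~\ref{Def:R}) and agrees with the identity on the dense subset $\coh$ of $\cok$, continuity forces $\Theta|_{\cok} = \mathrm{Id}$.

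\textbf{Step 2: multiplicative closure.} For the product property I would proceed in three stages. (a) $\coh \cdot \coh \subseteq \coh$: given $K = \sum_r p_r X_r$ and $L = \sum_s q_s Y_s$ with $X_r, Y_s \in \cK_+$, bilinearity of multiplication gives $KL = \sum_{r,s} p_r q_s\, X_r Y_s$; each $X_r Y_s \in \cK_+$ by the multiplicative closure in Theorem~\ref{lem:KeyLemma}, and $\sum_{r,s} p_r q_s = (\sum_r p_r)(\sum_s q_s) = 1$, so $KL$ is again a finite convex combination of elements of $\cK_+$, i.e.\ $KL \in \coh$. (b) Extend one factor to the closure: fix $K \in \coh$; the map $L \mapsto KL$ is continuous (separate continuity of multiplication), maps $\coh$ into $\coh \subseteq \cok$, hence maps $\overline{\coh} = \cok$ into $\cok$. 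So $\coh \cdot \cok \subseteq \cok$. (c) Extend the other factor: fix $L \in \cok$; the map $K \mapsto KL$ is continuous and maps $\coh$ into $\cok$ by (b), hence maps $\cok$ into $\cok$. Therefore $\cok \cdot \cok \subseteq \cok$, and since $\cok$ is convex, closed, and a cone (scaling by $\R^{\geq 0}$ preserves it because $\cK_+$ is a cone and closure/convex-hull respect scaling), the proof is complete.

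\textbf{Main obstacle.} There is no deep obstacle here — this is genuinely a continuity-and-density argument once Theorem~\ref{lem:KeyLemma} is in hand. The only point requiring a little care is that the topological algebra $\A$ is assumed to have merely \emph{separately} continuous multiplication (per the standing hypothesis on locally convex topological algebras), so the extension to the closure in Step~2 must be done one variable at a time as in (b) and (c) above, rather than invoking joint continuity of the product. One should also note in passing that $\cok$ is indeed a cone: $\cK_+$ is stable under multiplication by nonnegative reals (as $\lambda\,\Theta(A)\circ A = \Theta(\sqrt{\lambda} A)\circ(\sqrt{\lambda}A)$ for $\lambda \geq 0$), and this property is inherited by the convex hull and then by the closure.
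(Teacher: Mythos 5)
Your proof is correct and follows essentially the same route as the paper: extend the fixed-point property from $\cK_+$ to $\cok$ using $\R$-linearity and continuity of $\Theta$, and extend multiplicative closure one factor at a time via separately continuous left and right multiplication. The paper compresses your steps (a)--(c) slightly by working with $\cK_+\cdot\cok$ directly rather than passing explicitly through $\coh\cdot\coh$, but the argument is the same.
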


%

\begin{proof}
As $\Theta \colon \A \rightarrow \A$ is a continuous $\R$-linear map, $\Theta|_{\cK_{+}} = \mathrm{Id}$
implies that $\Theta|_{\cok} = \mathrm{Id}$.
To prove that $\cok\cdot\cok\subseteq \cok$, note that
as multiplication ${\A\times\A} \rightarrow \A$ is separately continuous,
left multiplication by $A\in \cK_{+}$ is a continuous linear map 
$L_{A} \colon \A \rightarrow \A$.
As $L_{A}(\cK_{+}) \subseteq \cK_{+}$,
we have $L_A(\cok) \subseteq \cok$
for every $A\in \cK_{+}$. It follows that
$\cK_{+} \cdot  \cok\subseteq \cok$. 
In particular,
right multiplication 
$R_{B} \colon \A \rightarrow \A$ by $B\in \cok$
satisfies $R_{B}(\cK_{+}) \subseteq \cok$.
As $R_{B}$ is a continuous linear map, it follows that
$R_{B}(\cok) \subseteq \cok$.
%
We conclude that
$\cok \cdot \cok \subseteq \cok$, as desired.
\end{proof}

\subsection{Sufficient Conditions for RP}

Using the fact that $\cok$ is multiplicatively closed, we obtain the following 
criterion for reflection positivity.

\begin{proposition}\label{prop:conetostate}
Let $A \mapsto \tau(A)$ be a continuous, reflection positive functional on $\A$,
and let $K_{1}, K_{2}\in\cok$.
Then the functionals 
\[
A \mapsto  \tau(K_{1}A),\quad 
A \mapsto  \tau(AK_{2}), \quad\text{and} \quad A \mapsto \tau(K_{1}AK_{2})
\] 
are 
also continuous and reflection positive.
\end{proposition}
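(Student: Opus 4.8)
The plan is to reduce the three claims to the single statement that $\tau$ composed with left and/or right multiplication by an element of $\cok$ is still continuous, neutral, and reflection positive, and then invoke Proposition~\ref{prop:rpclosure}, which characterizes reflection positivity as nonnegativity on $\cok$.

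First I would treat the middle case, $\varrho(A) = \tau(AK_2)$ with $K_2 \in \cok$. Continuity is immediate since $A \mapsto AK_2$ is continuous (separate continuity of multiplication) and $\tau$ is continuous. Neutrality follows because $K_2$ has degree zero (every element of $\cok$ lies in $\A^0$) and $\tau$ is neutral, so $\varrho(A) = \tau(AK_2) = 0$ whenever $\abs{A} \neq 0$. For reflection positivity I would use Proposition~\ref{prop:rpclosure}: it suffices to show $\varrho$ is nonnegative on $\cK_+$. Take $K = \Theta(A)\circ A \in \cK_+$. Then $\varrho(K) = \tau(K K_2)$, and by Corollary~\ref{cor:keylemma} we have $K K_2 \in \cok$. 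Since $\tau$ is reflection positive, it is nonnegative on $\cok$ (again Proposition~\ref{prop:rpclosure}), so $\varrho(K) \geq 0$. Hence $\varrho$ is reflection positive.

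Next I would handle $\varrho(A) = \tau(K_1 A)$ symmetrically, again using $K_1 K \in \cok$ for $K \in \cK_+$ by Corollary~\ref{cor:keylemma}, and then the general case $\varrho(A) = \tau(K_1 A K_2)$ by the same argument, noting $K_1 K K_2 \in \cok$ since $\cok$ is multiplicatively closed. Alternatively, and more economically, I would observe that the three assertions compose: having proved that $A \mapsto \tau(AK_2)$ is continuous, neutral, and reflection positive, I can feed \emph{that} functional into the already-proved left-multiplication case with $K_1$, obtaining continuity, neutrality, and reflection positivity of $A \mapsto \tau(K_1 A K_2)$ for free. So really only two cases need a direct argument, and the third follows by iteration.

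The only genuine subtlety — and it is mild — is making sure the reflection-positivity step is valid: one must check that $\tau(K_1 A K_2)$, as a function of $A$, is the kind of functional to which Proposition~\ref{prop:rpclosure} applies, i.e. continuous and neutral, \emph{before} testing nonnegativity on $\cK_+$. Both are established above. There is no real obstacle here; the whole proposition is a clean corollary of Corollary~\ref{cor:keylemma} (multiplicative closure and reflection invariance of $\cok$) together with the dual-cone characterization in Proposition~\ref{prop:rpclosure}. I would not expect to need any hypothesis beyond Q1–Q3; in particular factorization and strict positivity of $\tau$ play no role.
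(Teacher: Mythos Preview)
Your proposal is correct and follows essentially the same route as the paper: use separate continuity of multiplication for continuity, then invoke Proposition~\ref{prop:rpclosure} together with the multiplicative closure of $\cok$ from Corollary~\ref{cor:keylemma}. The paper's proof is terser (it does not spell out neutrality or the iteration trick), but the substance is identical.
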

\begin{proof}
In light of Proposition~\ref{prop:rpclosure}, it suffices to prove that 
if $A\in \cok$, then also $KA$, $AK$, and $KAK$ are in 
$\cok$.  
This follows from 
Corollary~\ref{cor:keylemma}.
As multiplication is separately continuous, continuity of the 
above three functionals follows from continuity of $\tau$.
\end{proof}

\begin{proposition}\label{lem:ExpposinK}
Suppose that $-H \in \cok$, and that the exponential series
\be\label{eq:expseries}
\exp(- H)  - I = \sum_{k=1}^{\infty} \frac{1}{k!} (-H)^{k}
\ee
converges in $\A$.
If $\tau$ is a continuous, reflection positive functional, then also 
the Boltzmann functional 
\[\tau_{H} (A) = \tau(A\,e^{-H} )\] is continuous and reflection positive. 
Its reflection positive inner product dominates that of $\tau$,
\be\label{eq:dominates}
\lra{A,A}_{\Theta, \tau_{H}} \geq \lra{A,A}_{\Theta,\tau} 
\quad\text{for all}\quad A \in \A_{+}\,.
\ee
\end{proposition}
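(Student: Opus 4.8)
The plan is to write $e^{-H} = I + (e^{-H} - I)$ and split the Boltzmann functional accordingly as $\tau_H(A) = \tau(A) + \tau\bigl(A(e^{-H}-I)\bigr)$. The first summand is reflection positive by hypothesis Q3. For the second summand, I would observe that the convergent exponential series \eqref{eq:expseries} expresses $e^{-H} - I = \sum_{k\geq 1}\frac{1}{k!}(-H)^k$, and that by hypothesis $-H \in \cok$. Since $\cok$ is a closed convex cone that is multiplicatively closed (Corollary~\ref{cor:keylemma}), each power $(-H)^k$ lies in $\cok$ for $k\geq 1$, and each term $\frac{1}{k!}(-H)^k$ does too, as $\cok$ is a cone. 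Because $\cok$ is closed and convex, the partial sums $\sum_{k=1}^{N}\frac{1}{k!}(-H)^k$ lie in $\cok$, and their limit $e^{-H} - I$ lies in $\cok$ as well.

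Once I know $e^{-H} - I \in \cok$, I would invoke Proposition~\ref{prop:conetostate} with $K_2 = e^{-H} - I$ (and $K_1$ trivial/absent, or formally $K_1 = I$ which is in $\cok$ since $I = \Theta(I)\circ I$): the functional $A \mapsto \tau\bigl(A(e^{-H}-I)\bigr)$ is continuous and reflection positive. Adding this to $\tau$, which is reflection positive by Q3, and using that the sum of two reflection positive functionals is reflection positive (immediate from Definition~\ref{def:reflectionpositivity}, since the associated sesquilinear forms add), gives that $\tau_H$ is continuous and reflection positive.

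The domination inequality \eqref{eq:dominates} then comes for free from this decomposition: for any $A \in \A_+$,
\[
\lra{A,A}_{\Theta,\tau_H} = \lra{A,A}_{\Theta,\tau} + \zeta^{\abs{A}^2}\tau\bigl(\Theta(A)\,A\,(e^{-H}-I)\bigr)
= \lra{A,A}_{\Theta,\tau} + \tau_{(2)}\bigl(\Theta(A)\circ A\bigr)\,,
\]
where $\tau_{(2)}(X) := \tau\bigl(X(e^{-H}-I)\bigr)$ is the reflection positive functional just constructed; the second term is therefore $\geq 0$, which yields \eqref{eq:dominates}. (One should double-check that $\Theta(A)\circ A \in \cok$ — indeed it is in $\cK_+ \subseteq \cok$ — so that applying the reflection positive functional $\tau_{(2)}$ to it gives a nonnegative number, consistent with Proposition~\ref{prop:rpclosure}.)

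The main obstacle I anticipate is the interchange of limit and functional needed to conclude $e^{-H} - I \in \cok$: strictly one needs that $\cok$ is closed in the topology in which the exponential series converges, which is fine since $\cok$ is by definition the closure of $\coh$ in $\A$'s locally convex topology and the series converges in $\A$. A minor subtlety is ensuring $(-H)^k \in \cok$ for all $k \geq 1$ rather than just $k=1$; this is exactly the content of $\cok$ being multiplicatively closed (Corollary~\ref{cor:keylemma}), applied inductively. Everything else is routine bookkeeping with the definitions.
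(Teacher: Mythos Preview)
Your proposal is correct and follows essentially the same approach as the paper's proof: show $e^{-H}-I\in\cok$ via multiplicative closure, convexity, and closedness of $\cok$ (Corollary~\ref{cor:keylemma}), then apply Proposition~\ref{prop:conetostate} with $K_2=e^{-H}-I$, and read off both reflection positivity of $\tau_H$ and the domination inequality from the resulting splitting $\tau_H=\tau+\tau_{(2)}$.
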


\begin{proof}
As $-H \in \cok$, every term $\frac{1}{k!} (-H)^{k}$
is in $\cok$ by Corollary \ref{cor:keylemma}.
Since $\cok$ is a convex cone, the same holds for the partial sums
in equation \eqref{eq:expseries}, and as
$\cok$ is closed, also the limit
$K_{2} := e^{-H} - I$ is in $\cok$.
If $\tau$ is continuous and reflection positive, then
by Proposition~\ref{prop:conetostate}, the functional
$A\mapsto \tau(A(e^{-H}-I) )$ is also continuous and reflection positive.
It follows that 
	\[
	\tau_{H}(\Theta(A)\circ A)
	=\tau((\Theta(A)\circ A)e^{-H})
	\geq \tau(\Theta(A)\circ A)\geq 0\,,
\]
for all $A\in \A_{+}$.
In particular $\tau_{H}$ is reflection positive. 
\end{proof}

\begin{remark}
We study the 
reflection-positivity properties of the functional   
$\tau_{H}(A)=\tau(A\,e^{-H})$ in some detail.
Using Proposition \ref{prop:conetostate}, 
one sees that
similar results hold for 
the functionals 
\[ {}_{H}\tau (A) = \tau(e^{-\beta H}\,A) \quad \text{and} \quad 
{}_{H_{1}}\tau_{H_{2}}(A) = \tau(e^{-\beta H_{1}}\,A\,e^{-\beta H_{2}})\,.
\]
%
%
\end{remark}

\begin{theorem}[\bf Sufficient Conditions for RP]\label{Thm:sufficient}
Suppose that the exponential series 
$\exp(A) = \sum_{k=0}^{\infty} \frac{1}{k!}A^{k}$
converges for all $A\in \A$, and that 
$\exp \colon \A \rightarrow \A$ is continuous.
Let $\tau \colon \A \rightarrow \C$ be a continuous, neutral, reflection positive functional. 
Let $H\in \A$ have degree zero, and  
admit a decomposition 
\be\label{eq:decomposition}
H = H_{-} + H_{0} + H_{+}\,,
\ee
with $H_{+}\in \A_{+}$, with $-H_{0} \in \cok$,
and with $H_{-} =\Theta(H_{+})$. Then the
Boltzmann functional $\tau_{H}(A) = \tau(A\,e^{-H})$
is continuous and reflection positive.
%
\end{theorem}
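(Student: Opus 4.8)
The plan is to reduce the statement to the special case already handled by Proposition~\ref{lem:ExpposinK}, by peeling off the $\A_+$ and $\A_-$ contributions one at a time. First I would introduce the element $H_0' = H - H_+ = H_- + H_0$. Since $-H_0 \in \cok \subseteq \A^0$, and $H_-, H_+ \in \A$, the series hypotheses on $\exp$ guarantee that $e^{-H}$, $e^{-H_0'}$, and $e^{-H_0}$ all converge, and that the relevant exponentials are continuous. The key algebraic observation is that $H_+$ and $H_-$, while not in the reflection-positive cone, are reflection-related, so conjugating a reflection-positive functional by $e^{-H_+}$ on one side and $e^{-H_-} = \Theta(e^{-H_+})$ on the other preserves reflection positivity. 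Concretely, for $A \in \A_+$, one should verify the identity
\[
\tau_H(\Theta(A)\circ A) = \tau\!\left(\Theta(e^{-H_+}A)\circ (e^{-H_0})\,(e^{-H_+}A) \cdot (\text{correction terms?})\right),
\]
so the bookkeeping of how $e^{-H_-}$, $e^{-H_0}$, $e^{-H_+}$ rearrange relative to $\Theta(A)\circ A$ is the crux.

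More carefully, I would proceed in two stages. \textbf{Stage 1:} Treat $\widetilde\tau(A) := \tau(A\,e^{-H_0})$. Since $-H_0 \in \cok$ and $\tau$ is continuous, neutral, and reflection positive, Proposition~\ref{lem:ExpposinK} shows that $\widetilde\tau$ is continuous, neutral (as $H_0$ has degree zero), and reflection positive. \textbf{Stage 2:} Show that passing from $\widetilde\tau$ to $\tau_H(A) = \widetilde\tau'(A) := \widetilde\tau(A\,e^{-H_+}e^{-H_-})$ — or more precisely $\tau(A\,e^{-H})$ with $H = H_- + H_0 + H_+$ — preserves reflection positivity, using only that $H_\pm \in \A_\pm$ are reflection-conjugate. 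Here the point is: for homogeneous $A \in \A_+$, using the para-commutation relations \eqref{eq:exchange} and $\Theta(H_+) = H_-$, one rewrites $\tau_H(\Theta(A)\circ A)$ as $\widetilde\tau(\Theta(A')\circ A')$ for a suitable $A' \in \A_+$ built from $A$ and $e^{-H_+}$. The natural guess is $A' = e^{-H_+}A$ (or $A e^{-H_+}$), modulo the scalar twist $\zeta^{|A|^2}$, since $H_+$ has degree zero and so does not disturb the grading; the reflection $\Theta(A') = \Theta(A)e^{-H_-}$ then supplies the $e^{-H_-}$ factor, while the factors rearrange across the reflection plane via paracommutativity.

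The main obstacle I anticipate is the operator-ordering bookkeeping in Stage 2: one must check that the product $e^{-H}=e^{-(H_-+H_0+H_+)}$ can be disentangled appropriately. Since $H_-$, $H_0$, $H_+$ need not commute with one another, $e^{-H}$ is \emph{not} simply $e^{-H_-}e^{-H_0}e^{-H_+}$. This is a genuine issue, so the argument cannot literally factor the exponential. The correct route, I expect, is not to factor $e^{-H}$ at all, but to absorb $H_\pm$ into $A$ \emph{before} exponentiating: define the algebra automorphism (or rather, note that conjugation-type manipulations) and instead argue as follows. Consider $\tau_H(\Theta(A)\circ A) = \tau\big((\Theta(A)\circ A)e^{-H}\big)$. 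Since $\Theta(A)\circ A$ is reflection-invariant (Lemma~\ref{Prop:TProduct-RI}) and of degree zero, and lies in $\cok$ when $A$ is homogeneous (it lies in $\cK_+$), one wants to commute it past parts of $e^{-H}$. A cleaner approach: show that $-H = -H_- - H_0 - H_+$ with $-H_0 \in \cok$ and $H_\pm$ reflection-conjugate implies, via an analytic interpolation / Lie--Trotter-type argument using continuity of $\exp$, that $e^{-H}$ lies in the set $\{\Theta(B)\,K\,B : B \in \A_+, \, K \in \cok\}$ or its closure — and then invoke Lemma~\ref{Prop:TProduct-MClosed} together with Proposition~\ref{prop:conetostate}. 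Whichever disentangling tool is used (Trotter product formula, or a direct power-series manipulation exploiting that $\cok$ is a closed multiplicatively-stable convex cone by Corollary~\ref{cor:keylemma}), the technical heart is establishing that $e^{-H}$ itself — not just $e^{-H_0}$ — is captured by the cone structure after conjugating by the reflection-paired factors $e^{-H_+}$, $e^{-H_-}=\Theta(e^{-H_+})$. Once that membership is in hand, reflection positivity of $\tau_H$ follows immediately from Proposition~\ref{prop:conetostate} applied with the appropriate $K_1, K_2 \in \cok$, exactly as in the proof of Proposition~\ref{lem:ExpposinK}.
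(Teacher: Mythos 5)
Your Stage 1 is fine, and you have correctly diagnosed the crux: since $H_-$, $H_0$, $H_+$ need not commute, $e^{-H}$ does not factor as $e^{-H_-}e^{-H_0}e^{-H_+}$, so the boundary terms cannot simply be absorbed into $A$ via $A'=e^{-H_+}A$. But having named the obstacle, you do not overcome it. The proposal ends with a list of candidate tools --- a Lie--Trotter formula, an unspecified analytic interpolation, a conjectured membership of $e^{-H}$ in the closure of $\{\Theta(B)\,K\,B\,;\,B\in\A_+,\ K\in\cok\}$ --- none of which is carried out, and none of which is clearly available under the stated hypotheses: the theorem assumes only that $\exp$ converges and is continuous on a locally convex algebra, which does not by itself yield a Trotter product formula, and your fallback membership claim is essentially a restatement of what must be proved. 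So Stage 2 contains a genuine gap.

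The paper closes this gap with a completing-the-square regularization that avoids disentangling $e^{-H}$ altogether. For $\varepsilon>0$ set $B_\varepsilon=\varepsilon^{-1}I-\varepsilon H_+\in\A_+$ (of degree zero) and
\[
 H_\varepsilon \;=\; H_0-\Theta(B_\varepsilon)B_\varepsilon\,.
\]
Then $-H_\varepsilon=-H_0+\Theta(B_\varepsilon)\circ B_\varepsilon$ is a sum of an element of $\cok$ and an element of $\cK_+$, hence lies in $\cok$, so Proposition~\ref{lem:ExpposinK} (your Stage~1 tool) applies directly to $H_\varepsilon$ and gives $\tau\bigl((\Theta(A)\circ A)\,e^{-H_\varepsilon}\bigr)\geq 0$. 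Expanding the square yields $H_\varepsilon=H-\varepsilon^{-2}I-\varepsilon^{2}\Theta(H_+)H_+$; the additive constant is harmless by Remark~\ref{Remark:Constant}, and the remaining error term $\varepsilon^{2}\Theta(H_+)H_+$ vanishes as $\varepsilon\downarrow 0$, so continuity of $\exp$ gives $\tau\bigl((\Theta(A)\circ A)\,e^{-H}\bigr)\geq 0$ in the limit. This is the single idea your proposal is missing: the cross terms $H_-+H_+$ are produced, up to a constant and an $O(\varepsilon^{2})$ correction, by one element $\Theta(B_\varepsilon)B_\varepsilon$ of the cone itself, so no operator-ordering manipulation of the exponential is ever required.
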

\begin{proof}
For $\varepsilon >0$, define $H_{\varepsilon} \in \A$ by
\[
	H_{\varepsilon} = 
	H_{0} - \Theta(\varepsilon^{-1} I - \varepsilon H_{+}) (\varepsilon^{-1} I - \varepsilon H_{+})\,.
\] 
Let $A\in \A_+$ be homogeneous.
As $-H_{\varepsilon} \in \cok$,
Proposition~\ref{lem:ExpposinK} yields 
\[
	\tau((\Theta(A)\circ A)\,e^{-H_{\varepsilon}}) \geq 0\,.
\] 
Note that
$H_{\varepsilon} = H - \varepsilon^{-2}I - \varepsilon^2\Theta(H_+)H_+$.
By Remark \ref{Remark:Constant}, the additive constant $\varepsilon^{-2}I$ does not change reflection positivity. Therefore, $H'_{\varepsilon} = H  - \varepsilon^2\Theta(H_+)H_+$ satisfies
\[
	\tau((\Theta(A)\circ A) \,e^{-H'_{\varepsilon}})
	\geq 0 \,.
\] 
Since $\lim_{\varepsilon \downarrow 0} H_{\varepsilon}'  = H$ 
and $\exp\colon \A \rightarrow \A$
is continuous, this yields 
\[
	\lim_{\ep\downarrow0}
	\tau((\Theta(A)\circ A) \,e^{-H'_{\varepsilon}})
	=
	\tau((\Theta(A)\circ A)\,e^{- H})
	\geq 0 \,,
\] 
as required.
\end{proof}

\begin{remark}\label{rk:beta}
In fact, the Boltzmann functional $\tau_{\beta H}$ is reflection positive for all $\beta \geq 0$
if $H$ satisfies the conditions of Theorem~\ref{Thm:sufficient}.
\end{remark}

\setcounter{equation}{0}
\section{Necessary and Sufficient Conditions for RP}\label{Sect:N-SforRP}
In order to obtain necessary as well as sufficient conditions for reflection 
positivity, we now introduce a more rigid framework. 
Let $\A$ be the $q$-double of $\A_+$, and let $\tau$ be a neutral, reflection 
positive functional on~$\A$.
In addition to the previous assumptions Q1--3,
we now require the following, additional properties, described in more detail in
\S\ref{sec:factorization} and \S\ref{sec:sharpsec}, and in \S\ref{sec:HomOrthBas} below.
\begin{itemize}
\item[Q4.\,\,]
The algebra $\A_+$ comes with 
an antilinear, grading-inverting map $\sharp:\A_{+}\to\A_{+}$. 
We require that $\A_+$ admits an unconditional, homogeneous Schauder basis.
\item[Q5.\,\,] The functional $\tau\colon \A \rightarrow \C$ \emph{factorizes} into $\tau_+$ and $\tau_-$. 
\item[Q6.\,\,] The functional $\tau_+ \colon \A_+ \rightarrow \C$ is \emph{strictly positive} 
for~$\sharp$. 
%
\end{itemize}

These additional assumptions  are suitable in the context of statistical physics, where one has a uniform background measure or tracial state.  This state is generally assumed to be 
faithful, reflection invariant, and 
factorizing.

In quantum field theory however, it is necessary to put nearest-neighbor 
 couplings into the background measure, in order to define it mathematically.
This destroys the factorization property; in the case of quantum fields, the 
results on sufficient conditions
in the previous section
still apply, while the results on necessary conditions 
in the present section need to be strengthened.

\subsection{The Matrix of Coupling Constants}\label{sec:HomOrthBas}
Let $\tau$ be a
factorizing functional such that $\tau_{+}$ is strictly positive.
Then the scalar product $\lra{A,B} = \tau_{+}(A^{\sharp}B)$ on $\A_{+}$ is nondegenerate.  

We require that $\A_+$ has a countable, homogeneous Schauder basis. 
This is a countable, ordered set $\{v_{I}\}_{I\in \mathcal{I}}$ of homogeneous elements 
such that every $A\in \A_+$ has a 
unique expansion $A = \sum_{I\in \mathcal{I}} a_I v_I$.
Using the Gram-Schmidt procedure, one can find a 
homogeneous Schauder basis
$\{C_{I}\}_{I\in \mathcal{I}}$ of $\A_+$ with the following properties:
\begin{itemize}
\item[B1.\,\,] There is a unit $C_{I_0} = \one$, for some distinguished index $I_0 \in \mathcal{I}$.
\item[B2.\,\,] For all $I,J \in \mathcal{I}$, one has $\tau_{+}(C_I^{\sharp} C_J) = \delta_{IJ}$.
\item[B3.\,\,] The linear span of $\{C_{I}\}_{I\in \mathcal{I}}$ is dense in $\A_+$.
\end{itemize}
Note that any set $\{C_{I}\}_{I\in \mathcal{I}}$ of homogeneous elements 
satisfying B1--3 is a Schauder basis;
every $A \in \A_+$ has a unique expansion $A = \sum_{I\in \mathcal{I}} a_I C_I$,
with $a_I = \tau(C_{I}^{\sharp}A)$.  

We use the orthogonal basis of $\A_+$ to construct a basis of 
$\A^0$.
If $\abs{C_I} = \abs{C_J}$, define the operators $B_{IJ}, \widehat{B}_{IJ} \in \A^{0}$ by
	\be\label{eq:adaptedbasis}
		B_{IJ} 
		=  \Theta(C_{I})\circ C_{J}\;,
		\quad\text{and} \quad
		\widehat{B}_{IJ}
		= \Theta(C_{I}^{\sharp})\circ C_{J}^{\sharp}\;.
	\ee
\begin{lemma}\label{Lemma:DualBasis}
The operators $B_{IJ}$ and $\widehat{B}_{IJ}$, for  $\abs{C_{I}}=\abs{C_{J}}$, are dual in the sense that  
\be\label{DualityIdentity}
\tau(\widehat{B}_{IJ}\,B_{I'J'} ) = \delta_{II'}\delta_{JJ'}\,.
\ee
\end{lemma}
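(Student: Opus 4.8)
The statement asks us to compute $\tau(\widehat{B}_{IJ}\,B_{I'J'})$, where $B_{IJ} = \Theta(C_I)\circ C_J$ and $\widehat{B}_{IJ} = \Theta(C_I^\sharp)\circ C_J^\sharp$, and show this equals $\delta_{II'}\delta_{JJ'}$. My plan is to reduce the product $\widehat{B}_{IJ}\,B_{I'J'}$ to a twisted product of two elements of $\A_+$ via Lemma~\ref{Prop:TProduct-MClosed}, then apply the factorization property of $\tau$ together with the orthonormality property B2 of the Schauder basis.

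First I would observe that $\widehat{B}_{IJ} = \Theta(C_I^\sharp)\circ C_J^\sharp$ is nonzero only when $\abs{C_I} = \abs{C_J}$, and likewise $B_{I'J'}$ requires $\abs{C_{I'}} = \abs{C_{J'}}$; so we may assume these degree constraints throughout. Next, applying Lemma~\ref{Prop:TProduct-MClosed} with $A_1 = C_I^\sharp$, $B_1 = C_J^\sharp$, $A_2 = C_{I'}$, $B_2 = C_{J'}$ (the hypotheses $\abs{A_1}=\abs{B_1}$ and $\abs{A_2}=\abs{B_2}$ hold by the above), we get
\[
	\widehat{B}_{IJ}\,B_{I'J'}
	= \bigl(\Theta(C_I^\sharp)\circ C_J^\sharp\bigr)\bigl(\Theta(C_{I'})\circ C_{J'}\bigr)
	= \Theta(C_I^\sharp C_{I'})\circ (C_J^\sharp C_{J'})\,.
\]
Here I need to be slightly careful: Lemma~\ref{Prop:TProduct-MClosed} as stated requires $A_1, A_2, B_1, B_2$ to be homogeneous, which holds since $\sharp$ inverts the grading of homogeneous elements and the $C_I$ are homogeneous. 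Now $\Theta(C_I^\sharp C_{I'})\circ (C_J^\sharp C_{J'})$ is an element of the form $\Theta(X)\circ Y$ with $X = C_I^\sharp C_{I'}$, $Y = C_J^\sharp C_{J'}$ both in $\A_+$.

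Then I would apply the factorization property Q5: since $\tau$ factorizes, $\tau(\Theta(X)\circ Y) = \overline{\tau_+(X)}\,\tau_+(Y)$ for $X,Y \in \A_+$. Wait — I should double-check the direction of the conjugate. The definition of factorizing functional gives $\tau(\Theta(A)\circ B) = \overline{\tau_+(A)}\,\tau_+(B)$. But $X = C_I^\sharp C_{I'}$ involves $\sharp$, not a bare basis element, so I cannot directly read off $\tau_+(X)$ from B2. Instead I should note that $\sharp$ is antilinear and grading-inverting, and that $\tau_+(C_I^\sharp C_{I'}) = \lra{C_I, C_{I'}} = \delta_{II'}$ by property B2 (this is the defining relation $\tau_+(C_I^\sharp C_J) = \delta_{IJ}$). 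Similarly $\tau_+(C_J^\sharp C_{J'}) = \delta_{JJ'}$. Hence
\[
	\tau(\widehat{B}_{IJ}\,B_{I'J'}) = \overline{\tau_+(C_I^\sharp C_{I'})}\,\tau_+(C_J^\sharp C_{J'}) = \overline{\delta_{II'}}\,\delta_{JJ'} = \delta_{II'}\delta_{JJ'}\,,
\]
which is the claim.

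\textbf{Main obstacle.} The only real subtlety is bookkeeping around the degree constraints and the hypotheses of Lemma~\ref{Prop:TProduct-MClosed}: one must check that the lemma genuinely applies, i.e. that $\abs{C_I^\sharp} = \abs{C_J^\sharp}$ (equivalently $\abs{C_I} = \abs{C_J}$) and $\abs{C_{I'}} = \abs{C_{J'}}$, and handle the case where these fail by noting both sides vanish. A secondary point to verify is that the twisted-product formula $\tau(\Theta(X)\circ Y) = \overline{\tau_+(X)}\tau_+(Y)$ from the definition of factorizing functionals applies to general $X, Y \in \A_+$ and not only to basis elements — but this is immediate from the definition. No hard analysis is involved; this is essentially a verification that the two families $\{B_{IJ}\}$ and $\{\widehat{B}_{IJ}\}$ are biorthogonal under the pairing induced by $\tau$.
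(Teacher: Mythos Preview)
Your proof is correct and follows essentially the same approach as the paper: apply Lemma~\ref{Prop:TProduct-MClosed} to collapse the product $\widehat{B}_{IJ}B_{I'J'}$ into a single twisted product $\Theta(C_I^\sharp C_{I'})\circ(C_J^\sharp C_{J'})$, then use the factorization property of $\tau$ and the orthonormality relation B2. Your attention to the degree hypotheses of Lemma~\ref{Prop:TProduct-MClosed} and the role of $\sharp$ in inverting the grading mirrors exactly what the paper uses.
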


\begin{proof}
Using Lemma~\ref{Prop:TProduct-MClosed}, the factorization property of 
$\tau$, and the fact that $\sharp$ inverts the grading,
one finds
\beq
\tau(\widehat{B}_{IJ}\,B_{I'J'}) &=& \tau(\Theta(C^{\sharp}_{I} C_{I'})\circ 
C^{\sharp}_{J}C_{J'}) \label{eq:expprod} \\
&=& 
\overline{\tau_{+}(C_{I}^{\sharp} \,C_{I'}) }\,
\tau_{+}(C_{J}^{\sharp}\,C_{J'})\;. \nonumber
\eeq
The lemma follows since 
$\tau_{+}(C_{I}^{\sharp} \,C_{I'}) = \delta_{II'}$, and
$\tau_{+}(C_{J}^{\sharp}\,C_{J'}) = \delta_{JJ'}$.
%
\end{proof}

As the linear span of $\A_- \A_+$  is dense in $\A$, every $A\in \A^0$ has a convergent expansion
\be\label{eq:basexpansion}
A = \sum_{(I,J)\in \cI \times \cI} a_{IJ} B_{IJ}\,,
\ee
with $a_{IJ} = 0$ if $\abs{C_I} \neq \abs{C_J}$. The sum requires an order on 
$\cI \times \cI$, which is obtained in a natural way from the order on $\cI$.

\begin{proposition}\label{Prop:UniqueExpansion}
The expansion \eqref{eq:basexpansion} of $A \in \A^0$ is unique, 
and the coefficients $a_{IJ} = \tau(\widehat{B}_{IJ}A)$ depend continuously on $A$.
\end{proposition}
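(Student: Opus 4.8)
The plan is to use the duality identity of Lemma~\ref{Lemma:DualBasis} to extract the coefficients $a_{IJ}$ as continuous linear functionals of $A$, and then to leverage the density of $\mathrm{span}\,\A_-\A_+$ (equivalently, of $\mathrm{span}\{B_{IJ}\}$ in $\A^0$, which follows from B3 together with the Proposition in \S\ref{sec:firstcone} that the span of $\Theta(\A_+)\circ \A_+$ is dense in $\A^0$) to upgrade uniqueness of coefficients for finite combinations to uniqueness of the series expansion. Concretely, first I would observe that for each fixed pair $(I,J)$ with $\abs{C_I}=\abs{C_J}$, the map $A \mapsto \tau(\widehat{B}_{IJ}A)$ is continuous, since $A \mapsto \widehat{B}_{IJ}A$ is continuous (multiplication is separately continuous, Q1) and $\tau$ is continuous (Q3). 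Applying this functional to the convergent expansion \eqref{eq:basexpansion} and using continuity to pass it through the sum, term-by-term evaluation via \eqref{DualityIdentity} gives $\tau(\widehat{B}_{IJ}A) = \sum_{(I',J')} a_{I'J'}\,\tau(\widehat{B}_{IJ}B_{I'J'}) = a_{IJ}$. This simultaneously proves the formula $a_{IJ} = \tau(\widehat{B}_{IJ}A)$ and the continuous dependence on $A$.

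\textbf{Uniqueness.} Once the formula $a_{IJ}=\tau(\widehat{B}_{IJ}A)$ is established, uniqueness is immediate: if $A = \sum a_{IJ}B_{IJ} = \sum a'_{IJ}B_{IJ}$ are two convergent expansions of the same element, then applying $\tau(\widehat{B}_{IJ}\,\cdot\,)$ to each shows $a_{IJ} = \tau(\widehat{B}_{IJ}A) = a'_{IJ}$ for every $(I,J)$. So there is really only one thing to prove, namely that $\tau(\widehat{B}_{IJ}\,\cdot\,)$ picks out the $(I,J)$-coefficient of \emph{any} convergent expansion of the stated form — and that is exactly the term-by-term computation above, justified by continuity.

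\textbf{Main obstacle.} The one subtlety worth care is the interchange of $\tau(\widehat{B}_{IJ}\,\cdot\,)$ with the (ordered) infinite sum in \eqref{eq:basexpansion}. This is legitimate because $A\mapsto \tau(\widehat{B}_{IJ}A)$ is a \emph{continuous} linear map, so it commutes with convergent limits of partial sums in the locally convex topology of $\A$; no unconditional-convergence hypothesis is needed for this particular sum, since we evaluate a fixed continuous functional on a fixed convergent net of partial sums. (The unconditional Schauder basis hypothesis in Q4 is what guarantees that such an expansion \emph{exists} and that the order on $\cI\times\cI$ is harmless, but for the present proposition we may take the existence of \eqref{eq:basexpansion} as given — it is asserted in the sentence preceding the proposition — and only need to identify its coefficients.) Thus the whole argument is short: continuity of $\tau$ and of left/right multiplication, plus the orthogonality relation \eqref{DualityIdentity}.
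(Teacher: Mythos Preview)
Your proposal is correct and follows essentially the same approach as the paper: both use the duality relation \eqref{DualityIdentity} from Lemma~\ref{Lemma:DualBasis} together with continuity of $A\mapsto\tau(\widehat{B}_{IJ}A)$ to obtain the explicit formula $a_{IJ}=\tau(\widehat{B}_{IJ}A)$, from which uniqueness and continuous dependence are immediate. The paper's proof is a one-sentence sketch of exactly this argument; your version simply spells out the justification for passing the continuous functional through the convergent sum.
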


\begin{proof}
The uniqueness and continuity of the coefficients $a_{IJ}$ follows from
the explicit expression, which is a consequence of Lemma~\ref{Lemma:DualBasis}.  
\end{proof}

\begin{proposition}\label{prop:refinvmatrix}
The operators $B_{IJ}$ satisfy $\Theta(B_{IJ}) = B_{JI}$.
Therefore, $A \in \A^{0}$ is reflection invariant, $\Theta(A) = A$, if and only if the matrix 
$(a_{IJ})_{\mathcal{I}}$ is Hermitian, $a_{JI} = \overline{a_{IJ}}$.
\end{proposition}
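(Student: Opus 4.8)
The plan is to establish the two claims in sequence, deriving the second from the first. First I would prove the identity $\Theta(B_{IJ}) = B_{JI}$. This is immediate from Lemma~\ref{Prop:TProduct-RI}: since $B_{IJ} = \Theta(C_I)\circ C_J$, the reflection of $B_{IJ}$ is $\Theta(C_J)\circ C_I = B_{JI}$. One must only note that this uses $C_I, C_J \in \A_+$ with $\abs{C_I} = \abs{C_J}$, which is exactly the condition under which $B_{IJ}$ was defined in \eqref{eq:adaptedbasis}.

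Next I would use this to characterize reflection invariance of a general $A \in \A^0$. Write the (unique) expansion $A = \sum_{(I,J)} a_{IJ} B_{IJ}$ from \eqref{eq:basexpansion}, which exists by Proposition~\ref{Prop:UniqueExpansion}. Since $\Theta$ is continuous and antilinear, applying it termwise gives $\Theta(A) = \sum_{(I,J)} \overline{a_{IJ}}\,\Theta(B_{IJ}) = \sum_{(I,J)} \overline{a_{IJ}}\,B_{JI}$. Reindexing the sum by swapping the roles of $I$ and $J$, this equals $\sum_{(I,J)} \overline{a_{JI}}\,B_{IJ}$. Now $\Theta(A) = A$ holds if and only if the two expansions $\sum a_{IJ} B_{IJ}$ and $\sum \overline{a_{JI}} B_{IJ}$ coincide; by the uniqueness of the expansion (Proposition~\ref{Prop:UniqueExpansion}), this is equivalent to $a_{IJ} = \overline{a_{JI}}$ for all $I,J$, i.e.\ to Hermiticity of the matrix $(a_{IJ})_{\mathcal I}$.

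The one point requiring a little care — and the only real obstacle — is the legitimacy of applying $\Theta$ term by term to the convergent series and then reindexing. Continuity of $\Theta$ handles the first, but the reindexing step needs the basis expansion to behave well under the permutation $(I,J) \mapsto (J,I)$ of the index set $\cI\times\cI$; this is where the hypothesis that $\{C_I\}$ is an \emph{unconditional} Schauder basis (assumption Q4) enters, guaranteeing that rearrangements of the series converge to the same limit. Alternatively, one can sidestep convergence issues entirely by using the explicit formula $a_{IJ} = \tau(\widehat B_{IJ}A)$ from Proposition~\ref{Prop:UniqueExpansion} together with the dual relation $\Theta(\widehat B_{IJ}) = \widehat B_{JI}$ (which follows from Lemma~\ref{Prop:TProduct-RI} applied to $C_I^\sharp, C_J^\sharp$) and the reflection invariance $\tau(\Theta(X)) = \overline{\tau(X)}$ of $\tau$ (Proposition~\ref{prop:forceinvar}): then the coefficients of $\Theta(A)$ are $\tau(\widehat B_{IJ}\,\Theta(A)) = \overline{\tau(\Theta(\widehat B_{IJ})\,A)} = \overline{\tau(\widehat B_{JI} A)} = \overline{a_{JI}}$, and the characterization follows directly from uniqueness of the coefficients without any rearrangement argument. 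I would present this second route as the cleaner one.
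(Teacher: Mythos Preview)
Your proposal is correct and follows essentially the same approach as the paper: derive $\Theta(B_{IJ}) = B_{JI}$ from Lemma~\ref{Prop:TProduct-RI}, then deduce the Hermiticity characterization from uniqueness of the expansion~\eqref{eq:basexpansion}. The paper's proof is terser and does not address the reindexing/convergence point you raise; your second route via the coefficient formula $a_{IJ} = \tau(\widehat B_{IJ}A)$ is a clean way to make that step rigorous.
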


\begin{proof}
The second statement follows from the first by uniqueness of the expansion 
\eqref{eq:basexpansion}.  The property $\Theta(B_{IJ}) = B_{JI}$ follows 
immediately from Lemma~\ref{Prop:TProduct-RI}.
\end{proof}

In particular, every Hamiltonian $H\in \A^{0}$ of degree zero has an  unique expansion
\be\label{eq:Hexpansion}
-H = \sum_{(I,J) \in \cI \times \cI} J_{IJ}\, \Theta(C_{I})\circ C_{J}\,,
\ee
with $J_{IJ} = 0$ unless $\abs{C_I} = \abs{C_J}$.
The matrix $(J_{IJ})_{\mathcal{I}}$ describes the \emph{couplings} between
$C_{J}\in \A_{+}$ and $\Theta(C_I) \in \A_-$.
\begin{definition}[\bf The Matrix of Coupling Constants]
The matrix $(J_{IJ})_{\mathcal{I}}$ is called the matrix of coupling constants.
\end{definition}

The term $J_{I_{0} I_{0}}$ in the coupling matrix describes the coefficient of the identity, 
an irrelevant additive constant in $H$.  
Since $C_{I_0} = \one$, the terms
$J_{I_{0} J}$ describe couplings inside $\A_+$.
Similarly, the terms $J_{I I_{0}}$ describe couplings inside~$\A_-$. 
Finally, the terms $J_{IJ}$ with $I\neq I_{0}$ and $J \neq I_{0}$ describe couplings 
between $\A_-$ and $\A_+$.

\begin{definition}[\bf Couplings Across the Reflection Plane]
The submatrix $(J^{0}_{IJ})_{\mathcal{I}\backslash \{I_{0}\}}$ of $(J_{IJ})_{\mathcal{I}}$,
consisting of elements with $I, J \neq I_{0}$, is called the matrix of coupling constants across the 
reflection plane. 
\end{definition}

\begin{proposition}\label{prop:couplingposdef}
If the matrix of coupling constants $(J_{IJ})_{\mathcal{I}}$ is Hermitian, then
$H$ is reflection invariant. If it is 
positive semidefinite, then $-H \in \cok$.
\end{proposition}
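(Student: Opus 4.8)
The plan is to deduce both claims directly from the expansion \eqref{eq:Hexpansion} together with the characterization of $\coh$ and $\cok$ already established. Recall that by Proposition~\ref{prop:refinvmatrix} the operators $B_{IJ} = \Theta(C_I)\circ C_J$ satisfy $\Theta(B_{IJ}) = B_{JI}$, so the reflection acts on the coefficient matrix by $(J_{IJ}) \mapsto (\overline{J_{JI}})$; since the expansion \eqref{eq:Hexpansion} of $-H$ is unique by Proposition~\ref{Prop:UniqueExpansion}, $\Theta(H) = H$ holds if and only if $J_{JI} = \overline{J_{IJ}}$, i.e.\ exactly when $(J_{IJ})_{\mathcal I}$ is Hermitian. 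This gives the first assertion.

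For the second assertion, suppose $(J_{IJ})_{\mathcal I}$ is positive semidefinite. First I would note that, being positive semidefinite, it is in particular Hermitian, and it is block-diagonal with respect to the grading in the sense that $J_{IJ} = 0$ unless $\abs{C_I} = \abs{C_J}$ (this is built into the expansion \eqref{eq:Hexpansion}). The idea is then to truncate: for a finite subset $\mathcal{F} \subseteq \mathcal{I}$, set
\[
	K_{\mathcal F} = \sum_{(I,J)\in \mathcal F\times\mathcal F} J_{IJ}\,\Theta(C_I)\circ C_J\;.
\]
The submatrix $(J_{IJ})_{\mathcal F}$ is again positive semidefinite (a principal submatrix of a positive semidefinite matrix) and still vanishes off the diagonal blocks of the grading. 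By Proposition~\ref{prop:charpos}, $K_{\mathcal F} \in \coh \subseteq \cok$. Since $-H = \sum_{(I,J)} J_{IJ}\,\Theta(C_I)\circ C_J$ is a convergent expansion in the sense of \eqref{eq:basexpansion}, and since the net (or sequence, as $\mathcal I$ is countable) of finite truncations $K_{\mathcal F}$ converges to $-H$ as $\mathcal F \uparrow \mathcal I$, the limit $-H$ lies in the closure $\cok$ of $\coh$. Hence $-H \in \cok$, as claimed.

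The one point requiring care is the convergence of the truncations $K_{\mathcal F}$ to $-H$ in the topology of $\A$. The expansion \eqref{eq:basexpansion} is stated for an order on $\cI\times\cI$ induced from the order on $\cI$, so convergence of the partial sums along that order is immediate; what I would need to check is that the doubly-indexed partial sums $K_{\mathcal F}$ over finite \emph{products} $\mathcal F\times\mathcal F$ of initial segments $\mathcal F$ of $\cI$ are cofinal among the partial sums appearing in \eqref{eq:basexpansion}, so that they converge to the same limit. This is the main (though minor) obstacle: it is essentially a bookkeeping statement about the chosen order on $\cI\times\cI$ and the fact that $J_{IJ}=0$ unless $\abs{C_I}=\abs{C_J}$, ensuring each $K_{\mathcal F}$ is an honest partial sum of \eqref{eq:Hexpansion}. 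Once this is in hand, positive semidefiniteness of each principal submatrix plus closedness of $\cok$ finishes the argument. I would also remark that the first assertion can alternatively be seen without invoking uniqueness of the expansion, simply by applying $\Theta$ termwise to \eqref{eq:Hexpansion} and using $\Theta|_{\cK_+}=\mathrm{Id}$ together with antilinearity, but the uniqueness argument is cleanest.
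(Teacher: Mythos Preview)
Your proposal is correct and follows essentially the same route as the paper: the first assertion via Proposition~\ref{prop:refinvmatrix}, the second via Proposition~\ref{prop:charpos} applied to finite truncations together with closedness of $\cok$. You are in fact more careful than the paper on the convergence point---the paper simply asserts that ``every finite partial sum of \eqref{eq:Hexpansion} is of the form \eqref{eq:posdefstandard}'', whereas you correctly observe that one should take partial sums over products $\mathcal{F}\times\mathcal{F}$ of initial segments to ensure the truncated matrix is a principal submatrix (hence still positive semidefinite) and then check cofinality.
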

\begin{proof}
The first statement follows from Proposition~\ref{prop:refinvmatrix}.
The second follows from
Proposition~\ref{prop:charpos}, since
every finite partial sum of \eqref{eq:Hexpansion} 
is of the form \eqref{eq:posdefstandard}
if the matrix $(J_{IJ})_{\mathcal{I}}$ is positive semidefinite.
\end{proof}

\begin{remark}
In applications, the operator $H \in \A^0$ is often given in terms of a coupling matrix 
$(J_{IJ})_{\cI}$, by way of the expansion \eqref{eq:Hexpansion}.
Combining Proposition \ref{prop:couplingposdef} with
Theorem \ref{Thm:sufficient}, we see that $\tau_{H}$
is reflection positive if 
$(J_{IJ})_{\mathcal{I}}$ is Hermitian, with 
positive semidefinite submatrix 
$(J^{0}_{IJ})_{\mathcal{I}\backslash \{I_{0}\}}$ of couplings across the reflection plane.
These properties are easy to check in concrete situations.
\end{remark}

\subsection{Necessary Conditions for RP}
In this section, we prove necessary 
conditions on the matrix of coupling constants across the reflection plane for the Boltzmann functional $\tau_{\beta H}$ to be reflection positive.
In \S\ref{sec:Characterization}, we will show that these are equivalent to the 
sufficient conditions in \S\ref{Sect:SforRP}.

\begin{lemma}\label{lemma:nobasis}
Suppose that
the exponential series for $\exp(-\beta H)$ converges, 
and is differentiable at $\beta = 0$.
If $\tau_{\beta H}$ is reflection positive for all $\beta \in [0,\varepsilon)$,
then
\be\label{eq:necessary}
\tau((\Theta(A)\circ A)H)\leq 0\,,
\ee
for all $A \in \A_{+}$ with
$\tau\lrp{\Theta(A)\circ A}=0$.
\end{lemma}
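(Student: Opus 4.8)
The plan is to differentiate the reflection-positivity inequality $\lra{A,A}_{\Theta,\tau_{\beta H}} \geq 0$ at $\beta = 0$, using the hypothesis that $\tau_{0\cdot H} = \tau$ already satisfies $\lra{A,A}_{\Theta,\tau} = 0$ for the special elements $A$ under consideration. First I would write out, for a homogeneous $A \in \A_+$ with $\tau(\Theta(A)\circ A) = 0$, the function
\[
	f(\beta) := \lra{A,A}_{\Theta,\tau_{\beta H}} = \tau\bigl((\Theta(A)\circ A)\,e^{-\beta H}\bigr)\,,
\]
which by hypothesis is nonnegative on $[0,\varepsilon)$ and satisfies $f(0) = \tau(\Theta(A)\circ A) = 0$. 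Hence $\beta = 0$ is a minimum of $f$ on $[0,\varepsilon)$, so the one-sided derivative $f'(0^+)$ must be $\geq 0$, provided it exists.

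The second step is to justify that $f$ is differentiable at $\beta = 0$ with $f'(0) = -\tau\bigl((\Theta(A)\circ A)H\bigr)$. Since the exponential series for $e^{-\beta H}$ is assumed convergent and differentiable at $\beta = 0$ (in the locally convex topology of $\A$), its derivative there is $-H$; because $X \mapsto (\Theta(A)\circ A)X$ is continuous (separate continuity of multiplication) and $\tau$ is continuous, the composite $f(\beta) = \tau\bigl((\Theta(A)\circ A)e^{-\beta H}\bigr)$ inherits differentiability at $\beta = 0$ by the chain rule, giving $f'(0) = -\tau\bigl((\Theta(A)\circ A)H\bigr)$. Combining $f'(0^+) \geq 0$ with this identity yields exactly $\tau\bigl((\Theta(A)\circ A)H\bigr) \leq 0$, which is the claim.

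Finally, I would note that the argument as stated handles homogeneous $A$; for a general $A \in \A_+$ with $\tau(\Theta(A)\circ A)=0$, the twisted product and hence the form only sees the homogeneous components $A = \sum_k A^k$, and by neutrality of $\tau$ the cross terms vanish, so $\tau(\Theta(A)\circ A) = \sum_k \tau(\Theta(A^k)\circ A^k)$ is a sum of nonnegative terms (each equal to $\lra{A^k,A^k}_{\Theta,\tau} \geq 0$); the hypothesis $\tau(\Theta(A)\circ A) = 0$ then forces each $\tau(\Theta(A^k)\circ A^k) = 0$, so the homogeneous case applies to every $A^k$, and summing gives the general statement. The main obstacle is the differentiability bookkeeping in step two: one must be careful that differentiability of $\beta \mapsto e^{-\beta H}$ as a curve in the locally convex space $\A$ genuinely passes through separately-continuous multiplication and the continuous functional $\tau$ to give differentiability of the scalar function $f$ — this is where the topological hypotheses on $\A$ and the continuity of $\tau$ are essential, and where a sloppy argument could break down.
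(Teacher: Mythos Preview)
Your proof is correct and follows essentially the same approach as the paper: define $f(\beta)=\tau((\Theta(A)\circ A)e^{-\beta H})$, observe that $f\geq 0$ on $[0,\varepsilon)$ with $f(0)=0$, and conclude that $f'(0)=-\tau((\Theta(A)\circ A)H)\geq 0$. The paper's version is terser (it omits your careful justification of differentiability via continuity of multiplication and of $\tau$, and it does not separately treat the reduction to homogeneous $A$), but the argument is the same.
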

\begin{proof}
Consider the function 
$
F(\beta) 
=
\tau((\Theta(A)\circ A)\,e^{-\beta H})\geqslant 0
$.
At $\beta = 0$, one finds
$F(0) =\tau(\Theta(A)\circ A) =  0$.
Hence
\[
\left. -\frac{d}{d\beta}F(\beta)\right|_{\beta = 0} 
 =  
\tau((\Theta(A)\circ A)\,H)
=\lim_{\beta\downarrow0} -\frac{F(\beta)}{\beta}\leqslant0\,,
\]
as claimed.
\end{proof}

\begin{theorem}\label{thm:necessary}
Suppose that there exists an $\varepsilon > 0$ such that 
the map
$\beta \mapsto \exp(-\beta H)$ is well defined on $\beta \in [0,\varepsilon)$,
and differentiable at $\beta = 0$.
If $\tau_{\beta H}$ is reflection positive for all $\beta \in [0,\varepsilon)$,
then the matrix $(J^{0}_{IJ})_{\mathcal{I}\backslash \{I_{0}\}}$ of coupling constants across the reflection plane is positive semidefinite.
\end{theorem}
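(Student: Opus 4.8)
The strategy is to feed carefully chosen test elements $A\in\A_+$ into the necessary inequality \eqref{eq:necessary} of Lemma~\ref{lemma:nobasis} and read off positive semidefiniteness of $(J^0_{IJ})_{\mathcal{I}\setminus\{I_0\}}$ from it. First I would fix a finite set of indices $I_1,\dots,I_n\in\mathcal{I}\setminus\{I_0\}$ with a common degree $\abs{C_{I_k}}=d$ (the submatrix is block-diagonal across degrees anyway, so it suffices to treat one degree at a time), together with scalars $\xi_1,\dots,\xi_n\in\C$, and set $A=\sum_{k}\xi_k C_{I_k}\in\A_+^d$. The point of excluding $I_0$ is exactly that $C_{I_k}$ has nonzero degree $d$ when $p\neq 0$, or more robustly: I want $\tau(\Theta(A)\circ A)=0$. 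Using the factorization property Q5 and B2, $\tau(\Theta(A)\circ A)=\overline{\tau_+(A)}\,\tau_+(A)=\abs{\sum_k\xi_k\tau_+(C_{I_k})}^2=\abs{\sum_k\xi_k\delta_{I_k I_0}}^2=0$, since no $I_k$ equals $I_0$. So every such $A$ is admissible in Lemma~\ref{lemma:nobasis}.

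Next I would expand the left side of \eqref{eq:necessary}. Writing $-H=\sum_{(I,J)}J_{IJ}\,\Theta(C_I)\circ C_J$ as in \eqref{eq:Hexpansion} and using Lemma~\ref{Prop:TProduct-MClosed} together with the factorization of $\tau$ and B2, one computes
\[
\tau\bigl((\Theta(A)\circ A)\,(-H)\bigr)
=\sum_{k,l}\overline{\xi_k}\xi_l\sum_{(I,J)}J_{IJ}\,\tau\bigl((\Theta(C_{I_k})\circ C_{I_l})(\Theta(C_I)\circ C_J)\bigr).
\]
Here I must be slightly careful: the product inside should be arranged so that Lemma~\ref{Prop:TProduct-MClosed} applies, i.e.\ $(\Theta(C_I)\circ C_J)(\Theta(C_{I_k})\circ C_{I_l})$, which is fine since multiplication is what it is and $\tau$ is linear; one gets $\tau(\Theta(C_I C_{I_k})\circ C_J C_{I_l})=\overline{\tau_+(C_I C_{I_k})}\,\tau_+(C_J C_{I_l})$. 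Now, because $C_{I_0}=\one$ and $\tau_+(C_M C_{M'})$ with $\abs{C_M}=\abs{C_{M'}}$ is typically only nonzero in a controlled way, the cross terms where $I$ or $J$ equals $I_0$ should be handled separately; in fact the terms with $I,J\in\{I_0\}$ reproduce couplings inside $\A_\pm$ and the identity. The cleanest route is to observe that for $A$ of the above form, $\tau((\Theta(A)\circ A)H_\pm)$ and $\tau((\Theta(A)\circ A)\cdot(\text{const}))$ contribute in a way that either vanishes or is manifestly of the right sign; but to keep the argument self-contained I would instead pick $A$ cleverly so that only the genuine across-the-plane block survives. Concretely, using B2 one has $\tau_+(C_I C_{I_k})=\overline{\tau_+((C_I)^{\sharp\sharp}\cdots)}$-type identities; the simplest sufficient observation is that $\tau_+(C_M C_{M'})=\tau(\widehat B\text{-pairing})$ and one can arrange, after a change of basis absorbing $\sharp$, that $\tau((\Theta(A)\circ A)(\Theta(C_I)\circ C_J))$ equals (a constant times) $\overline{a_I}a_J$ where $a_M=\sum_k\overline{\xi_k}\,\overline{\tau_+(C_M C_{I_k})}$. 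Summing, the left side becomes $\sum_{I,J\neq I_0}J_{IJ}\,\overline{a_I}a_J$ plus terms in $I_0$ that drop out because $a_{I_0}$ involves $\tau_+(\one\cdot C_{I_k})=\delta_{I_k I_0}=0$. Thus \eqref{eq:necessary} reads $-\sum_{I,J\neq I_0}J^0_{IJ}\,\overline{a_I}a_J\le 0$, i.e.\ $\sum J^0_{IJ}\overline{a_I}a_J\ge 0$.

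Finally I would argue that the vectors $(a_I)_{I\neq I_0}$ obtained this way range over a dense (in fact, by nondegeneracy of $\tau_+(\cdot^\sharp\cdot)$ and strict positivity Q6, a spanning) set of the relevant finite-dimensional coordinate subspaces as $(\xi_k)$ ranges over $\C^n$ and $\{I_k\}$ over all finite subsets of the chosen degree class: indeed $a_I=\overline{\sum_k\xi_k\tau_+(C_I^? C_{I_k})}$ and the pairing matrix $\tau_+(C_I^\sharp C_{I_k})=\delta$ is the identity after incorporating $\sharp$ correctly, so each standard basis vector $e_{I}$ is achieved. Hence $\sum_{I,J\neq I_0}J^0_{IJ}\overline{w_I}w_J\ge 0$ for all finitely supported $w$ of a fixed degree, and since $(J^0_{IJ})$ is block diagonal across degrees (as $J_{IJ}=0$ unless $\abs{C_I}=\abs{C_J}$), the full submatrix $(J^0_{IJ})_{\mathcal{I}\setminus\{I_0\}}$ is positive semidefinite. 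The main obstacle I anticipate is bookkeeping the $\sharp$-involution and the $I_0$ row/column bookkeeping cleanly — making sure the "internal" couplings $J_{I_0 J}$, $J_{I I_0}$ and the additive constant $J_{I_0 I_0}$ genuinely decouple from the inequality for the chosen test elements, rather than contaminating it with wrong-sign contributions. Getting the adjoint/order conventions in Lemma~\ref{Prop:TProduct-MClosed} to line up so that the quadratic form that appears is $\sum J^0_{IJ}\overline{a_I}a_J$ with the correct conjugation, and not its transpose, is the delicate computational point; everything else is a routine application of factorization, B2, and Lemma~\ref{lemma:nobasis}.
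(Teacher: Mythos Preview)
Your overall strategy is exactly that of the paper: use Lemma~\ref{lemma:nobasis} on a test element $A$ with $\tau_+(A)=0$, expand $-H$ via \eqref{eq:Hexpansion}, apply Lemma~\ref{Prop:TProduct-MClosed} and the factorization of $\tau$, and read off a quadratic form in the $J_{IJ}$. The order worry is unfounded: $(\Theta(A)\circ A)(\Theta(C_I)\circ C_J)=\Theta(AC_I)\circ AC_J$ is precisely how Lemma~\ref{Prop:TProduct-MClosed} is stated, and $\tau$ of this equals $\overline{\tau_+(AC_I)}\,\tau_+(AC_J)$ by factorization. Setting $\alpha_I=\tau_+(AC_I)$ one gets $0\le \sum_{I,J}J_{IJ}\,\overline{\alpha_I}\alpha_J$, and since $\alpha_{I_0}=\tau_+(A\cdot\one)=\tau_+(A)=0$, the sum automatically reduces to the $(J^0_{IJ})$ block. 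So your anticipated ``decoupling'' difficulty evaporates.

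The genuine gap is the choice of $A$. You take $A=\sum_k\xi_kC_{I_k}$ and assert $\tau_+(C_{I_k})=\delta_{I_kI_0}$; but B1--B3 only give $\tau_+(C_I^\sharp C_J)=\delta_{IJ}$, and in general $\one^\sharp\neq\one$ (for the Berezin integral on a Grassmann algebra, $\one^\sharp=\mu_+$ and $\tau_+(\one)=0$), so your identity fails. For the same reason the resulting coefficients $\alpha_I=\sum_k\xi_k\,\tau_+(C_{I_k}C_I)$ are not controlled by B2, and your surjectivity claim at the end has no leverage. The fix---and this is what the paper does---is to put the $\sharp$ into $A$: take $A=\sum_I\chi_I C_I^\sharp$ with $(\chi_I)$ finitely supported, homogeneous, and $\chi_{I_0}=0$. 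Then $\tau_+(A)=\sum_I\chi_I\tau_+(C_I^\sharp C_{I_0})=\chi_{I_0}=0$ by B2, and $\alpha_J=\tau_+(AC_J)=\sum_I\chi_I\tau_+(C_I^\sharp C_J)=\chi_J$ on the nose, so the quadratic form is exactly $\sum_{I,J\neq I_0}J^0_{IJ}\overline{\chi_I}\chi_J\ge 0$ for an arbitrary finitely supported homogeneous vector $(\chi_I)$. That is the whole proof; no further spanning argument is needed.
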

\begin{proof}
Let $A\in\A_{+}$ be homogeneous of degree $\abs{A} = k$, with $\tau_{+}(A) = 0$. 
Since $\tau$ factorizes, we have 
$\tau\lrp{\Theta(A)\circ A}= \abs{\tau_{+}(A)}^{2} = 0$.
Insert the expansion \eqref{eq:Hexpansion} into the expression \eqref{eq:necessary} obtained in 
Lemma~\ref{lemma:nobasis},
and use Lemma~\ref{Prop:TProduct-MClosed} to find
\beq \label{eq:insertexpansion}
0 &\leq &\sum_{I,J \in \mathcal{I}}
J_{IJ}\,
 \tau((\Theta(A)\circ A)(\Theta(C_{I})\circ C_{J}))\nn
& = & 
\sum_{I,J \in \mathcal{I}}
J_{IJ}\ \tau\lrp{\Theta(AC_{I})\circ AC_{J}}
= \sum_{I,J \in \mathcal{I}}
J_{IJ}\,\overline{\alpha}_{I} \,\alpha_{J}\;,
\eeq
with $\alpha_{I}=\tau_{+}(AC_{I})$.
In the last expression, we use the fact that $\tau$ factorizes and is reflection invariant. 
Note that $\alpha_{I_{0}} = \tau_{+}(A)$ is zero by assumption, and that 
$\alpha_{I} = 0$ if $\abs{C_{I}} \neq -\abs{A}$ since $\tau$ is neutral.

Since $J_{IJ} = 0$ unless $\abs{C_I} = \abs{C_J}$, 
it suffices to check that 
$0\leq \sum_{I,J} J_{IJ} \overline{\chi}_{I}\chi_{J}$
for every homogeneous vector $(\chi_{I})_{\mathcal{I}}$
which has
finitely many nonzero entries.  Since we are interested in the positivity of the submatrix $(J^{0}_{IJ})_{\mathcal{I}\backslash \{I_{0}\}}$ of couplings across the reflection plane, we can restrict attention to vectors for which ${\chi_{I_{0}}=0}$.
A vector $(\chi_{I})_{\mathcal{I}}$ is called homogeneous of degree $k\in \Z_{p}$ if
every nonzero component $\chi_{I}$ has $\abs{C_{I}} = k$. 

 Let $(\chi_{I})_{\mathcal{I}}$ be a vector as described above, and 
 set $A = \sum_{I\in \cI} \chi_{I}C^{\sharp}_{I}$.  
 For this choice of $A$, we use $\tau_{+}(C^{\sharp}_{I}C_{J}) = \delta_{{IJ}}$
 (assumption B2 in \S\ref{sec:HomOrthBas}) to see that 
 $\alpha_{I} = \tau_{+}(AC_{I})= \chi_{I}$.
 Combining this with \eqref{eq:insertexpansion}, we find that 
 $0\leq \sum_{I,J} J^{0}_{IJ} \overline{\chi}_{I}\chi_{J}$, as required.
 \end{proof}

\subsection{Characterization of RP}\label{sec:Characterization}
Combining the sufficient conditions for reflection positivity in 
Theorem \ref{Thm:sufficient} with the necessary conditions in 
Theorem \ref{thm:necessary}, we obtain the following characterization of reflection positivity.
It holds for any {$q$-double} $\A$ 
satisfying the properties Q1--6, and the further requirements that
the exponential map $\exp \colon \A \rightarrow \A$ is continuous, and 
$\beta \mapsto \exp(-\beta H)$ is differentiable at zero.

\begin{theorem}\label{Thm:neccandsuff}
Let $\tau$ be a continuous, neutral, factorizing  functional on $\A$,
and suppose that $\tau_{+}$ is strictly positive with respect to the map
$\sharp \colon \A_{+} \rightarrow \A_{+}$.  Let 
$H \in \A$ be a reflection invariant operator of degree zero.  
Then the following are equivalent: 
\begin{itemize}
\item[a.] The Boltzmann functional $\tau_{\beta H}$ is reflection positive for all ${0\leq\beta}$.
\item[b.] There exists an $\varepsilon >0$ such that $\tau_{\beta H}$ is reflection 
positive for $0 \leq \beta < \varepsilon$.
\item[c.] The matrix $(J^{0}_{IJ})_{\mathcal{I}}$ of coupling constants across the reflection plane is 
positive semidefinite.
\item[d.] There is a decomposition $H = H_- + H_0 + H_+$, with 
$H_{+} \in \A_{+}$, with $-H_0 \in \cok$, and with $H_- = \Theta(H_+)$.
\end{itemize}
\end{theorem}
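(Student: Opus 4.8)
The plan is to establish the equivalence of the four conditions by proving the cyclic implications $d \Rightarrow a \Rightarrow b \Rightarrow c \Rightarrow d$, since the first three arrows are already essentially available. The implication $d \Rightarrow a$ is exactly Theorem~\ref{Thm:sufficient} (together with Remark~\ref{rk:beta}, which handles all $\beta \geq 0$): if $H = H_- + H_0 + H_+$ with $-H_0 \in \cok$ and $H_- = \Theta(H_+)$, then $\tau_{\beta H}$ is reflection positive for every $\beta \geq 0$. The implication $a \Rightarrow b$ is trivial. The implication $b \Rightarrow c$ is Theorem~\ref{thm:necessary}: reflection positivity of $\tau_{\beta H}$ on a small interval forces the submatrix $(J^0_{IJ})_{\mathcal{I} \setminus \{I_0\}}$ of couplings across the reflection plane to be positive semidefinite. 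So the only genuinely new work is the implication $c \Rightarrow d$, and this is where I expect the main obstacle to lie.

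For $c \Rightarrow d$, I would start from the unique expansion \eqref{eq:Hexpansion}, $-H = \sum_{(I,J)} J_{IJ}\, \Theta(C_I)\circ C_J$, where by Proposition~\ref{prop:refinvmatrix} the matrix $(J_{IJ})_{\mathcal{I}}$ is Hermitian since $H$ is reflection invariant. The idea is to split the index set $\mathcal{I}$ into $\{I_0\}$ and $\mathcal{I}\setminus\{I_0\}$ and decompose the double sum accordingly into four blocks: the $(I_0,I_0)$ term, which is a multiple of the identity and can be absorbed into $H_0$ (a constant is harmless, being a nonnegative multiple of $\Theta(\one)\circ\one \in \cK_+$ up to sign — actually one must be slightly careful here, but an additive real constant is always allowed by Remark~\ref{Remark:Constant}); the ``pure $\A_+$'' block $\sum_{J \neq I_0} J_{I_0 J}\,\Theta(\one)\circ C_J = \sum_{J\neq I_0} J_{I_0 J}\, C_J$, which lies in $\A_+$ and becomes $-H_+$; the ``pure $\A_-$'' block $\sum_{I \neq I_0} J_{I I_0}\, \Theta(C_I)$, which lies in $\A_-$ and, by Hermiticity $J_{I I_0} = \overline{J_{I_0 I}}$, equals $\Theta$ of the $\A_+$ block, hence becomes $-H_-$; and the ``across the plane'' block $\sum_{I,J \neq I_0} J^0_{IJ}\, \Theta(C_I)\circ C_J$, which by hypothesis $c$ has a positive semidefinite coefficient matrix and therefore lies in $\cok$ by Proposition~\ref{prop:couplingposdef} (applied to the sub-$q$-double generated by the $C_I$ with $I \neq I_0$, or more simply by observing that the partial sums are of the form \eqref{eq:posdefstandard} and invoking Proposition~\ref{prop:charpos}). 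This block becomes $-H_0$ (modulo the additive constant just mentioned).

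The main obstacle is the bookkeeping around the identity index $I_0$ and the convergence of the ``across the plane'' block: the sum $\sum_{I,J\neq I_0} J^0_{IJ}\,\Theta(C_I)\circ C_J$ is a priori only a convergent series in $\A^0$, whereas Proposition~\ref{prop:charpos} and the definition of $\cok$ deal with \emph{finite} sums and their limits. So I would argue that the partial sums $K_n = \sum_{I,J \in \mathcal{I}_n} J^0_{IJ}\,\Theta(C_I)\circ C_J$ over finite subsets $\mathcal{I}_n \subseteq \mathcal{I}\setminus\{I_0\}$ each lie in $\coh \subseteq \cok$ — for this I need that the finite principal submatrices of a positive semidefinite matrix are positive semidefinite, which is immediate — and then use that $\cok$ is closed to conclude that the limit $-H_0 := \lim_n K_n$ lies in $\cok$. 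One subtlety: the expansion \eqref{eq:Hexpansion} converges with respect to the order on $\cI \times \cI$ inherited from $\cI$, and I must make sure the chosen exhausting sequence $\mathcal{I}_n$ is compatible with that order, or else invoke the unconditional (Schauder basis) property from Q4 to rearrange freely; I would cite assumption Q4 for this. Finally, I would note that $H_0 = H - H_+ - H_-$ is automatically reflection invariant and of degree zero, completing the verification that the decomposition in $d$ has all the required properties, which closes the cycle.
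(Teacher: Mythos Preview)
Your proposal is correct and follows essentially the same cycle $a\Rightarrow b\Rightarrow c\Rightarrow d\Rightarrow a$ as the paper, invoking Theorem~\ref{Thm:sufficient} for $d\Rightarrow a$, Theorem~\ref{thm:necessary} for $b\Rightarrow c$, and the explicit block decomposition of the coupling matrix for $c\Rightarrow d$. The only cosmetic difference is the treatment of the constant term $J_{I_0I_0}$: the paper simply places $\tfrac12 J_{I_0I_0}\one$ into $-H_+$ (and hence its reflection into $-H_-$), which avoids the sign worry you flag when trying to absorb it into $H_0$; your appeal to Remark~\ref{Remark:Constant} there is a detour, since that remark concerns reflection positivity of $\tau_H$ rather than membership of $-H_0$ in $\cok$, so the clean fix is exactly the paper's split.
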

\begin{proof}
The implication $a \Rightarrow b$ is clear, and $b \Rightarrow c$ is 
Theorem~\ref{thm:necessary}.
For $c \Rightarrow d$, note that since $H \in \A^0$ is reflection invariant 
and $(J^{0}_{IJ})_{\mathcal{I} \backslash \{I_0\}}$ is positive semidefinite, we
can decompose $H$
as $H = H_+ + H_0 + H_-$  with $-H_0 \in \cok$, $H_+ \in \A^{0}_{+}$, 
and $\Theta(H_+) = H_-$.
The operators $H_0$ and $H_+$ are given in terms of the matrix of coupling constants by 
\beq
	- H_0 &=& \sum_{I,J \in \mathcal{I}\backslash I_{0}} 
	J^{0}_{IJ}\zeta^{\abs{C_I}^2}\Theta(C_I)C_J,\\
	-H_{+} &=& {\textstyle \frac{1}{2}} J_{I_0I_0} \one + \sum_{J \in \mathcal{I}\backslash \{I_0\}} J_{I_0 J}C_{J}\,.
\eeq
Finally, $d\Rightarrow a$ is Theorem~\ref{Thm:sufficient}.
\end{proof}

\begin{remark}
Note the similarity between Theorem~\ref{Thm:neccandsuff} and
Schoenberg's theorem \cite{S38a,S38b}, which states that
$e^{-H}$ is a positive definite kernel on a (discrete) set $\Gamma$ if and only if
$H$ is conditionally negative definite.
Using a limiting argument, we recover Schoenberg's theorem by 
applying Theorem~\ref{Thm:neccandsuff}
to the algebra $\A = C_{c}(\Gamma \times \Gamma)$, with the reflection 
$\Theta(F)(\gamma_{-},\gamma_{+}) = \overline{F}(\gamma_{+},\gamma_{-})$, and 
the algebra $\A_+$ consisting of functions $F(\gamma_-,\gamma_+)$ that depend only 
on $\gamma_{+}$.
\end{remark}

\begin{remark}
In the context of modular theory, it was shown by Connes \cite[Th\'eor\`eme~3.4]{Connes1974}
that $\tau_{\beta H}$ is reflection positive for all negative as well as 
positive $\beta$, if and only if 
$H$ is of the form \eqref{eq:decomposition} with $H_0 = 0$.
By Remark~\ref{rk:beta}, the `if' part of this theorem extends to the $\Z_p$-graded setting.
We recover the `only if' part 
if $\tau$ is factorizing,
which is generally not the case in the context of modular theory.
We consider this an indication that there should exist interesting extensions of 
Theorem~\ref{Thm:neccandsuff} to the case where $\tau$ is not factorizing.
\end{remark}

\setcounter{equation}{0}
\section{Lattice Statistical Physics}\label{sec:latphys}

We illustrate our general framework with an extensive list of examples in the 
context of statistical physics on a lattice.
In this section, we establish fundamental notation that we use in
sections {\S\ref{sec:bosonicsystems}--\ref{sec:parafermionsNo2}}.
%

\subsection{Lattices}\label{sec:lattices}
A lattice is a countable set $\Lambda$, equipped with a reflection 
$\vartheta \colon \Lambda \rightarrow \Lambda$ satisfying $\vartheta^2 = \mathrm{Id}$.
We choose a decomposition $\Lambda = \Lambda_- \cup \Lambda_+$
such that the intersection $\Lambda_0 = \Lambda_+ \cap \Lambda_-$
is the fixed point set of $\vartheta$, and
$\vartheta$ interchanges $\Lambda_+$ with $\Lambda_-$.

To each subset $U \subseteq \Lambda$, we associate 
an algebra $\A_{U}$ of observables.
The algebra corresponding to a single lattice point 
$\lambda \in \Lambda$ is denoted by $\A_{\lambda}$.
The nature of the algebras $\A_{\lambda}$,
as well as their mutual exchange relations inside the algebra $\A_{\Lambda}$ of 
observables associated to the lattice $\Lambda$, 
depends somewhat on the particular situation.  
In the examples below, $\A = \A_{\Lambda}$ will be the $q$-double of 
$\A_{\pm} = \A_{\Lambda_{\pm}}$. 

In most of these examples, we will work with \emph{finite} lattices. 
This captures the essence of the problem; 
if one takes the $C^*$-completion for an infinite lattice, 
reflection positivity carries over to the infinite case. 
We illustrate this in the case of parafermion algebras 
and CPR algebras in \S\ref{sec:parafermionsNo2}, where we treat countable lattices.
We will allow for a nontrivial fixed point set $\Lambda_{0} = \Lambda_+ \cap \Lambda_-$
unless specified otherwise.

\begin{remark}[\bf Reflections in Metric Spaces]
In practice, $\Lambda$ is usually a discrete subset of a metric space 
$\mathcal{M}$, and the reflection comes from an isometry 
$\vartheta_{\mathcal{M}} \colon \mathcal{M} \rightarrow \mathcal{M}$ which `flips'
the ambient space, meaning that $\vartheta_{\mathcal{M}}^2 = \mathrm{Id}$.
In that case, $\Lambda_0 = \Lambda \cap P$ 
is the intersection of $\Lambda$
with the fixed point set 
\[P = \{m\in \mathcal{M}\,;\, \vartheta_{\mathcal{M}}(m) = m\}\,.\]

A typical example is $\mathcal{M} = \R^d$, with
$\vartheta_{\R^d} \colon \R^{d} \rightarrow \R^{d}$ the orthogonal reflection in 
a hyperplane $P \subset \R^{d}$ with unit normal $\hat{n}$, 
and $\Lambda \subseteq \R^{d}$ is a finite 
subset with $\vartheta_{\R^d}(\Lambda) = \Lambda$. 
Then $\Lambda_0 = \Lambda \cap P$
is the intersection of $\Lambda$ with the reflection plane $P$, and 
$\Lambda_{\pm} = \{\lambda \in \Lambda\,;\, \pm\lra{\lambda,\hat{n}}\geq 0\}$
is the part of $\Lambda$ on either side of the reflection plane $P$, 
with points on $P$ included.
\begin{figure}[h!]
  \centering
  \begin{picture}(160,170)(40,15)
   \resizebox{0.75\textwidth}{!}{%
  		\includegraphics[height = 5 cm]{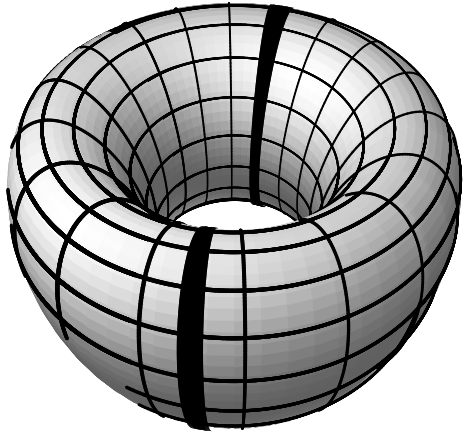}
		}
 \end{picture}
   \caption{{\small Lattice on the torus $T^2$. The fixed point set $P$
   under a reflection is the union 
   of two copies of $T^1$.}}
   \label{fig:QCDlatticeTorus}
\end{figure}

Another common situation is where $\mathcal{M}$ is the $d$-torus 
$T^d = \R^d/L\Z^d$, and $\vartheta_{T^d}(t_1, \ldots, t_n) = (t_1, \ldots, -t_i,\ldots t_n)$
is the reflection in 
one of the coordinates. In that case, the fixed point set
\[P = \{t \in T^d\,;\, t_i \in {\textstyle \frac{1}{2}}L \Z \}\]
is the disjoint union of \emph{two} tori of dimension $d-1$, separated by a distance 
$L/2$ (see Figure~\ref{fig:QCDlatticeTorus}).
\end{remark}

\setcounter{equation}{0}
\section{Bosonic Systems}\label{sec:bosonicsystems}

We specialize our characterization of reflection positivity to bosonic
classical and quantum systems on a finite lattice.

\subsection{Bosonic Classical Systems}\label{sec:BoseClass}
We describe 
an isolated system at a single lattice point $\lambda$
by a probability space $(\Omega, \Sigma, \mu)$.
In the absence of interactions, a bosonic classical system on a lattice $\Lambda$ 
is described by the product
$\Omega^{\Lambda} = \prod_{\lambda \in \Lambda} \Omega_{\lambda}$, with the sigma algebra 
$\Sigma_{\Lambda} = \bigotimes_{\lambda \in \Lambda}\Sigma_{\lambda}$
 and the product measure 
$\mu_{\Lambda}$. 
Denote the sigma algebras on $\Omega^{\Lambda_{\pm,0}}$  by
$\Sigma_{\pm,0} = \bigotimes_{\lambda \in \Lambda_{\pm,0}} \Sigma_{\lambda}$, 
and the corresponding product measures by $\mu_{\pm,0}$.

We now define a reflection $\Theta$ of the algebra 
$L^{\infty}(\Omega^{\Lambda},\mu_{\Lambda})$.
Assume that $\rho \colon \Omega \rightarrow \Omega$ is a reflection for the 
state space of a single system, with $\rho^2 = \mathrm{Id}$ and $\rho_{*}\mu = \mu$.
Then the reflection $\theta \colon \Omega^{\Lambda} \rightarrow \Omega^{\Lambda}$ of the full system $\Omega^{\Lambda}$ is defined 
by $\theta(\omega)_{\lambda} = \rho(\omega_{\vartheta(\lambda)})$.
The antilinear reflection $\Theta \colon L^{\infty}(\Omega^{\Lambda},\mu_{\Lambda}) \rightarrow L^{\infty}(\Omega^{\Lambda},\mu_{\Lambda})$
is
\[\Theta(f)(\omega) = \overline{f(\theta(\omega))}\,.\]

\subsubsection{Reflection Positivity}

We now review the translation of the notion of reflection positivity from algebras 
to measure spaces.
\begin{definition}
A complex valued measure $\nu$ on $\Omega^{\Lambda}$
is called reflection 
positive if 
\be
0 \leq \bE_{\nu}(\Theta(f_{+}) f_+)
\ee
for all 
$f_{+} \in L^{\infty}(\Omega^{\Lambda},\mu_{\Lambda})$ that are measurable w.r.t. $\Sigma_+$. 
\end{definition}
\begin{proposition}\label{prop:primRBclassicalboson}
The measure $\mu_{\Lambda}$ on $(\Omega^{\Lambda}, \Sigma_{\Lambda})$ is reflection positive 
if either  $\rho = \mathrm{Id}$ or $\Lambda_0 = \emptyset$.
\end{proposition}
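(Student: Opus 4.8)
The plan is to prove reflection positivity of $\mu_{\Lambda}$ directly, using the product structure of the measure together with Fubini. Set $\Lambda^{\circ}_{\pm} := \Lambda_{\pm}\setminus\Lambda_{0}$, so that $\Lambda$ is the disjoint union of $\Lambda_{0}$, $\Lambda^{\circ}_{+}$ and $\Lambda^{\circ}_{-}$, with $\vartheta$ fixing $\Lambda_{0}$ pointwise and interchanging $\Lambda^{\circ}_{+}$ with $\Lambda^{\circ}_{-}$. Accordingly $\Omega^{\Lambda} = \Omega^{\Lambda_{0}} \times \Omega^{\Lambda^{\circ}_{+}} \times \Omega^{\Lambda^{\circ}_{-}}$, and since $\mu_{\Lambda}$ is a product measure it factors as $\mu_{\Lambda} = \mu_{0} \otimes \mu^{\circ}_{+} \otimes \mu^{\circ}_{-}$, where $\mu_{0}$ and $\mu^{\circ}_{\pm}$ are the product measures over $\Lambda_{0}$ and $\Lambda^{\circ}_{\pm}$. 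I write a generic point as $\omega = (\omega_{0},\omega_{+},\omega_{-})$.

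The next step is to locate $f_{+}$ and $\Theta(f_{+})$ inside this decomposition. Since $f_{+}$ is $\Sigma_{+}$-measurable it depends only on $(\omega_{0},\omega_{+})$, so I write $f_{+} = f_{+}(\omega_{0},\omega_{+})$. From $\Theta(f_{+})(\omega) = \overline{f_{+}(\theta(\omega))}$ and $\theta(\omega)_{\lambda} = \rho(\omega_{\vartheta(\lambda)})$, the value of $\Theta(f_{+})$ depends on the coordinates $\theta(\omega)_{\lambda}$ for $\lambda \in \Lambda_{+} = \Lambda_{0}\cup\Lambda^{\circ}_{+}$; for $\lambda\in\Lambda_{0}$ this is $\rho(\omega_{\lambda})$, and for $\lambda\in\Lambda^{\circ}_{+}$ it is $\rho(\omega_{\vartheta(\lambda)})$ with $\vartheta(\lambda)\in\Lambda^{\circ}_{-}$. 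This is exactly where the hypothesis enters: if $\Lambda_{0}=\emptyset$ the first group of coordinates is absent, while if $\rho=\mathrm{Id}$ then $\theta$ fixes $\omega_{0}$, so in either case $\Theta(f_{+})(\omega) = \overline{f_{+}(\omega_{0},\phi(\omega_{-}))}$, where $\phi\colon\Omega^{\Lambda^{\circ}_{-}}\to\Omega^{\Lambda^{\circ}_{+}}$ is the site map $\phi(\omega_{-})_{\lambda} = \rho(\omega_{\vartheta(\lambda)})$. Since $\vartheta$ restricts to a bijection $\Lambda^{\circ}_{+}\to\Lambda^{\circ}_{-}$ and $\rho_{*}\mu=\mu$, the map $\phi$ satisfies $\phi_{*}\mu^{\circ}_{-} = \mu^{\circ}_{+}$.

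Then I would carry out the Fubini computation. Because $f_{+}$ is bounded and all measures are probability measures,
\[
\bE_{\mu_{\Lambda}}\big(\Theta(f_{+})\,f_{+}\big)
= \int \Big(\int \overline{f_{+}(\omega_{0},\phi(\omega_{-}))}\,d\mu^{\circ}_{-}(\omega_{-})\Big)
\Big(\int f_{+}(\omega_{0},\omega_{+})\,d\mu^{\circ}_{+}(\omega_{+})\Big)\,d\mu_{0}(\omega_{0})\,.
\]
Using $\phi_{*}\mu^{\circ}_{-} = \mu^{\circ}_{+}$ in the first inner integral, the left bracket becomes $\overline{g(\omega_{0})}$, where $g(\omega_{0}) := \int f_{+}(\omega_{0},\omega_{+})\,d\mu^{\circ}_{+}(\omega_{+})$, so that $\bE_{\mu_{\Lambda}}(\Theta(f_{+})f_{+}) = \int |g(\omega_{0})|^{2}\,d\mu_{0}(\omega_{0}) \geq 0$, which is the claim. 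When $\Lambda_{0}=\emptyset$ the outer integral is trivial and this is simply $|\bE_{\mu^{\circ}_{+}}(f_{+})|^{2}$; in that case $\mu_{\Lambda}$, viewed as the functional $\tau$, is factorizing in the sense of \S\ref{sec:factorization}, and reflection positivity also follows from Proposition~\ref{invimppos}.

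The computation is otherwise routine; the step that needs care is the coordinate bookkeeping in the second paragraph, i.e.\ verifying that ``$\rho=\mathrm{Id}$ or $\Lambda_{0}=\emptyset$'' is precisely what decouples $\Theta(f_{+})$ from $f_{+}$ conditionally on the fixed-point coordinates $\Omega^{\Lambda_{0}}$. Without this hypothesis the same manipulation yields $\int \overline{g(\rho(\omega_{0}))}\,g(\omega_{0})\,d\mu_{0}(\omega_{0})$, with $\rho$ now acting coordinatewise on $\Omega^{\Lambda_{0}}$, and this need not be nonnegative, so the assumption cannot simply be dropped.
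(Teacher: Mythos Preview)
Your proof is correct and follows essentially the same approach as the paper: factor the product measure over $\Lambda_{0}$, $\Lambda^{\circ}_{+}$, $\Lambda^{\circ}_{-}$, integrate out the $\omega_{\pm}$ variables first (your $g(\omega_{0})$ is exactly the paper's conditional expectation $\langle f_{+}\rangle$), and observe that the remaining $\Omega^{\Lambda_{0}}$-integral is of the form $\int |g|^{2}\,d\mu_{0}$ under the stated hypothesis. The only cosmetic difference is that the paper treats the two cases $\Lambda_{0}=\emptyset$ and $\rho=\mathrm{Id}$ separately, whereas you handle them in one stroke and are more explicit about the pushforward $\phi_{*}\mu^{\circ}_{-}=\mu^{\circ}_{+}$.
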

\begin{proof}
If $\Lambda_{0} = \emptyset$, then since $\mu = \mu_- \otimes \mu_+$
and $\bE_{\mu_-}(\Theta(f_+)) = \overline{\bE_+(f_+)}$,
\[
	\bE_{\mu}(\Theta(f_+)f_+) = \bE_{\mu_-}(\Theta(f_+))\bE_{\mu_+}(f_+)
	= |\bE_{\mu_+}(f_+)|^2
\]
is nonnegative regardless of $\rho$.
If $\Lambda_0 \neq \emptyset$,
let $\lra{f_{+}} \in L^{\infty}(\Omega_{0},\Sigma_{0})$ be the conditional 
expectation of $f_{+}$ with respect to the sigma algebra $\Sigma_{0}$.
Then 
\[
	\bE_{\mu}(\Theta(f_+)f_+) = \int_{\Omega_0} 
	\overline{\lra{f_+}}(\theta(\omega_{0})) \lra{f_+}(\omega_{0})
	\prod_{\lambda \in \Lambda_0}\mu_{\lambda}(d\omega_{\lambda})\,.
\] 
If $\rho = \mathrm{Id}$, then $\theta(\omega_0) = \omega_0$ for all $\omega_0 \in \Omega_0$,
so the above expression is manifestly
nonnegative.
\end{proof}

\subsubsection{Reflection Positivity of Boltzmann Measures}\label{RPclassicalbosonic}
If the interaction is given by a Hamiltonian 
$H \in L^{\infty}(\Omega^{\Lambda},\mu_{\Lambda})$, then
the system at inverse temperature $\beta$ is described by the Boltzmann measure
\be
	\mu_{\beta H} = e^{-\beta H}\mu\,.
\ee
If $H$ is real-valued, then $\mu_{\beta H}$ is a positive measure, which 
can be normalized to the probability measure
$Z(\beta)^{-1}\mu_{\beta H}$, where $Z$ is the partition sum  
$Z(\beta) = \int_{\Omega^{\Lambda}} e^{-\beta H} \mu(d\omega)$.
In this paper, we allow $H$ and $\mu_{\beta H}$ to be complex valued.

We now formulate the necessary and sufficient conditions for 
reflection positivity of $\mu_{\beta H}$.
Fix bounded, square integrable 
elements
 $c_i \in L^{\infty}(\Omega,\mu)$, labelled by $i\in S$, 
with the following properties:
\begin{itemize} 
\item[-] Identity, $c_{i_0} = 1$ for some label $i_0 \in S$.
\item[-] Orthogonality, $\int_{\Omega} \overline{c}_i(\omega) c_j(\omega) \mu(d\omega) = \delta_{ij}$.
\item[-] The linear span of the $c_i$ is dense in $L^{\infty}(\Omega,\mu)$
with respect to the topology of convergence in measure.
\end{itemize}
From this, we obtain a basis of $L^{\infty}(\Omega^{\Lambda_+},\mu_+)$ by
\[
	C_{I}(\omega_+) = \prod_{\lambda \in \Lambda_+}c_{i_{\lambda}}(\omega_{\lambda})\,.
\]
It is labelled by indices $I \in S^{\Lambda_{+}}$. Denote by $I_0$ the index 
that assigns label $i_0$ to every $\lambda \in \Lambda^{+}$. 
Then $C_{I_{0}} = 1\otimes \ldots  \otimes1$ is the identity function, and 
$\bE_{\mu_+}(C_I) = 0$ for $I \neq I_0$.
Further, all $C_I$ are bounded, and their span is dense in 
$L^{\infty}(\Omega^{\Lambda_+}, \mu_{+})$ for the topology of 
convergence in measure.

If $\Lambda_0 = \emptyset$, then we obtain an orthonormal basis of 
$L^{2}(\Omega^{\Lambda}, \mu_{\Lambda})$ labelled by 
$(I,J) \in S^{\Lambda_+} \times S^{\Lambda_+}$,
\[
	B_{IJ}(\omega) =  \Theta(C_{I})C_J \,(\omega) = \prod_{\lambda \in \lambda_+} c_{j_{\lambda}} 
		(\rho(\omega_{\vartheta(\lambda)})) \times  \prod_{\lambda \in \Lambda_{+}}c_{j_{\lambda}}(\omega_{\lambda})\,.
\]
Again, $\Theta(C_{I_0})C_{I_0} = 1$ and 
$\bE_{\mu_{\Lambda}}(\Theta(C_I)C_J) = 0$ for $(I,J) \neq (I_{0},I_0)$. 
As the closure of the $\Theta(C_I)C_J$ is dense in $L^{\infty}(\Omega^{\Lambda},\mu_{\Lambda})$,
every hamiltonian $H\in L^{\infty}(\Omega^{\Lambda},\mu_{\Lambda})$ can be written as
\be\label{eq:doublebasisclassical}
	-H = \sum_{I,J \in S^{\Lambda_+}} J_{IJ} \Theta(C_I)C_J\,,
\ee
where the sum converges in measure, and $(J_{IJ})$ is the \emph{matrix of coupling constants}.
Its submatrix $(J^{0})_{IJ}$ of coefficients with $I, J \neq I_0$ is called the \emph{matrix of coupling constants
across the reflection plane}.
Since $\Theta(\Theta(C_I)C_J) = \Theta(C_J)C_I$, one sees that $H$ is reflection invariant, 
$\Theta(H) = H$, if and only if 
$(J_{IJ})$ is hermitian, $J_{JI} = \overline{J}_{IJ}$.

In the case that the lattice $\Lambda$ does not intersect the reflection plane,
we obtain the following necessary and sufficient conditions for reflection positivity.
\begin{theorem}\label{Thm:classicalwithoutintersection}
Let $H \in L^{\infty}(\Omega^{\Lambda}, \mu_{\Lambda})$ be reflection invariant, 
$\Theta(H) = H$, and suppose that  
$\Lambda_0 = \emptyset$. 
Then the Boltzmann measure 
$\mu_{\beta H} = e^{-\beta H} \mu$ is reflection positive
for all $\beta \geq 0$ if and only if the matrix $J^{0}_{IJ}$ of coupling constants 
across the reflection plane is positive semidefinite.
\end{theorem}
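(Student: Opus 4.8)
The plan is to obtain this statement as the specialization of the general characterization in Theorem~\ref{Thm:neccandsuff} to the bosonic case $p=1$, $q=\zeta=1$. Here one takes $\A = L^{\infty}(\Omega^{\Lambda},\mu_{\Lambda})$, with distinguished subalgebra $\A_{+} = L^{\infty}(\Omega^{\Lambda_{+}},\mu_{+})$, reflection $\Theta$ as defined above, background functional $\tau = \bE_{\mu_{\Lambda}}$, the map $\sharp$ equal to complex conjugation on $\A_{+}$, and the homogeneous Schauder basis given by the $C_{I}$, $I\in S^{\Lambda_{+}}$, constructed in \S\ref{RPclassicalbosonic}. Since the grading is trivial, neutrality is automatic and the twisted product $\Theta(A)\circ B$ reduces to the ordinary product $\Theta(A)B$; in particular the coupling matrix of \eqref{eq:doublebasisclassical} coincides with the matrix of coupling constants of \eqref{eq:Hexpansion}, and $\tau_{\beta H}$ is reflection positive exactly when the measure $\mu_{\beta H} = e^{-\beta H}\mu_{\Lambda}$ is reflection positive in the sense of the definition preceding Proposition~\ref{prop:primRBclassicalboson}.

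First I would check the structural hypotheses Q1--Q6. For Q2, the assumption $\Lambda_{0}=\emptyset$ is essential: it gives $\Omega^{\Lambda} = \Omega^{\Lambda_{-}} \times \Omega^{\Lambda_{+}}$ as a product of measure spaces, so that $\A_{-}$ and $\A_{+}$ commute (that is, $q=1$) and the linear span of $\A_{-}\A_{+}$ is dense in $\A$. For Q3, reflection positivity of $\tau = \bE_{\mu_{\Lambda}}$ is precisely the $\Lambda_{0}=\emptyset$ case of Proposition~\ref{prop:primRBclassicalboson}, and continuity and neutrality are immediate. For Q5, the product structure $\mu_{\Lambda} = \mu_{-}\otimes\mu_{+}$ together with $\bE_{\mu_{-}}(\Theta(A)) = \overline{\bE_{\mu_{+}}(A)}$ yields the factorization $\tau(\Theta(A)\circ B) = \overline{\tau_{+}(A)}\,\tau_{+}(B)$ with $\tau_{+} = \bE_{\mu_{+}}$. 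For Q6, strict positivity of $\tau_{+}$ with respect to $\sharp$ is the statement that $\int_{\Omega^{\Lambda_{+}}} |A|^{2}\,d\mu_{+} > 0$ for nonzero $A \in \A_{+}$. For Q4, one checks that the identity, orthogonality, and density properties listed in \S\ref{RPclassicalbosonic} are exactly conditions B1--B3 for $\{C_{I}\}$. Finally, since $H$ is bounded, $\exp$ is continuous on $\A$ and $\beta\mapsto e^{-\beta H}$ is differentiable at $\beta = 0$, so the additional running hypotheses of Theorem~\ref{Thm:neccandsuff} are met.

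With all hypotheses verified, Theorem~\ref{Thm:neccandsuff} gives the equivalence of its conditions (a) and (c), which after translating $\tau_{\beta H}$ back into the measure $\mu_{\beta H}$ is exactly the assertion. As an alternative one can bypass Theorem~\ref{Thm:neccandsuff} and argue directly: the ``if'' direction follows by combining Proposition~\ref{prop:couplingposdef} with Theorem~\ref{Thm:sufficient}, while the ``only if'' direction follows from Theorem~\ref{thm:necessary}; either way the content is the verification of Q1--Q6 above.

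The main point requiring care is the functional-analytic framework behind Q1 and Q4: one must equip $\A = L^{\infty}(\Omega^{\Lambda},\mu_{\Lambda})$ with a topology making it a locally convex topological algebra with separately continuous multiplication and continuous exponential, and equip $\A_{+}$ with a compatible topology for which the $C_{I}$ form an unconditional homogeneous Schauder basis with dense linear span. The sup-norm topology makes the exponential continuous but generally fails the Schauder and density requirement, whereas the topology of convergence in measure gives density but is awkward for the algebra and exponential axioms. I expect the resolution is to restrict to finite lattices $\Lambda$, as is done throughout this section, and to choose a topology reconciling both sets of demands (for instance the one induced from $L^{2}(\Omega^{\Lambda},\mu_{\Lambda})$ on the bounded subalgebra, or a suitable projective topology); pinning this down, and verifying that $\{C_{I}\}$ is a genuine Schauder basis rather than merely a total orthonormal family, is the step that carries the real work, the remainder being routine specialization.
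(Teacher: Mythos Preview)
Your approach is essentially identical to the paper's: the paper's proof is a single sentence invoking Theorem~\ref{Thm:neccandsuff} with $\A = L^{\infty}(\Omega^{\Lambda}, \mu_{\Lambda})$, $\A_{\pm} = L^{\infty}(\Omega^{\Lambda_{\pm}}, \mu_{\pm})$, and the reflection $\Theta(f)(\omega) = \overline{f(\theta(\omega))}$. Your verification of Q1--Q6 is more thorough than what the paper records, and the functional-analytic caveat you raise about reconciling the topology needed for a Schauder basis with the one needed for continuity of the exponential is a genuine subtlety that the paper's one-line proof simply does not address.
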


\begin{proof}
Apply Theorem~\ref{Thm:neccandsuff} to the algebra $\A = L^{\infty}(\Omega^{\Lambda}, \mu_{\Lambda})$,
with $\A_{\pm} = L^{\infty}(\Omega^{\Lambda_{\pm}}, \mu_{\pm})$ and 
$\Theta(f)(\omega) = \overline{f(\theta(\omega))}$.
\end{proof}

If $\Lambda_0 \neq \emptyset$, then the expansion 
\eqref{eq:doublebasisclassical} is no longer unique.
Nonetheless, we have the following sufficient conditions for reflection positivity
in the general case, with either $\Lambda_{0} = \emptyset$ or $\rho = \mathrm{Id}$.

\begin{theorem}\label{Thm:classicalwithintersection}
Suppose that
$H = H_{-} + H_0 + H_{+}$, where the element $H_+ \in L^{\infty}(\Omega_{\Lambda_{+}},\mu_+)$
is measurable w.r.t.\ $\Sigma_+$, $\Theta(H_+) = H_-$, 
and $H_0 \in L^{\infty}(\Omega^{\Lambda},\mu_{\Lambda})$ 
possesses an expansion \eqref{eq:doublebasisclassical}
with a positive semidefinite matrix 
$J^{0}_{IJ}$ of coupling constants.
Then the Boltzmann measure 
$\mu_{\beta H} = e^{-\beta H} \mu$ is reflection positive
for all $\beta \geq 0$.
\end{theorem}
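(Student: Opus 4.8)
The plan is to reduce the statement to Theorem~\ref{Thm:sufficient}, applied to the algebra $\A = L^{\infty}(\Omega^{\Lambda},\mu_{\Lambda})$ with $\A_{\pm} = L^{\infty}(\Omega^{\Lambda_{\pm}},\mu_{\pm})$, the reflection $\Theta(f)(\omega) = \overline{f(\theta(\omega))}$, and the background functional $\tau(f) = \bE_{\mu_{\Lambda}}(f)$. Since $L^{\infty}$ is commutative we are in the bosonic case $q = \zeta = 1$, so Q1 (a graded locally convex topological algebra, with the topology of convergence in measure as in \S\ref{sec:BoseClass}) and Q2 hold: the para\-commutation relations are trivial, and $\mathrm{span}(\A_{-}\A_{+})$ is dense because $\Lambda = \Lambda_{+}\cup\Lambda_{-}$, so every elementary tensor over $\Lambda$ already lies in $\A_{-}\A_{+}$ and their linear span is dense. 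The crucial hypothesis Q3 is that $\tau$ is continuous, neutral (automatic for $p=1$), and reflection positive; this last point is precisely Proposition~\ref{prop:primRBclassicalboson}, and it is here that we invoke the standing assumption $\rho = \mathrm{Id}$ or $\Lambda_{0} = \emptyset$. Finally, $\beta H \in L^{\infty}$ is bounded for every $\beta \geq 0$, so the exponential series converges and $\exp$ is continuous in the relevant sense, as Theorem~\ref{Thm:sufficient} requires.

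It remains to recast the given decomposition $H = H_{-}+H_{0}+H_{+}$ into the form demanded by Theorem~\ref{Thm:sufficient}, namely $H = \hat H_{-}+\hat H_{0}+\hat H_{+}$ with $\hat H_{+}\in\A_{+}$, $-\hat H_{0}\in\cok$, and $\hat H_{-} = \Theta(\hat H_{+})$. Starting from the expansion $-H_{0} = \sum_{I,J}J_{IJ}\,\Theta(C_{I})C_{J}$ in \eqref{eq:doublebasisclassical}, I would split the sum into the block with $I,J\neq I_{0}$, the terms with exactly one index equal to $I_{0}$, and the single term $I = J = I_{0}$. Because $C_{I_{0}} = 1$ and $\Theta(1) = 1$, the first group of one-index-$I_{0}$ terms is $\sum_{J\neq I_{0}}J_{I_{0}J}C_{J}\in\A_{+}$, the second is $\sum_{I\neq I_{0}}J_{II_{0}}\Theta(C_{I})\in\A_{-}$, and the last term is the scalar $J_{I_{0}I_{0}}$. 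Now set
\[
-\hat H_{0} := \sum_{I,J\neq I_{0}}J^{0}_{IJ}\,\Theta(C_{I})C_{J} = \sum_{I,J\neq I_{0}}J^{0}_{IJ}\,\Theta(C_{I})\circ C_{J}\,,
\]
where the second equality uses $\zeta = 1$. Since $(J^{0}_{IJ})_{I,J\neq I_{0}}$ is positive semidefinite, every finite partial sum is of the form \eqref{eq:posdefstandard} and hence lies in $\coh$ by Proposition~\ref{prop:charpos}; passing to the limit, and since $\cok$ is by definition the closure of $\coh$, we get $-\hat H_{0}\in\cok$. Absorbing the residual $\A_{+}$-part together with half of the scalar into $H_{+}$, and the residual $\A_{-}$-part together with the other half into $H_{-}$, yields $\hat H_{+}\in\A_{+}$ and $\hat H_{-}\in\A_{-}$ with $H = \hat H_{-}+\hat H_{0}+\hat H_{+}$. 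The identity $\Theta(\hat H_{+}) = \hat H_{-}$ then follows from $\Theta(H_{+}) = H_{-}$ together with reflection invariance of $H_{0}$ (equivalently of $H$), which we may assume, since by Proposition~\ref{prop:forceinvar} reflection positivity of $\mu_{\beta H}$ forces $\Theta(H) = H$ in any case.

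With this decomposition in hand, Theorem~\ref{Thm:sufficient} shows that $\tau_{H}$, and by Remark~\ref{rk:beta} (or by applying Theorem~\ref{Thm:sufficient} to $\beta H$, which inherits the same decomposition because $\cok$ is a cone) also $\tau_{\beta H}$ for every $\beta\geq 0$, is reflection positive. Translating back to measures, $\tau_{\beta H}(f) = \bE_{\mu_{\Lambda}}(f\,e^{-\beta H}) = \bE_{\mu_{\beta H}}(f)$, so $\mu_{\beta H}$ is reflection positive for all $\beta \geq 0$, as claimed.

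The main obstacle I anticipate is the reorganization in the middle paragraph: once $\Lambda_{0}\neq\emptyset$ the expansion \eqref{eq:doublebasisclassical} is no longer unique, so one cannot simply assert $-H_{0}\in\cok$, but must peel off exactly the couplings across the reflection plane — whose submatrix is the one assumed positive semidefinite — while verifying that what remains genuinely splits into an $\A_{+}$-piece and its $\Theta$-reflection; this is where $C_{I_{0}} = 1$ and the bosonic identification $\Theta(C_{I})\circ C_{J} = \Theta(C_{I})C_{J}$ do the work. A secondary, more routine point is fixing a topology on $L^{\infty}(\Omega^{\Lambda},\mu_{\Lambda})$ for which Q1, the density in Q2, the closedness of $\cok$, and the continuity of $\exp$ hold simultaneously; this is handled as in \S\ref{sec:BoseClass}.
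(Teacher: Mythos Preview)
Your approach is correct and coincides with the paper's: the theorem is stated there without proof, being a direct specialization of Theorem~\ref{Thm:sufficient} to $\A = L^{\infty}(\Omega^{\Lambda},\mu_{\Lambda})$ with reflection positivity of $\tau = \bE_{\mu_{\Lambda}}$ supplied by Proposition~\ref{prop:primRBclassicalboson}.

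One point should be tightened. Your justification of $\Theta(\hat H_{+}) = \hat H_{-}$ appeals to reflection invariance of $H_{0}$, which you propose to obtain from Proposition~\ref{prop:forceinvar} applied to $\mu_{\beta H}$; but reflection positivity of $\mu_{\beta H}$ is exactly what you are proving, so this is circular as written. The simplest repair is to read the hypothesis as saying that the \emph{entire} coupling matrix in the chosen expansion of $-H_{0}$ is positive semidefinite (the superscript $0$ tagging it as the matrix for $H_{0}$). Then $-H_{0}\in\cok$ directly, by the argument of Proposition~\ref{prop:couplingposdef}, with no need to peel off the $I_{0}$-terms; and $\Theta(H_{0}) = H_{0}$ follows from Corollary~\ref{cor:keylemma}, so $\hat H_{\pm} = H_{\pm}$ already works. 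If instead you keep your reading (only the submatrix $(J^{0}_{IJ})_{I,J\neq I_{0}}$ assumed positive semidefinite), then hermiticity of the $I_{0}$-row and column is a genuine extra hypothesis that must be stated, not derived.
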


Note that Theorems~\ref{Thm:classicalwithoutintersection} 
and~\ref{Thm:classicalwithintersection} allow for couplings between 
arbitrarily many lattice points at arbitrary distance.
We now specialize these results to 
the case of
pair interactions, which is of particular relevance.

\subsubsection{Pair interactions and nearest neighbor interactions}
In this section and the following one, we make some additional assumptions 
on the form of $H$, and we give necessary and sufficient conditions for reflection 
positivity within this class of Hamiltonians.

A \emph{pair interaction Hamiltonian} has the form
\be\label{eq:2-point}
	-H(\omega) = \sum_{\lambda, \lambda' \in \Lambda} 
	h_{\lambda\lambda'}(\omega_{\lambda},\omega_{\lambda'})	+
	 \sum_{\lambda \in \Lambda} V_{\lambda}(\omega_{\lambda})\,.
\ee
For general pair interactions, we do not impose any restrictions on the finite lattice $\Lambda$
other than the ones in \S\ref{sec:lattices}.

A Hamiltonian $H$ is of \emph{nearest neighbor type}
if it is of the form \eqref{eq:2-point} with
$h_{\lambda\lambda'}$ nonzero only for $\abs{\lambda - \lambda'} = 1$.
We have special results for Hamiltonians describing 
nearest neighbor interactions on rectangular lattices
in $\R^d$ or $T^d = \R^d/(L\Z)^d$, of the form
\beq\label{eq:reclattice}
	\Lambda = \{-L, \ldots, L\}^{d} \subseteq \R^d
	\quad \text{or} \quad
	\Lambda = \{0, \ldots, L\}^{d} \subseteq T^d\,.
\eeq
Here, we assume that the fixed point set $P\subseteq \R^d$ is in one of the coordinate planes,
and that it intersects the lattice nontrivially.


\begin{theorem} Suppose that $\Lambda$ is a rectangular lattice 
of the form \eqref{eq:reclattice},
intersecting the coordinate plane $P$ nontrivially.
Let $\theta(\omega_{\lambda}) = \omega_{\vartheta(\lambda)}$.
Then for every 
reflection invariant nearest neighbor hamiltonian 
$H \in L^{\infty}(\Omega^{\Lambda},\mu_{\lambda})$, the Boltzmann measure
$\mu_{\beta H}$ is reflection positive for all $\beta \geq 0$.
\end{theorem}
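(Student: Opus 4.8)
The plan is to reduce everything to the sufficient-condition Theorem~\ref{Thm:classicalwithintersection}. Since the reflection acts on configurations by $\theta(\omega)_\lambda = \rho(\omega_{\vartheta(\lambda)})$ with $\rho = \mathrm{Id}$, the hypotheses on the background measure are met and $\mu_\Lambda$ is itself reflection positive (Proposition~\ref{prop:primRBclassicalboson}). It therefore suffices to produce a decomposition $H = H_- + H_0 + H_+$ with $H_+$ measurable with respect to $\Sigma_+$, $\Theta(H_+) = H_-$, and $-H_0 \in \cok$. In fact I will take $H_0 = 0$, so that the condition $-H_0 \in \cok$ is vacuous; the entire content of the argument is a geometric observation about nearest-neighbor bonds. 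This is also why the conclusion is that \emph{every} reflection-invariant nearest-neighbor Hamiltonian works, in contrast with the longer-range case, where positive semidefiniteness of the cross couplings is a genuine constraint (cf.\ Theorem~\ref{Thm:classicalwithoutintersection}).

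First I would normalize the given representation \eqref{eq:2-point}. Reflection invariance means $-H = \Theta(-H)$, and since $\Theta$ sends a two-body term on the bond $\{\lambda,\lambda'\}$ to a two-body term on $\{\vartheta(\lambda),\vartheta(\lambda')\}$, and a one-body term on $\lambda$ to one on $\vartheta(\lambda)$, averaging the given representation with its $\Theta$-image produces another nearest-neighbor representation of the same $H$ for which $\Theta(h_{\lambda\lambda'}) = h_{\vartheta(\lambda)\vartheta(\lambda')}$ and $\Theta(V_\lambda) = V_{\vartheta(\lambda)}$ hold term by term; so I may assume this from the outset. In particular, for $\lambda \in \Lambda_0$ one has $\Theta(V_\lambda) = V_\lambda$, and as $V_\lambda$ depends only on $\omega_\lambda$ and $\rho = \mathrm{Id}$, this forces $V_\lambda$ to be real-valued; similarly any bond $\{\lambda,\lambda'\}$ with both endpoints in $\Lambda_0$ has a real-valued $h_{\lambda\lambda'}$.

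Next comes the key point. Because $\vartheta$ is the reflection in a single coordinate $x_i \mapsto -x_i$ (resp.\ $t_i \mapsto -t_i$) and $\abs{\lambda - \lambda'} = 1$, the $i$-th coordinates of the endpoints of a bond are either equal or differ by exactly one; in neither case can they have strictly opposite signs. Hence every nearest-neighbor bond lies entirely in $\Lambda_+$ (at least one endpoint with $x_i > 0$, both with $x_i \geq 0$), or entirely in $\Lambda_-$, or has both endpoints in $\Lambda_0$; there is no bond straddling the strict $+$ and strict $-$ sides. (On the torus the two components of $P$ are at lattice distance at least two, so the same trichotomy holds.) I then let $-H_+$ be the sum of all bond terms of the first type, plus one half of each bond term of the third type, plus all one-body terms on $\Lambda_+ \setminus \Lambda_0$, plus one half of each one-body term on $\Lambda_0$; and I set $-H_- := \Theta(-H_+)$ and $H_0 := 0$. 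Using the term-by-term $\Theta$-symmetry together with $\vartheta(\Lambda_\pm) = \Lambda_\mp$ and $\vartheta(\Lambda_0) = \Lambda_0$, one checks that $-H_-$ is precisely the complementary sum (bonds of the second type, the remaining halves on $\Lambda_0$, one-body terms on $\Lambda_- \setminus \Lambda_0$), so $H_+ + H_0 + H_- = H$. Since $H_+$ is manifestly $\Sigma_+$-measurable and $-H_0 = 0 \in \cok$, Theorem~\ref{Thm:classicalwithintersection} yields reflection positivity of $\mu_{\beta H}$ for all $\beta \geq 0$.

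The only step that requires genuine care is the claim that nearest-neighbor bonds never cross the reflection plane, which is exactly what lets us take $H_0 = 0$; everything else is the bookkeeping that the $\Lambda_0$-contributions are shared evenly between $H_+$ and $H_-$ while all other terms are assigned to the side on which they sit.
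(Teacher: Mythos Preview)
Your proof is correct and follows essentially the same approach as the paper: both observe that because the lattice meets the reflection plane, every nearest-neighbor bond lies entirely in $\Lambda_+$ or entirely in $\Lambda_-$, so one can take $H_0=0$ and split $H$ as $H_+ + \Theta(H_+)$ with the $\Lambda_0$-contributions shared by the factor $\tfrac12$, then invoke Theorem~\ref{Thm:classicalwithintersection}. Your explicit normalization step (averaging with the $\Theta$-image to get term-by-term symmetry) and the spelled-out geometric argument for why no bond straddles the plane are a bit more detailed than the paper's version, but the structure is identical.
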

\begin{proof}
Nearest neighbor Hamiltonians on a lattice 
that intersects the reflection plane are very special, 
since they allow a decomposition $H = H_- + H_{0} + H_+$ 
with $H_{0} = 0$.

To see this, note that each bond $\lra{\lambda,\lambda'}$ is contained in 
either~$\Lambda_+$ or~$\Lambda_-$. We can thus write the hamiltonian as 
$H = H_- + H_+$ where $H_+$ is measurable w.r.t.\ $\Sigma_+$, and 
$H_- = \Theta(H_+)$. 
The corollary then follows from Theorem \ref{Thm:classicalwithintersection}.
To exhibit the splitting, 
define $H_+$ by
\be 
	-H_{+} = \sum_{\lambda,\lambda' \in \Lambda_+} 
		\epsilon_{\lambda\lambda'}h_{\lambda\lambda'}(\omega_{\lambda},\omega_{\lambda'})
		+ 
		\sum_{\lambda \in \Lambda_+} 
		\epsilon_{\lambda}V_{\lambda}(\omega_{\lambda})\,,
\ee
with $\varepsilon_{\lambda\lambda'} = \frac{1}{2}$ if both
$\lambda$ and $\lambda'$ are in $\Lambda_0$, and 
$\varepsilon_{\lambda\lambda'} = 1$ otherwise.
Similarly, $\epsilon_{\lambda} = \frac{1}{2}$ if $\lambda \in \Lambda_0$
and $\epsilon_{\lambda} = 1$ if $\lambda \in \Lambda_+ \backslash\Lambda_0$.
As $H$ is reflection invariant, it can be written in the form \eqref{eq:2-point}
with $h_{\vartheta(\lambda),\vartheta(\lambda')} = \overline{h}_{\lambda,\lambda'}$
and $V_{\vartheta(\lambda)} = \overline{V}_{\lambda}$.
Using this, one verifies that $H = \Theta(H_+) + H_{+}$, as required.
\end{proof}

Nearest neighbor interactions on a lattice that does \emph{not} intersect the reflection plane are \emph{not} automatically reflection positive. They are characterized in
Remark~\ref{rk:nnnointer}


\subsubsection{Long Range Pair Interactions}
Suppose that for each lattice site $\lambda$, we have $k$ 
random variables $\phi^{a} \in L^{\infty}(\Omega,\mu)$, 
with $a = 1, \ldots, k$. We require that
$\bE_{\mu}(\phi^a) = 0$ and $\bE_{\mu}(\phi^a\phi^b) = \delta_{ab}$.
For example, if $\Omega$ is the 2-point space $\{+1,-1\}$ 
with the counting measure, one can take the single  
variable $\phi(\omega) = \omega$.
If $\Omega$ is the block 
$\Omega = [-\phi_{\mathrm{max}},\phi_{\mathrm{max}}]^{k}$ with the normalized 
Lebesque measure, or the $k-1$-sphere $\Omega = S^{k-1} \subseteq \R^{k}$
with the round measure,
then one can take 
$\phi^{a}$ to be the coordinate variables. 
Consider Hamiltonians of the form~\ref{eq:2-point}
with 
\be\label{eq:fieldscor}
	h_{\lambda,\lambda'} = J^{ab}_{\lambda,\lambda'} \phi_{\lambda}^{a}\phi_{\lambda'}^{b}\,,
\ee
where the reflection $\rho$ sends $\phi^{a}$ to $s^a \phi^a$, with $s^a = \pm 1$.
If the reflection plane $P \subseteq \R^d$ does not intersect the lattice,
then necessary and sufficient conditions for reflection positivity can be given as follows.

The matrix of couplings 
across the reflection plane is
$(s^{a}J^{ab \,\,\,0}_{\vartheta(\lambda),\lambda'})$, 
with entries labelled by 
$(\lambda,a)$ and $(\lambda',b)$ in $\Lambda_{+} \times \{1, \ldots, k\}$.
The following corollary then follows immediately from
Theorem~\ref{Thm:classicalwithoutintersection}. 

\begin{corollary}
Suppose that $H$ is a reflection invariant Hamiltonian 
of the form \eqref{eq:2-point}, with 
$h_{\lambda\lambda'}$ given by~\eqref{eq:fieldscor}.
Then $\mu_{\beta H}$ is 
reflection positive for all $\beta \geq 0$ if and only if
$(s^aJ^{ab \,\,\,0}_{\vartheta(\lambda)\lambda'})$ 
is positive semidefinite. 
\end{corollary}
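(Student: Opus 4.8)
The plan is to read the matrix of coupling constants across the reflection plane directly off the special form \eqref{eq:2-point}--\eqref{eq:fieldscor} of $H$, and then to invoke Theorem~\ref{Thm:classicalwithoutintersection}. Since the reflection plane $P$ does not meet the lattice, we are in the case $\Lambda_0 = \emptyset$, so $\vartheta$ restricts to a bijection $\Lambda_+ \to \Lambda_-$, the double basis $B_{IJ} = \Theta(C_I)C_J$ is orthonormal, and the expansion \eqref{eq:doublebasisclassical} of $-H$ is unique. Theorem~\ref{Thm:classicalwithoutintersection} then says that $\mu_{\beta H}$ is reflection positive for all $\beta \geq 0$ if and only if the submatrix $(J^0_{IJ})$ of $(J_{IJ})$ with $I,J$ both different from the identity index $I_0$ is positive semidefinite. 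It therefore suffices to identify this submatrix with $(s^a J^{ab\,0}_{\vartheta(\lambda)\lambda'})$.

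First I would set up the single-site basis. Since $\bE_\mu(\phi^a)=0$ and $\bE_\mu(\phi^a\phi^b)=\delta_{ab}$, the real-valued functions $1,\phi^1,\ldots,\phi^k$ are orthonormal in $L^2(\Omega,\mu)$; extend them to an orthonormal basis $\{c_i\}_{i\in S}$ as in \S\ref{RPclassicalbosonic}, with $c_{i_0}=1$ and $c_{i_a}=\phi^a$. This produces the product basis $\{C_I\}_{I\in S^{\Lambda_+}}$ of $L^\infty(\Omega^{\Lambda_+},\mu_+)$ and the double basis $\{B_{IJ}\}$ of $L^\infty(\Omega^{\Lambda},\mu_\Lambda)$. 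The one computation I really need is that for $\mu\in\Lambda_+$, writing $\phi^a_\mu$ for the single-site function $\omega\mapsto\phi^a(\omega_\mu)$ (which is $C_I$ for the index $I$ carrying label $i_a$ at $\mu$ and $i_0$ elsewhere), the defining formulas for $\Theta$ and $\theta$, together with $\phi^a\circ\rho=s^a\phi^a$, give
\[
\Theta(\phi^a_\mu)(\omega)=\overline{\phi^a(\rho(\omega_{\vartheta(\mu)}))}=s^a\,\phi^a_{\vartheta(\mu)}(\omega),
\]
equivalently $\phi^a_{\vartheta(\mu)}=s^a\,\Theta(\phi^a_\mu)$, since $(s^a)^2=1$.

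Next I would expand $-H$ in the basis $B_{IJ}$, classifying each term of \eqref{eq:2-point} by the location of its sites. Each on-site term $V_\lambda$ and each pair term $J^{ab}_{\lambda\lambda'}\phi^a_\lambda\phi^b_{\lambda'}$ with $\lambda,\lambda'$ both in $\Lambda_+$ is $\Sigma_+$-measurable, hence a combination of the $C_J=\Theta(C_{I_0})C_J$, so it contributes only to coefficients $J_{I_0 J}$; symmetrically, terms with both sites in $\Lambda_-$ contribute only to $J_{I I_0}$. None of these affect the submatrix with $I,J\neq I_0$. The only surviving contributions come from the cross terms $J^{ab}_{\lambda\lambda'}\phi^a_\lambda\phi^b_{\lambda'}$ with one site in $\Lambda_-$ and one in $\Lambda_+$; writing the $\Lambda_-$-site as $\vartheta(\mu)$ with $\mu\in\Lambda_+$, using the displayed identity, and using that in the bosonic case $\Theta(A)\circ B=\Theta(A)B$, each such term becomes a multiple of $\Theta(C_I)C_J$ with $C_I=\phi^a_\mu$ and $C_J=\phi^b_{\lambda'}$ both single-site, and contributes $s^a J^{ab}_{\vartheta(\mu)\lambda'}$ to $J^0_{IJ}$. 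Consequently, under the identification $(\mu,a)\leftrightarrow I$, $(\lambda',b)\leftrightarrow J$ of indices, the matrix $(J^0_{IJ})$ is block-diagonal: on the block labelled by $\Lambda_+\times\{1,\ldots,k\}$ it equals $(s^a J^{ab\,0}_{\vartheta(\lambda)\lambda'})$ (after the symmetrization built into the double sum of \eqref{eq:2-point}), and it vanishes on all other index pairs. Such a matrix is positive semidefinite precisely when the finite block is, so the corollary follows from Theorem~\ref{Thm:classicalwithoutintersection}.

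The only real work is the bookkeeping of the last step: checking that the pair-interaction structure genuinely kills the coefficient of every $C_I$ that is not of ``single site, single field'' type, and tracking the signs $s^a$ and the symmetrization carefully enough to match $(J^0_{IJ})$ with $(s^a J^{ab\,0}_{\vartheta(\lambda)\lambda'})$ on the nose. There is no analytic obstacle: once $\Lambda_0=\emptyset$ places us in the hypotheses of Theorem~\ref{Thm:classicalwithoutintersection}, everything reduces to this finite-dimensional identification.
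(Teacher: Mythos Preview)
Your proposal is correct and follows the same approach as the paper: the paper simply asserts that the matrix of couplings across the reflection plane is $(s^{a}J^{ab\,0}_{\vartheta(\lambda),\lambda'})$, labelled by $(\lambda,a),(\lambda',b)\in\Lambda_{+}\times\{1,\ldots,k\}$, and invokes Theorem~\ref{Thm:classicalwithoutintersection}. You have supplied the bookkeeping that the paper leaves implicit --- the computation $\Theta(\phi^a_\mu)=s^a\phi^a_{\vartheta(\mu)}$, the classification of terms by site location, and the observation that the full coupling matrix $(J^0_{IJ})$ is block-diagonal with the stated block and zeros elsewhere --- but the strategy is identical.
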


\begin{remark}[Nearest Neighbor]\label{rk:nnnointer}
Consider a rectangular 
lattice 
\[\textstyle
	\Lambda = \{-\frac{2L+1}{2}, -\frac{2L-1}{2}, \ldots, \frac{2L+1}{2} \}^{d}
\]
that does not intersect the reflection plane $P$, and a nearest neighbor Hamiltonian given by
$J^{ab}_{\lambda\lambda'}$. Then $H$
is reflection invariant, if for every bond $\lra{\theta(\lambda),\lambda}$ that crosses the reflection plane,
the $k\times k$-matrix
in $a$ and $b$ given by $(s^aJ^{ab \,\,\,0}_{\theta(\lambda),\lambda})$ is positive semidefinite.
\end{remark}

More generally, if $\Lambda \subseteq \R^d$ is any lattice that does not intersect $P$,
and $H$ is a reflection invariant Hamiltonian with 
$J_{\lambda,\lambda'}^{ab} = f(\lambda - \lambda')J^{ab}$, then 
$\mu_{\beta H}$ is reflection invariant for all $\beta \geq 0$ if and only if 
$s^a s^b J^{ab}$ is positive semidefinite, and $f \colon \R^{d}\rightarrow \R$
is \emph{OS-positive}, 
\be 
\sum_{i,j=1}^{n}\overline{z}_iz_j f(\vartheta(\lambda_i) - \lambda_j) \geq 0
\ee
for all $(z_i, \lambda_i) \in \C \times \R^{d, +}$. (OS stands for Osterwalder-Schrader.) 
For example, the function 
$f(\lambda) = \abs{\lambda}^{-s}$ is OS-positive if $s \geq d-2$ and $s\geq 0$.
Naturally, we have a similar sufficient condition for reflection positivity 
in case that $\Lambda_0 \neq \emptyset$. The only difference is that 
all signs $s^a$ equal $+1$, as $\rho$ must be the identity.

\subsection{Bosonic Quantum Systems}\label{sec:BoseQuant}
Suppose that the isolated system at each lattice point $\lambda$
is a bosonic, quantum mechanical system with $n$ degrees of freedom. This is described by the matrix algebra $\A_{\lambda} = M_{n}(\C)$. The total system is given by the algebra 
\[
	\A = \bigotimes_{\lambda \in \Lambda} M_{n}(\C)\,.
\]
In the absence of interactions, the background state $\tau$ is 
the normalized tracial state, given by
\be\label{eq:deftrace}
	\tau(A_{\lambda_1} \otimes \ldots \otimes A_{\lambda_k}) = \frac{1}{n^k}\tr(A_{\lambda_1})
	\cdots \tr(A_{\lambda_k})
\ee
on the pure tensors. (Here $\mathrm{Tr}$ denotes the unnormalized trace.)
The reflection $\Theta \colon \A \rightarrow \A$ is the antilinear 
homomorphism given by
\be\label{eq:matrixflip}
	\Theta(A_{\lambda}) = \overline{\rho(A)}_{\vartheta(\lambda)}\,,
\ee
where $\overline{\phantom{a}}$ denotes complex conjugation and $\rho$  denotes conjugation by an arbitrary invertible operator $R\in \mathrm{GL}_{n}(\C)$, namely $\rho(A) = RAR^{-1}$.  
If $R$ is unitary, then $\rho(A^*) = \rho(A)^*$, but we will 
\emph{not} require that this is the case.

For bosonic quantum systems, we only consider the case
${\Lambda_{0} = \emptyset}$,
meaning that the reflection $\vartheta$ has no fixed points on $\Lambda$.
If we define 
\[\
	\A_{\pm} = \bigotimes_{\lambda \in \Lambda_{\pm}} M_{n}(\C)\,,
\]
then $\Theta(\A_+) = \A_{-}$, and 
$\A$ is the linear span of 
$\A_- \A_+$. Since $\A = \A_- \otimes \A_+$, the algebra 
$\A$ is the bosonic $q$-double of $\A_+$ (cf. \S\ref{sec:tensprod}).


\begin{proposition}[\bf Primitive Reflection Positivity]
The tracial state $\tau$ is faithful, factorizing, reflection invariant, and reflection positive; 
\be
0 \leq \tau(\Theta(A)A)\;, 
\quad\text{for all}\quad A\in\A_{+}.
\ee

\end{proposition}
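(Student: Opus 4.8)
The plan is to recognize this as the special case of the tensor-product construction of \S\ref{sec:tensprod} in which $\A_{\pm} = \bigotimes_{\lambda\in\Lambda_{\pm}} M_{n}(\C)$ (a finite-dimensional $C^{*}$-algebra, hence a von Neumann algebra), the antilinear isomorphism $\Theta_{+}\colon\A_{+}\to\A_{-}$ is the restriction of $\Theta$, and the distinguished state $\tau_{+}$ on $\A_{+}$ is the normalized tracial state. Since $\Lambda_{0}=\emptyset$ we have $\A=\A_{-}\otimes\A_{+}$, so the main verification is that the tracial state $\tau$ of \eqref{eq:deftrace} coincides with the induced state $\tau(\Theta_{+}(A)\otimes B)=\overline{\tau_{+}(A)}\,\tau_{+}(B)$ from \S\ref{sec:tensprod}. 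Once this is established, factorization of $\tau$ — and hence, by Proposition~\ref{invimppos}, reflection positivity on $\A_{+}$ — is immediate, and reflection invariance follows from Proposition~\ref{prop:forceinvar}.

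Faithfulness comes essentially for free: from \eqref{eq:deftrace}, $\tau(A^{*}A)=n^{-\abs{\Lambda}}\Tr(A^{*}A)$, and the unnormalized trace is a faithful positive functional on the matrix algebra $\bigotimes_{\lambda\in\Lambda}M_{n}(\C)$, so $\tau(A^{*}A)>0$ for every nonzero $A$. In particular the restriction $\tau_{+}$ is a faithful state on $\A_{+}$, hence strictly positive with respect to the $*$-involution in the sense of Definition~\ref{def:strictpos}, by the remark at the end of \S\ref{sec:tensprod}.

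It remains to identify $\tau$ with the induced state. First, multiplicativity of $\Tr$ over tensor factors together with \eqref{eq:deftrace} shows $\tau=\tau_{-}\otimes\tau_{+}$, where $\tau_{\pm}$ are the normalized tracial states on $\A_{\pm}$; in particular $\tau(\Theta_{+}(A)\otimes B)=\tau_{-}(\Theta_{+}(A))\,\tau_{+}(B)$ for $A,B\in\A_{+}$. It then suffices to check $\tau_{-}(\Theta_{+}(A))=\overline{\tau_{+}(A)}$. By \eqref{eq:matrixflip}, $\Theta$ acts on a pure tensor by relabelling the sites via $\vartheta$ and applying, at each site, the map $A\mapsto\overline{RAR^{-1}}$. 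Since $\tau_{\pm}$ are products over sites, the relabelling is harmless, and at a single site $\tfrac{1}{n}\Tr(\overline{RAR^{-1}})=\tfrac{1}{n}\overline{\Tr(RAR^{-1})}=\tfrac{1}{n}\overline{\Tr(A)}$, using $\Tr(\overline{X})=\overline{\Tr X}$ and cyclicity of the trace. Multiplying over the sites in $\Lambda_{+}$ gives $\tau_{-}(\Theta_{+}(A))=\overline{\tau_{+}(A)}$, as desired. This shows $\tau$ is factorizing; reflection positivity on $\A_{+}$ is then Proposition~\ref{invimppos}, and reflection invariance is Proposition~\ref{prop:forceinvar} (alternatively, $\tau(\Theta(A))=\overline{\tau(A)}$ follows directly from the same site-wise computation).

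The argument presents no real obstacle. The one point demanding a little care is the identity $\tau_{-}\circ\Theta_{+}=\overline{\tau_{+}}$: one must check that conjugation by the fixed invertible matrix $R$ and the complex conjugation interact correctly with the normalized trace, which reduces to cyclicity of $\Tr$ together with $\Tr(\overline{X})=\overline{\Tr X}$, applied factor by factor. The decomposition $\A=\A_{-}\otimes\A_{+}$, and with it $\tau=\tau_{-}\otimes\tau_{+}$, uses $\Lambda_{0}=\emptyset$, which is assumed throughout \S\ref{sec:BoseQuant}.
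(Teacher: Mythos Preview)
Your proof is correct and follows essentially the same approach as the paper: verify factorization via the site-wise identity $\tfrac{1}{n}\Tr(\overline{RAR^{-1}})=\overline{\tfrac{1}{n}\Tr(A)}$ (cyclicity plus $\Tr(\overline{X})=\overline{\Tr X}$), deduce reflection positivity from Proposition~\ref{invimppos}, and note faithfulness of the normalized trace. The only cosmetic difference is that the paper establishes reflection invariance directly from that same computation rather than invoking Proposition~\ref{prop:forceinvar}, and phrases faithfulness as ``finite tensor product of faithful states'' rather than writing out $\tau(A^*A)=n^{-|\Lambda|}\Tr(A^*A)$.
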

\begin{proof}
By linearity, it suffices to show reflection invariance
on the pure tensors 
$
A = A_{\lambda_1} \otimes \ldots \otimes A_{\lambda_{k}}
$.
This follows from the identity
\be
\tau(\Theta(A)) = {\textstyle \frac{1}{n^k}}
\tr(\overline{RA_{\lambda_1}R^{-1}})\cdots\tr(\overline{RA_{\lambda_k}R^{-1}})
=\overline{\tau(A)}\,.
\ee
By \eqref{eq:altfactor}, the factorization property can be expressed as 
$
\tau(A B) = \tau_{-}(A)\tau_{+}(B)
$
for $A\in \A_{-}$ and $B \in \A_{+}$. This
is immediate from 
\eqref{eq:deftrace}.
The state $\tau$ is faithful since it is a finite tensor product of faithful states, and 
reflection positive by Proposition~\ref{invimppos}.
%
\end{proof}
Fix an orthonormal basis $\{c_i\}_{{i\in S}}$ of $M_{n}(\C)$ with respect to the inner product 
$(X,Y) = \Tr(X^*Y)$ such that $c_0 = \one$.  
The basis is labelled by $i \in S$.
A usual choice is the basis consisting of
$\one$, the matrices $E_{kk} - E_{k+1,k+1}$ for 
$k = 1, \ldots, n-1$, and the matrices 
$E_{kl} + E_{lk}$ and 
$i(E_{kl} - E_{lk})$ for $1\leq k< l \leq n$. 
Here $E_{kl}$ denotes the matrix with entry~$1$ in the $kl$ place and $0$ elsewhere.  
In the case of $M_2(\C)$, these are the 
Pauli matrices.  

From the basis $\{c_i\}_{i\in S}$ for $M_n(\C)$, we obtain the tensor product basis
\[
C_{I} = \bigotimes_{\lambda \in \Lambda_+} c_{i_{\lambda}}
\]
for $\A_{+}$, labelled by $I \in S^{\Lambda_{+}}$. In turn, this yields the basis
\[
	B_{IJ} = \Theta(C_I) \circ C_J 
	= \Theta(C_I) C_J 
	= \bigotimes_{\kappa \in \Lambda_-} 
	\overline{Rc_{i_{\theta(\kappa)}}R^{-1}} \bigotimes_{\lambda \in \Lambda_+} c_{j_{\lambda}}
\]
of $\A$, labelled by $(I,J) \in S^{\Lambda_+} \times S^{\Lambda_{+}}$.
Every matrix $H\in \A$ has a unique expansion
\be
-H = \sum_{I,J \in S^{\Lambda_+}} J_{IJ} B_{IJ}
\ee
in the basis $B_{IJ}$. The basis coefficients form an 
$S^{\Lambda_+} \times S^{\Lambda_+}$-matrix 
$(J_{IJ})_{S^{\Lambda_+}}$, called the 
\emph{matrix of coupling constants}.
The \emph{matrix of coupling constants across the reflection plane}
is the submatrix $(J^{0}_{IJ})_{S^{\Lambda_+}\backslash \{0\}}$ where neither 
$C_I$ nor $C_J$ is the identity. 

The following theorem gives necessary and sufficient conditions 
for reflection positivity of the Boltzmann functional 
$\tau_{\beta H} \colon \A \rightarrow \C$ at inverse temperature $\beta \geq 0$, 
defined by $\tau_{\beta H}(A) = \tau(A\,e^{-\beta H})$.
 \begin{theorem}\label{thm:spinsthm}
 Let $H\in \A$ be reflection invariant, $\Theta(H) = H$. Then
 the Boltzmann functional $\tau_{\beta H}$ is reflection positive on $\A_+$ for all 
 $\beta \geq 0$ if and only if the matrix $(J^0_{IJ})_{S^{\Lambda_+}\backslash \{0\}}$ 
 of coupling constants across the reflection plane
 is positive semidefinite.
 \end{theorem}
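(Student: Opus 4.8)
The plan is to obtain the statement as a direct application of the general characterization in Theorem~\ref{Thm:neccandsuff}, after checking that the bosonic quantum lattice system satisfies all of its hypotheses. First I would record the standing assumptions Q1--Q3: the algebra $\A=\bigotimes_{\lambda\in\Lambda}M_n(\C)$ is finite-dimensional, hence a (trivially $\Z_1$-graded) locally convex topological algebra; it is the bosonic $q$-double of $\A_+$ by the discussion preceding the theorem; and the tracial state $\tau$ is continuous, neutral, and reflection positive by the Primitive Reflection Positivity proposition above. Because $\A$ is finite-dimensional, the exponential series converges everywhere, $\exp\colon\A\to\A$ is continuous, and $\beta\mapsto\exp(-\beta H)$ is differentiable at $\beta=0$, so the extra regularity required by Theorem~\ref{Thm:neccandsuff} is automatic.

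Next I would verify the rigid assumptions Q4--Q6. For Q4, take $\sharp$ to be the $*$-involution on $\A_+$; as the grading is trivial it is vacuously grading-inverting, and any linear basis of the finite-dimensional space $\A_+$ is an unconditional homogeneous Schauder basis. For Q5, the factorization $\tau(AB)=\tau_-(A)\tau_+(B)$ for $A\in\A_-$, $B\in\A_+$ is read off the product formula \eqref{eq:deftrace} on pure tensors, with $\tau_+$ the normalized trace on $\A_+$; this is exactly what is noted in the proof of the Primitive Reflection Positivity proposition. For Q6, $\tau_+$ is a finite tensor product of normalized matrix traces, hence faithful, so $\tau_+(A^\sharp A)=\tau_+(A^*A)>0$ for all nonzero $A\in\A_+$, i.e.\ $\tau_+$ is strictly positive with respect to $\sharp$.

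It then remains to match the concrete data with the abstract data. I would check that the product basis $C_I=\bigotimes_{\lambda\in\Lambda_+}c_{i_\lambda}$, after rescaling the $c_i$ so that they are orthonormal for the inner product $\langle X,Y\rangle=\tau_+(X^\sharp Y)$, satisfies B1--B3 of \S\ref{sec:HomOrthBas}: it contains the unit $C_{I_0}=\bigotimes\one$, it satisfies $\tau_+(C_I^\sharp C_J)=\delta_{IJ}$, and its span is all of $\A_+$. Since $q=1$ forces $\zeta=1$, the twisted product reduces to the ordinary product, so $B_{IJ}=\Theta(C_I)\circ C_J=\Theta(C_I)C_J$, and the expansion $-H=\sum_{I,J}J_{IJ}B_{IJ}$ defining the coupling matrix here is precisely the expansion \eqref{eq:Hexpansion} of Theorem~\ref{Thm:neccandsuff}; its submatrix with $I,J\neq I_0$ is exactly $(J^0_{IJ})_{S^{\Lambda_+}\backslash\{0\}}$. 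The equivalence $a\Leftrightarrow c$ of Theorem~\ref{Thm:neccandsuff} then yields the claim.

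The only genuinely delicate point — and it is minor in this finite-dimensional setting — is the bookkeeping around normalization: the $c_i$ as stated are declared orthonormal for $\Tr(X^*Y)$, while B2 demands orthonormality for $\tau_+$, which restricts to $n^{-1}\Tr$ on each tensor factor. Rescaling the $c_i$ to repair this replaces the coupling matrix by $D(J_{IJ})D$ for a positive diagonal matrix $D$, an operation that preserves positive semidefiniteness of the across-the-plane submatrix. Hence the conclusion of Theorem~\ref{Thm:neccandsuff} transfers unchanged, and no further argument is needed.
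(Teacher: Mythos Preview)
Your proposal is correct and matches the paper's own proof, which is simply the one-line invocation of Theorem~\ref{Thm:neccandsuff}. Your additional verification of the hypotheses Q1--Q6 and B1--B3, together with the normalization remark, fills in details the paper leaves implicit.
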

 \begin{proof}
This follows from Theorem~\ref{Thm:neccandsuff}.
\end{proof}

%
This result  extends \cite[Theorem 5.2]{JaffeJanssens2016} from $M_2(\C)$ to $M_n(\C)$.
As a simple example of how Theorem~\ref{thm:spinsthm} may be used in a concrete situation, 
we show that the long range antiferromagnetic Heisenberg model 
is reflection positive at arbitrary spin $s$, see 
\cite{DysonLiebSimon1976,FILS78,DysonLiebSimon1978}.
The Hamiltonian is
\[
	-H = J \sum_{\lambda \neq \lambda' } \abs{\lambda - \lambda'}^{-v} 
	\sum_{a = x,y,z}
	S^{a}_{\lambda} \,S^{a}_{\lambda'}\,,
\]
where $S^{x}, S^{y}, S^{z} \in M_{2s+1}(\C)$ are hermitian spin matrices for spin $s$. 
In the highest weight representation 
$\pi \colon \mathfrak{sl}(2) \rightarrow M_{2s+1}(\C)$, these are given by
\[{\textstyle
S^{x} = \frac{1}{2}(\pi(e)+\pi(f)), \quad S^{y} = -\frac{i}{2}(\pi(e)-\pi(f)), \quad\text{and} \quad
S^z = \frac{1}{2}\pi(h)\,,
}
\]
where $e$, $h$ and $f$ are the usual $\mathfrak{sl}(2)$-generators 
with 
\[ [h,e] = 2e, \quad [h,f] = -2f,\quad \text{and}\quad [e,f] = h\,.\]
Since $\pi(e)$, $\pi(h)$ and $\pi(f)$ can be realized as real matrices,
the map $X \mapsto \overline{X}$ flips the sign of $S^y$, while leaving 
$S^x$ and $S^z$ invariant. Since $R = \exp( i\pi S^y)$
represents a $180^{\circ}$-rotation around the $y$-axis, the map
$X \mapsto RXR^{-1}$ flips the sign of $S^x$ and $S^z$, while leaving 
$S^y$ invariant.  

With the reflection $\Theta$ of \eqref{eq:matrixflip}, we therefore find  
$\Theta(S^{a}_{\lambda}) = -S^{a}_{\vartheta(\lambda)}$.
The matrix of coupling constants across the reflection plane 
is thus given by 
\[
	J^{0}{}^{ab}_{\lambda\lambda'} = -J\abs{\vartheta(\lambda) - \lambda'}^{-v}\delta^{ab}
\]
for $\lambda,\lambda' \in \Lambda_{+}$, and $a,b \in \{x,y,z\}$.
If $\Lambda$ is a $\vartheta$-invariant subset of $\R^d$, then 
this is a positive semidefinite matrix if $J \leq 0$ and if $v$ is a nonnegative
number with $v \geq d-2$.

\setcounter{equation}{0}
\section{Fermionic Systems}\label{sec:fermionicsystems}

We specialize our characterization of reflection positivity to fermionic classical and 
quantum systems on a lattice.

\subsection{Fermionic Classical Systems}\label{sec:fermpsi}
A  fermi\-onic classical system is described by 
the $\Z_{2}$-graded \emph{Grassmann algebra} $\A = \bigwedge V$,
which we have already considered in \S\ref{ExIINumber1}.
Here $V$ is an oriented, \emph{even-dimensional} Hilbert space, which
may arise either from a single site $\lambda$, or from the full lattice $\Lambda$.


For applications in 
physics, 
the vector space $V$ corresponding to a single site 
is either 
$V = W$ (for Weyl spinors) or $V = W \oplus \overline{W}$ (for Dirac spinors).
In the latter case,
$\overline{W}$ is identified with $W$ by means of an 
antilinear isomorphism $\psi \mapsto \overline{\psi}$.
Here
$W = W_{s}\otimes W_{D}$ is the tensor product 
of an $s$-dimensional, unitary representation $W_{s}$ 
for $\mathrm{spin}(d)$ 
and a $D$-dimensional  
unitary representation $W_{D}$ of the relevant gauge group~$G$.
The basis elements are then labelled by $\psi_{\alpha a}$, with 
$\alpha = 1, \ldots, s$ and $a = 1, \ldots, D$.

The vector space corresponding to the full 
lattice $\Lambda$ is $V^{\Lambda}$, and the algebra
is $\A = \bigwedge V^{\Lambda}$.
The definition of the algebras $\A_-$ and $\A_+$
depends on whether the intersection $\Lambda_0$
of $\Lambda_-$ and $\Lambda_+$  is empty or not.
In case $\Lambda_{0} = \emptyset$, we simply define
$V_{\pm} = V^{\Lambda_\pm}$, and set 
$\A_{\pm} = \bigwedge V^{\Lambda_{\pm}}$.
We allow $\Lambda_0 = \Lambda_- \cap \Lambda_+$ to be nonempty
only if $V = W \oplus \overline{W}$.
In that case, we set 
\be
\begin{aligned}
V_{+} &= \bigoplus_{\lambda \in \Lambda_0} W_{\lambda} \oplus 
\bigoplus_{\lambda \in \Lambda_{+}} (W_{\lambda} \oplus \overline{W}_{\lambda})\,,\\
V_{-} &= \bigoplus_{\lambda \in \Lambda_0} \overline{W}_{\lambda} \oplus 
\bigoplus_{\lambda \in \Lambda_{-}} (W_{\lambda} \oplus \overline{W}_{\lambda})\,,
\end{aligned}
\label{eq:defofVpm}
\ee
and we define $\A_{\pm} = \bigwedge V_{\pm}$.

Let $\rho \colon V \rightarrow V$ be an antilinear isomorphism that squares to the 
identity. If $V = W \oplus \overline{W}$, we require that 
$\rho$ interchanges $W$ and $\overline{W}$.
The reflection $\Theta \colon \A \rightarrow \A$
is the unique antilinear homomorphism
such that
\[
	\Theta(\psi_{\lambda}) = \rho(\psi_{\vartheta(\lambda)})
\]
for all $\psi_{\lambda} \in V_{\lambda}$.  
%
Note that the Grassmann algebra $\A$ is the fermionic $q$-double of $\A_+$, 
cf.~ \S\ref{ExIINumber1}.
%

%
%

\begin{proposition}[\bf RP of the Berezin integral]\label{prop:primfunc}
Suppose that $V$ is even dimensional, and that $\rho(\mu) = \mu$.
If $V = W \oplus \overline{W}$, then we require that
$W$ is even dimensional, 
and that
the restriction of $\overline{\rho}$
to $W \rightarrow W$ is of determinant~$1$.
Then the  Berezin integral is a factorizing, 
reflection invariant, reflection positive functional of degree zero.
\end{proposition}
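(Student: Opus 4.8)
The plan is to reduce the statement to the abstract Grassmann case of \S\ref{ExIINumber1}, in particular to Proposition~\ref{prop:BerezinFact} (and Proposition~\ref{invimppos}), by checking that the lattice data $(V^{\Lambda}, V_{+}, V_{-}, \Theta)$ of \S\ref{sec:fermpsi} is an instance of the setup there. Concretely, I would verify that $V^{\Lambda} = V_{-} \oplus V_{+}$ with $\mathrm{dim}(V_{+}) = \mathrm{dim}(V_{-}) =: n$ even, that $\Theta$ restricts to an antilinear isomorphism $V_{+} \to V_{-}$ with $\Theta^{2} = \mathrm{Id}$ and $\Theta(V_{+}) = V_{-}$, that this restriction is volume-preserving for the standard volumes $\mu_{\pm}$ on $V_{\pm}$, and that the induced orientation $\mu = \mu_{-} \wedge \mu_{+}$ agrees with the standard volume of $V^{\Lambda}$. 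Once this is in place, Proposition~\ref{prop:BerezinFact} immediately yields that the Berezin integral is neutral (hence of degree zero), reflection invariant and factorizing, and reflection positivity follows from Proposition~\ref{invimppos}, exactly as in \S\ref{ExIINumber1} (with $\zeta = i$, reflection positivity in the form \eqref{eq:rpmod2}).

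First I would settle the structural and dimensional points. The decomposition $V^{\Lambda} = V_{-} \oplus V_{+}$ is read off from $V_{\pm} = V^{\Lambda_{\pm}}$ when $\Lambda_{0} = \emptyset$, and from \eqref{eq:defofVpm} otherwise. Since $\vartheta$ interchanges $\Lambda_{+} \setminus \Lambda_{0}$ with $\Lambda_{-} \setminus \Lambda_{0}$, these sets are equinumerous, so $\mathrm{dim}(V_{+}) = \mathrm{dim}(V_{-})$. The common value $n$ is even: in the Weyl case because $\mathrm{dim}(V)$ is even by hypothesis; in the Dirac case because every summand $W_{\lambda} \oplus \overline{W}_{\lambda}$ is even-dimensional, and every summand $W_{\lambda}$ with $\lambda \in \Lambda_{0}$ is even because $\mathrm{dim}(W)$ is assumed even. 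Finally $\Theta^{2} = \mathrm{Id}$ follows from $\rho^{2} = \mathrm{Id}$ and $\vartheta^{2} = \mathrm{Id}$, and $\Theta(V_{+}) = V_{-}$ from the facts that $\rho$ interchanges $W$ and $\overline{W}$ and $\vartheta$ interchanges $\Lambda_{\pm} \setminus \Lambda_{0}$.

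The heart of the argument is to show that $\Theta|_{V_{+}}$ is volume-preserving, i.e.\ $\Theta(\mu_{+}) = \mu_{-}$; together with \eqref{eq:order}--\eqref{eq:refvol} this also gives $\Theta(\mu) = \mu$, hence reflection invariance, and that $\mu_{-} \wedge \mu_{+}$ is the standard volume of $V^{\Lambda}$. On a site $\lambda \in \Lambda_{+} \setminus \Lambda_{0}$, $\Theta$ carries the factor $V_{\lambda} \subseteq V_{+}$ onto $V_{\vartheta(\lambda)} \subseteq V_{-}$ by $\rho$, so the corresponding volumes match precisely because $\rho(\mu) = \mu$. On a fixed point $\lambda \in \Lambda_{0}$ (necessarily the Dirac case), $\Theta$ carries $W_{\lambda} \subseteq V_{+}$ onto $\overline{W}_{\lambda} \subseteq V_{-}$ by $\rho$; writing $\rho|_{W}$ as the composite of the linear map $\overline{\rho} \colon W \to W$ with the antilinear identification $W \cong \overline{W}$ and pushing forward the standard volume, one finds $\rho_{*}\mu_{W_{\lambda}} = \overline{\det(\overline{\rho})}\,\mu_{\overline{W}_{\lambda}}$, so the hypothesis $\det(\overline{\rho}) = 1$ is exactly what forces $\Theta(\mu_{W_{\lambda}}) = \mu_{\overline{W}_{\lambda}}$. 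Reassembling the site-factors, and using that every block $W_{\lambda}$, $\overline{W}_{\lambda}$, $V_{\lambda}$ is even-dimensional so that permuting them contributes no sign (as in \eqref{eq:order}), I would conclude $\Theta(\mu_{+}) = \mu_{-}$ and $\mu_{-} \wedge \mu_{+} = \bigwedge_{\lambda \in \Lambda}\mu_{V_{\lambda}}$.

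I expect the sign bookkeeping of this last step to be the main obstacle: one must track carefully the antilinear identification $W \cong \overline{W}$, the distinction between $\rho|_{W}$ and $\rho|_{\overline{W}} = (\rho|_{W})^{-1}$, and the permutation of site-blocks induced by $\vartheta$, verifying throughout that the even-dimensionality hypotheses annihilate every sign that would otherwise appear. Once the volume identities are pinned down, everything else is a direct invocation of Proposition~\ref{prop:BerezinFact} and Proposition~\ref{invimppos}.
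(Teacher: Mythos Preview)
Your proposal is correct and follows essentially the same route as the paper: reduce to Proposition~\ref{prop:BerezinFact} by checking that the lattice data $(V^{\Lambda},V_+,V_-,\Theta)$ satisfy the hypotheses of \S\ref{ExIINumber1}, the key point being that $\Theta|_{V_+}\colon V_+\to V_-$ is volume preserving. The paper's proof is terser---it simply records that the site volumes $\mu_{W,\lambda},\overline{\mu}_{W,\lambda}$ assemble into $\mu_\pm$ and $\mu$, that even-dimensionality makes the order of factors immaterial, and that ``the assumptions on $\rho$ ensure that $\theta\colon V_+\to V_-$ is volume preserving''---but your site-by-site verification (using $\rho(\mu)=\mu$ away from $\Lambda_0$ and $\det(\overline{\rho})=1$ on $\Lambda_0$) is exactly what underlies that sentence.
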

\begin{proof}
Note that an orientation of $W$ defines an orientation on 
$V^{\Lambda}$, $V^{\Lambda_{+}}$ and $V^{\Lambda_{-}}$.
A positively oriented volume $\mu$ or $\mu_{\pm}$ is obtained by taking the 
product of $\overline{\mu}_{W,\lambda}$ and $ \mu_{W,\lambda}$
over all the sites $\lambda$ in the relevant lattice.
If $\lambda \in \Lambda_{0}$, then
$\mu_{+}$ only gets a single factor $ \mu_{W,\lambda}$, and 
$\mu_{-}$ only gets a single factor 
$\overline{\mu}_{W,\lambda}$.

Since the relevant vector spaces are even dimensional, the order of the products 
is immaterial. The assumptions on $\rho$ ensure that $\theta \colon V_{+} \rightarrow V_{-}$ is volume preserving, and that $\mu = \mu_{-} \wedge \mu_{+}$.
The result then follows from Proposition~\ref{prop:BerezinFact}.
\end{proof}

We construct a basis of $\A$ that is adapted to the reflection.
First, choose a basis $\{\psi_i\}_{i\in T}$ of $W$ and $\{\psi_{i}\}_{i\in S}$ 
of $V$.
From this, we obtain a basis $\psi_{(\lambda, i)}$ of $V_+$, labelled by
$(\lambda,i) \in (\Lambda_0 \times T) \sqcup (\Lambda_+ \backslash\ \Lambda_0) \times S$.
By choosing an order on this label set, we obtain an ordered basis of 
$\A_+ = \bigwedge V_+$ by setting
\be\label{eq:fermionbasis}
C_I = \psi_{(i_1,\lambda_1)} \wedge \ldots \wedge \psi_{(i_k,\lambda_k)}\,,
\ee
if $(\lambda_1,i_1) < \ldots < (\lambda_k,i_k)$ is in increasing order.
The basis $C_I$ is labelled by the power set 
\be\label{eq:indexsetclfermi}
\mathcal{I} = \mathcal{P}\Big((\Lambda_0 \times T) \sqcup (\Lambda_+ \backslash\ \Lambda_0) \times S\Big)\,.
\ee
If $I_0 = \emptyset$, we define $C_{I_0} = \one$ to be the identity.
Using the basis $C_I$ of $\A_+$, we obtain a basis $B_{IJ}$ of $\A^{0}$ by 
\[
	B_{IJ} = \sqrt{-1}^{\abs{I}^2}\Theta(C_I)C_J\,,
\]
where $\abs{I} \in \Z_2$ is the cardinality of $I$ modulo 2, and
$J$ is restricted to have $\abs{I} = \abs{J}$ modulo 2. 
This ensures that $B_{IJ}$ is even.
(Note that the factor $\sqrt{-1} = \zeta$ comes from the twisted product~\eqref{eq:TwistedProduct}.)

Every $H\in \A^0$ then has a basis expansion
\[
	-H = \sum_{I,J} J_{IJ} B_{IJ}\,.
\]	
The matrix $(J_{IJ})_{\mathcal{I}}$ is called
the \emph{matrix of coupling constants}, and 
the submatrix $(J^0_{IJ})_{\mathcal{I} \backslash \{I_0\}}$
is called the 
\emph{matrix of coupling constants across the reflection plane}.

\begin{theorem}
Let $H \in \A$ be a reflection invariant element of degree zero. 
Then the Boltzmann functional 
$\tau_{\beta H}(A) = \tau(A\,e^{-\beta H})$ is reflection positive on $\A_+$
for all $\beta \geq 0$, if and only if 
the matrix of coupling constants across the reflection plane
is positive semidefinite.
\end{theorem}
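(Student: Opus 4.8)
The plan is to deduce this theorem directly from the general characterization in Theorem~\ref{Thm:neccandsuff}. The task is thus to verify that the Grassmann setup of \S\ref{sec:fermpsi}, with its standing assumptions on $V$, $\rho$ and the orientation, fits the framework of \S\ref{Sect:N-SforRP}: that hypotheses Q1--Q6 hold, that $\exp\colon\A\to\A$ is continuous, and that $\beta\mapsto\exp(-\beta H)$ is differentiable at $\beta=0$; and then to identify the ``matrix of coupling constants across the reflection plane'' of this section with the matrix $(J^{0}_{IJ})_{\mathcal{I}\backslash\{I_0\}}$ of Theorem~\ref{Thm:neccandsuff} and apply the equivalence of conditions a and c there.

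First I would check Q1--Q3. As the lattice is finite, $\A=\bigwedge V^{\Lambda}$ is finite-dimensional, hence a locally convex $\Z_2$-graded topological algebra (Q1); it is the fermionic $q$-double of $\A_+=\bigwedge V_+$, as recorded in \S\ref{ExIINumber1} and \S\ref{sec:fermpsi} (Q2); and the Berezin integral $\tau$ is continuous (being a linear functional on a finite-dimensional space), neutral, and reflection positive by Proposition~\ref{prop:primfunc}, whose hypotheses on $\rho$ and the orientation belong to the standing assumptions of the section (Q3). Finite-dimensionality also makes $\exp\colon\A\to\A$ continuous and $\beta\mapsto\exp(-\beta H)$ real-analytic, in particular differentiable at $0$. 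For Q4--Q6: the map $\sharp$ is the Hodge star operator, which is antilinear and grading-inverting (cf.\ the discussion before Proposition~\ref{Prop:berezinstrict}), and $\A_+$ carries an unconditional homogeneous Schauder basis because it is finite-dimensional (Q4); $\tau$ factorizes by Proposition~\ref{prop:primfunc} (Q5); and $\tau_+$ is strictly positive with respect to $\sharp$ by Proposition~\ref{Prop:berezinstrict} (Q6).

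It then remains to reconcile the two bookkeeping conventions. Choosing \emph{orthonormal} bases of $W$ and of $V$, the monomials $C_I$ of \eqref{eq:fermionbasis} are automatically orthonormal for $\lra{\,\cdot\,,\cdot\,}$, so no Gram--Schmidt step is needed and properties B1--B3 of \S\ref{sec:HomOrthBas} are immediate: $C_{I_0}=\one$ with $I_0=\emptyset$, $\tau_+(C_I^{\sharp}C_J)=\lra{C_I,C_J}=\delta_{IJ}$ by the defining property of the Hodge star, and the span of the $C_I$ is all of $\A_+$. Consequently $B_{IJ}=\sqrt{-1}^{\abs{I}^2}\Theta(C_I)C_J=\Theta(C_I)\circ C_J$ coincides with the basis \eqref{eq:adaptedbasis} of $\A^{0}$, the expansion $-H=\sum_{I,J}J_{IJ}B_{IJ}$ is the unique expansion of Proposition~\ref{Prop:UniqueExpansion}, and the matrix of coupling constants across the reflection plane of this section is exactly $(J^{0}_{IJ})_{\mathcal{I}\backslash\{I_0\}}$. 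Theorem~\ref{Thm:neccandsuff} now yields the asserted equivalence.

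Since the substantive work has been front-loaded into \S\ref{ExIINumber1} and \S\ref{sec:fermpsi}, I do not expect a serious obstacle; the one point deserving attention is the case $\Lambda_0\neq\emptyset$. There one uses that the decomposition \eqref{eq:defofVpm} places disjoint summands of each fixed-point Dirac space $W\oplus\overline W$ into $V_+$ and into $V_-$, so that $V_+\cap V_-=0$ and hence $\A_+\cap\A_-=\C\one$, and that the determinant-one and $\rho$-invariance hypotheses of Proposition~\ref{prop:primfunc} are precisely what make the Berezin integral factorizing and reflection invariant in that case. Granted these, the theorem is a direct transcription of Theorem~\ref{Thm:neccandsuff}.
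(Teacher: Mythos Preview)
Your proposal is correct and follows exactly the paper's approach: the paper's proof simply says ``This follows from Theorem~\ref{Thm:neccandsuff}. The Berezin integral is strictly positive by Proposition~\ref{Prop:berezinstrict}, and it is factorizing and reflection positive by Proposition~\ref{prop:primfunc}.'' Your version spells out in more detail the verification of Q1--Q6, the continuity of $\exp$, and the identification of the basis $C_I$ with the orthonormal Schauder basis of \S\ref{sec:HomOrthBas}, all of which the paper leaves implicit.
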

\begin{proof}
This follows from Theorem~\ref{Thm:neccandsuff}.
The Berezin integral is strictly positive by
Proposition~\ref{Prop:berezinstrict}, and it is factorizing and reflection positive 
by Proposition~\ref{prop:primfunc}.
\end{proof}

\subsection{Fermionic Quantum Systems}\label{sec:FermiQuant}
Quantum mechanical fermi\-onic systems are described by 
\emph{Clifford algebras}, which we
considered in \S\ref{sec:Clifford}.
If the vector space associated to a single lattice site is the finite dimensional 
vector space $V$, then 
the space associated to the full lattice is $V^{\Lambda}$.
Correspondingly, the algebra for a single site is 
$\cA_{\lambda} = \mathrm{Cl}(V)$, and the algebra for the full lattice is 
$\cA = \mathrm{Cl}(V^{\Lambda})$.

Let $\rho \colon V\rightarrow V$ be an antilinear map 
that squares to the identity, and satisfies 
$h_{\C}(\rho(v),\rho(v')) = \overline{h_{\C}(v,v')}$ fot all $v,v' \in V$.
This yields an antilinear isomorphism 
$\theta \colon V^{\Lambda} \rightarrow V^{\Lambda}$
by $\theta(v_{\lambda}) = \rho(v_{\theta(\lambda)})$, and hence
an antilinear homomorphism $\Theta \colon \cA \rightarrow \cA$.

If $\Lambda_0 = \Lambda_+ \cap \Lambda_-$ is nonzero, then 
we require that 
$V_{\R} = W_{\R} \oplus W_{\R}$ is an orthogonal direct sum,
and that $\rho$ is the antilinear complexification of a real orthogonal 
transformation $\rho_{\R} \colon V_{\R} \rightarrow V_{\R}$
that interchanges the two copies of $W_{\R}$.

Define the vector spaces $V_{\pm}$ as in \eqref{eq:defofVpm}, and define 
$\cA_{\pm} = \mathrm{Cl}(V^{\Lambda_{\pm}})$.
Since $\theta(V_+) = V_-$ and $h_{\C}(V_+,V_-) = \{0\}$,
the algebra $\cA$ is the fermionic $q$-double of $\cA_+$, cf.~\S\ref{sec:Clifford}.

Choose orthonormal bases $\{c_i\}_{i\in T}$ of $W_{\R}$, and 
$\{c_i\}_{i\in S}$ of $V_{\R}$. 
In the same way as in \S\ref{sec:fermpsi}, we obtain a basis 
$C_I$ of $\A_+$, labelled by the index set 
$\cI$ of equation~\eqref{eq:indexsetclfermi}. 
It is given by $C_{I_0} = \one$ if $I_0 = \emptyset$, and by
\be
	C_I = c_{(\lambda_1,i_1)}\cdots c_{(\lambda_k,i_k)}\,,
\ee
if $(\lambda_1,i_1) < \ldots < (\lambda_k,i_k)$ is increasing 
with respect to a chosen order on
$(\Lambda_0 \times T) \sqcup (\Lambda_+ \backslash\ \Lambda_0) \times S$.

Using the basis $C_I$ of $\cA_+$, we define 
the basis $B_{IJ}$ of $\cA^{0}$ by 
\[
	B_{IJ} 
	= \Theta(C_I) \circ C_J 
	= \sqrt{-1}^{\abs{I}^2}\Theta(C_I)C_J\,.
\]
Here $\abs{I} \in \Z_2$ denotes the cardinality of $I$ modulo 2, 
and $J \in \cI$ is restricted to have 
$\abs{J} = \abs{I}$. Every $H\in \A^0$ then has a basis expansion
$-H = \sum_{I,J} J_{IJ} B_{IJ}$.  Explicitly, 
there exist unique coefficients 
\[
J_{IJ} = 
J^{i_1}_{\lambda_1} \cdots {}^{i_k\,;}_{\lambda_k\,;}\,{}^{i'_1}_{\lambda'_1}
\cdots {}^{i'_{k'}}_{\lambda'_{k'}}
\]
such that
\[
	-H = 
	\sum J_{IJ} \sqrt{-1}^{\,k^2}\rho(c_{\vartheta(\lambda_1)i_1}) \cdots \rho(c_{\vartheta(\lambda_k)i_k})\,
	c_{\lambda'_1 i'_1} \cdots c_{\lambda'_{k'} i'_{k'}}\,.
\]
The matrix $(J_{IJ})_{\cI}$ is
the \emph{matrix of coupling constants}, and 
$(J^0_{IJ})_{\cI \backslash \{I_0\}}$
is the 
\emph{matrix of coupling constants across the reflection plane}.

\begin{theorem}
Let $\tau \colon \cA \rightarrow \C$ be the tracial state of Definition \ref{ref:deftracialcliff},
and let $H \in \cA$ be a reflection invariant element of degree zero. 
Then the Boltzmann functional 
$\tau_{\beta H}(A) = \tau(A\,e^{-\beta H} )$ is reflection positive on $\cA_+$
for all ${\beta \geq 0}$, if and only if 
the matrix of coupling constants across the reflection plane
is positive semidefinite.
\end{theorem}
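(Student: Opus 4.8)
The plan is to deduce the statement directly from the general characterization in Theorem~\ref{Thm:neccandsuff}, specialized to $\A=\cA=\mathrm{Cl}(V^{\Lambda})$, $\A_+=\cA_+$, the reflection $\Theta$ constructed in \S\ref{sec:FermiQuant}, the tracial state $\tau$ of Definition~\ref{ref:deftracialcliff}, and with the map $\sharp\colon\cA_+\to\cA_+$ taken to be the $*$-involution. So the bulk of the proof will consist in checking that the hypotheses Q1--Q6 (together with continuity of $\exp$ and differentiability of $\beta\mapsto e^{-\beta H}$) hold in this setting, and in identifying the \emph{ad hoc} basis $B_{IJ}$ and coupling matrix of this section with the canonical ones of \S\ref{sec:HomOrthBas}.

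For a finite lattice $\Lambda$ and finite-dimensional $V$ the algebra $\cA$ is finite-dimensional, which immediately gives Q1 (a $\Z_2$-graded locally convex topological algebra), the convergence of the exponential series on all of $\cA$, continuity of $\exp\colon\cA\to\cA$, and analyticity --- hence differentiability at $\beta=0$ --- of $\beta\mapsto e^{-\beta H}$. Property Q2 (that $\cA$ is the fermionic $q$-double of $\cA_+$) is exactly what was recorded in \S\ref{sec:Clifford}--\S\ref{sec:FermiQuant}: $\cA_\pm=\mathrm{Cl}(V_\pm)$ supercommute because $h_{\C}(V_+,V_-)=\{0\}$, their product spans $\cA$ because $V^{\Lambda}=V_+\oplus V_-$, and $\Theta$ is an antilinear homomorphism with $\Theta^2=\mathrm{Id}$ exchanging $\cA_\pm$ and inverting the grading. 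For Q3, Q5 and Q6, Proposition~\ref{prop:cliffordtau} does all the work: $\tau$ is neutral, reflection positive, and factorizing with $\tau_+=\tau|_{\cA_+}$, and it is faithful, so $\tau_+(A^{*}A)>0$ for nonzero $A\in\cA_+$; since the $*$-involution inverts the $\Z_2$-grading, this is precisely strict positivity of $\tau_+$ with respect to $\sharp=*$. For Q4 one takes the homogeneous, finite --- hence unconditional --- Schauder basis $\{C_I\}_{I\in\cI}$ of $\cA_+$ built in \S\ref{sec:FermiQuant} from orthonormal bases of the relevant real Hilbert spaces; it contains the unit $C_{I_0}=\one$, its span is all of $\cA_+$, and $\tau(C_I^{*}C_J)=\delta_{IJ}$ follows from the anticommutation relations together with Definition~\ref{ref:deftracialcliff}, so the $C_I$ satisfy B1--B3. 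Consequently the operators $B_{IJ}=\Theta(C_I)\circ C_J=\sqrt{-1}^{\,\abs{I}^2}\Theta(C_I)C_J$ are the distinguished reflection-invariant basis of $\cA^{0}$, the expansion $-H=\sum_{I,J}J_{IJ}B_{IJ}$ is the one of \S\ref{sec:HomOrthBas}, and the submatrix of couplings across the reflection plane coincides with the one appearing in Theorem~\ref{Thm:neccandsuff}. Since $H\in\cA^{0}$ is reflection invariant by hypothesis, that theorem yields the equivalence of its conditions (a) and (c), which is the assertion; its conditions (b) and (d) moreover give the bonus equivalences ``reflection positive for some $\beta\in[0,\varepsilon)$'' and ``$H=H_-+H_0+H_+$ with $H_+\in\cA_+$, $-H_0\in\cok$, $H_-=\Theta(H_+)$''.

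I expect the only genuinely non-formal point to be the verification that the orthonormality relation B2, $\tau(C_I^{*}C_J)=\delta_{IJ}$, holds for the adapted basis --- in particular at the fixed-point sites $\lambda\in\Lambda_0$, where $C_I$ is built from the $W$-part of $V_\lambda$ only --- together with the (equally routine, but crucial for the coupling matrix to be well defined) facts that $\tau$ factorizes through $\tau_+=\tau|_{\cA_+}$ and that $\tau_+$ is faithful. All of these are supplied by Proposition~\ref{prop:cliffordtau} and the CAR relations, so once they are in hand the theorem is a direct specialization of Theorem~\ref{Thm:neccandsuff}.
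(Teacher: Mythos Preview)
Your proposal is correct and follows exactly the paper's approach: the paper's proof is the single sentence ``This follows from Proposition~\ref{prop:cliffordtau} and Theorem~\ref{Thm:neccandsuff},'' and you have simply unpacked the verification of Q1--Q6 that makes this invocation legitimate. Your additional care in checking B1--B3 for the adapted basis and noting the finite-dimensionality shortcuts is helpful elaboration, not a different argument.
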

\begin{proof}
This follows from Proposition~\ref{prop:cliffordtau} and Theorem~\ref{Thm:neccandsuff}.
 \end{proof}

\setcounter{equation}{0}
\section{Lattice Gauge Theories: Equivariant Quantization}\label{sec:latticeQCD}

We give a characterization of reflection positivity in the context of lattice gauge theories.  
In particular, this yields a new, gauge equivariant proof for reflection positivity of 
the functional determined by the Wilson action \eqref{eq:YMplaquette}, 
stated in Corollary~\ref{cer:rpwilson}.
Wilson introduced this action to be gauge invariant and have the correct pointwise 
continuum limit.
By a miracle, this action also gives a reflection-positive expectation.

In contrast to the proofs in the literature, pioneered 
by Osterwalder and Seiler \cite{O76, OS78, S82}, we do \emph{not} fix the gauge 
on bonds that cross the reflection plane.
Rather, we introduce extra degrees of freedom that put the 
interaction across the reflection plane in a form covered by  
Theorem~\ref{Thm:neccandsuff}. We deal with the problem of fermion doubling as in
\cite{MP87}.

Using this method, we are able to prove reflection positivity
on the full algebra of observables, not just on the gauge invariant part.
As a consequence of the quantization procedure, any two elements of $\A_+$ 
that differ by a gauge transformation that does not involve the reflection plane 
yield the same state in $\cH_{\Theta}$.

%

%

 
\subsection{Gauge Bosons} 
Let $G$ be a compact Lie group, and let
$\Lambda' $ be a hypercubic lattice of width $r$ in $\R^d$ or $T^d$.
Let $\Lambda''$ be the set of midpoints $\lambda'' = \frac{1}{2}(\lambda'_1 + \lambda'_2)$ of 
nearest neighbors $\lambda_1,\lambda_2$ in $\Lambda'$,
and define the lattice as $\Lambda = \Lambda' \cup \Lambda''$.
(See Figure~\ref{fig:QCDlattice}.)

Denote the set of directed nearest-neighbor bonds in $\Lambda$ by 
$E = \{\lra{\lambda\lambda'}\,;\, \abs{\lambda - \lambda'} = r/2\}$, 
and denote the set of undirected bonds by 
$\abs{E} = \{\{\lambda\lambda'\}\,;\, \abs{\lambda - \lambda'} = r/2\}$.
To describe the bosonic degrees of freedom, we associate 
the variable $h_{\lra{\lambda\lambda'}} \in G$ to the directed nearest-neighbor bond 
$\lra{\lambda\lambda'} \in E$.
Note that every nearest-neighbor bond contains one site in $\Lambda'$, 
and one in $\Lambda''$.
Since $h_{\lra{\lambda\lambda'}}$ represents the holonomy induced by parallel transport
from $\lambda$ to $\lambda'$, we impose 
$h_{\lra{\lambda'\lambda}} = h_{\lra{\lambda\lambda'}}^{-1}$
for the bond $\lra{\lambda'\lambda}$
in the other direction.
The Haar measure $\mu_H$ on the 1-bond probability space 
\[\Omega_{\{\lambda\lambda'\}} = 
\{(h_{\lra{\lambda\lambda'}}, h_{\lra{\lambda'\lambda}}) \in G \times G\,;\, h_{\lra{\lambda\lambda'}}
= h_{\lra{\lambda'\lambda}}^{-1}\}\]
is obtained from the Haar measure on $G$ by either one of the two projections 
to $G$ (the result is the same).
Similarly, the configuration space of discrete holonomies for the full system is
\[
G^{\abs{E}} := \{h \in G^{E}\,;\, h_{\lra{\lambda\lambda'}} = h^{-1}_{\lra{\lambda'\lambda}}\}\,.
\]
We equip it with the Haar measure obtained from the identification 
$G^{\abs{E}} \simeq \prod_{\{\lambda\lambda'\} \in \abs{E}} \Omega_{\{\lambda\lambda'\}}$. 
The associated algebra of bosonic observables is $\A^{B} = L^{\infty}(G^{\abs{E}})$.
\begin{figure}[h!]
   \centering
   \resizebox{0.6\textwidth}{!}{%
  		\includegraphics[height = 3 cm]{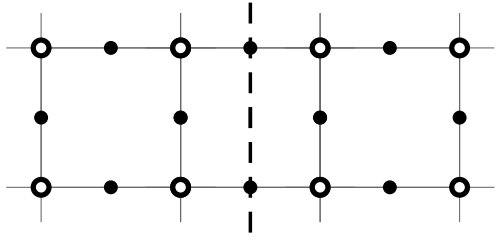}
		}
   \caption{{\small Points in $\Lambda'$ are white, points in $\Lambda''$ are black.}}
   \label{fig:QCDlattice}
\end{figure}

The algebra $\A^{B}$ contains functions that depend on the holonomies between 
every pair $\lra{\lambda\lambda'}$ of nearest neighbors in $\Lambda$.
If we zoom out and consider only the `coarse' lattice $\Lambda' \subseteq \Lambda$
(the white points in Figure~\ref{fig:QCDlattice}),
then the holonomy between the nearest neighbors 
$\lambda, \lambda'$ in $\Lambda'$
is given by $h^{\Lambda'}_{\lambda\lambda'} = 
h_{\lambda\lambda''}h_{\lambda''\lambda'}$, where 
$\lambda'' \in \Lambda''$ is the midpoint between $\lambda$ and $\lambda'$.
Define  
\[
\A^{B}_{\Lambda'} \subseteq \A^{B}
\] 
to be the subalgebra of measurable 
functions that depend only on the variables $h^{\Lambda'}_{\lambda'\kappa'}$.

\subsubsection{Reflection Positivity}\label{sec:RPgluons}
Suppose that the reflection $\vartheta \colon \Lambda \rightarrow \Lambda$ 
flips a single coordinate $x^{\sigma}$.
Then the reflection $\Theta \colon \A^{B} \rightarrow \A^{B}$ 
is the anti-linear homomorphism given by
\beq\label{eq:gaugereflection}
	\Theta(F)(h_{\lra{\lambda,\lambda'}}) &=& \overline{
		     F(h_{\lra{\vartheta(\lambda),\vartheta(\lambda')}})
		}
\eeq
for all $F \in \A^{B}$.

We assume that the fixed point set $P$ is orthogonal to the basis vector 
$\vec{e}_{\sigma}$, and intersects the lattice $\Lambda$ halfway 
between lattice points in $\Lambda'$, so
$P \cap \Lambda'' = \Lambda_{0}$. (See Figure \ref{fig:QCDlattice}.)
It follows that
$\Lambda = \Lambda_{-} \cup \Lambda_+$ with $\Lambda_- \cap \Lambda_+ = \Lambda_0$,
$\Lambda' = \Lambda'_- \cup \Lambda'_{+}$ with 
$\Lambda'_- \cap \Lambda'_{+} = \emptyset$,
and $E = E_{-} \cup E_+$ with $E_- \cap E_+ = \emptyset$.

Define $\A_{\pm}^{B} = L^{\infty}(G^{\abs{E_{\pm}}})$, and consider
$\A_{\pm}^{B} \subseteq \A^{B}$ as the subalgebra of 
functions 
$F \colon G^{\abs{E}} \rightarrow \C$
that are measurable with respect to $G^{\abs{E_+}}$, that is, functions 
 that depend only on 
the variables $h_{\lra{\lambda\lambda'}}$
with $\lambda$ and $\lambda'$ both in  $\Lambda_{\pm}$.
In this setting, $\A^B$ is the bosonic $q$-double of $\A_{+}^B$.

As in the previous sections, we construct a basis of $\A^{B}$
that is adapted to the reflection.
To find a basis for $L^{\infty}(\Omega_{\{\lambda \lambda'\}}) \simeq L^{\infty}(G)$
with respect to the topology of convergence in measure,
fix a basis $(e_{a})_{a \in S_{\rho}}$ for every irreducible unitary representation 
$(\rho, \cH_{\rho})$
of $G$, and consider the matrix coefficients 
	\be\label{Defn Gauge Field}
	U_{\lambda\lambda'}^{ab;\rho}(h) 
	= \lra{e_{a}, \rho(h_{\lra{\lambda\lambda'}})e_{b}}\;.
	\ee
By the Peter-Weyl Theorem, they constitute 
an orthonormal basis  of 
$L^2(\Omega_{\{\lambda \lambda'\}}, \mu_{H})$,
labelled by $(\rho, a, b) \in \widehat{G}\times S_{\rho} \times S_{\rho}$.
(Since $L^2$-convergence implies convergence in measure, this is sufficient.)
Note that by unitarity of $\rho$, we have 
\be
U_{\lambda\lambda'}^{ab;\rho} = 
\overline{U}_{\lambda'\lambda}^{ba;\rho}\,.
\ee

If we choose a preferred orientation $\lra{\lambda\lambda'}$ of each unoriented bond 
$\{\lambda\lambda'\}$ in $\abs{E_+}$, we obtain an orthonormal basis 
\be\label{eq:bosonbasis}
	U_{I} = \bigotimes_{\{\lambda,\lambda'\} \in \abs{E_+}} U_{\lambda\lambda'}^{ab;\rho}
\ee
of $\A_{+}^{B}$, labelled 
by $I \in \mathcal{I}_{B} = \abs{E_{+}}^{X}$, 
where $X = \bigsqcup_{\rho \in \widehat{G}}S_{\rho} \times S_{\rho}$.
Note that $U_{I_0} = 1$ if $I_0 \in \cI_{B}$ assigns 
to each bond the matrix element $1$ of the trivial representation.

By \eqref{eq:gaugereflection}, the basis elements 
$U^{ab;\rho}_{\lambda\lambda'}$ reflect as
\be \label{eq:reflectU}
\Theta(U^{ab;\rho}_{\lambda\lambda'}) = 
U^{ba;\rho}_{\vartheta(\lambda')\vartheta(\lambda)}\,.
\ee
Since $B_{IJ} = \Theta(U_I)U_J$ is an orthogonal Schauder basis of 
$\A^B$ for the topology of convergence in measure,
any action $S \in \A^{B}$ can be uniquely expressed as
\be\label{eq:matrixc}
	S = \sum_{I,J \in \mathcal{I}} J_{IJ} \Theta(U_I) U_J\,.
\ee
We denote the matrix of coupling constants by $(J_{IJ})_{\mathcal{I}}$.
The submatrix $(J^{0}_{IJ})_{\mathcal{I}\backslash I_{0}}$ of entries with 
$I, J \neq I_0$ is called the matrix of coupling constants across the reflection plane.

\begin{theorem}\label{thm:RPgauge}
Let $\mu$ be the Haar measure on $G^{\abs{E}}$, let
$S \in \A^{B}$ be a reflection-invariant function, and let 
$\mathbb{E}_{\beta S} \colon \A^{B} \rightarrow \C$ be the expectation 
\[
	\mathbb{E}_{\beta S}(A) = \int_{G^{\abs{E}}} \exp(-\beta S) A(h) \mu(dh)
\]
with respect to the (complex) measure $e^{-\beta S} \mu$. Then $\mathbb{E}_{\beta S}$
is reflection positive on $\A^{B}_{+}$ for every $\beta \geq 0$, if and only if the matrix
$(J^{0}_{IJ})_{\mathcal{I}\backslash \{I_0\}}$ of coupling constants 
across the reflection plane is positive semidefinite.\end{theorem}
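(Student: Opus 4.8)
The plan is to deduce the statement from the general characterization Theorem~\ref{Thm:neccandsuff}, applied to the $q$-double $\A = \A^{B} = L^{\infty}(G^{\abs{E}})$ of $\A_{+} = \A^{B}_{+}$, with the reflection $\Theta$ of \eqref{eq:gaugereflection}, the background functional $\tau(F) = \int_{G^{\abs{E}}} F\, d\mu$ given by integration against Haar measure, and $\sharp \colon \A^{B}_{+} \to \A^{B}_{+}$ the complex conjugation $F \mapsto \overline{F}$. Here $S$ plays the role of the degree-zero, reflection-invariant operator $H$, and $\mathbb{E}_{\beta S}$ is its Boltzmann functional $\tau_{\beta S}$. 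To invoke Theorem~\ref{Thm:neccandsuff} one must check the hypotheses Q1--Q6 together with continuity of $\exp \colon \A \to \A$ and differentiability of $\beta \mapsto \exp(-\beta S)$ at $\beta = 0$, and then identify the matrix $(J^{0}_{IJ})_{\cI \backslash \{I_{0}\}}$ of \eqref{eq:matrixc} with the matrix of coupling constants across the reflection plane in the sense of \S\ref{sec:HomOrthBas}.

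The bulk of the verification parallels \S\ref{sec:BoseClass} and was essentially prepared in \S\ref{sec:RPgluons}. Equip $L^{\infty}(G^{\abs{E}})$ with the topology of convergence in measure, making it a bosonic ($q = 1$) locally convex topological algebra (Q1); by \S\ref{sec:RPgluons}, $\A^{B}$ is the bosonic $q$-double of $\A^{B}_{+}$ (Q2), since $\A^{B}_{-}$ and $\A^{B}_{+}$ depend on disjoint sets of bond variables, hence commute, $\Theta(\A^{B}_{+}) = \A^{B}_{-}$, and the span of $\A^{B}_{-}\A^{B}_{+}$ is dense. For Q3 and Q5, note that $\abs{E}$ splits as the \emph{disjoint} union $\abs{E_{-}} \sqcup \abs{E_{+}}$ --- even though the site fixed-point set $\Lambda_{0}$ may be nonempty, no bond lies on $P$ --- so $\mu$ is the product of the Haar measures on $G^{\abs{E_{-}}}$ and $G^{\abs{E_{+}}}$. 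Hence $\tau$ is factorizing, $\tau(F_{-}F_{+}) = \tau(F_{-})\tau(F_{+})$, which yields reflection positivity of $\tau$ by Proposition~\ref{invimppos}; neutrality is automatic for $p = 1$, and reflection invariance $\tau(\Theta(F)) = \overline{\tau(F)}$ follows from invariance of Haar measure under the bond permutation induced by $\vartheta$. For Q6, $\tau_{+}(\overline{F}F) = \int_{G^{\abs{E_{+}}}} \abs{F}^{2}\, d\mu > 0$ for $F \ne 0$. For Q4, the functions $U_{I}$ of \eqref{eq:bosonbasis}, built from matrix coefficients of irreducible representations, form a homogeneous Schauder basis of $\A^{B}_{+}$ by the Peter--Weyl theorem (all of degree zero), and we take $C_{I} = U_{I}$, $C_{I_{0}} = 1$.

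Since $S \in L^{\infty}(G^{\abs{E}})$ is bounded, the series for $\exp(-\beta S)$ converges in $L^{\infty}$-norm, hence a fortiori in measure, uniformly for $\beta$ in compact sets, so $\exp$ is continuous and $\beta \mapsto \exp(-\beta S)$ is real-analytic, in particular differentiable at $0$. Because $q = 1$ one has $\zeta = 1$, so the twisted product is the ordinary product, $\Theta(U_{I}) \circ U_{J} = \Theta(U_{I})U_{J}$, and the expansion \eqref{eq:matrixc} of $-S$ is exactly the expansion \eqref{eq:Hexpansion} of the general theory; thus $(J_{IJ})_{\cI}$ in \eqref{eq:matrixc} is the matrix of coupling constants, and $(J^{0}_{IJ})_{\cI \backslash \{I_{0}\}}$ its across-the-plane submatrix (reflection invariance of $S$ forces this matrix to be Hermitian, by Proposition~\ref{prop:refinvmatrix} together with the reflection rule \eqref{eq:reflectU}). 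Theorem~\ref{Thm:neccandsuff} now gives the equivalence: $\mathbb{E}_{\beta S} = \tau_{\beta S}$ is reflection positive on $\A^{B}_{+}$ for all $\beta \ge 0$ if and only if $(J^{0}_{IJ})_{\cI \backslash \{I_{0}\}}$ is positive semidefinite.

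The substantive obstacle is not the algebra of the argument but the functional-analytic setup: one must ensure that $L^{\infty}(G^{\abs{E}})$, in the topology used, genuinely satisfies the standing assumptions of \S\ref{sec:section2} (locally convex, separately continuous multiplication, continuity of $\tau$ and of the Boltzmann perturbation $F \mapsto F e^{-\beta S}$), and that $\{U_{I}\}$ is an unconditional homogeneous Schauder basis in the sense of Q4. This is where compactness of $G$ and finiteness of $\Lambda$ enter, reducing everything to the Peter--Weyl decomposition and boundedness of $S$; the remaining ingredients (commutation of $\A^{B}_{\pm}$, the product structure of $\mu$, $\vartheta$-invariance of Haar measure, strict positivity of $\tau_{+}$) are immediate computations with product Haar measures and matrix coefficients, entirely analogous to the classical bosonic case of Theorem~\ref{Thm:classicalwithoutintersection}.
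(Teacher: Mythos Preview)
Your proposal is correct and follows essentially the same route as the paper: the paper's proof simply invokes Proposition~\ref{prop:primRBclassicalboson} for reflection positivity of the Haar measure and then applies Theorem~\ref{Thm:classicalwithoutintersection}, which is itself nothing but the bosonic-classical instance of Theorem~\ref{Thm:neccandsuff}. You bypass the intermediate theorem and verify Q1--Q6 for Theorem~\ref{Thm:neccandsuff} directly, which is the same argument spelled out in slightly more detail; your observation that the bond set splits disjointly as $\abs{E_-}\sqcup\abs{E_+}$ (so that the relevant ``lattice'' has empty fixed-point set even though $\Lambda_0\neq\emptyset$) is exactly the point that makes Theorem~\ref{Thm:classicalwithoutintersection} applicable.

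One small slip: you write ``the expansion \eqref{eq:matrixc} of $-S$,'' but \eqref{eq:matrixc} as stated expands $S$ itself, so to match \eqref{eq:Hexpansion} with $H=S$ one should read the coupling matrix there as the coefficients of $-S$. This is a sign-convention wrinkle (arguably a typo in the paper, since all parallel expansions such as \eqref{eq:doublebasisclassical} and \eqref{eq:parahamsum} carry the minus sign) and does not affect the logic of your argument.
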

\begin{proof}
Since $\mu$ is reflection positive by 
Proposition~\ref{prop:primRBclassicalboson},
the result follows from Theorem~\ref{Thm:classicalwithoutintersection}.
\end{proof}

\subsection{Lattice Yang-Mills Theory}\label{Sect:LatticeYMTheory}
For example, consider the Wilson action 
for Yang-Mills theory
$
	S_{YM} = 
	\sum_{P} S^{P}_{YM}
$,
where $P = \lra{\lambda_0\lambda_1\lambda_2\lambda_3}$ is an oriented elementary square 
or `plaquette' in the `coarse' lattice $\Lambda'$, and 
\be\label{eq:YMplaquette}
	S^{P}_{YM} = 
	\sum_{a_0, a_1, a_2, a_3}
	U_{\lambda_0 \lambda_1}^{a_0a_1}U_{\lambda_1\lambda_2}^{a_1a_2}
	U_{\lambda_2\lambda_3}^{a_2a_3}U_{\lambda_3\lambda_0}^{a_3a_0}
\ee
is the trace of the holonomy around $P$.
Here $U_{\lambda_i\lambda_j}^{ab}$ are the matrix elements of 
$h^{\Lambda'}_{\lambda_i\lambda_j}$ with respect to a fixed
unitary irreducible representation $\rho$ of~$G$,  
defined in \eqref{Defn Gauge Field}. 

Cyclic permutations of the four vertices yield the same plaquette
(and the same contribution), and do not contribute to the sum.
Changing the orientation from 
$\lra{\lambda_0\lambda_1\lambda_2\lambda_3}$
to 
$\lra{\lambda_3\lambda_2\lambda_1\lambda_0}$
changes the oriented plaquette, and yields an extra contribution to the sum.
Since $\overline{U}_{\lambda\lambda'}^{ab} = U_{\lambda'\lambda}^{ba}$,
one checks that
this is the complex conjugate of the original contribution. In particular,
$S_{YM}$ is an hermitian element of $\A^{\Lambda}_{B} \subseteq \A_{B}$. 

We now argue that the Wilson action for Yang-Mills theory defines a reflection-positive 
function in the sense of Theorem \ref{thm:RPgauge}.
The idea of our proof is to use the new vertices $\lambda'' \in \Lambda''$ on the
plaquettes, halfway between every pair $\lambda, \lambda' \in \Lambda'$
of neighboring old vertices.  
These are the black vertices in Figure~\ref{fig:Plaquette}.
(Actually, only the extra degrees of freedom on the reflection plane 
are needed, but the other ones are left in for symmetry reasons.)

In order to prove reflection positivity, we 
express $S^{P}_{YM}$ in terms of the basis 
$B_{IJ} = \Theta(U_I)U_J$, cf \eqref{eq:matrixc}.
If $\lambda_{ij}\in \Lambda''$ is the midpoint between $\lambda_i,\lambda_j \in \Lambda'$,
then 
$U_{\lambda_i\lambda_j}^{ab} = 
\sum_{c}U_{\lambda_i\lambda_{ij}}^{ac}
U_{\lambda_{ij}\lambda_j}^{cb}$.
Expanding \eqref{eq:YMplaquette} for a plaquette 
$P = \lra{\lambda_0\lambda_1\lambda_2\lambda_3}$ 
that intersects the reflection plane in 
$\lambda_{01}$ and $\lambda_{23}$
and using \eqref{eq:reflectU},
 we find
\beqs
S^{P}_{YM} &=& 
\sum_{a_{01}a_{23}}\Theta\Big(
\sum_{a_1,a_{12},a_2}
U^{a_{01}a_1}_{\lambda_{01}\lambda_1}
U^{a_1a_{12}}_{\lambda_{1}\lambda_{12}}
U^{a_{12}a_2}_{\lambda_{12}\lambda_2}
U^{a_{2}a_{23}}_{\lambda_{2}\lambda_{23}}
\Big)
\times \\
& &
\sum_{a_1,a_{12},a_2}
U^{a_{01}a_1}_{\lambda_{01}\lambda_1}
U^{a_1a_{12}}_{\lambda_{1}\lambda_{12}}
U^{a_{12}a_2}_{\lambda_{12}\lambda_2}
U^{a_{2}a_{23}}_{\lambda_{2}\lambda_{23}}\,.
\eeqs
From this, we see that the matrix of coupling constants across 
the reflection plane for $S_{YM}$ is positive semidefinite.

\begin{figure}[h!]
\begin{center}
\resizebox{0.42\textwidth}{!}{%
	\begin{picture}(160,140)(70,13)
		\put(0,0){\includegraphics[scale=2]{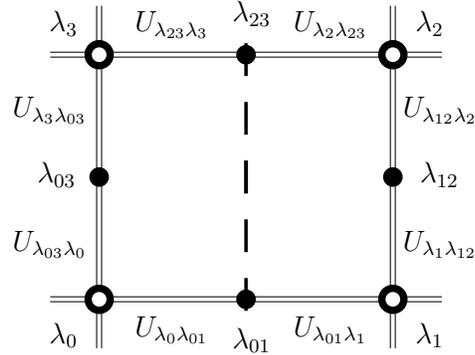}}
		\put(80,12){$\lambda_0$}
		\put(155,10){$\lambda_{01}$}
		\put(230,12){$\lambda_1$}
		\put(232,78){$\lambda_{12}$}
		\put(230,140){$\lambda_2$}
		\put(155,144){$\lambda_{23}$}
		\put(80,140){$\lambda_3$}
		\put(75,78){$\lambda_{03}$}
		\put(180,15){$U_{\lambda_{01}\lambda_{1}}$}
		\put(225,50){$U_{\lambda_{1}\lambda_{12}}$}
		\put(225,105){$U_{\lambda_{12}\lambda_{2}}$}
		\put(180,140){$U_{\lambda_{2}\lambda_{23}}$}
		\put(115,140){$U_{\lambda_{23}\lambda_{3}}$}
		\put(65,50){$U_{\lambda_{03}\lambda_{0}}$}
		\put(65,105){$U_{\lambda_{3}\lambda_{03}}$}
		\put(115,15){$U_{\lambda_{0}\lambda_{01}}$}
	\end{picture}
}
\end{center}
\caption{%
\small Illustration of $S^{P}_{YM}$ for a single plaquette. Note that
$\Theta(U_{\lambda_{01}\lambda_1}) = U^{T}_{\lambda_0\lambda_{01}}$,
\ldots, $\Theta(U_{\lambda_{2}\lambda_{23}}) = U^{T}_{\lambda_{23}\lambda_{3}}$. 
}\label{fig:Plaquette}
\end{figure}

\begin{corollary}\label{cer:rpwilson}
If $\mu$ is the Haar measure on $G^{\abs{E}}$, then the expectation
$\mathbb{E}_{YM} \colon \A_{B} \rightarrow \C$ defined by
\[
 \mathbb{E}_{YM}(A) = \int_{G^{\abs{E}}}
\exp(-{\textstyle \frac{1}{2g_0^2}}S_{YM})A(h)  \;\mu(dh)
\]
is reflection positive on $\A^+_{B}$ for all values of $g_0$.
\end{corollary}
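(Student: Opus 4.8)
The plan is to deduce the statement directly from Theorem~\ref{thm:RPgauge}. For any real coupling $g_{0}\neq 0$ the parameter $\beta=\tfrac{1}{2g_{0}^{2}}$ is a nonnegative real number, and $\mathbb{E}_{YM}=\mathbb{E}_{\beta S_{YM}}$ for the Wilson action $S_{YM}=\sum_{P}S^{P}_{YM}$ of \eqref{eq:YMplaquette}. First I would record that $S_{YM}$ is reflection invariant, $\Theta(S_{YM})=S_{YM}$: the reflection $\vartheta$ permutes the oriented coarse plaquettes while reversing their orientation, $S_{YM}$ is the sum of $S^{P}_{YM}$ over \emph{all} oriented plaquettes, and the matrix elements transform by \eqref{eq:reflectU}. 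Hence Theorem~\ref{thm:RPgauge} applies, and it remains to show that the submatrix $(J^{0}_{IJ})_{\mathcal{I}\setminus\{I_{0}\}}$ of the coupling matrix of $S_{YM}$ defined through \eqref{eq:matrixc} --- the entries with $I,J\neq I_{0}$ --- is positive semidefinite.

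Next I would split the coupling matrix according to the plaquette decomposition and isolate the terms that can affect the across-plane submatrix. Because the coarse lattice $\Lambda'$ is disjoint from the fixed-point set $P$ (which lies halfway between coarse sites), every coarse plaquette is of exactly one of three types: it lies entirely in $\Lambda_{+}$, it lies entirely in $\Lambda_{-}$, or it straddles the reflection plane, meeting it in precisely the two edge midpoints $\lambda_{01},\lambda_{23}\in\Lambda_{0}$. If $P$ lies in $\Lambda_{+}$ then $S^{P}_{YM}$ is a product of holonomy matrix elements along bonds inside $\Lambda_{+}$, so it lies in $\A^{B}_{+}$; since $\Theta(U_{I_{0}})=1$ and $\{U_{I}\}$ is an orthonormal basis of $\A^{B}_{+}$, one has $B_{I_{0}J}=U_{J}$, and by uniqueness of the expansion in the orthogonal basis $B_{IJ}=\Theta(U_{I})U_{J}$ such an element only involves terms with $I=I_{0}$. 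Symmetrically, a plaquette inside $\Lambda_{-}$ contributes only terms with $J=I_{0}$. Hence the non-straddling plaquettes contribute nothing to $(J^{0}_{IJ})_{\mathcal{I}\setminus\{I_{0}\}}$.

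It then suffices to treat a straddling plaquette $P$, and this is the heart of the matter. Inserting the midpoint vertices into \eqref{eq:YMplaquette} and applying \eqref{eq:reflectU}, one rewrites $S^{P}_{YM}$ exactly as in the computation displayed just before the corollary, namely
\[
S^{P}_{YM}=\sum_{a_{01},a_{23}}\Theta\big(X^{P}_{a_{01}a_{23}}\big)\,X^{P}_{a_{01}a_{23}},
\]
where $X^{P}_{a_{01}a_{23}}\in\A^{B}_{+}$ is the half of the plaquette holonomy lying on the positive side of the reflection plane (if the computation naturally produces the negative half, use that $\A^{B}_{-}$ commutes with $\A^{B}_{+}$ to re-express $\Theta(X)X$ with the argument back in $\A^{B}_{+}$). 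Expanding $X^{P}_{a_{01}a_{23}}=\sum_{I}x^{P}_{a_{01}a_{23},I}\,U_{I}$ in the orthonormal basis of $\A^{B}_{+}$ and using antilinearity of $\Theta$, the contribution of $P$ to the coupling matrix of $S_{YM}$ is the Gram matrix $J^{P}_{IJ}=\sum_{a_{01},a_{23}}\overline{x^{P}_{a_{01}a_{23},I}}\,x^{P}_{a_{01}a_{23},J}$, which is positive semidefinite; hence so is its principal submatrix with $I,J\neq I_{0}$. Summing over the finitely many straddling plaquettes, $(J^{0}_{IJ})_{\mathcal{I}\setminus\{I_{0}\}}=\sum_{P}(J^{P}_{IJ})_{I,J\neq I_{0}}$ is a sum of positive semidefinite matrices, hence positive semidefinite, and Theorem~\ref{thm:RPgauge} yields reflection positivity of $\mathbb{E}_{YM}$ on $\A^{B}_{+}$ for every $g_{0}$.

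The main obstacle is not in the bookkeeping above but in the algebraic identity that powers the third paragraph: the fact that the Wilson plaquette term --- built for gauge invariance and the right continuum limit --- can be brought into the manifestly positive form $\sum_{\alpha}\Theta(X_{\alpha})X_{\alpha}$ once one introduces midpoint vertices on the reflection plane. This rewriting is exactly what is carried out in the paragraph preceding the corollary, so for the corollary itself the remaining work is the routine but careful verification of the plaquette trichotomy and of the claim that interior plaquettes leave the across-plane submatrix untouched.
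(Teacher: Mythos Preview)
Your proposal is correct and follows essentially the same route as the paper: the paper's proof is the single line ``This follows from Theorem~\ref{thm:RPgauge} by the previous discussion,'' and that discussion establishes precisely the positive semidefiniteness of the across-plane coupling matrix via the midpoint-vertex rewriting of $S^{P}_{YM}$ that you invoke. Your write-up merely makes explicit the plaquette trichotomy and the observation that plaquettes lying entirely on one side contribute only to the $I=I_{0}$ or $J=I_{0}$ rows and columns, details the paper leaves implicit.
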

\begin{proof}
This follows from 
Theorem~\ref{thm:RPgauge}
by the previous discussion.
\end{proof}
In particular, the expectation $\mathbb{E}_{YM}$ is reflection positive 
on the subalgebra $\A^{B}_{\Lambda'}$ of functions that depend only 
on the bond variables $h^{\Lambda}_{\lambda'\kappa'}$ between 
points $\lambda', \kappa'$ in the `coarse' lattice $\Lambda'$.

Note that since our derivation does not use gauge invariance, we get reflection 
positivity of the full algebra, not just the gauge invariant part.

\subsection{Fermions in Lattice Gauge Theory}

The fermionic degrees of freedom live only on the `coarse' sublattice $\Lambda'$, 
(the white dots in Figure~\ref{fig:QCDlattice}).

To a single site $\lambda\in \Lambda'$, we associate the Grassmann 
algebra $\A^{F}_{\lambda} = \bigwedge V$.
Here $V = W \oplus W^*$, where  
$W = W_{s}\otimes W_{\rho}$ is the tensor product 
of a $\mathrm{Cl}(\R^4)$-representation $W_{s}$ and a  
unitary $G$-representation $W_{\rho}$.
Both $G$ and $\mathrm{Cl}(\R^4)$ act from the left on $W$, and from the right 
on $W^*$.

The algebra of observables for the fermionic part of the theory is 
\be\label{eq:grassmanQCD}
	\A^{F} = \bigwedge \bigoplus_{\lambda\in \Lambda}(W\oplus W^*)\,,
\ee
and the full algebra of observables is $\A = \A^{B}\otimes \A^{F}$.

Choose a basis $\psi_{\alpha a}$ of $W$, and denote the dual basis 
of $W^*$ by $\overline{\psi}_{\alpha a}$.
The map $\psi_{\alpha a} \mapsto \overline{\psi}_{\alpha a}$
extends to an antilinear isomorphism $\psi \mapsto \overline{\psi}$
from $W$ to $W^*$.
From the basis of $W \oplus W^*$,
we obtain 
anticommuting generators
$\psi_{\alpha a}$ and $\overline{\psi}_{\alpha a}$ of~$\A^{F}_{\lambda}$.
Using these, we find
anticommuting generators $\psi_{\alpha a}(\lambda)$ and 
$\overline{\psi}_{\alpha a}(\lambda')$ of $\A^{F}$,
\[
\{\psi_{\alpha a}(\lambda), \psi_{\alpha' a'}(\lambda')\} = 
\{\psi_{\alpha a}(\lambda), \overline{\psi}_{\alpha' a'}(\lambda')\} = 
\{\overline{\psi}_{\alpha a}(\lambda), \overline{\psi}_{\alpha' a'}(\lambda')\} = 0\,.
\]

\subsubsection{The Reflection}
Assume that $\vartheta \colon \Lambda \rightarrow \Lambda$ 
flips a single coordinate $x^{\sigma}$, and that the fixed point set $P$
is as in \S\ref{sec:RPgluons}. 
Then the corresponding reflection 
$\Theta \colon \A \rightarrow \A$ is the unique antilinear homomorphism 
satisfying 
\beq
	\Theta \overline{\psi}_{\lambda} &=& -i \gamma_{\sigma} \psi_{\vartheta(\lambda)}\\
	\Theta \psi_{\lambda} &=& -i \overline{\psi}_{\vartheta(\lambda)}\gamma_{\sigma}\\
	\Theta(F)(h_{\lra{\lambda,\lambda'}}) &=& \overline{
		     F(h_{\lra{\vartheta(\lambda),\vartheta(\lambda')}})
		}
\eeq
for all $F \in \A^{B}$ and $\psi \in W$, $\overline{\psi} \in \overline{W}$.
Here, the $\gamma_{\mu}$ are \emph{euclidean} Dirac matrices
satisfying $\{\gamma_{\mu}, \gamma_{\nu}\} = 2\delta_{\mu\nu}$
and $\gamma_{\mu}^{\dagger} = \gamma_{\mu}$. 

\begin{remark}
Note that we require $\Theta$ to be an antilinear \emph{homomorphism},
satisfying
$\Theta(AB) = \Theta(A)\Theta(B)$.
This deviates slightly from e.g.\ \cite{OS78, S82}, where 
an antilinear \emph{anti-homomorphism} $\Theta_a$ is used, satisfying
$\Theta_{a}(AB) = \Theta_{a}(B)\Theta_{a}(A)$.
For super-commutative algebras such as $\A$, one checks that 
homomorphisms are related to anti-homomorphisms by
$\Theta_{a}(A) = i^{\abs{A}^2}\Theta(A)$ for homogeneous $A \in \A$,
cf. Remark~\ref{rk:switch}.
\end{remark}

The algebra
$\A_+$ is defined as $\A_+ = \A^{B}_{+} \otimes \A^{F}_+$.  Here
$\A_{+}^{B} = L^{\infty}(G^{\abs{E_{+}}})$ as before, and 
 $\A^{F}_{+}$ is the Grassmann algebra
\[
	\A_+^F = \bigwedge 
	\bigoplus_{\lambda \in \Lambda'_+ } (W \oplus W^*)\,.
\]

As usual, we use a basis of $\A_+$ to construct a basis of the even 
subalgebra $\A^0 \subseteq \A$ 
that is well adapted to the reflection.
Recall that $\A^{B}_{+}$ has the basis $U_{I_{B}}$ described in 
equation~\eqref{eq:bosonbasis},
labelled by the set $\mathcal{I}_{B} = \abs{E_{+}}^{X}$.
A basis 
\[\Psi_{I_{F}} = \psi_{\alpha_1 a_1}(\lambda_1) \wedge \cdots \wedge 
\psi_{\alpha_{k}a_{k}}(\lambda_k)\] 
of $\A_{+}^{F}$ can be constructed
as in Section~\ref{sec:fermpsi}. 
Since the fermions only live on the `coarse' lattice $\Lambda'$
which does not intersect the fixed point set,
this basis is labelled by  
$I_{F}$  in 
$\mathcal{I}_{F} = \mathcal{P}(T\sqcup \overline{T} \times \Lambda'_{+})$,
where $T$ is the set of labels $(\alpha,a)$ of basis vectors of $W$. 
%
%

Finally, we obtain a basis $C_{I} = U_{I_B} \otimes \Psi_{I_F}$ of $\A_{+}$, 
labelled by $\mathcal{I} = \mathcal{I}_{B} \times \mathcal{I}_{F}$.
The identity is labelled by $I_0 = (I^B_{0}, I^{F}_{0})$, where $I^{B}_{0}$
labels the identity as before,
and $I^{F}_{0} = \emptyset$.

If we set $B_{IJ} = i^{\abs{C_{I}}^2}\Theta(C_I)C_J$ for basis elements 
$C_I$ and $C_J$ of the same $\Z_2$-degree, then 
any action $S \in \A^{0}$ of degree zero can be uniquely expressed as
\be\label{eq:matrixc2}
	S = \sum_{I,J \in \mathcal{I}} J_{IJ} B_{IJ}\,.
\ee
We denote the matrix of coupling constants by $(J_{IJ})_{\mathcal{I}}$.
The submatrix $(J^{0}_{IJ})_{\mathcal{I}\backslash I_{0}}$ of entries with 
$I, J \neq I_0$ is called the matrix of coupling constants across the reflection plane.

Let $\tau \colon \A \rightarrow \C$ be the tensor product of the Berezin integral
$\tau_{F} \colon \A_{F} \rightarrow \C$ of Section~\ref{sec:fermpsi}
and the expectation $\mathbb{E} \colon \A_{B} \rightarrow \C$
with respect to the Haar measure on $G^{\abs{E}}$.

\begin{theorem}\label{thm:RPgauge2}
Let $S \in \A$ be a reflection-invariant action of degree zero. 
Then the functional $\tau_{S}(A) = \tau(e^{- \beta S} A)$
is reflection positive for every $\beta \geq 0$, if and only if the matrix
$(J^{0}_{IJ})_{\mathcal{I}\backslash \{I_0\}}$ of coupling constants 
across the reflection plane is positive semidefinite.\end{theorem}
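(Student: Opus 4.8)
The plan is to deduce the statement from the general characterization Theorem~\ref{Thm:neccandsuff}, applied to the algebra $\A = \A^{B}\otimes\A^{F}$ with $\A_{+} = \A^{B}_{+}\otimes\A^{F}_{+}$, the reflection $\Theta$ described above, and the background functional $\tau = \mathbb{E}\otimes\tau_{F}$. The substance of the argument is therefore the verification of the hypotheses Q1--Q6 together with the two regularity conditions that $\exp\colon\A\to\A$ be continuous and $\beta\mapsto\exp(-\beta S)$ be differentiable at $\beta = 0$; the guiding principle is that each of these is a tensor product of the corresponding fact for the bosonic factor (Section~\ref{sec:RPgluons} and Theorem~\ref{thm:RPgauge}) and for the fermionic factor (Section~\ref{sec:fermpsi}).

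First I would fix the topological structure: equip $\A^{B} = L^{\infty}(G^{\abs{E}})$ with the topology of convergence in measure, $\A^{F}$ with its finite-dimensional topology, and $\A$ with the tensor product topology, so that $\A$ becomes a $\Z_{2}$-graded locally convex topological algebra whose grading lives on the Grassmann factor (Q1); and I would check on generators that $\Theta$ is an antilinear homomorphism with $\Theta^{2} = \mathrm{Id}$ inverting the grading. For Q2, the key point is that the fermions sit on the `coarse' sublattice $\Lambda'$, which $\vartheta$ splits into the \emph{disjoint} halves $\Lambda'_{\pm}$, and that the bonds likewise split disjointly as $E = E_{+}\sqcup E_{-}$: hence the Grassmann subalgebras $\A^{F}_{\pm}$, living on complementary generating sets, supercommute, the commutative subalgebras $\A^{B}_{\pm}$ commute, and tensoring gives the paracommutation relation $A_{-}A_{+} = (-1)^{\abs{A_{-}}\abs{A_{+}}}A_{+}A_{-}$ for homogeneous $A_{\pm}\in\A_{\pm}$. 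Density of the span of $\A_{-}\A_{+}$ follows from the Peter--Weyl expansion over the bonds of $E_{+}\sqcup E_{-}$ on the bosonic side and from $\A^{F} = \A^{F}_{-}\cdot\A^{F}_{+}$ on the fermionic side, exhibiting $\A$ as the fermionic $q$-double ($q = -1$, $\zeta = i$) of $\A_{+}$; the same disjointness gives $\A_{-}\cap\A_{+} = \C\one$.

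For Q3, Q5, Q6 I would use that $\tau = \mathbb{E}\otimes\tau_{F}$ is continuous and neutral (the Berezin integral is concentrated on the top exterior power of $\bigoplus_{\lambda\in\Lambda'}(W\oplus W^{*})$, which is even-dimensional, and $\mathbb{E}$ has degree zero), and that a tensor product of factorizing functionals factorizes: since $\mathbb{E}$ is factorizing by Proposition~\ref{prop:primRBclassicalboson} (no bonds are fixed by $\vartheta$) and $\tau_{F}$ is factorizing by Proposition~\ref{prop:primfunc} (its hypotheses holding since $-i\gamma_{\sigma}$ has determinant one on the spinor space), the functional $\tau$ factorizes, and hence is reflection positive by Proposition~\ref{invimppos}. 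For $\sharp$ I would take the tensor product of complex conjugation on $\A^{B}_{+}$ with the Hodge star on $\A^{F}_{+}$; after a Gram--Schmidt normalization the product basis $\{U_{I_{B}}\otimes\Psi_{I_{F}}\}$ of $\A_{+}$ becomes an unconditional, homogeneous Schauder basis with $\tau_{+}(C_{I}^{\sharp}C_{J}) = \delta_{IJ}$ (Peter--Weyl orthonormality on the bosonic factor, Proposition~\ref{Prop:berezinstrict} on the fermionic factor), yielding both the Schauder basis of Q4 and the strict positivity of Q6. The exponential regularity I would handle using that $\A^{B}$ is central in $\A$: decomposing $S = S_{0} + S'$ with $S_{0}\in\A^{B}$ the purely bosonic component and $S'$ in the nilpotent ideal generated by the Grassmann generators, one has $e^{-\beta S} = e^{-\beta S_{0}}\,e^{-\beta S'}$ where $e^{-\beta S'}$ is polynomial in $\beta$ and $e^{-\beta S_{0}}$ is the Banach-algebra exponential of a bounded function; continuity of $\exp$ and differentiability at $\beta = 0$ (with derivative $-S$) then follow.

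With all hypotheses in place, the conclusion is immediate: by Proposition~\ref{prop:refinvmatrix} the expansion $-S = \sum_{I,J}J_{IJ}B_{IJ}$ of \eqref{eq:matrixc2} is unique and satisfies $\Theta(B_{IJ}) = B_{JI}$, so reflection invariance of $S$ amounts to Hermiticity of $(J_{IJ})_{\mathcal{I}}$, and Theorem~\ref{Thm:neccandsuff} then gives that $\tau_{\beta S}$ is reflection positive for all $\beta\geq 0$ exactly when the submatrix $(J^{0}_{IJ})_{\mathcal{I}\setminus\{I_{0}\}}$ of couplings across the reflection plane is positive semidefinite. I expect the main obstacle to lie in the bookkeeping of Q4--Q6 — establishing that the tensor product of the bosonic Haar/Peter--Weyl data and the fermionic Berezin/Hodge data still factorizes, is still strictly positive, and carries a homogeneous orthonormal Schauder basis — together with the mildly delicate handling of $\exp$ in the non-normable topology of convergence in measure on $\A^{B}$; both, however, are dispatched exactly as in the purely bosonic Theorem~\ref{thm:RPgauge} and the purely fermionic treatment of Section~\ref{sec:fermpsi}.
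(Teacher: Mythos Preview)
Your proposal is correct and follows essentially the same route as the paper: verify that the background functional $\tau = \mathbb{E}\otimes\tau_{F}$ is continuous, neutral, factorizing and reflection positive by combining Proposition~\ref{prop:primRBclassicalboson} on the bosonic side with Proposition~\ref{prop:primfunc} on the fermionic side, and then invoke Theorem~\ref{Thm:neccandsuff}. The paper's own proof is a terse three-line version of exactly this argument; your write-up simply unpacks the hypothesis checks Q1--Q6 and the exponential regularity that the paper leaves implicit.
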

\begin{proof}
By Proposition~\ref{prop:primfunc} with $\overline{\rho} = i \gamma_{\tau}$, the 
continuous functional
$\tau_{F}$ is factorizing and reflection positive.
By Proposition~\ref{prop:primRBclassicalboson},
the same is true for $\mathbb{E}$, hence also for the functional
$\tau \colon \A \rightarrow \C$.
The result then follows from Theorem~\ref{Thm:neccandsuff}.
\end{proof}

\subsection{Lattice QCD}

We apply this theorem to the lattice QCD-action
$S = S_{YM} + S_F$. Here, the fermion action $S_F = S_{FM} + S_{FK}$ is the sum of
a mass term and a kinetic term,
\beq
		S_{FM} &=& \frac{1}{2}\sum_{\lambda \in \Lambda'} \overline{\psi}_{a\alpha}(\lambda)
	\Gamma^{\alpha\beta} \psi_{a\beta}(\lambda)\,,\label{eq:massterm}\\
	S_{FK} &=& 
	\frac{\kappa}{2} \sum_{\lra{\lambda\lambda'} \in E_{\Lambda'}}
	\overline{\psi}_{\alpha a}(\lambda)
	\Gamma^{\alpha\beta}_{\lambda' - \lambda}
	U_{\lambda\lambda'}^{ab}\psi_{\beta b}(\lambda')\,. \label{eq:kineticterm}
\eeq
The first sum is over sites $\lambda$ in the `coarse' lattice $\Lambda'$, 
and the second sum is over all \emph{oriented} nearest neighbor bonds in $\Lambda'$.
(So every pair gives two contributions.)
Recall that  
$U_{\lambda\lambda'}^{ab} = \sum_{c}U_{\lambda\lambda''}^{ac}U_{\lambda''\lambda'}^{cb}$ for the site
$\lambda''\in \Lambda''$ halfway in between 
$\lambda,\lambda' \in \Lambda'$.

We prove reflection positivity for couplings
\be\label{eq:parameters}
	\Gamma = (M-4s)\one,\quad 
	\Gamma_{\lambda' - \lambda} = \pm\gamma_{\mu} + s\one
	\quad \text{if}\quad \lambda' -\lambda = \pm r \vec{e}_{\mu}\,,
\ee
where $s=0$ or $s=1$. The choice $s=0$ corresponds to the `naive' action
(which leads to fermion doubling in the continuum limit), and the choice 
$s=1$ corresponds to Wilson's action.



Note that $S_{F}$ is reflection symmetric, $\Theta(S_{F}) = S_{F}$.
Indeed, a straightforward calculation shows that this follows from 
$(\gamma_{\sigma} \overline{\Gamma} \gamma_{\sigma})^T = \Gamma$
for the mass terms, and from
$\Gamma_{-\vartheta(\lambda' - \lambda)} = 
(\gamma_{\sigma}\overline{\Gamma}_{\lambda' - \lambda}\gamma_{\sigma})^{T}$
for the kinetic terms. 

By Theorem~\ref{thm:RPgauge2}, the mass terms in \eqref{eq:massterm} 
are irrelevant, as they only contain terms in either $\A_+$ or $\A_{-}$.
The same holds for the kinetic 
terms with both $\lambda$ and $\lambda'$ in either $\Lambda'_{+}$
or $\Lambda'_{-}$.
Therefore, it suffices to consider terms of the form
\be\label{eq:term}
\sum_{\alpha,\beta,a,b,c}
\overline{\psi}_{\alpha a}(\lambda_-)
	\Gamma^{\alpha\beta}_{\lambda_+ - \lambda_-}
	U_{\lambda_-\lambda_0}^{ac} U_{\lambda_0\lambda_+}^{cb}\psi_{\beta b}(\lambda_+)\,,
\ee
with $\lambda_{0} \in \Lambda_0$, and either 
$\lambda_{\pm} = \lambda_0 \pm \frac{1}{2}r \vec{e}_{\sigma}$
or $\lambda_{\pm} = \lambda_0 \mp \frac{1}{2}r \vec{e}_{\sigma}$.
Note that
\[
	\Theta\Big(\sum_{b}U_{\lambda_0\lambda_+}^{cb}\psi_{\beta b}(\lambda_+)\Big)
	= -i \sum_{a} \overline{\psi}_{a\alpha}(\lambda_-)  \gamma_{\sigma}^{\alpha\beta}U_{\lambda_-\lambda_0}^{ac}\,.
\]
If $\lambda_{\pm} \in \Lambda_{\pm}$ and 
$\Gamma_{\lambda_{+} - \lambda_-} = s \one + \gamma_{\sigma}$, then
the expression \eqref{eq:term} can be written as $i \Theta(X_{\alpha}) X_{\alpha}$, where
$X_{\alpha}\in \A^+$ is given by
\[
	X_{\alpha} = \kappa_{\beta\alpha}U_{\lambda_0\lambda_+}^{cb}\psi_{\beta b}(\lambda_+)\,,
\quad\text{with}\quad 
\kappa = {\textstyle\frac{1}{\sqrt{1+s}}}(\one + s\gamma_{\sigma})\,.
\]
If $\lambda_{\pm} \in \Lambda_{\mp}$ and 
$\Gamma_{\lambda_{+} - \lambda_-} = s \one - \gamma_{\sigma}$, then
\eqref{eq:term} can be written as $i X_{\alpha}\Theta(X_{\alpha})$, where
$X_{\alpha}\in \A^-$ is given by
\[
	X_{\alpha} = \kappa_{\beta\alpha}U_{\lambda_0\lambda_+}^{cb}\psi_{\beta b}(\lambda_+)\,,
\quad\text{with}\quad 
\kappa = {\textstyle\frac{1}{\sqrt{1+s}}}(\one - s\gamma_{\sigma})\,.
\]
From this, one concludes that the matrix of coupling constants across the reflection plane 
is positive semidefinite.

\begin{theorem}
For the lattice QCD Lagrangian 
$S = S_{YM} + S_{F}$, the linear functional
$\tau_{S} \colon \A \rightarrow \C$ defined by
$A \mapsto \tau(\exp(-S)A)$ is reflection positive with respect 
to $\A_{+}$
for $s\in \{0,1\}$, for all $M, g_{0}\in \R$, and for all $\kappa \geq 0$.
\end{theorem}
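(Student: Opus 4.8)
The plan is to apply Theorem~\ref{thm:RPgauge2} to the full lattice-QCD action $S = S_{YM} + S_F$. This reduces everything to two points: that $S$ is a reflection-invariant element of degree zero, and that its matrix $(J^0_{IJ})_{\mathcal{I}\setminus\{I_0\}}$ of coupling constants across the reflection plane is positive semidefinite. The background functional $\tau = \tau_F\otimes\mathbb{E}$ already meets the hypotheses of Theorem~\ref{thm:RPgauge2}: it is continuous, factorizing and reflection positive by the proof of that theorem (with $\overline\rho = i\gamma_\sigma$ for the Berezin part $\tau_F$), so nothing new is needed there.

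The easy hypotheses I would dispose of first. Degree zero is clear, since each monomial of $S_F$ carries one $\psi$ and one $\overline\psi$ while $S_{YM}\in\A^B$ is even. Reflection invariance has essentially been checked above: $\Theta(S_{YM}) = S_{YM}$ by \eqref{eq:reflectU} together with $\overline U^{ab}_{\lambda\lambda'} = U^{ba}_{\lambda'\lambda}$ as in \S\ref{Sect:LatticeYMTheory}, and $\Theta(S_F) = S_F$ for the couplings \eqref{eq:parameters} from the identities $(\gamma_\sigma\overline\Gamma\gamma_\sigma)^T = \Gamma$ and $\Gamma_{-\vartheta(\lambda'-\lambda)} = (\gamma_\sigma\overline\Gamma_{\lambda'-\lambda}\gamma_\sigma)^T$ recorded just before the statement.

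For the coupling matrix I would combine the two explicit computations already done with one bookkeeping observation. By uniqueness of the expansion in the basis $B_{IJ} = \Theta(C_I)\circ C_J$ with $C_I = U_{I_B}\otimes\Psi_{I_F}$, the coupling matrix of $S$ is the sum of those of $S_{YM}$, $S_{FM}$ and $S_{FK}$; crucially $S_{YM}$ contributes only to the purely bosonic block ($I_F = J_F = \emptyset$), while the surviving part of $S_{FK}$ contributes only to the block in which $C_I$ and $C_J$ each carry exactly one fermion generator, so the two contributions occupy orthogonal blocks. The mass terms $S_{FM}$, the kinetic bonds of $S_{FK}$ with both endpoints on one side of $P$ (permissible since $\Lambda'_0 = \emptyset$), and the non-crossing plaquettes of $S_{YM}$ all lie in $\A_+$ or $\A_-$ and hence do not contribute to $(J^0_{IJ})$ at all. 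It remains to see that the Yang--Mills block and the fermion block are each positive semidefinite. The first is the computation before Corollary~\ref{cer:rpwilson}: via the midpoint refinement $U^{ab}_{\lambda_i\lambda_j} = \sum_c U^{ac}_{\lambda_i\lambda_{ij}}U^{cb}_{\lambda_{ij}\lambda_j}$ and \eqref{eq:reflectU}, each crossing plaquette is $\sum_{a_{01}a_{23}}\Theta(Y_{a_{01}a_{23}})\circ Y_{a_{01}a_{23}}$ with $Y\in\A^B_+$, hence a sum of elements of $\cK_+$. The second is the computation just before the statement: each crossing term \eqref{eq:term} equals $\Theta(X_\alpha)\circ X_\alpha$ for a suitable odd $X_\alpha\in\A_+$ when $\lambda_\pm = \lambda_0\pm\tfrac12 r\vec e_\sigma$, using the factorization of $s\one\pm\gamma_\sigma$ through $\kappa = (1+s)^{-1/2}(\one\pm s\gamma_\sigma)$, and in the reversed case \eqref{eq:term} equals $iX_\alpha\Theta(X_\alpha) = \Theta(Y_\alpha)\circ Y_\alpha$ with $Y_\alpha := \Theta(X_\alpha)\in\A_+$ odd, again an element of $\cK_+$ (here the twist $\zeta = i$ enters, cf.\ \eqref{eq:TwistedProduct}). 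By Proposition~\ref{prop:charpos} each block is positive semidefinite, so the direct sum is, and Theorem~\ref{thm:RPgauge2} finishes the proof; since the argument is uniform in the nonnegative prefactors $\tfrac{1}{2g_0^2}$, $\tfrac12$ and $\tfrac\kappa2$, it applies for all $M, g_0\in\R$ and $\kappa\geq 0$.

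The step I expect to be the main obstacle is precisely this bookkeeping around the crossing kinetic bonds: each such bond is summed twice in \eqref{eq:kineticterm}, with $\gamma_\sigma$ entering $\Gamma_{\lambda'-\lambda}$ and $\Gamma_{\lambda-\lambda'}$ with opposite signs by \eqref{eq:parameters}, and one must verify that the rewriting genuinely yields a \emph{block-diagonal} coupling matrix --- with the correct $\zeta = i$ twist and with no residual off-diagonal or cross-block entries between the bosonic and fermionic pieces --- so that positivity of the whole reduces to positivity of the two blocks computed explicitly above. Once that is confirmed the conclusion is immediate.
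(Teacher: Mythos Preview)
Your proposal is correct and follows essentially the same route as the paper: the theorem is stated there without a separate proof block, relying on the preceding discussion, and you have organized precisely that discussion (reflection invariance of $S$, irrelevance of mass and non-crossing terms, the $\Theta(Y)\circ Y$ rewriting of crossing plaquettes, and the $i\Theta(X_\alpha)X_\alpha$ rewriting of crossing kinetic bonds) into an application of Theorem~\ref{thm:RPgauge2}. Your explicit block-diagonal bookkeeping---separating the purely bosonic Yang--Mills block from the one-fermion kinetic block---is a useful clarification that the paper leaves implicit when it says ``from this, one concludes that the matrix of coupling constants across the reflection plane is positive semidefinite.''
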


Although this theorem holds for  reflections 
in each of the four coordinate directions, the 
physical Hilbert space $\cH_{\Theta}$ is derived from reflections 
in the time direction $x^0$. It is
the completion of $\A_{+}$
with respect to the positive semidefinite inner product
\[
	\lra{A_+,B_+}_{\Theta} = \tau(e^{-S} \Theta(A_+)B_+)\,.
\]

\subsection{Gauge Transformations}

Denote the 4d-gauge group by $G^{\Lambda}$, and 
the 3d-gauge group by $G^{\Lambda_{0}}$.
We identify $G^{\Lambda_0}$ with the quotient of 
$G^{\Lambda}$ by the normal subgroup 
\[N = \{g\in G^{\Lambda}\,;\, g|_{\Lambda_0} = \one|_{\Lambda_0}\}\,.\]

Every $g\in G^{\Lambda}$ induces an automorphism of $\A$, namely the unique one 
satisfying $h_{\lambda\lambda'} \mapsto g_{\lambda} h_{\lambda\lambda'}g^{-1}_{\lambda'}$, 
$\psi_{\lambda} \mapsto \rho(g_{\lambda})\psi_{\lambda}$, and
$\overline{\psi}_{\lambda} \mapsto \overline{\psi}_{\lambda}\rho(g^{-1}_{\lambda})$.
(The fermions transform under a unitary representation $\rho$ of $G$.)
Note that $\Theta \alpha_{g}\Theta = \alpha_{\theta(g)}$, with
$\theta(g)_{\lambda} = g_{\vartheta(\lambda)}$.

\begin{proposition}
This yields a unitary representation of $G^{\Lambda}$ on $\cH_{\Theta}$,
which factors through the quotient $G^{\Lambda_0} \simeq G^{\Lambda}/ N$.
\end{proposition}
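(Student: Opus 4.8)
The plan is to verify the three claims in turn: that $g \mapsto \alpha_g$ descends to a well-defined action on the quotient $\cH_{\Theta}$ of $\A_+$, that this action is by unitaries for the inner product $\lra{\cdot,\cdot}_{\Theta}$, and that it factors through $G^{\Lambda}/N \simeq G^{\Lambda_0}$. First I would observe that for $g \in G^{\Lambda}$ whose support is disjoint from the reflection plane, i.e. $g_\lambda = \one$ for $\lambda \in \Lambda_0$, the automorphism $\alpha_g$ does not mix $\A_+$ and $\A_-$: it restricts to an automorphism of $\A_+$ (and separately of $\A_-$), since every bond variable $h_{\lambda\lambda'}$ and every generator $\psi_\lambda, \overline\psi_\lambda$ appearing in $\A_+$ is transformed using only $g_\mu$ with $\mu \in \Lambda_+$. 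In particular $\alpha_g$ restricts to an automorphism of $\A_+$ for \emph{all} $g$ up to the identification with $G^{\Lambda_0}$; but more carefully, for general $g$ the restriction $\alpha_g|_{\A_+}$ still makes sense because $\A_+^B = L^\infty(G^{\abs{E_+}})$ and $\A_+^F = \bigwedge \bigoplus_{\lambda \in \Lambda'_+}(W \oplus W^*)$ involve only variables indexed by sites in $\Lambda_+$, and $g$ acts on those by $g_\mu$ with $\mu \in \Lambda_+$. So $\alpha_g$ gives a well-defined linear endomorphism of $\A_+$ for every $g$, and $g \mapsto \alpha_g|_{\A_+}$ is a group homomorphism into the automorphisms of $\A_+$.

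Next I would compute the effect of $\alpha_g$ on the inner product $\lra{A,B}_{\Theta} = \tau(e^{-S}\Theta(A)B)$ for $A,B \in \A_+$. The key algebraic identity is $\Theta \alpha_g \Theta = \alpha_{\theta(g)}$ with $\theta(g)_\lambda = g_{\vartheta(\lambda)}$, stated just before the proposition. Hence
\[
\lra{\alpha_g(A), \alpha_g(B)}_{\Theta} = \tau\!\left(e^{-S}\,\Theta(\alpha_g(A))\,\alpha_g(B)\right) = \tau\!\left(e^{-S}\,\alpha_{\theta(g)}(\Theta(A))\,\alpha_g(B)\right).
\]
Now I would split $g = g_- g_+$ (pointwise product, using that $\Lambda = \Lambda_- \cup \Lambda_+$) into the part supported on $\Lambda_+$ and the part supported on $\Lambda_-$, noting these commute. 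Then $\alpha_g(B) = \alpha_{g_+}(B)$ since $B \in \A_+$ only involves $\Lambda_+$-variables, and $\alpha_{\theta(g)}(\Theta(A)) = \alpha_{\theta(g_+)}(\Theta(A))$ with $\theta(g_+)$ supported on $\Lambda_-$, while $\Theta(A) \in \A_-$. The point is that $e^{-S}$ for the lattice QCD action is \emph{gauge invariant}: $S_{YM}$ is a trace of holonomies around plaquettes, $S_{FM}$ and $S_{FK}$ are gauge covariant contractions, so $\alpha_g(S) = S$ for every $g \in G^\Lambda$, hence $\alpha_h(e^{-S}) = e^{-S}$ for any $h$. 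Applying the gauge transformation $\alpha_h$ with $h$ chosen so that $h_\lambda = \theta(g_+)_\lambda$ on $\Lambda_-$ and $h_\lambda = g_+{}_\lambda^{-1}$... actually more directly: apply $\alpha_{(g,\theta(g_+)^{-1}\cdot)}$ — cleanest is to apply the single gauge transformation $\alpha_{k}$ with $k = g_+^{-1}$ on $\Lambda_+$ and $k = \theta(g_+)^{-1}$ on $\Lambda_-$ (trivial on $\Lambda_0$ provided $g$ is; I will handle the $\Lambda_0$ subtlety in the factorization step) to the whole expression inside $\tau$, using invariance of $\tau$ under $\alpha_k$ (which holds because $\tau$ is the product of the Haar-measure expectation, invariant under left/right translation, and the Berezin integral, invariant under $\mathrm{SL}$-transformations; here one needs $\rho(G) \subseteq \mathrm{SL}$, true for $G$ compact acting unitarily). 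This brings the expression back to $\tau(e^{-S}\Theta(A)B) = \lra{A,B}_\Theta$.

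Finally, for the factorization through $G^{\Lambda_0}$: if $g \in N$, i.e. $g_\lambda = \one$ for all $\lambda \in \Lambda_0$, then the bond variables $h_{\lambda\lambda'}$ with both endpoints in $\Lambda_+$ transform by $g_\lambda h g_{\lambda'}^{-1}$ with $g_\lambda, g_{\lambda'}$ possibly nontrivial, so $\alpha_g$ need not be the identity on $\A_+$ — but I claim it acts trivially on the quotient $\cH_\Theta = \A_+/\cN$. This is where the bracket computation above is used with $g \in N$: one shows $\lra{\alpha_g(A) - A, \alpha_g(A) - A}_\Theta = 0$, equivalently $\alpha_g(A)$ and $A$ have the same image in $\cH_\Theta$, by the same gauge-invariance argument — the gauge transformation needed to undo $\alpha_{g}$ on the "$+$ side" is $g^{-1}$ on $\Lambda_+$, which paired with the trivial transformation on $\Lambda_-$ (since $g|_{\Lambda_0} = \one$, this extends consistently) leaves $e^{-S}$ and $\tau$ invariant while converting $\Theta(\alpha_g(A)) \alpha_g(A)$ back to $\Theta(A) A$. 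Hence $\alpha_g$ is unitarily implemented by the identity on $\cH_\Theta$, and the representation descends to $G^\Lambda / N \simeq G^{\Lambda_0}$. Unitarity then follows from the norm-preservation computed above together with the fact that each $\alpha_g$ is surjective on $\A_+$ (inverse $\alpha_{g^{-1}}$), so the induced map on the completion is a unitary.

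\textbf{Main obstacle.} The delicate point is the bookkeeping in splitting $g = g_- g_+$ and choosing the correct compensating gauge transformation so that, on the one hand, $e^{-S}$ and $\tau$ are genuinely left invariant (requiring $\rho(G) \subseteq \mathrm{SL}(W)$ for Berezin-invariance and left-right invariance of Haar measure), and on the other hand the transformation correctly moves $\alpha_g$ off both $\Theta(A)$ and $B$ simultaneously without reintroducing cross-terms between $\A_+$ and $\A_-$ — in particular verifying that when $g|_{\Lambda_0} \neq \one$ (the general element, before passing to the quotient) one really does need to work modulo $\cN$ rather than expecting equality in $\A_+$, and confirming that the map $g \mapsto (\text{unitary on }\cH_\Theta)$ is continuous/strongly continuous if that is desired. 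I expect the argument itself to be short once the identity $\Theta\alpha_g\Theta = \alpha_{\theta(g)}$, gauge-invariance of $S$, and $\alpha$-invariance of $\tau$ are in hand; the subtlety is purely in tracking which sites each gauge factor acts on.
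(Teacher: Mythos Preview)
Your approach is essentially the same as the paper's: use $\Theta\alpha_g\Theta = \alpha_{\theta(g)}$, then invoke gauge invariance of $S$ and of $\tau$ to show norm preservation, and finally show $\|\alpha_g(A)-A\|_\Theta = 0$ for $g\in N$ by the same mechanism. The structure is right and the identification of the needed ingredients (including the side condition that $\alpha_g$ preserve the Berezin integral) is accurate.

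The one place where your bookkeeping is more tangled than it needs to be is the ``compensating gauge transformation'' step. Splitting $g = g_- g_+$ runs into exactly the $\Lambda_0$ trouble you flag, and your parenthetical ``trivial on $\Lambda_0$ provided $g$ is'' does not cover the general case needed for unitarity. The paper avoids this entirely: since $\Theta(A)\in\A_-$ depends only on $\Lambda_-$-variables and $B\in\A_+$ only on $\Lambda_+$-variables, and since $\theta(g)|_{\Lambda_0} = g|_{\Lambda_0}$ (as $\vartheta$ fixes $\Lambda_0$), one can define a \emph{single} element $\overline{g}\in G^\Lambda$ by $\overline{g}|_{\Lambda_+} = g|_{\Lambda_+}$ and $\overline{g}|_{\Lambda_-} = \theta(g)|_{\Lambda_-}$. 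Then $\alpha_{\theta(g)}(\Theta(A))\,\alpha_g(B) = \alpha_{\overline g}(\Theta(A)B)$ holds for every $g$, with no splitting and no $\Lambda_0$ case distinction, and gauge invariance of $S$ and $\tau$ finishes the unitarity computation in one line. The factorization step is then handled analogously with $g_+$ defined by $g_+|_{\Lambda_+}=g|_{\Lambda_+}$ and $g_+|_{\Lambda_-}=\one|_{\Lambda_-}$, which is consistent precisely because $g|_{\Lambda_0}=\one$.
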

\begin{proof}
Since $\alpha_{g}$ maps $\A_+$ to $\A_+$, we have
\beqs
\lra{\alpha_g (A_+), \alpha_{g}(B_+)}_{\Theta} &=& 
i^{\abs{A_+}^2}\tau(e^{-S} \Theta(\alpha_g (A_+))\alpha_g(B_+) )\\
&=& i^{\abs{A_+}^2}\tau(e^{-S} \alpha_{\theta(g)}(\Theta(A_+))\alpha_g(B_+) )\\
&=& i^{\abs{A_+}^2}\tau( e^{-S} \alpha_{\overline{g}}(\Theta(A_+)(B_+) )\,.
\eeqs
Here $\overline{g}$ is the gauge transformation with $\overline{g}|_{\Lambda_+} = g|_{\Lambda_+}$
and $\overline{g}|_{\Lambda_-} = \theta(g)|_{\Lambda_-}$. Since both $S$ and $\tau$
are gauge invariant, this equals $\lra{A_+,B_+}_{\Theta}$.
It follows that the null space of the positive semidefinite form is gauge invariant,
and that $G^{\Lambda}$ acts unitarily on $\cH_{\Theta}$. 

We show that $g$ acts trivially if $g|_{\Lambda_0} = \one|_{\Lambda_0}$.
For this, note that 
\[
\|\alpha_{g}(A_+) - A_+\|^2_{\Theta} = 2\lra{A_+,A_+}_{\Theta} - 2\mathrm{Re} \lra{A_+,\alpha_g(A_+)}_{\Theta}\,.
\]
If $g|_{\Lambda_0}$ is trivial, then
$\lra{A_+,\alpha_g(A_+)}_{\Theta} = \tau(e^{-S}\Theta(A_+)\alpha_{g}(A_+))$ is equal to 
$\tau(e^{-S}\alpha_{g_+}(\Theta(A_+)A_+)) = \lra{A_+,A_+}_{\Theta}$, 
with $g_+|_{\Lambda_+} = g|_{\Lambda_+}$
and $g|_{\Lambda_-} = \one|_{\Lambda_-}$. 
It follows that $\alpha_{g}$ acts trivally on $\cH_{\Theta}$ for $g\in N$, 
so the representation factors through the quotient $G^{\Lambda_0} \simeq G^{\Lambda} / N$.
\end{proof}

\begin{remark}
In particular, we retain an action of the global gauge group $G$, which sits inside 
$G^{\Lambda_0}$ as the group of constant $G$-valued functions.
This allows one to define charge operators on $\cH_{\Theta}$.  
\end{remark}


\setcounter{equation}{0}
\section{Parafermions}\label{sec:parafermionsNo2}

We characterize reflection positivity for parafermions.
Here, we need our lattice $\Lambda$
to be \emph{ordered}, and the reflection $\vartheta \colon \Lambda \rightarrow \Lambda$
to be \emph{order reversing} and \emph{fixed point free}.
We allow $\Lambda$ to be either finite or countably infinite, and we define 
$\Lambda_+\subseteq \Lambda$ as the maximal subset 
with $\vartheta(\Lambda_+) < \Lambda_+$.
The CPR algebra $\A(q,\Lambda)$, considered in 
\S\ref{sec:ParaCPR}, is then
 the $q$-double of the CPR algebra $\A(q,\Lambda_{+})$.
The `background functional' is the tracial state 
$\tau \colon \A(q,\Lambda) \rightarrow \C$ of 
Proposition~\ref{prop:paratraceisnice}. 

The operators $C_{I}$ of equation \eqref{BasisPF}, labelled by
$I \in \Z^{\Lambda_+}_{p}$, constitute a homogeneous Schauder 
basis of $\A(q,\Lambda_+)$ with respect to the norm topology,
satisfying B1--3.
We can thus form a basis $B_{IJ}$ of $\A^{0}(q,\Lambda)$ by
\[
	B_{IJ} = \Theta(C_{I})\circ C_{J}= \zeta^{\abs{I}^2}\Theta(C_I)C_J\,,
\]
labelled by $I, J \in \Z^{\Lambda_{+}}_{p}$ with $\abs{I} = \abs{J} \in \Z_p$.
Here $\abs{I} =\sum_{\Lambda_+} I_{\lambda}$ denotes
the degree of $C_I$ in $\Z_p$.  
Any element $H \in \A$ of degree zero therefore has a unique norm convergent expansion
\be\label{eq:parahamsum}
	-H = \sum_{I,J \in \Z^{\Lambda_{+}}_{p}} J_{IJ} B_{IJ}\,,
\ee
with \emph{coupling matrix} $(J_{IJ})_{\Z_{p}^{\Lambda_+}}$. 
Denote by $(J^0_{IJ})_{\Z_{p}^{\Lambda_+} \backslash \{0\}}$ the matrix of
\emph{couplings across the reflection plane}, namely
the submatrix of entries with with ${I,J \neq 0}$.

\begin{theorem}
Let $H \in \A(q,\Lambda)$ be a reflection invariant operator of degree zero.
Then the functional 
$\tau_{\beta H}(A) = \tau(A\,e^{-\beta H} )$ is reflection positive on 
$\A(q,\Lambda_+)$
for all $\beta \geq 0$ if and only if the matrix 
$(J^0_{IJ})_{\Z_{p}^{\Lambda_+} \backslash \{0\}}$
of coupling constants across the reflection plane is positive semidefinite.
\end{theorem}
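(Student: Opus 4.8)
The plan is to obtain this theorem as an essentially immediate corollary of the general characterization in Theorem~\ref{Thm:neccandsuff}. The work therefore consists in checking that the CPR setup of \S\ref{sec:ParaCPR} and the start of the present section fits the abstract framework, i.e.\ that the hypotheses Q1--Q6 hold together with the two regularity conditions used in \S\ref{Sect:SforRP}--\ref{Sect:N-SforRP}: that $\exp \colon \A \rightarrow \A$ is continuous, and that $\beta \mapsto \exp(-\beta H)$ is differentiable at $\beta = 0$.

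First I would record Q1--Q3. The CPR algebra $\A(q,\Lambda)$ is a $\Z_{p}$-graded $C^{*}$-algebra, hence a locally convex topological algebra, giving Q1; it is the $q$-double of $\A(q,\Lambda_{+})$, as noted at the start of this section, because the relations \eqref{eq:PF-1} yield the para-commutation relation \eqref{eq:exchange} for the subalgebras generated by $\{c_{\lambda}\}_{\lambda \in \Lambda_{\pm}}$ (using that $\vartheta$ reverses the order) and the span of $\A(q,\Lambda_-)\A(q,\Lambda_+)$ is norm dense, giving Q2; and $\tau$ is a continuous, neutral, reflection positive state by Proposition~\ref{prop:paratraceisnice}, giving Q3. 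Next I would take $\sharp$ to be the $*$-involution of the CPR algebra, which is antilinear and inverts the grading since $c_{\lambda}^{*} = c_{\lambda}^{-1}$ has degree $-1$ in $\Z_{p}$. The operators $C_{I}$ then furnish the homogeneous Schauder basis required by Q4, satisfying B1--B3; $\tau$ factorizes by Proposition~\ref{prop:paratraceisnice}, giving Q5; and since $\tau_{+}(C_{I}^{*}C_{J}) = \delta_{IJ}$ by Proposition~\ref{prop:prestateCPR}, faithfulness of $\tau_{+}$ yields $\tau_{+}(A^{\sharp}A) > 0$ for all nonzero $A \in \A(q,\Lambda_+)$, giving Q6. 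Since $\A(q,\Lambda)$ is a Banach algebra, the exponential series converges for every element, $\exp$ is continuous, and $\beta \mapsto \exp(-\beta H)$ is norm-differentiable at $\beta = 0$ with derivative $-H$, so the remaining hypotheses hold. Finally I would observe that the adapted basis $B_{IJ} = \Theta(C_{I}) \circ C_{J}$ of \eqref{eq:adaptedbasis} is exactly the basis appearing in the expansion \eqref{eq:parahamsum}, so the coefficients $J_{IJ}$ there are the matrix of coupling constants of \S\ref{sec:HomOrthBas} and $(J^{0}_{IJ})_{\Z_{p}^{\Lambda_+}\backslash\{0\}}$ is its submatrix of couplings across the reflection plane; the asserted equivalence is then precisely the equivalence of conditions (a) and (c) in Theorem~\ref{Thm:neccandsuff}, applied with $\A = \A(q,\Lambda)$, $\A_{+} = \A(q,\Lambda_+)$, and $\Theta$ the reflection of \S\ref{sec:ParaCPR}.

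The one step I expect to need genuine care is the Schauder-basis hypothesis Q4 when $\Lambda$ is infinite: one must know that every degree-zero element of $\A(q,\Lambda)$ has a \emph{unique} norm-convergent expansion \eqref{eq:parahamsum} whose coefficients depend continuously on the element. For a finite lattice this is immediate. In general I would appeal to the duality identity $\tau(\widehat{B}_{IJ} B_{I'J'}) = \delta_{II'}\delta_{JJ'}$ of Lemma~\ref{Lemma:DualBasis}, which realizes the coefficients explicitly as $J_{IJ} = \tau(\widehat{B}_{IJ}(-H))$ in the manner of Proposition~\ref{Prop:UniqueExpansion}; this gives uniqueness and continuity directly, so that no separate limiting argument over finite sublattices is required and the general theorem applies verbatim.
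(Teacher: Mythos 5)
Your proposal is correct and follows exactly the paper's route: the paper's own proof is the one-line citation ``This follows from Proposition~\ref{prop:paratraceisnice} and Theorem~\ref{Thm:neccandsuff},'' and your text simply supplies the verification of Q1--Q6 and the regularity hypotheses that the paper leaves implicit (including the identification of $\sharp$ with the $*$-involution and of the $C_I$ as the homogeneous Schauder basis satisfying B1--B3, which the paper asserts just before the theorem). Your closing caution about norm convergence of \eqref{eq:parahamsum} for infinite $\Lambda$ is consistent with the paper's own remark following the theorem, so nothing further is needed.
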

\begin{proof}
This follows 
from Proposition~\ref{prop:paratraceisnice} and Theorem~\ref{Thm:neccandsuff}.
\end{proof}

\begin{remark}
If the lattice is infinite, then the expression \eqref{eq:parahamsum}
for the Hamiltonian $H$ is usually 
not a norm convergent sum.
One then approximates these expressions by a sequence $H_{N}$
of convergent Hamiltonians in $\A(q,\Lambda)$.
\end{remark}

\goodbreak

\section*{Acknowledgments}
\noindent 
A.J. was supported in part by a grant ``On the Mathematical Nature of the Universe'' from the Templeton Religion Trust.  B.J.\ was supported by the NWO grant 613.001.214 ``Generalised Lie algebra sheaves." He  thanks A.J.\ for hospitality at Harvard University.  Both authors thank the Hausdorff Institute for Mathematics and the Max Planck Institute for Mathematics in Bonn for hospitality during part of this work.  

\bibliographystyle{alpha}

\end{document}